\newcommand{\red}{\textcolor[rgb]{0.99,0.00,0.00}}
\definecolor{MyBlue}{cmyk}{1,0.13,0,0.63}
\definecolor{MyGreen}{cmyk}{0.91,0,0.88,0.52}
\newcommand{\mylinkcolor}{MyBlue}
\newcommand{\mycitecolor}{MyGreen}
\newcommand{\myurlcolor}{black}
\title[Framework for higher-order bulk-boundary correspondences]{$C^\ast$-framework for higher-order bulk-boundary correspondences}
\author[D. Ojito]{Danilo Polo Ojito}
\address{Department of Physics and Department of Mathematical Sciences, Yeshiva University 
	\\New York, NY 10016, USA \\
	\href{mailto:danilo.poloojito@yu.edu}{danilo.poloojito@yu.edu}}
\author[E. Prodan]{Emil Prodan}
\address{Department of Physics and Department of Mathematical Sciences 
	\\Yeshiva University 
	\\New York, NY 10016, USA \\
	\href{mailto:prodan@yu.edu}{prodan@yu.edu}}
\author[T. Stoiber]{Tom Stoiber}
\address{Department of Physics and Department of Mathematical Sciences, Yeshiva University 
	\\New York, NY 10016, USA \\
	\href{mailto:tom.stoiber@yu.edu}{tom.stoiber@yu.edu}}
\date{\today}
\newtheorem{theorem}{Theorem}[section]
\newtheorem{definition}[theorem]{Definition}
\newtheorem{proposition}[theorem]{Proposition}
\newtheorem{corollary}[theorem]{Corollary}
\newtheorem{remark}[theorem]{Remark}
\newtheorem{example}[theorem]{Example}
\newcommand{\CM}{{\mathbb C}}
\newcommand{\NM}{{\mathbb N}}
\newcommand{\RM}{{\mathbb R}}
\newcommand{\TM}{{\mathbb T}}
\newcommand{\ZM}{{\mathbb Z}}
\newcommand{\KM}{{\mathbb K}}
\newcommand{\Aa}{{\mathcal A}}
\newcommand{\Ee}{{\mathcal E}}
\newcommand{\Pp}{{\mathcal P}}
\newcommand{\KK}{{\bf K}}
\newcommand{\Bb}{{\mathcal B}}
\newcommand{\Ff}{{\mathcal F}}
\newcommand{\Gg}{{\mathcal G}}
\newcommand{\Ww}{{\mathcal W}}
\newcommand{\Uu}{{\mathcal U}}
\newcommand{\Vv}{{\mathcal V}}
\newcommand{\Ss}{{\mathcal S}}
\newcommand{\Oo}{{\mathcal O}}
\newcommand{\Tt}{{\mathcal T}}
\newcommand{\Cc}{{\mathcal C}}
\newcommand{\Jj}{{\mathcal J}}
\newcommand{\Ii}{{\mathcal I}}
\newcommand{\Ll}{{\mathcal L}}
\newcommand{\Qq}{{\mathcal Q}}
\newcommand{\Kk}{{\mathcal K}}
\newcommand{\Hh}{{\mathcal H}}
\newcommand{\one}{{\bf 1}}
\newcommand{\Ch}{{\rm Ch}} 
\newcommand{\Ker}{{\rm Ker}} 
\newcommand{\sgn}{{\rm sgn}}
\newcommand{\idmap}{\textup{id}}
\providecommand{\abs}[1]{\left \lvert#1 \right \rvert}
\begin{document}

\begin{abstract} A typical crystal is a finite piece of a material which may be invariant under some point symmetry group. If it is a so-called {\it intrinsic} higher-order topological insulator or superconductor, then it displays boundary modes at hinges or corners protected by the crystalline symmetry and the bulk topology. We explain the mechanism behind such phenomena using operator K-theory. Specifically, we derive a groupoid $C^\ast$-algebra that 1) encodes the dynamics of the electrons in the infinite size limit of a crystal; 2) remembers the boundary conditions at the crystal's boundaries, and 3) admits a natural action by the point symmetries of the atomic lattice. The filtrations of the groupoid's unit space by closed subsets that are invariant under the groupoid and point group actions supply equivariant cofiltrations of the groupoid $C^\ast$-algebra. We show that specific derivations of the induced spectral sequences in twisted equivariant K-theories enumerate all non-trivial higher-order bulk-boundary correspondences.
\end{abstract}


\maketitle


{\scriptsize \tableofcontents}

\section{Introduction and Main Statements}\label{Sec:Intro}

Bulk-boundary correspondence is one of the hallmark features of topological insulators and superconductors \cite{RyuNJP2010}[Sec.~1.2]. In very general terms, such correspondence supplies a prediction about the dynamics of the electrons close to a flat boundary of a sample, based {\it solely} on input coming from bulk properties of the material. In more precise terms, a topological material develops propagating wave-channels along flat boundaries, which are  active at energies or frequencies where such channels are entirely inexistent in the bulk of the material. In terms of the Hamiltonians generating the dynamics of the electrons, this can be phrased by saying that the Hamiltonian is spectrally gapped in a pristine infinite sample, but this gap fills with spectrum when a flat boundary is cut into the sample. The bulk-boundary correspondences have been the subject of intense research and, from the mathematical point of view, the subject is in a good shape for the one-particle sector \cite{KellendonkRMP2002,ProdanSpringer2016,BourneAHP2017,
BourneAHP2020,AlldridgeCMP2020,ProdanJPA2021}. 

Further innovation in the field came from the works \cite{BenalcazarScience2017,SchindlerSciAdv2018}, where it was observed that several flat boundaries meeting along one hinge or at a corner can induce non-trivial  electron dynamics that can be predicted {\it entirely} from the bulk properties of the material. These works also laid down the general principles behind this new phenomena, which were dubbed higher-order bulk-boundary correspondences. We will try to explain these principles and their challenges, when it comes to a rigorous mathematical formulation, using the diagrams from Fig.~\ref{Fig:Intro}. There, we show a regular 2-dimensional lattice that has been cut to a finite sample with several flat boundaries. For some materials, which are insulators, i.e. the Hamiltonian has a gapped bulk energy spectrum,\footnote{Throughout, a spectral gap will mean an open interval which is not contained in the spectrum of the Hamiltonian and which contains the Fermi energy, fixed w.l.o.g. to be zero.}  one can witness mid-gap electron states localized at the exposed corners. These corner states are in general unstable, unless they exist in a spectral region made up exclusively out of corner-supported states. In contrast, if the entire boundary hosts wave channels, then the mentioned corner states are embedded eigenvalues inside the continuous spectrum and may dissolve under perturbations. If in contrast the edges are insulating, we say that the edges are gapped. In that case there are still other factors of indeterminacy. Indeed, one could modify the termination of the lattice by depositing on the boundary  quasi 1-dimensional topological insulators that host topological end modes, which are completely indistinguishable from the corner modes. If one or more such boundary layers are deposited along the boundaries of the sample, as schematically shown by the colored layers in Fig.~\ref{Fig:Intro}(b), then the multiplicities of the corner states will obviously be altered. Moreover, by coupling these additional boundary layers with the rest of the material, one may be able to remove some or all of the corner states. It is therefore clear that the corner states are, in general, very sensitive to the physical conditions close to the boundaries. 

\begin{figure}[t]
\center
\includegraphics[width=0.8\textwidth]{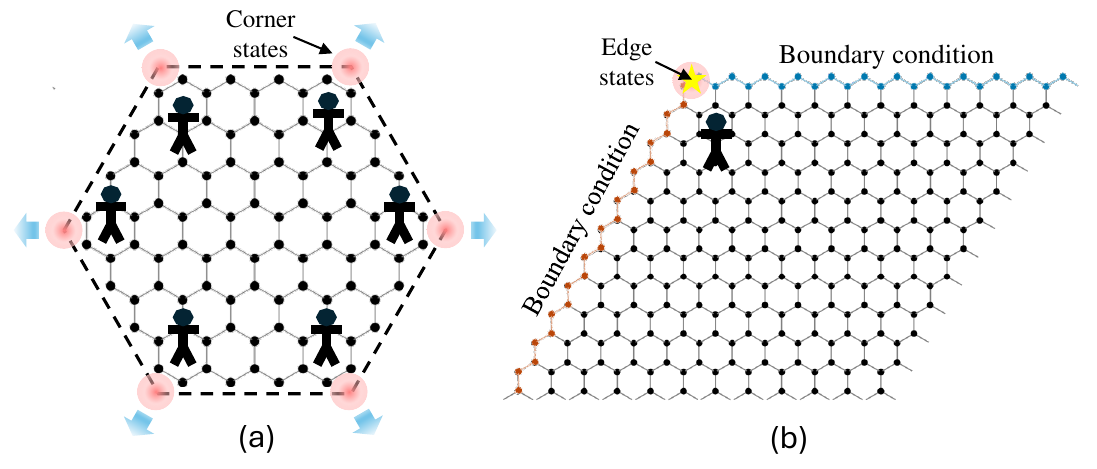}\\
  \caption{\small (a) A crystal with a honeycomb atomic arrangement, displaying edges and corners. One can take infinite-volume limits by letting the crystal is grow in different directions while fixing the positions of the observers at the corners. (b) The crystal seen by one of the observers in the infinite size limit. 
}
 \label{Fig:Intro}
\end{figure}

If the edges are gapped and corner states are present, then the latter are insensitive to perturbations as long as the edges remain gapped \cite{HayashiCMP2018,HayashiLMP2019} (see also \cite{Polo2022} for magnetic interfaces). Those type of corner states are generally so-called extrinsic higher-order boundary states because they require protection by both the bulk and the edge spectral gaps \cite{GeierPRB2018}. Calling them extrinsic is justified because the number and the characteristics of the corner-localized modes cannot be predicted from the bulk properties of the material. For example, in this approach, the bulk material can be topologically trivial by all standards, yet a corner geometry can display corner states depending on the details of how the boundary modifies the bulk dynamics. In the absence of crystalline symmetries, Hayashi proposes a classification of extrinsic higher-order correspondences in \cite{HayashiLMP2021}.

The higher-order bulk-boundary correspondence proposed in \cite{BenalcazarScience2017,SchindlerSciAdv2018, Trifunovic-Brouder}, and widely adopted by the physics community, is different. As the name suggests, the existence and qualitative properties of the corner modes must be determined by the bulk, hence be insensitive to the boundary conditions. It turns out that this is generally only possible if the sample has crystalline symmetries. \cite{SchindlerSciAdv2018}  Under this setting, the boundary conditions at different parts of the flat boundaries are related by a combination of space and possibly fundamental symmetries\footnote{Fundamental symmetries refer to the time-reversal, particle-hole and chiral symmetries.}. In special cases, this constraint is enough to make it impossible to remove all of the corner modes by a change of symmetry-preserving boundary condition. In this case, one speaks of an {\it intrinsic} higher-order topological insulator, since topological invariants of the bulk Hamiltonian in conjunction with the symmetry allow to predict the existence of corner modes whenever the edges are gapped. 

The same principle can be generalized to crystals cut out of 3-dimensional materials, which can display either hinge or corner topological states. These cases can be distinguished by introducing an order for the correspondences. One speaks of $n$-th order bulk-boundary correspondence if $n$ is the difference between the dimensions of the bulk and the boundary. 
Specifically, the bulk topological invariants of a gapped bulk Hamiltonian in three dimensions can result in protected boundary states at faces (order $1$), hinges (order 2) or corners (order 3) and similarly for different geometries. Since synthetic dimensions are possible in material science (e.g. \cite{ProdanPRB2015}),  there is no actual limit on the ``effective'' dimension of the bulk material and the principles of higher-order bulk-boundary correspondences work as well for such settings.

While the above principles are now well understood, researched and explored by the physics community \cite{XieNatRevPhys2021,ZhangNature2023}, there still remains a need for a mathematical framework to thoroughly explain and formalize these concepts at the same level of rigor as the ordinary bulk-boundary correspondence. What is also missing is a rigorous device that enumerates all possible non-trivial higher-order bulk-boundary correspondences for a specified geometry and symmetry group. For ordinary bulk-boundary correspondence this can be done using $C^*$-algebras and operator $K$-theory, however, there are several good reasons why higher-order bulk-boundary correspondences so far resisted a similar treatment. Firstly, we note that, rigorously speaking, such phenomena can only take place in the infinite-size limit of a sample.\footnote{Topological phases are not separated by sharp phase boundaries in large but finite samples, but the footprints of the topological dynamics are still observed to excellent approximation.} Thus, we are presented with the new challenge of building a $C^\ast$-algebra of observations which, although describing the infinite-size limit, still continues to encode precise information about all boundaries and where symmetries can be implemented as automorphisms. While this may sound paradoxical at first, this can be indeed accomplished if we think of the algebra as encoding the joint observations of a team of several experimenters. For example, the experimenters shown in Fig.~\ref{Fig:Intro}(a) observe the electron dynamics as the sample grows indefinitely, always having a corner in their field of view or reach. In the infinite-size limit, a single observer, {\it e.g.} the one depicted in Fig.~\ref{Fig:Intro}(b), will always see a single corner and a pattern extending infinitely outwards. As such, the symmetry of the original sample is lost to this experimenter.\footnote{This is most obvious if the symmetries of the crystal include rotations or space inversion, which permute all corners with each other.} However, it is recovered when one compares the measurements of several experimenters at symmetry-related corners. In the mathematical formulation, an experimenter will correspond to an irreducible representation of the algebra to be constructed and the need for a team of experimenters expresses that we need to employ distinct irreducible representations on {\it a priori} unrelated Hilbert spaces to get the full picture of the electron dynamics. Another strong reason for why one needs a whole team of observers is that the corner modes may not appear on all corners. The higher-order bulk-boundary correspondence is a global statement about the topological modes carried collectively by all corners. These modes can in some symmetry classes be redistributed among different corners by a change of boundary condition and thus will not be protected when one considers only a single corner.  Secondly, even if the mentioned $C^\ast$-algebra can be successfully constructed, one may find that it has a rich lattice of ideals and that there is no direct connection between the ideal corresponding to the physical observations around the corners and the algebra of bulk observations (as we recall in Section~\ref{Sec:Groupoid}, the ordinary bulk-edge correspondence is based on short exact sequences linking ideals of boundary observables with the bulk algebra). Thirdly, one now has to navigate a hierarchy of boundaries and has to determine which bulk models remain gappable up to boundaries of which order. Conversely, the models which are not gappable at a boundary must exhibit topologically protected boundary states and one needs to enumerate them together with the possible manifestations of their boundary states which may depend on the boundary condition. \footnote{A bulk model is called gappable at some boundary if it is possible to realize it by a gapped Hamiltonian on a geometry which has that specified boundary.} Lastly, once the correct constructions for enumerating the non-trivial higher-order bulk-boundary correspondences are identified, explicit computations will, in many cases, lead to a difficult exercise which involves twisted equivariant $K$-theory. 

\begin{figure}[t]
\center
\includegraphics[width=\textwidth]{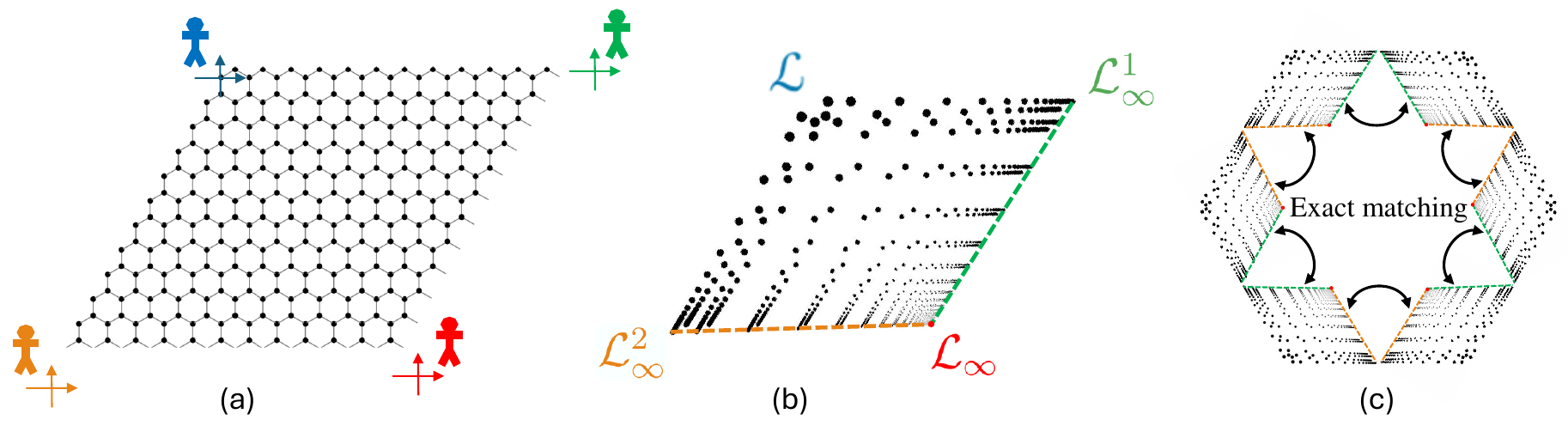}\\
  \caption{\small The transversal of the infinite hexagonal pattern: (a) The hexagonal pattern with different asymptotic observer positions indicated. (b) A picture of the transversal $\Xi_\Ll$ in the space $\Cc(\RM^2)$ of patterns. The discrete points are the orbit of $\Ll$ under discrete translations with distances symbolizing convergence to limit points in the Fell topology. (c) Any neighboring observers from Fig.~\ref{Fig:Intro}(a) will asymptotically see identical patterns at shared edges, therefore the boundaries of the transversals match perfectly along the indicated boundaries.
}
 \label{Fig:T0}
\end{figure}

After presenting the problem and its challenges, we now describe our solutions and how they fit into the existing mathematical landscape. The search for model $C^\ast$-algebras related to locally compact spaces with boundaries can be traced back to the works of Douglas {\it et al} \cite{CoburnPNAS1969,Coburn1IHESPM1971,Coburn2IHESPM1971} for the half-plane, and \cite{DouglasTAMS1071,CoburnJDG1972} for the quarter-plane. A more general and modern machinery for building $C^\ast$-algebras over cones of the discrete plane was supplied by Park in \cite{ParkJOT1990}.\footnote{The work \cite{HayashiCMP2018} by Hayashi on extrinsic higher-order correspondences builds on them.} At the core of these constructions sit the Toeplitz extensions and their generalizations via pullback constructions. These extensions have been dressed up with physical meaning in a remarkable paper by Kellendonk, Richter and Schulz-Baldes \cite{KellendonkRMP2002}, where a rigorous explanation of the bulk-boundary correspondence observed in quantum Hall experiments was communicated for the first time. Much of the subsequent mathematical works on bulk-boundary correspondences use this mentioned work as a template.

On another front, Bellissard and Kellendonk developed a groupoid formalism \cite{Bellissard1986,KellendonkRMP95}, which delivers model $C^\ast$-algebras for generic atomic configurations. While their work focused mainly on Delone sets that are associated with the bulk of a material, it was observed in \cite{ProdanJPA2021} that, when applied to half-spaces, this formalism reproduces all $C^\ast$-algebras and the associated exact sequences appearing in the standard bulk-boundary correspondences. Briefly, the closure of the orbit of the atomic lattice $\Ll$ under the translation action of $\RM^d$ on the space $\Cc(\RM^d)$ of closed subsets of $\RM^d$ supplies the hull of the pattern $\Omega_\Ll^\times$ and the transformation groupoid associated to the dynamical system $(\Omega_\Ll^\times, \RM^d)$. The latter admits an abstract transversal $\Xi_\Ll$ consisting only of those patterns in $\Omega_\Ll^\times$ which contain the origin of $\RM^d$. The reduction of the initial groupoid to $\Xi_\Ll$ supplies what we call the canonical \'etale groupoid $\Gg_\Ll$ associated to $\Ll$. The left regular representations of $\Gg_\Ll$ and their matrix amplifications produce translation-equivariant Hamiltonians (see subsection~\ref{Sec:GAlg}).

In sections~\ref{Sec:Groupoid} and \ref{Sec:ModelAlg}, we demonstrate how to compute transversals in the presence of boundaries and, as we shall see, they all share several common features. For the pattern seen by the experimenter from Fig.~\ref{Fig:Intro}(b), the outcome of the computation is illustrated in Fig.~\ref{Fig:T0}: The orbit of the pattern under translations has non-trivial accumulation points given by bulk and half-space patterns, which form distinct closed invariant subsets. As we shall see in subsection~\ref{Sec:Quarter}, the groupoid algebras corresponding to the restrictions of $\Gg_\Ll$ to these subsets and their complements supply all algebras, ideals and the associated exact sequences derived for cones of the discrete plane in \cite{ParkJOT1990} (see section~\ref{Sec:Quarter}). However, not all phenomena of higher-order bulk-boundary correspondences can be described inside the algebra generated by a single pattern. Our prescription for constructing the $C^\ast$-algebra that encapsulates the infinite size limit of a crystal with multiple distinct boundaries is as follows: For any infinite pattern $\Ll^\lambda\in \Cc(\RM^d)$ which can be obtained from an infinite-volume limit of finite samples (corresponding e.g. to a fixed observer position as in Figure~\ref{Fig:Intro}) one constructs the transversal $\Xi_{\Ll^\lambda}$ as sketched above. For the typical crystal this produces only a finite number of distinct subsets of $\Cc(\RM^d)$ and we take the (global) transversal $\Xi\subset\Cc(\RM^d)$ of the infinite crystal to be their union. As indicated in Fig.\ref{Fig:T0}(c),  the transversals of the patterns seen by different observers match along their shared boundaries. Not only the transversals but also the corresponding groupoids can be glued consistently using pushouts. The outcome is a groupoid $C^*$-algebra associated to $\Xi$, which can be interpreted very concretely using our team of observers.  Here is our exact statement formulated for a general context (see subsection~\ref{Sec:TS} for technical details):

\begin{proposition}\label{Prop:ModelAlg} Let $\Lambda$ be a finite set labeling uniformly discrete patterns $(\Ll^\lambda)_{\lambda\in \Lambda}$ in $\RM^d$. Then:
\begin{enumerate}[\rm i)]

\item For each pair $\lambda,\lambda'$, the intersection $\Xi_{\Ll^\lambda} \cap \Xi_{\Ll^{\lambda'}}$ in $\Cc(\RM^d)$ is a closed subset of the unit spaces of both $\Gg_{\Ll^\lambda}$ and $\Gg_{\Ll^{\lambda'}}$, invariant under the actions of both groupoids, and
\begin{equation}
\Gg_{\Ll^\lambda} \cap \Gg_{\Ll^{\lambda'}} = \left . \Gg_{\Ll^\lambda}\right |_{\Xi_{\Ll^\lambda} \cap \Xi_{\Ll^{\lambda'}}} = \left . \Gg_{\Ll^{\lambda'}}\right |_{\Xi_{\Ll^\lambda} \cap \Xi_{\Ll^{\lambda'}}}.
\end{equation}
\item The co-limit under the diagrams\footnote{The category of topological groupoids is co-complete \cite{BrownMN1976}.}
\begin{equation}
\begin{tikzcd}
	\Gg_{\Ll^\lambda}   & \Gg_{\Ll^{\lambda}} \cap \Gg_{\Ll^{\lambda'}} & \Gg_{\Ll^{\lambda'}}
		\arrow[leftarrowtail ,from=1-1, to=1-2]
	\arrow[rightarrowtail ,from=1-2, to=1-3]
\end{tikzcd}, \quad (\lambda,\lambda') \in \Lambda \times \Lambda,
\end{equation}
generates the \'etale groupoid $\Gg_\Xi = \bigcup_{\lambda \in \Lambda} \Gg_{\Ll^{\lambda}}$ with unit space $\Xi =\bigcup_{\lambda \in \Lambda} \Xi_{\Ll^{\lambda}}$.

\item For each $\Ss\in \Xi$ there is a representation $\pi_\Ss$ of $C^\ast \Gg_\Xi$ on $\ell^2(\Ss)$.

\item If the point group $\Sigma\subset O(d)$ acts via permutations of $\Lambda$ then this action gives rise to an action on the $C^\ast$-algebra $C^\ast \Gg_\Xi$. 

\end{enumerate} 
\end{proposition}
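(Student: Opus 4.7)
The plan is to work throughout inside the ambient transformation groupoid $\Cc(\RM^d) \rtimes \RM^d$, of which each $\Gg_{\Ll^\lambda}$ is the étale reduction to the closed transversal $\Xi_{\Ll^\lambda}$. Having this common receptacle makes all set-theoretic intersections and unions of the $\Gg_{\Ll^\lambda}$ well-defined, and reduces the proposition to compatibility verifications. I will address the four parts in order.

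For (i), I first note that each $\Xi_{\Ll^\lambda}$ is closed in $\Cc(\RM^d)$ by construction as a Fell closure, so the intersection is automatically closed. The key step is invariance: if $L \in \Xi_{\Ll^\lambda} \cap \Xi_{\Ll^{\lambda'}}$ and $(L, v)$ is a morphism of $\Gg_{\Ll^\lambda}$ (that is, $v \in L$ with $L - v \in \Xi_{\Ll^\lambda}$), then $L - v$ also lies in $\Xi_{\Ll^{\lambda'}}$, since $\Xi_{\Ll^{\lambda'}}$ is itself invariant under translating its own patterns by their lattice points. The symmetric statement follows by exchanging $\lambda$ and $\lambda'$, and simultaneously establishes invariance under both groupoid actions and the equality of reductions in the displayed formula: the morphisms in all three candidate groupoids coincide with the pairs $(L, v)$ having $L, L - v \in \Xi_{\Ll^\lambda} \cap \Xi_{\Ll^{\lambda'}}$.

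For (ii), I would take the candidate unit space to be $\Xi = \bigcup_\lambda \Xi_{\Ll^\lambda}$, which is closed as a finite union of closed sets. The crucial verification is that the set-theoretic union $\Gg_\Xi := \bigcup_\lambda \Gg_{\Ll^\lambda}$ is closed under composition and inversion inside the ambient transformation groupoid: two composable morphisms share their intermediate unit, and if that unit lies in $\Xi_{\Ll^\lambda} \cap \Xi_{\Ll^{\lambda'}}$ then both morphisms restrict to the common reduction by (i), so the composite lies in $\Gg_\Xi$. The étale property lifts from each piece since a cover by open bisections of each $\Gg_{\Ll^\lambda}$ assembles to such a cover of the finite union. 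The colimit universal property then becomes a routine diagram chase in the co-complete category of topological groupoids. Claim (iii) is essentially by construction: the left regular representations of the étale groupoid $\Gg_\Xi$ on $\ell^2$ of its range fibers provide a faithful family of representations of $C^\ast \Gg_\Xi$, and matrix amplifications accommodate the internal degrees of freedom at each lattice site.

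For (iv), the linear action of $\Sigma \subset O(d)$ on $\RM^d$ extends to a homeomorphic action on $\Cc(\RM^d)$ commuting with translations. The permutation hypothesis yields $\sigma \cdot \Xi_{\Ll^\lambda} = \Xi_{\Ll^{\sigma(\lambda)}}$, so $\Xi$ is $\Sigma$-invariant and $(L, v) \mapsto (\sigma L, \sigma v)$ defines a topological groupoid automorphism of $\Gg_\Xi$; functoriality of the groupoid $C^\ast$-construction then yields the claimed action on $C^\ast \Gg_\Xi$. I expect the main obstacle to be part (ii): verifying that the union is simultaneously closed under composition \emph{and} carries a Hausdorff étale topology requires the precise matching of transversals along shared boundaries suggested by Fig.~\ref{Fig:T0}(c), and in particular one must handle points where three or more transversals meet by iterating the pairwise matching from (i) and checking that no new pathological limit points are introduced.
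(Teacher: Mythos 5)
Your parts (i), (iii) and (iv), together with the algebraic half of (ii), are correct and follow essentially the paper's own route: one uses the explicit description of $\Gg_{\Ll^\lambda}$ as the set of pairs $(g,\Ss)$ with $\Ss\in\Xi_{\Ll^\lambda}$ and $g\in\Ss$ inside $G\times\Cc(G)$, observes that closure of the union under composition and inversion reduces to the invariance of each transversal under its own groupoid action, identifies the colimit with the union of groupoids (the colimit in algebraic groupoids and in topological spaces is the union, so it suffices that the union be a topological groupoid), and obtains the $\Sigma$-action by functoriality as in Corollary~\ref{Cor:GAction1}.

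The genuine gap is your verification that $\Gg_\Xi$ is \'etale. The mechanism you propose --- that open bisections of each $\Gg_{\Ll^\lambda}$ ``assemble'' to a cover of the union --- fails as stated: each $\Gg_{\Ll^\lambda}$ is a \emph{closed} (not open) subgroupoid of $\Gg_\Xi$, because $\Xi_{\Ll^\lambda}$ is closed in $\Xi$, so an open bisection of $\Gg_{\Ll^\lambda}$ need not be open in $\Gg_\Xi$; moreover, at a unit $\Ss$ lying in several transversals every $\Gg_\Xi$-neighborhood of $(g,\Ss)$ contains morphisms from the other pieces, and one must rule out that two nearby morphisms belonging to different pieces share the same range. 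You correctly flag this as the main obstacle but do not resolve it. The resolution in the paper hinges on a hypothesis your proposal never invokes: the whole family $(\Ll^\lambda)_{\lambda\in\Lambda}$ is $U$-separated for a \emph{single} non-empty open set $U$. With that, Lemma~3.9 of \cite{Enstad1Arxiv2022} applies to all of $\Xi$ at once and shows that $(g,\Ss)\mapsto g\cdot\Ss$ is a homeomorphism onto its image on $\gamma V\times\Xi$ for any symmetric identity neighborhood $V$ with $V^2\subseteq U$, so the sets $(\gamma V\times\Gamma)\cap\Gg_\Xi$ form a basis of open bisections of the union itself, with no case analysis at multiple-overlap points. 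Separately, the paper also establishes amenability of $\Gg_\Xi$ (by exhibiting $\Xi$ as an abstract transversal of $\Omega^\times_\Xi\rtimes G$), which is what makes the left-regular representations faithful for the full $C^\ast$-algebra and thus underwrites the physical reading of part (iii); this point is absent from your sketch.
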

The property (ii) ensures that any two observers at different boundary positions whose observations in the infinite-size limit are described by the two inequivalent representations $\pi_{\Ll^\lambda}$ and $\pi_{\Ll^{\lambda'}}$ will obtain consistent results at the shared boundaries because the dynamics of electrons is determined by a single Hamiltonian from $C^\ast\Gg_\Xi$ and the observers merely experience it from different representations of this algebra.

The new $C^\ast$-algebraic framework announced above is one of the main results of the paper. As we shall see in section~\ref{Sec:ModelAlg}, all the algebraic structures seen above can be explicitly computed for the cases of interest to us. In all instances, we found the following common characteristics: 

\begin{proposition}\label{Prop:Filtrations} A $d$-dimensional crystal in the infinite-size limit is described by a transversal $\Xi\subset \Cc(\RM^d)$, which is invariant under the action of a finite group $\Sigma\subset SO(d)$, called the point group.
\begin{enumerate}[\rm i)]

\item The space $\Xi$ of units admits filtrations
\begin{equation}\label{Eq:Filt}
\{\Ll_\infty\}= \Xi_0 \subset \Xi_1 \subset \cdots \subset \Xi_d =\Xi,
\end{equation}
of length $d$ by closed subspaces that are invariant to the groupoid and point group actions. Here, $\Ll_\infty$ is the bulk lattice. 

\item In turn, this supplies a $\Sigma$-equivariant cofiltration of the groupoid $C^\ast$-algebra
\begin{equation}\label{Eq:CoFilt}
C^\ast \Gg_{\Xi_d}  \stackrel{\mathfrak p^{d}}{\twoheadrightarrow} C^\ast \Gg_{\Xi_{d-1}} \stackrel{\mathfrak p^{d-1}}{\twoheadrightarrow} \cdots \stackrel{\mathfrak p^{2}}{\twoheadrightarrow} C^\ast \Gg_{\Xi_1} \stackrel{\mathfrak p^{1}}{\twoheadrightarrow} C^\ast \Gg_{\Xi_0},
\end{equation}
where $\Gg_{\Xi_j}$ is the restriction of $\Gg_\Xi$ to $\Xi_j \subseteq \Xi$.
\item For this cofiltration, 
\begin{equation}
{\rm Ker}(C^\ast \Gg_{\Xi_r}  \twoheadrightarrow C^\ast \Gg_{\Xi_{r-1}}) =: C^\ast \Gg_{\Xi_r \setminus \Xi_{r-1}}
\end{equation} 
identifies the algebra of observables which are localized to the boundaries of codimension $r$.
\end{enumerate}
\end{proposition}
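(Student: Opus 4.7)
The plan is to prove (i) by an explicit geometric construction of the filtration, and then to deduce (ii) and (iii) by the standard dictionary that assigns to every closed invariant subset of the unit space of an \'etale groupoid an ideal of its $C^\ast$-algebra. Concretely, for any closed $\Gg_\Xi$-invariant $F \subseteq \Xi$ with open complement $U = \Xi \setminus F$, one has the short exact sequence
\[
0 \longrightarrow C^\ast\Gg_\Xi|_U \longrightarrow C^\ast\Gg_\Xi \longrightarrow C^\ast\Gg_\Xi|_F \longrightarrow 0,
\]
with the quotient induced by restriction of convolution kernels along $\Xi \to F$ and the embedding by extension-by-zero along $U \hookrightarrow \Xi$. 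Consequently, once the filtration in (i) is in place, the co-filtration in (ii) and the identification of its kernels in (iii) follow by iteration.

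To construct the filtration, recall that a point $\omega \in \Xi$ is a pattern in $\Cc(\RM^d)$ with a distinguished atom at the origin, and that $\Gg_\Xi$ acts on $\Xi$ by translating the origin to other atoms of the same pattern. Starting from a corner pattern $\Ll^\lambda$ viewed from its corner and letting the origin diverge in asymptotic directions produces Fell-topology limits that successively shed boundary features: translating parallel to a facet loses the hinges and corners adjacent to it; translating along a hinge loses the corners at its ends; translating deep into the interior returns the bulk pattern $\Ll_\infty$. Accordingly, I would define $\Xi_r \subseteq \Xi$ as the set of patterns whose visible boundary features all have codimension $\leq r$, so that $\Xi_0 = \{\Ll_\infty\}$ and $\Xi_d = \Xi$ by inspection.

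Next I would check the three invariance properties. Closedness of $\Xi_r$ in $\Xi$ follows because a Fell-convergent sequence of patterns with boundaries of codimension $\leq r$ can only lose further boundary features in the limit, which keeps the limit in $\Xi_r$. Invariance under $\Gg_\Xi$ is immediate since translating the origin within a fixed pattern does not create new boundary features. Invariance under the point group $\Sigma$ holds because $\Sigma$ permutes the boundary strata of each fixed codimension, a fact that is already recorded in Proposition~\ref{Prop:ModelAlg}(iv). This establishes (i). Applying the short exact sequence above to $F = \Xi_{r-1}$ inside $\Gg_{\Xi_r}$ then yields the surjection $\mathfrak{p}^r$ in (ii) with kernel $C^\ast\Gg_{\Xi_r\setminus\Xi_{r-1}}$; $\Sigma$-equivariance of the whole co-filtration follows since the $\Sigma$-action on $C^\ast\Gg_\Xi$ commutes with restriction of kernels to any $\Sigma$-invariant subset of $\Xi$. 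Part (iii) is then just the observation that, by construction, the open stratum $\Xi_r \setminus \Xi_{r-1}$ consists exactly of those patterns whose deepest visible boundary feature has codimension $r$, so that its groupoid algebra is precisely the algebra of observations at selected codimension-$r$ boundaries.

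The main obstacle lies entirely in part (i): one must verify, for each geometric class of crystal patterns treated in Section~\ref{Sec:ModelAlg}, that the Fell-topology closure of the discrete orbit of $\Ll^\lambda$ really does decompose into strata indexed by the codimension of surviving boundary features, and that these strata are closed, nested, and compatible with the point-group action. This is a case-by-case geometric exercise carried out within the explicit model algebras of that section. Once (i) is in hand, parts (ii) and (iii) are routine consequences of the standard groupoid $C^\ast$-algebra machinery, with the bookkeeping of $\Sigma$-equivariance being the only additional input.
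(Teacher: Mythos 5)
Your proposal is correct and follows essentially the same route as the paper: the filtration is defined by stratifying the transversal according to the codimension of the surviving boundary features (exactly as in Proposition~\ref{Pro:UFiltration} and the case analyses of subsections~\ref{Sec:Quarter}--\ref{Sec:Crystal}), and parts (ii) and (iii) are obtained from the standard correspondence between closed invariant subsets of the unit space and ideals/quotients of the groupoid $C^\ast$-algebra (Proposition~\ref{def:ideal_abbreviation}), with $\Sigma$-equivariance coming from the compatibility of the point-group action with restriction. Like the paper, you correctly identify that the substantive content is the case-by-case geometric verification that the Fell-closure of each orbit stratifies by codimension into closed, nested, $\Gg_\Xi$- and $\Sigma$-invariant subsets.
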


What precisely is considered a boundary of codimension $r$ as mentioned at point iii) is determined by the choice of the filtration made at point i) (see subsections~\ref{Sec:cube_mechanism} and \ref{Sec:Unifying}). The transversal $\Xi_d$ can select a subset of the crystalline geometry, comprised of some but not necessarily all patterns that occur in the infinite-volume limit of a crystal. To keep with the physical interpretation, there is then a unique choice for $\Xi_0,...,\Xi_{d-1}$ such that the ideals of Proposition~\ref{Prop:Filtrations} are algebras of observables localized at the boundaries of the respective codimension. In section~\ref{Sec:ModelAlg}, we demonstrate how Propositions~\ref{Prop:ModelAlg} and \ref{Prop:Filtrations} play out in the specific cases of quarter, square and cube geometries. We will use these examples in section~\ref{Sec:Mechanism} to identify the mechanism of higher-order bulk-boundary correspondences. 

For us, the final step $C^*\Gg_{\Xi_0}$ in the filtration will always describe a bulk material without boundaries. To this algebra and to the algebras $C^\ast \Gg_{\Xi_r \setminus \Xi_{r-1}}$ of boundary observables, one can assign topological invariants using equivariant K-functors $\KK_q$, which are equivariant homology theories for $C^*$-algebras. All gapped bulk materials and topologically protected boundary states give rise to elements of and are classified by those K-groups. The goal of K-theoretic bulk-boundary correspondence is to find maps between those groups, which explain the relation between bulk and boundary topological invariants. We construct those maps for higher-order bulk-boundary correspondences, which have the following properties:

\begin{theorem}\label{Th:Main} Consider the symmetry-adapted filtration $\{\Xi_{n}\}$  \eqref{Eq:Filt} such that the ideals of boundary states $C^*\Gg_{\Xi_{r}\setminus\Xi_{r-1}}$ will be localized to a selection of $r$-th order boundaries. Fix a subgroup $\Gamma\subset \Sigma \times \ZM_2 \times \ZM_2$ of the point group enhanced by Altland-Zirnbauer-type fundamental symmetries and a $\Gamma$-equivariant K-functor together with its suspensions $(\KK_q)_{q\in \ZM}$. Assume that at least for the specific value $q=\ast$ the functor the $\KK_\ast$ classifies stable homotopy classes of gapped symmetric Hamiltonians in the algebras above. One has:
\begin{enumerate}[\noindent \rm i)]
	\item The equivariant cofiltration~\eqref{Eq:CoFilt} induces a spectral sequence $(E^r_{p,q}, d^r_{p,q})$ whose terms $E^r_{0,q}$ are subgroups of $\KK_q(C^\ast\Gg_{\Xi_0})$ and $E^r_{p,q}$ for $p>1$ are subquotients of $\KK_{-p+q}(C^\ast\Gg_{\Xi_{p}\setminus \Xi_{p-1}})$. The differential $d^r_{0,q}: E^r_{0,q}\to E^r_{r,q+r-1}$ relates subquotients of bulk and boundary K-groups.
	\item A class $x\in\KK_\ast(C^\ast\Gg_{\Xi_0})$ is in the domain $E^r_{p,\ast}$ of $d^r_{p,\ast}$ if and only if there is a symmetric Hamiltonian $h$ in $M_N(\CM)\otimes C^*\Gg_{\Xi_d}$ such that dividing out all boundaries of codimension $r$ and greater via the surjection $(\mathfrak p^{r} \circ \cdots \circ \mathfrak p^{d})(h)$ results in a symmetric spectrally gapped Hamiltonian in $M_N(\CM)\otimes C^\ast\Gg_{\Xi_{r-1}}$ and such that its evaluation $(\mathfrak p^1 \circ \cdots \circ \mathfrak p^{d})(h)$ in the bulk represents the $K$-theory class $x$.
	\item If $d^r_{0,\ast}(x)$ is non-trivial for some $x\in\KK_\ast(C^*\Gg_{\Xi_0})$, then any symmetric Hamiltonian in $M_N(\CM)\otimes C^*\Gg_{\Xi_d}$  representing this $K$-theory class $x$ in the bulk and having a spectrally gapped image in $M_N(\CM)\otimes C^*\Gg_{\Xi_{r-1}}$ displays non-trivial topologically stable order-$r$ order boundary states. The image $d^r_{0,\ast}(x)$ as a coset in a subquotient of $\KK_{*-1}(C^*\Gg_{\Xi_{r}\setminus\Xi_{r-1}})$ enumerates all possible boundary states obtainable by choice of symmetric boundary condition.
\end{enumerate}
\end{theorem}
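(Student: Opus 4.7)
The plan is to realize the spectral sequence announced in (i) as the classical spectral sequence of an exact couple attached to the $\Gamma$-equivariant co-filtration~\eqref{Eq:CoFilt}. Each surjection $\mathfrak p^r \colon C^\ast\Gg_{\Xi_r} \twoheadrightarrow C^\ast\Gg_{\Xi_{r-1}}$ has kernel $C^\ast \Gg_{\Xi_r \setminus \Xi_{r-1}}$ by Proposition~\ref{Prop:Filtrations}(iii), producing a $\Gamma$-equivariant short exact sequence whose induced six-term sequence under the equivariant homology theory $\KK_*$ supplies one triangle of an exact couple. Summing over $r=1,\dots,d$ yields a bigraded exact couple with $E^1_{p,q} = \KK_{q-p}(C^\ast\Gg_{\Xi_p \setminus \Xi_{p-1}})$ for $p\ge 1$ and $E^1_{0,q} = \KK_q(C^\ast\Gg_{\Xi_0})$. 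Iterating the derived-couple construction produces the pages $(E^r_{p,q}, d^r_{p,q})$, and convergence after finitely many pages follows from the finite length $d$ of the filtration. The identification of $E^r_{0,q}$ as a subgroup of the bulk K-group is automatic since all candidate incoming differentials would originate at positions with negative filtration index, which vanish by convention; the remaining $E^r_{p,q}$ are subquotients of $E^1_{p,q}$ by construction.

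For (ii), I would invoke the standard meaning of survival to $E^r_{0,q}$: a class $x \in \KK_q(C^\ast\Gg_{\Xi_0})$ lies in the domain of $d^r_{0,q}$ precisely when it admits a lift along the iterated surjection to a class in $\KK_q(C^\ast\Gg_{\Xi_{r-1}})$. Under the hypothesis that $\KK_\ast$ at the relevant degree classifies stable homotopy classes of gapped symmetric Hamiltonians, such a lift is represented by a gapped self-adjoint element in $M_N(\CM) \otimes C^\ast\Gg_{\Xi_{r-1}}$. Since $C^\ast$-surjections admit self-adjoint lifts (take any preimage and replace by its self-adjoint part), this element extends, without any gap assumption, to a self-adjoint $h \in M_N(\CM) \otimes C^\ast\Gg_{\Xi_d}$ whose image under $\mathfrak p^r \circ \cdots \circ \mathfrak p^d$ is gapped and whose bulk evaluation under $\mathfrak p^1 \circ \cdots \circ \mathfrak p^d$ represents $x$. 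The converse is immediate: any such Hamiltonian yields the required lift.

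For (iii), non-vanishing of $d^r_{0,*}(x)$ encodes an obstruction to further lifting: if $x$ survived to $E^{r+1}_{0,*}$, then by (ii) there would exist a symmetric Hamiltonian $h$ of the required form whose image in $M_N(\CM) \otimes C^\ast\Gg_{\Xi_r}$ is gapped, and by functoriality of the exact couple this would force $d^r_{0,*}(x)=0$. Conversely, given any $h$ as in (ii) representing $x$, the failure of $h$ to have a gapped image in $M_N(\CM) \otimes C^\ast\Gg_{\Xi_r}$ forces the existence of spectrum in the ideal $M_N(\CM) \otimes C^\ast\Gg_{\Xi_r \setminus \Xi_{r-1}}$ that carries the non-trivial invariant $d^r_{0,*}(x)$, i.e.\ a topologically protected order-$r$ boundary state. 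The enumeration claim follows from the definition of $E^r_{r,*-1}$ as a subquotient of $\KK_{*-1}(C^\ast\Gg_{\Xi_r \setminus \Xi_{r-1}})$: the quotient accounts precisely for modifications of $h$ by elements of the ideal, and in view of Proposition~\ref{Prop:ModelAlg}(iii) these modifications are exactly the symmetry-preserving choices of boundary coating.

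The main obstacle is bridging the purely homological spectral-sequence machinery with the physical content of gapped Hamiltonians in the twisted equivariant setting $\Gamma \subset \Sigma \times \ZM_2 \times \ZM_2$. The construction of the pages, differentials, and subquotients is essentially formal once $\KK_\ast$ is treated as a $\Gamma$-equivariant homology theory, but the translation of the survival conditions into the existence of concrete gapped lifts depends delicately on the classification hypothesis for $\KK_\ast$ at the specified degree, and on verifying that the boundary maps of the exact couple coincide with the physical boundary maps (i.e.\ the $K$-theoretic bulk-boundary maps of ordinary bulk-edge correspondence applied at each codimension). A secondary subtlety is ensuring that equivariance survives every step of the derived-couple construction together with the twist cocycles, so that the subquotients $E^r_{p,q}$ inherit the correct $\Gamma$-equivariant structure and the coset interpretation of $d^r_{0,*}(x)$ in (iii) is meaningful as a classification of symmetric boundary conditions.
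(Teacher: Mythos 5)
Your overall route is the same as the paper's: build the exact couple from the equivariant co-filtration with $E^1_{p,q}=\KK_{-p+q}(C^\ast\Gg_{\Xi_p\setminus\Xi_{p-1}})$ and $D^1_{p,q}=\KK_{-p+q}(C^\ast\Gg_{\Xi_p})$, identify $E^r_{0,q}$ with ${\rm Im}\bigl(\KK_q(C^\ast\Gg_{\Xi_{r-1}})\to\KK_q(C^\ast\Gg_{\Xi_0})\bigr)$, and translate survival/death of a class under $d^r_{0,\ast}$ into the existence or non-existence of gapped symmetric lifts via the classification hypothesis. Two points, however, are genuine gaps as written. First, in (ii) you produce the lift to $M_N(\CM)\otimes C^\ast\Gg_{\Xi_d}$ by ``take any preimage and replace by its self-adjoint part.'' That yields a self-adjoint element but not a \emph{symmetric} one: the self-adjoint part of an arbitrary preimage need not be ($c$-twisted) invariant under $\Gamma$, and the theorem requires a symmetric Hamiltonian. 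The paper's fix (proof of Proposition~\ref{Prop:Gapable1}) is to average the self-adjoint lift over the twisted representation of the finite group $\Gamma$; this works because the co-filtration maps are equivariant, so the average is still a lift. You need that averaging step.

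Second, you correctly identify but do not close the central verification: that the exact-couple differential $d^r_{0,\ast}$ actually equals the ``choose a lift to $\KK_\ast(C^\ast\Gg_{\Xi_{r-1}})$, apply the connecting map $\partial^r_\ast$, quotient by $\partial^r_\ast({\rm Ker}\,s^{r-1}_\ast)$'' recipe on which your arguments for (ii) and (iii) rest. Without this, the statement that the domain of $d^r_{0,\ast}$ is precisely the set of classes admitting gapped-at-codimension-$(r-1)$ lifts, and that the coset $d^r_{0,\ast}(x)$ enumerates the possible boundary states over all symmetric boundary conditions, does not follow. The paper supplies exactly this in Proposition~\ref{prop:higher_boundary_maps}, by specializing the explicit formula $E^r=\gamma^{-1}\alpha^{\circ(r-1)}(D^1)/\beta(\Ker\,\alpha^{\circ(r-1)})$ to $p=0$ and to the target $E^r_{r,q+r-1}$, showing $d^r_{0,\ast}$ and $\delta^r_\ast$ have the same domain, the same recipe, and the same quotient. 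You flagged this as ``the main obstacle'' --- it is, and it must actually be carried out rather than assumed.
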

%

The statement (iii) is the punch line of our work. It shows that higher-order bulk-boundary correspondence is a stable and robust phenomenon protected by spectral gaps at suitable boundaries in combination with a specified crystalline symmetry group and is entirely explainable by operator K-theory. A bulk Hamiltonian will be said to be gappable at the boundaries of codimension $r$ if its K-theory class satisfies the equivalent conditions of Theorem~\ref{Th:Main}(ii). To identify all instances of order-$r$ bulk-boundary correspondence one needs to enumerate precisely the stable homotopy classes of gapped bulk Hamiltonians which are gappable at the boundaries of codimension $r-1$ but not at those of codimension $r$. The former are precisely those Hamiltonians whose K-theory class lies in the subgroup $E^r_{0,*}\subset \KK_\ast(C^*\Gg_{\Xi_0})$ and to then find out if they are gappable at the codimension $r$ boundaries one computes the differential $d^r_{0,\ast}$. Any non-trivial value of $d^r_{0,\ast}$ identifies a topological class of bulk Hamiltonians that delivers an order-$r$ bulk-boundary correspondence. Examples are supplied in sections~\ref{Sec:Mechanism} and \ref{Sec:Examples}. In particular, we will see in subsection~\ref{Sec:NoEqui} that for our crystalline examples the image of $d^r_{0,\ast}$ for $r\geq 2$ is trivial in the absence of symmetries, therefore the presence of a symmetry is a prerequisite to observe non-trivial higher-order bulk-boundary correspondences. 

\section{Ordinary Bulk-Defect Correspondences} \label{Sec:Groupoid}

The model $C^\ast$-algebras and the exact sequences relevant for the standard bulk-defect correspondence principles can all be generated by a mechanism described in \cite{ProdanJPA2021}, within the framework of specific (\'etale) groupoids and their associated $C^\ast$-algebras. Our framework for higher-order bulk-boundary correspondences builds on this formalism. The goal of this section is to introduce a proper background and to fix the notations and conventions. 

\subsection{Point patterns and their canonical groupoids}

For the start, we will be interested in the space $\Cc(G)$ of closed subsets of a locally compact second countable (lcsc) topological amenable group $G$, which is equipped with Fell's topology \cite{FellPAMS1962}. Then the space $\Cc(G)$ is automatically a compact Hausdorff $G$-space \cite[Remark 4.4.]{DonjuanTA2022}, with the $G$-action 
\begin{equation}
g \cdot C = \{x g^{-1}, \ x \in C\}, \quad g \in G, \quad C \in \Cc(G).
\end{equation} 
Throughout our exposition, a pattern in $G$ is simply an element of $\Cc(G)$. 

\begin{remark}{\rm The examples listed in our present work all simply use the abelian groups $G = \RM^d$. However the formalisms introduced in the present section and in section~\ref{Sec:Mechanism} are general enough to handle more general lcsc groups, such as groups of isometries of various Riemann manifolds. A relevant example \cite{MeslandJGP2024} is that of the Euclidean group $G=O(d)\ltimes \RM^d$ with patterns made up of artificial atoms whose macroscopic structure has a distinguishable orientation labeled by $O(d)$.
}$\Diamond$
\end{remark}  

We will be interested in the following classes of patterns:

\begin{definition}[\cite{Enstad1Arxiv2022}] For $\Ll\in \Cc(G)$ and $S \subseteq G$, one says that $\Ll$ is
\begin{enumerate}[\ {\rm 1)}]
\item $S$-separated if $|\Ll \cap g \cdot S| \leq 1$ for all $g \in G$;
\item $S$-dense if $|\Ll \cap g \cdot S| \geq 1$ for all $g \in G$.
\end{enumerate}
\end{definition}

\begin{definition}[\cite{Enstad1Arxiv2022}]\label{Def:US} A subset $\Ll \subset G$ is called uniformly separated if there exists a non-empty open set $U \subseteq G$ such that $\Ll$ is $U$-separated. The set $\Ll$ is called uniformly dense if there exists a compact subset $K \subseteq G$ such that $\Ll$ is $K$-dense. If $\Ll$ is both separated and relatively dense, the it is called a Delone set.
\end{definition} 


To a fixed pattern $\Ll \in \Cc(G)$ one associates its punctured hull\footnote{For uniformly separated patterns other than Delone sets, we have to exclude the empty set for $\Xi_\Ll$ in definition~\ref{Def:TheGroupoid} to be an abstract transversal.}
\begin{equation}
\label{eq:hull}
\Omega_{\Ll}^\times = \overline{\{g \cdot\Ll: \ g \in G\}}\setminus \emptyset.
\end{equation}
This produces a topological dynamical system $(G,\Omega_{\Ll}^\times)$ canonically associated to the pattern $\Ll$. The dynamical system gives rise to a transformation groupoid \cite[p.~5]{WilliamsBook2}. For $(G,\Omega_{\Ll}^\times)$, we denote this groupoid by $\tilde \Gg_{\Ll}=\Omega_{\Ll}^\times\rtimes G$ and its source and range maps by $\tilde{\mathfrak s}$ and $\tilde{\mathfrak r}$, respectively. We are more interested in a specific subgroupoid:

\begin{definition}\label{Def:TheGroupoid} If $e$ denotes the neutral element of $G$, then we define the canonical transversal of $\tilde \Gg_{\Ll}$ as
\begin{equation}
\Xi_{\Ll} =\overline{ \{ \Ss \in \Omega_{\Ll}^\times, \ e \in \Ss \}}
\end{equation}
and the canonical groupoid associated to $\Ll$ shall be the restriction 
\begin{equation}\label{Eq:GDefined}
\Gg_{\Ll} : = \left . \tilde \Gg_{\Ll}\right |_{\Xi_{\Ll}} = \tilde{\mathfrak s}^{-1}(\Xi_{\Ll}) \cap \tilde{\mathfrak r}^{-1}(\Xi_{\Ll}).
\end{equation}
\end{definition}

\begin{remark}{\rm For any non-empty open set $U \subseteq G$, the set of $U$-separated sets is closed in $\Cc(G)$ \cite{Enstad1Arxiv2022}. If $\Ll$ is $U$-separated, then the limit points of its orbit are thus themselves $U$-separated. As a consequence, all $\Ss \in \Xi_\Ll$ are uniformly separated. Additionally, since $\Xi_\Ll$ is a closed subset of the compact space $\Cc(G)$, it is automatically compact.
}$\Diamond$
\end{remark}

\begin{proposition}[\cite{Enstad1Arxiv2022}] If $\Ll$ is uniformly separated, then the space $\Xi_{\Ll}$ is an abstract transversal of $\tilde \Gg_{\Ll}$. As a result, $\tilde \Gg_{\Ll}$ and $\Gg_{\Ll}$ are equivalent in the sense of \cite{MuhlyJOT1987}. Furthermore $\Gg_{\Ll}$ is a lcsc \'etale groupoid with compact unit space $\Xi_{\Ll}$ in the Fell topology.
\end{proposition}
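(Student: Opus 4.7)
The plan is to verify the three axioms defining an abstract transversal in the sense of \cite{MuhlyJOT1987} and then invoke their equivalence theorem. Recall that $X \subseteq \tilde{\Gg}^{(0)}$ qualifies when (a) $X$ is closed, (b) every orbit meets $X$, and (c) the source map restricted to $\tilde{\mathfrak r}^{-1}(X)$ is open onto $\tilde{\Gg}^{(0)}$. Condition (a) is immediate, since $\Xi_\Ll$ is defined as a closure. For (b), any $\Ss \in \Omega_\Ll^\times$ is non-empty by the removal of $\emptyset$ from the hull; picking $g \in \Ss$ and applying the action convention $g \cdot C = \{xg^{-1} : x \in C\}$, the translate $g\cdot \Ss$ contains $e$ and therefore sits in $\{\Ss' \in \Omega_\Ll^\times : e \in \Ss'\} \subseteq \Xi_\Ll$.

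The substantive input is condition (c), which I would address by establishing the stronger étale property of $\Gg_\Ll$ directly, since étaleness supplies both local bisections and the required openness. First I observe that the condition $e \in \Ss$ is preserved under Fell limits: if $\Ss_\alpha \to \Ss$ and $e \notin \Ss$, then a compact neighborhood of $e$ disjoint from $\Ss$ is eventually disjoint from $\Ss_\alpha$, contradicting $e \in \Ss_\alpha$. Thus $\Xi_\Ll = \{\Ss \in \Omega_\Ll^\times : e \in \Ss\}$ without any closure, and elements of $\Gg_\Ll$ identify with pairs $(\Ss, g)$ such that $\Ss \in \Xi_\Ll$ and $g \in \Ss$. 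Given $(\Ss_0, g_0) \in \Gg_\Ll$, uniform $U_0$-separation of $\Ss_0$ for some open $U_0 \ni e$ allows me to choose a relatively compact open $V \ni g_0$ so small that $\overline{V} \cap \Ss_0 = \{g_0\}$. For $\Ss$ in a suitable Fell neighborhood of $\Ss_0$, uniform $U_0$-separation of $\Ss$ together with the Fell criteria forces $\overline{V} \cap \Ss$ to remain a singleton. The assignment $\Ss \mapsto (\Ss, \text{unique point of } \overline{V}\cap \Ss)$ is then a continuous local bisection around $(\Ss_0, g_0)$, which yields étaleness.

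The main technical obstacle is making this Fell-topology argument quantitative: uniform separation must be leveraged to prevent points of nearby patterns from coalescing into, or escaping from, the bounded window $\overline{V}$, securing both single-valuedness and continuity of the local section. Once $\Gg_\Ll$ is established as étale, openness of its source and range maps is automatic, and condition (c) for $\tilde{\Gg}_\Ll$ is inherited through the canonical identification of $\tilde{\mathfrak r}^{-1}(\Xi_\Ll)$ with $\Gg_\Ll \times_{\Xi_\Ll} \Omega_\Ll^\times$ under the $G$-action. Morita equivalence of $\tilde{\Gg}_\Ll$ and $\Gg_\Ll$ then follows directly from the equivalence theorem of \cite{MuhlyJOT1987}. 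Compactness of $\Xi_\Ll$ is essentially free: $\Cc(G)$ is compact Hausdorff in Fell's topology and $\Xi_\Ll$ is closed therein; second countability and local compactness descend from $G$ through $\Cc(G)$ and the étale structure.
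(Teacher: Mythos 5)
The paper does not prove this proposition at all — it is imported wholesale from \cite{Enstad1Arxiv2022} — but your argument is essentially the one given there and redeployed almost verbatim in the paper's own proof of Proposition~\ref{Prop:GGlobal}: the orbit of every $\Ss\in\Omega^\times_\Ll$ meets $\Xi_\Ll$ because $g\cdot\Ss\ni e$ for $g\in\Ss$, the condition $e\in\Ss$ is Fell-closed so no closure is actually taken, and étaleness comes from local bisections built from a symmetric neighborhood $V\ni e$ with $\overline V\,\overline V^{-1}\subseteq U$ so that $U$-separation forces $\overline V\cap\Ss$ to be a singleton on a Fell neighborhood. The one imprecise step is your claim that condition (c) is ``inherited'' from étaleness of $\Gg_\Ll$ — the reduction only sees the unit space $\Xi_\Ll$, so openness of $\tilde{\mathfrak s}$ restricted to $\tilde{\mathfrak r}^{-1}(\Xi_\Ll)$ onto all of $\Omega^\times_\Ll$ needs its own (one-line) argument: the $\tilde{\mathfrak s}$-image of a basic open set $(\Gamma'\times W)\cap\tilde{\mathfrak r}^{-1}(\Xi_\Ll)$ is $\Gamma'\cap\{\Ss:\Ss\cap W\neq\emptyset\}$, which is Fell-open.
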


It will be useful to have an explicit characterization of $\Gg_\Ll$:

\begin{proposition}[\cite{MeslandJGP2024}]\label{Prop:GL} The topological groupoid $\Gg_{\Ll}$ canonically associated to the uniformly separated pattern $\Ll$ consists of:
\begin{enumerate}[\ \rm 1.]
\item The set $\Gg_{\Ll}$ of tuples
\begin{equation}
(g, \Ss) \in G \times \Xi_{\Ll}, \ g \in \Ss,
\end{equation}
equipped with the inversion map
\begin{equation}
(g,\Ss)^{-1} = (g^{-1},g \cdot \Ss)
\end{equation}
and with the lcsc topology inherited from $G \times \Cc(G)$.
\item The subset $\Gg_{\Ll}^2$ of composable elements
\begin{equation}
 \big((g',\Ss'),(g,\Ss)\big) \in \Gg_{\Ll} \times \Gg_{\Ll},   \  \Ss'=g\cdot \Ss,
 \end{equation}
equipped with the composition 
\begin{equation}
(g',\Ss') \cdot (g,\Ss) = (g' g, \Ss).
\end{equation}
\end{enumerate} 
\end{proposition}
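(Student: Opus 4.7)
The plan is to unpack the definition $\Gg_\Ll = \tilde{\mathfrak s}^{-1}(\Xi_\Ll)\cap\tilde{\mathfrak r}^{-1}(\Xi_\Ll)$ using the explicit transformation-groupoid structure of $\tilde\Gg_\Ll = \Omega_\Ll^\times\rtimes G$ and an explicit description of $\Xi_\Ll$, and then to verify the structural claims of the proposition directly.

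First I would show that the set $\{C\in\Cc(G): e\in C\}$ is already closed in Fell's topology, so that the closure in the definition of $\Xi_\Ll$ is redundant and
\[
\Xi_\Ll = \{\Ss\in\Omega_\Ll^\times : e\in\Ss\}.
\]
Since $G$ is lcsc, $e\notin C$ is equivalent to the existence of a compact neighborhood $K$ of $e$ with $C\cap K=\emptyset$, so the complement $\{C: e\notin C\}$ is a union of sub-basic Fell-open sets and hence open. With this description of $\Xi_\Ll$ the restriction condition becomes transparent: one has $\tilde{\mathfrak s}(g,\Ss)=\Ss$ and $\tilde{\mathfrak r}(g,\Ss)=g\cdot\Ss$, so $(g,\Ss)\in\Gg_\Ll$ iff $e\in\Ss$ and $e\in g\cdot\Ss=\{xg^{-1}:x\in\Ss\}$; the second condition is equivalent to $g\in\Ss$, giving the underlying set described in part~1.

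Next I would transport the transformation-groupoid operations to this description. The standard transformation-groupoid inverse is $(g,\Ss)\mapsto(g^{-1},g\cdot\Ss)$, and it preserves $\Gg_\Ll$ because $g\cdot\Ss\in\Xi_\Ll$ by assumption and $e\in\Ss$ translates under the action to $g^{-1}=e\cdot g^{-1}\in g\cdot\Ss$. Composability of $(g',\Ss')$ with $(g,\Ss)$ in $\tilde\Gg_\Ll$ is the condition $\Ss'=g\cdot\Ss$, and the composition $(g',g\cdot\Ss)\cdot(g,\Ss)=(g'g,\Ss)$ remains in $\Gg_\Ll$ because $g\in\Ss$ together with $g'\in g\cdot\Ss$ forces $g'g\in\Ss$, yielding part~2.

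Finally, the topology on $\tilde\Gg_\Ll$ is the subspace topology from $G\times\Omega_\Ll^\times\subseteq G\times\Cc(G)$, so the restriction to $\Gg_\Ll$ automatically carries the subspace topology from $G\times\Cc(G)$ as claimed. The lcsc and \'etale properties are already recorded in the preceding proposition and need not be reproved here. The only mildly delicate points in the whole argument are the Fell-topological closedness of $\{C:e\in C\}$ and the action-theoretic equivalence $e\in g\cdot\Ss\Leftrightarrow g\in\Ss$; everything else is bookkeeping.
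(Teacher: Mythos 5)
Your argument is correct. The paper itself states Proposition~\ref{Prop:GL} as a citation to \cite{MeslandJGP2024} and supplies no proof, so there is nothing internal to compare against; your unwinding of Definition~\ref{Def:TheGroupoid} is exactly the expected argument. The two points you flag as delicate are indeed the only ones that need care, and both are handled correctly: $\{C\in\Cc(G): e\in C\}$ is closed because its complement is covered by the sub-basic Fell-open sets $\{C: C\cap K=\emptyset\}$ for compact neighborhoods $K$ of $e$ (so the closure in the definition of $\Xi_\Ll$ is redundant, the intersection of the two closed sets $\overline{\Oo(\Ll)}$ and $\{C: e\in C\}$ automatically excluding $\emptyset$), and the equivalence $e\in g\cdot\Ss \Leftrightarrow g\in\Ss$ follows directly from the convention $g\cdot C=\{xg^{-1}: x\in C\}$. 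The closure of the set of part~1 under inversion and composition is verified correctly, and deferring the lcsc/\'etale assertions to the preceding proposition of \cite{Enstad1Arxiv2022} is legitimate since Proposition~\ref{Prop:GL} only describes the structure maps and the inherited topology.
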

\begin{remark}\label{Re:GAction}{\rm The source and range maps of $\Gg_{\Ll}$ are
\begin{equation}\label{Eq:RS}
\mathfrak s(g,\Ss) = (e,\Ss), \quad  \mathfrak r(g,\Ss) =  \left(e,g \cdot \Ss \right),
\end{equation}
and its space of units $\Gg^0_{\Ll}$ is naturally homeomorphic to $\Xi_{\Ll}$. Recall that the latter is a compact topological space. Another useful information is the action of $\Gg_{\Ll}$ on its space of units, which goes as follows: If $\Ss \in \Xi_\Ll$ and $\gamma \in \mathfrak s^{-1}(\Ss)$, then $\gamma = (g,\Ss)$ for some $g \in \Ss$ and $\gamma \cdot \Ss = g^{-1} \cdot \Ss$.
}$\Diamond$
\end{remark}

\subsection{Groupoid $C^\ast$-algebras and their physical significance}\label{Sec:GAlg}

Since $\Gg_\Ll$ is \'etale it comes equipped with a natural Haar system:

\begin{proposition} Any uniformly separated pattern $\Ll \in \Cc(G)$ carries a canonical $C^\ast$-algebra, the (full) groupoid $C^\ast$-algebra $C^\ast \Gg_{\Ll}$ corresponding to $\Gg_{\Ll}$ and to its counting measures. 
\end{proposition}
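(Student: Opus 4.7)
The plan is short and largely citational. By the preceding proposition, $\Gg_{\Ll}$ is a locally compact, second countable, \'etale groupoid with compact Hausdorff unit space $\Xi_{\Ll}$, so the statement is an immediate application of Renault's general construction of the full groupoid $C^\ast$-algebra. The only item that requires a moment's verification is that the counting measures indeed furnish a continuous left Haar system; once that is recorded, the full $C^\ast$-algebra is defined by completing $C_c(\Gg_{\Ll})$ in the universal $C^\ast$-norm in the usual way.

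For the Haar system I would take $\lambda^\Ss$ to be the counting measure on the source fiber $\mathfrak s^{-1}(\Ss)=\{(g,\Ss):g\in\Ss\}$, which is discrete because $\Gg_{\Ll}$ is \'etale, so $\lambda^\Ss$ is a well-defined Radon measure. Continuity of the map $\Ss \mapsto \sum_{\gamma\in\mathfrak s^{-1}(\Ss)} f(\gamma)$ for $f\in C_c(\Gg_{\Ll})$ is checked by covering the compact set $\mathrm{supp}(f)$ by finitely many open bisections; on each such bisection the source map is a homeomorphism onto an open subset of $\Xi_{\Ll}$, and a subordinate partition of unity reduces the integral to a finite sum of continuous functions. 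Left-invariance is a triviality: for any composable $\alpha$, left multiplication $\beta\mapsto\alpha\beta$ is a bijection between the relevant source fibers, and counting measure is invariant under bijections.

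With the Haar system in hand, $C_c(\Gg_{\Ll})$ becomes a $\ast$-algebra under the convolution
\begin{equation}
(f_1\ast f_2)(\gamma)=\sum_{\beta\in\mathfrak s^{-1}(\mathfrak s(\gamma))} f_1(\gamma\beta^{-1})\, f_2(\beta),
\end{equation}
which is a finite sum because the factors are compactly supported and the fibers are discrete, together with the involution $f^\ast(\gamma)=\overline{f(\gamma^{-1})}$. The full $C^\ast$-norm is then the universal one, $\|f\|=\sup_\pi\|\pi(f)\|$, taken over all $I$-norm bounded $\ast$-representations of $C_c(\Gg_{\Ll})$. The only mildly technical step, and the closest thing to an obstacle, is checking that this supremum is finite; this is standard and follows from the inequality $\|\pi(f)\|\le\|f\|_I$ valid for any such representation, together with the fact that, since $\Xi_{\Ll}$ is compact and $\Gg_{\Ll}$ is \'etale, $\|\cdot\|_I$ is finite on $C_c(\Gg_{\Ll})$. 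The completion in $\|\cdot\|$ yields $C^\ast\Gg_{\Ll}$, and since every step of the construction depended only on $\Ll$ via the canonical groupoid of Definition~\ref{Def:TheGroupoid}, the result is the canonical $C^\ast$-algebra asserted by the proposition.
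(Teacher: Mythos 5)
Your proposal is correct and follows exactly the route the paper intends: the paper gives no explicit proof, simply noting that any \'etale groupoid carries the Haar system of counting measures and then invoking the standard (Renault) construction of the full groupoid $C^\ast$-algebra, which is precisely what you spell out. The details you supply (continuity of the counting Haar system via bisections, finiteness of the $I$-norm from compactness of $\Xi_{\Ll}$ and \'etaleness) are the standard verifications and are all accurate.
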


\begin{remark}{\rm  All \'etale groupoids encountered in this work are topologically amenable when considered with their Haar systems of counting measures, since they are groupoid-equivalent to transformation groupoids of amenable groups $G$. As such, there is no distinction between their reduced and full $C^\ast$-algebras \cite{DelarocheMEM2000}.
}$\Diamond$
\end{remark}

\begin{remark}\label{Re:LRReps}{\rm $C^\ast \Gg_{\Ll}$ has a family of covariant left-regular representations indexed by $\Xi_\Ll$, induced by the states $\Xi_\Ll \ni \Ss \mapsto \eta_\Ss(f) = f(e,\Ss)$,  $f \in C^\ast \Gg_\Ll$. They are supported on the Hilbert space $\ell^2\big (\mathfrak s^{-1}(\Ss)\big ) = \ell^2(\Ss)$ and act as
\begin{equation}\label{Eq:LRepGood}
\pi_\Ss(f)\, |g'\rangle = \sum_{g\in \Ss}f(gg'^{-1},g'\cdot \Ss)\, |g \rangle
\end{equation}
on the canonical basis of $\ell^2(\Ss)$. The matrix amplifications of the representations formalize the dynamics of electrons for the atomic arrangement $\Ll$, as experienced by observers located at different atom sites. Since the groupoid $\Gg_\Ll$ is amenable, the direct sum $\bigoplus_{\Ss\in \Xi_\Ll}\pi_\Ll$ is a faithful representation and a more detailed study readily shows that $\pi_\Ll$ alone is also faithful already. 
}$\Diamond$
\end{remark}

Let us now motivate the use of those operator algebras in physics. Since we focus on the low energy regime in the one-electron sector, the quantum dynamics of the electrons in a material is generated by a Hamiltonian of the type
\begin{equation}\label{Eq:H1}
H_\Ss = \sum_{x,x' \in \Ss} w_{x,x'}(\Ss) \otimes |x\rangle \langle x' |, \quad w_{x,x'}(\Ss) \in M_N(\CM),
\end{equation}
where the uniformly discrete $\Ss \subset G$ indicates the position of the atoms, $N$ is the number of relevant atomic orbitals and $w_{x,x'}(\Ss)$ are the so-called coupling or hopping matrices. Once the atomic species and their arrangement are fixed the electron dynamics is already fully determined by more fundamental laws of physics. As such, \eqref{Eq:H1} is the result of a map from the space of uniformly separated patterns to a family of coupling matrices indexed by pairs $(x,x') \in \Ss$. Each of them should vary continuously w.r.t local displacements of a finite set of points of $\Ss$. This continuity assumption is inherent to having consistent laboratory measurements, since there are necessarily deviations from an ideal lattice. Furthermore we assume that, in natural as well as synthetic materials, the coupling matrices will be too small to be resolved beyond a finite range, hence the coupling matrices $w_{x,x'}(\Ss)$ returned by an actual experiment may be assumed to vanish if $x'x^{-1}$ is outside a compact vicinity of the origin. We also impose that the coupling matrices satisfy the equivariance relation
\begin{equation}\label{Eq:Equiv1}
w_{g \cdot x,g \cdot x'}(g \cdot \Ss) = w_{x,x'}(\Ss), \ \forall \ g\in G,
\end{equation}
which means that the matrix elements are determined already by the equivalency class under $G$ of small local patches of the pattern around $x,x'$. 

For the translation group $G=\RM^d$ this can be seen as a direct consequence of the Galilean invariance of the fundamental laws of physics at low energies. The assumptions of equivariance, continuity and finite range then combine to the statement that there should exist a continuous function $f\in C_c(\Gg_\Ll, M_N(\CM))$ such that
\begin{equation}
\label{Eq:LRepMC}
w_{x,x'}(\Ss) = f(x-x',\Ss-x'),
\end{equation}
for all $\Ss\in \Xi_\Ll$. The relation \eqref{Eq:LRepMC} is in fact equivalent to \eqref{Eq:LRepGood} for an additively written group action.

In conclusion, all Hamiltonians of the type \eqref{Eq:H1} with coupling matrices satisfying the qualifications mentioned above can be generated from a left regular representation of $M_N(\CM) \otimes C^\ast \Gg_{\Ll}$ and, vice versa, those Hamiltonians are dense in the self-adjoint sector of $M_N(\CM) \otimes C^\ast \Gg_{\Ll}$. Therefore, the canonical $C^\ast$-algebra constructed from a given atomic arrangement can be equally justified by mathematical and physical means. This fundamental principle was first discovered by Jean Bellissard \cite{Bellissard1986} and it was further refined by Johannes Kellendonk \cite{KellendonkRMP95}. Developments that take into account the shape of the (artificial) atoms and are applicable to the many-electron sectors can be found in \cite{MeslandCMP2022,MeslandJGP2024}. Furthermore, the formalism can be easily adapted to account for the presence of various symmetries (see next subsection and subsection \ref{Sec:FSymm}).

\subsection{Automorphic actions}\label{Sec:Auto} The structures defined in the previous subsections behave naturally under the automorphisms of the locally compact group $G$ (which are throughout assumed to be continuous):

\begin{proposition}\label{Prop:GAction1} Let $\sigma \in Aut(G)$. Then:
\begin{enumerate}[{\rm i)}]
\item $\sigma$ induces a homeomorphism $\sigma: \Cc(G)\to \Cc(G)$;
\item For a fixed uniformly separated pattern $\Ll$, we have natural homeomorphisms
\begin{equation}
\Ll \to \sigma (\Ll), \quad \Omega^\times_\Ll \to \Omega^\times_{\sigma(\Ll)}, \quad \Xi_\Ll \to \Xi_{\sigma(\Ll)};
\end{equation}
\item There is a groupoid isomorphism 
\begin{equation}
\alpha_\sigma : \Gg_\Ll \to \Gg_{\sigma(\Ll)}, \quad (g,\Ss) \mapsto (\sigma(g),\sigma(\Ss)).
\end{equation}
\end{enumerate}
\end{proposition}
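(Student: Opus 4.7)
The plan is to dispatch the three claims in order, each being a direct consequence of the functoriality of the constructions under a continuous group automorphism $\sigma$, which I will use throughout as a homeomorphism of $G$.

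For (i), I recall that Fell's topology on $\Cc(G)$ admits a sub-basis consisting of the sets $U(K)=\{C\in \Cc(G): C\cap K=\emptyset\}$ for $K\subset G$ compact and $V(W)=\{C\in \Cc(G): C\cap W\neq \emptyset\}$ for $W\subset G$ open. The assignment $C\mapsto \sigma(C)$ is well-defined on closed sets because $\sigma$ is a homeomorphism of $G$, and the preimage of a sub-basis element is $U(\sigma^{-1}(K))$ or $V(\sigma^{-1}(W))$, i.e.\ again a sub-basis element of the same form. Continuity follows, and the same argument applied to $\sigma^{-1}$ delivers a homeomorphism.

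For (ii), the crucial point is that the induced map on $\Cc(G)$ intertwines the $G$-actions, $\sigma(g\cdot C)=\sigma(g)\cdot \sigma(C)$, which is a direct consequence of $\sigma$ being a group homomorphism together with the explicit form $g\cdot C=\{xg^{-1}: x\in C\}$. Consequently $\sigma$ sends the orbit of $\Ll$ bijectively onto the orbit of $\sigma(\Ll)$ and, by continuity, restricts to a homeomorphism of the punctured hulls $\Omega^\times_\Ll\to \Omega^\times_{\sigma(\Ll)}$. For the transversals, I observe that $e\in \Ss$ if and only if $e=\sigma(e)\in \sigma(\Ss)$, so the hull homeomorphism restricts to a bijection between $\{\Ss\in \Omega^\times_\Ll: e\in \Ss\}$ and the corresponding subset of $\Omega^\times_{\sigma(\Ll)}$, and descends to the desired homeomorphism $\Xi_\Ll\to \Xi_{\sigma(\Ll)}$ on closures. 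The map $\Ll \to \sigma(\Ll)$ is simply the restriction of $\sigma$ viewed as a map between subsets of $G$ with the subspace topology, and is clearly a homeomorphism.

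For (iii), I define $\alpha_\sigma$ by $(g,\Ss)\mapsto (\sigma(g),\sigma(\Ss))$. Well-definedness on $\Gg_\Ll$ follows from the implication $g\in \Ss \Rightarrow \sigma(g)\in \sigma(\Ss)$ combined with the transversal homeomorphism of (ii). The composability condition $\Ss'=g\cdot \Ss$ is preserved because $\sigma(\Ss')=\sigma(g)\cdot \sigma(\Ss)$, and the identities $\alpha_\sigma((g',\Ss')\cdot(g,\Ss))=(\sigma(g')\sigma(g),\sigma(\Ss))$ and $\alpha_\sigma((g,\Ss)^{-1})=(\sigma(g)^{-1},\sigma(g)\cdot\sigma(\Ss))$ follow from the same intertwining identity. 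Continuity of $\alpha_\sigma$ is inherited from the continuity of $\sigma$ on $G$ and on $\Cc(G)$ via (i), and its two-sided inverse is given by $\alpha_{\sigma^{-1}}$, making $\alpha_\sigma$ a topological groupoid isomorphism. The argument raises no serious obstacle; the only step requiring any care is the continuity check in Fell's topology for (i), which reduces to a short manipulation of sub-basis elements.
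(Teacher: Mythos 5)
Your proposal is correct and follows essentially the same route as the paper: the key points are the intertwining identity $\sigma(g\cdot C)=\sigma(g)\cdot\sigma(C)$, the observation that $e=\sigma(e)$ so the transversal condition is preserved, and the direct verification of composition and inversion for $\alpha_\sigma$. The only difference is that for part (i) you give a self-contained sub-basis argument in the Fell topology where the paper simply cites the literature; both are fine.
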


\begin{proof} i) This follows from \cite[Remark 4.4.]{DonjuanTA2022}. ii) If $e \in \Ss$, then $e$ also belongs to $\sigma(\Ss)$ because group automorphisms act trivially on the neutral element. Thus, the map $\Xi_\Ll \to \Xi_{\sigma(\Ll)}$ is obvious. iii) Let us first check the composition law. Taking two composable elements from $\Gg_\Ll$, $(g',g\cdot \Ss)$ and $(g,\Ss)$, we have
\begin{equation}
\alpha_\sigma (g',g\cdot \Ss) \cdot \alpha_\sigma (g,\Ss) =(\sigma (g'), \sigma(g\cdot \Ss)) \cdot (\sigma(g),\sigma(\Ss)).  
\end{equation}
If $\sigma$ is an automorphism, then $\sigma (g\cdot \Ss) = \sigma(g) \cdot \sigma(\Ss)$ and we can see that the two elements of $\Gg_\Ll$ remain composable after $\sigma$ is applied. Furthermore, we can complete the calculation and conclude
\begin{equation}
\alpha_\sigma((g',g\cdot \Ss)) \cdot \alpha_\sigma((g,\Ss)) = (\sigma(g'), \sigma(g)\cdot\sigma(\Ss)).
\end{equation}
On the other hand, 
\begin{equation}
\alpha_\sigma \big ((g',g\cdot \Ss) \cdot (g,\Ss)\big ) =(\sigma(g' g),\sigma(\Ss)),
\end{equation}
and the two results coincide as long as $\sigma$ is an automorphism of $G$. As for inversion, we have
\begin{equation}
\alpha_\sigma\big ((g,\Ss)^{-1}\big) = \alpha_\sigma(g^{-1},g\cdot \Ss)=\big (\sigma(g^{-1}),\sigma(g \cdot \Ss)  \big ),
\end{equation}
while
\begin{equation}
\big (\alpha_\sigma (g,\Ss)\big)^{-1} = \big (\sigma(g)^{-1},\sigma(g) \cdot  \sigma(\Ss)  \big ).
\end{equation}
The two results are identical if $\sigma$ is a group automorphism.
\end{proof}

\begin{corollary}\label{Cor:GAction1} The groupoid isomorphism induced by an automorphism of $G$ lifts to an isomorphism of $C^\ast$-algebras
\begin{equation}
C^\ast \Gg_\Ll \ni f \mapsto \alpha_\sigma^\ast(f) : = f\circ \alpha^{-1}_\sigma \in C^\ast \Gg_{\sigma(\Ll)}
\end{equation} 
\end{corollary}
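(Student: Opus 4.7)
The plan is to obtain the claim as a straightforward functoriality statement: an isomorphism of étale groupoids whose Haar systems are intertwined induces an isomorphism of the associated groupoid $C^\ast$-algebras. The previous proposition has already supplied the algebraic groupoid isomorphism $\alpha_\sigma$, so the remaining content is to upgrade this to a $C^\ast$-level statement.

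First I would check that $\alpha_\sigma$ is actually an isomorphism in the category of topological (in fact \'etale) groupoids. Continuity of $\alpha_\sigma$ and of $\alpha_{\sigma^{-1}}$ follows from the fact that $\sigma \in \mathrm{Aut}(G)$ is assumed continuous (and thus a homeomorphism) together with the induced homeomorphism $\sigma : \Cc(G) \to \Cc(G)$ coming from Proposition~\ref{Prop:GAction1}(i); restricting to the transversal and taking the product topology on $G \times \Cc(G)$ shows $\alpha_\sigma$ is a homeomorphism $\Gg_\Ll \to \Gg_{\sigma(\Ll)}$. Because $\alpha_\sigma$ maps the source fiber $\mathfrak s^{-1}(\Ss)=\{(g,\Ss):g\in \Ss\}$ bijectively onto $\mathfrak s^{-1}(\sigma(\Ss))=\{(\sigma(g),\sigma(\Ss)):g\in \Ss\}$, it is a local homeomorphism and intertwines the canonical counting-measure Haar systems on the two \'etale groupoids.

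Next I would define the pullback $\alpha_\sigma^\ast(f)=f\circ \alpha_\sigma^{-1}$ on $C_c(\Gg_\Ll)$ and verify directly that it lands in $C_c(\Gg_{\sigma(\Ll)})$ and preserves the $\ast$-algebra structure. For convolution, using that $\alpha_\sigma$ respects composability and that the counting measure is invariant under the bijection of source fibers, the standard one-line calculation
\begin{equation}
(\alpha_\sigma^\ast f \ast \alpha_\sigma^\ast f')(\gamma)
= \sum_{\eta \in \mathfrak s^{-1}(\mathfrak s(\gamma))} (\alpha_\sigma^\ast f)(\gamma \eta^{-1})(\alpha_\sigma^\ast f')(\eta)
= \alpha_\sigma^\ast(f\ast f')(\gamma)
\end{equation}
goes through for $\gamma \in \Gg_{\sigma(\Ll)}$, and similarly $\alpha_\sigma^\ast(f^\ast)=(\alpha_\sigma^\ast f)^\ast$ follows from $\alpha_\sigma$ commuting with inversion (already verified in the previous proposition). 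Hence $\alpha_\sigma^\ast$ is an injective $\ast$-algebra isomorphism $C_c(\Gg_\Ll) \to C_c(\Gg_{\sigma(\Ll)})$ with inverse $\alpha_{\sigma^{-1}}^\ast$.

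Finally I would extend $\alpha_\sigma^\ast$ to $C^\ast \Gg_\Ll$ by appealing to universality of the full groupoid $C^\ast$-algebra: the composition of $\alpha_\sigma^\ast$ with the canonical embedding $C_c(\Gg_{\sigma(\Ll)}) \hookrightarrow C^\ast \Gg_{\sigma(\Ll)}$ yields a $\ast$-homomorphism of $C_c(\Gg_\Ll)$ which is bounded in the universal $C^\ast$-norm, so it extends, and applying the same construction to $\sigma^{-1}$ produces a two-sided inverse. Since both groupoids are amenable (as noted in the remark after Proposition~\ref{Prop:GL}), full and reduced norms agree, so the conclusion holds uniformly. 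I expect no serious obstacle; the only point requiring some care is making sure the counting Haar systems are indeed intertwined by $\alpha_\sigma$, which is the reason the convolution calculation reduces to a relabelling of the summation index.
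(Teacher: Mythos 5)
Your proof is correct and follows essentially the same route as the paper: the key observation in both cases is that $\alpha_\sigma$ is a bijective groupoid isomorphism intertwining the counting-measure Haar systems, from which the $C^\ast$-isomorphism follows. The only difference is that the paper delegates the functoriality step to a citation (\cite{AustinNYJM2021}, Prop.~2.7), whereas you prove it directly by the standard convolution-relabelling argument on $C_c$ and extension via the universal norm, with amenability guaranteeing that the full and reduced completions agree.
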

\begin{proof}  $\alpha_\sigma$ is bijective and preserves the Haar system. Then the statement follows from \cite[Prop.~2.7]{AustinNYJM2021}.\end{proof}

\begin{example}{\rm For our concrete physical systems, $G = \RM^d$. Then the group of rotations, proper or otherwise, embeds in $Aut(\RM^d)$. Point groups are finite groups of rotations and they will enter in our analysis via the actions described in Corollary~\ref{Cor:GAction1}.
}$\Diamond$
\end{example}

\subsection{Mechanism of ordinary bulk-defect correspondences}\label{Sec:OrdinaryBD}
A geometrical defect in a pattern, such as a local modification or boundary, manifests itself as a feature that can be made to disappear in the limit by translating it to infinity (which is well-described by limits in the Fell topology). The groupoid $\Gg_{\Ll}$  associated to a pattern has a canonical action on its unit space $\Xi_{\Ll}$ which was spelled out in Remark~\ref{Re:GAction}. It was observed in \cite{ProdanJPA2021} that, in the presence of a geometrical defect, the space of units $\Xi_{\Ll}$ can display one or more closed subspaces that are invariant against the mentioned groupoid action. Furthermore:

\begin{proposition}[\cite{WilliamsBook2}, Thm.~5.1] If $\Xi_{\Ll}^\infty$ is a closed and invariant subspace of the unit space $\Xi_\Ll$ and $\Xi_{\Ll}^c$ is its open complement, then $\left . C^\ast \Gg_{\Ll}\right |_{\Xi_{\Ll}^c}$ is a closed ideal of the groupoid $C^\ast$-algebra and we have the following short exact sequence 
\begin{equation}\label{Eq:Ext1}
0 \rightarrow \left . C^\ast \Gg_{\Ll}\right |_{\Xi_{\Ll}^c} \stackrel{i}{\rightarrow} C^\ast \Gg_{\Ll} \stackrel{\mathfrak{p}}{\rightarrow} \left .  C^\ast \Gg_{\Ll}\right |_{\Xi_{\Ll}^\infty} \rightarrow 0.
\end{equation}
\end{proposition}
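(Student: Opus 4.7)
The plan is to work at the level of continuous compactly supported functions and then pass to the $C^\ast$-completions, exploiting the topological amenability of $\Gg_\Ll$ noted earlier in the text so that full and reduced norms agree and short exact sequences lift from the algebraic level. Let $U=\Xi_\Ll^c$, let $F=\Xi_\Ll^\infty$, and write $\Gg_U=\Gg_\Ll|_U$ (open subgroupoid, étale) and $\Gg_F=\Gg_\Ll|_F$ (closed subgroupoid). The invariance hypothesis is crucial: since $F$ is $\Gg_\Ll$-invariant, any arrow $(g,\Ss)\in\Gg_\Ll$ with $\Ss\in F$ satisfies $g\cdot\Ss\in F$ as well, and symmetrically for $U$.

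First I would describe the ideal. Extension by zero yields an inclusion $j\colon C_c(\Gg_U)\hookrightarrow C_c(\Gg_\Ll)$ that respects the involution and the convolution product (the latter is a routine check using the fact that $U$ is open, hence $\Gg_U\subseteq\Gg_\Ll$ is open, so extending by $0$ keeps functions continuous). To see that $j\bigl(C_c(\Gg_U)\bigr)$ is a two-sided ideal in $C_c(\Gg_\Ll)$, pick $f\in C_c(\Gg_\Ll)$ and $k\in C_c(\Gg_U)$; then
\begin{equation}
(f\ast k)(g,\Ss)=\sum_{h\in\Ss} f(gh^{-1},h\cdot\Ss)\, k(h,\Ss),
\end{equation}
and this vanishes unless $\Ss\in U$, because $k$ is supported on $\Gg_U$ which forces $\Ss\in U$. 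A symmetric argument handles $k\ast f$, using that $U$ is $\Gg_\Ll$-invariant so that the range $h\cdot\Ss$ of an arrow in $\Gg_U$ also lies in $U$. Thus $j(C_c(\Gg_U))$ is an ideal, and passage to the $C^\ast$-closure gives an ideal $C^\ast\Gg_U\trianglelefteq C^\ast\Gg_\Ll$ (injectivity of $j$ on the $C^\ast$-level is a consequence of amenability of $\Gg_U$, which is inherited from that of $\Gg_\Ll$ because $\Gg_U$ is open).

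Next I would construct the quotient map $\mathfrak p$. Restriction along the inclusion $\Gg_F\hookrightarrow\Gg_\Ll$ yields a $\ast$-homomorphism $\rho\colon C_c(\Gg_\Ll)\to C_c(\Gg_F)$ (again, the closedness of $F$ and its invariance ensure that the convolution formula restricts cleanly). Surjectivity of $\rho$ at the $C_c$-level is the step where one must be a bit careful: it follows from a groupoid Tietze-type extension argument, i.e.\ any $\phi\in C_c(\Gg_F)$ extends to a function in $C_c(\Gg_\Ll)$ because $\Gg_F$ is closed in the locally compact Hausdorff space $\Gg_\Ll$. By construction $\ker\rho$ consists exactly of those $f\in C_c(\Gg_\Ll)$ vanishing on $\Gg_F$, i.e.\ those supported in $\Gg_U$, which is exactly $j(C_c(\Gg_U))$. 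Hence at the $C_c$-level the sequence is exact.

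The main obstacle, and the step that requires appeal to the amenability remark, is lifting the algebraic short exact sequence to the $C^\ast$-completions. One needs that the $C^\ast$-norm on $C_c(\Gg_\Ll)$ induces on $j(C_c(\Gg_U))$ exactly the $C^\ast$-norm of $C^\ast\Gg_U$, and that $\rho$ extends to a quotient map with the correct kernel. For étale groupoids this is standard once reduced and full norms coincide, and here that coincidence is guaranteed because $\Gg_\Ll$, $\Gg_U$, and $\Gg_F$ are all topologically amenable (being groupoid-equivalent to transformation groupoids of the amenable group $G=\RM^d$, together with the fact that amenability passes to open and closed invariant subgroupoids). Combining these facts, the algebraic exact sequence closes up to the $C^\ast$-level exact sequence \eqref{Eq:Ext1}, and the resulting maps $i$ and $\mathfrak p$ are the canonical inclusion and restriction, respectively.
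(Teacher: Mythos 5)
The paper offers no proof of this proposition --- it is quoted from \cite{WilliamsBook2}, Thm.~5.1 --- and your overall strategy (extension by zero gives an ideal, restriction along the closed invariant subgroupoid gives the quotient, amenability to control the completions) is the standard route taken in that reference. Your verification that $j(C_c(\Gg_U))$ is a two-sided ideal and that the restriction map $\rho$ is surjective onto $C_c(\Gg_F)$ via a Tietze-type extension is correct.

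There is, however, one concrete gap, and it sits exactly where the real content of the theorem lives. You assert that $\ker\rho$ at the $C_c$-level ``consists exactly of those $f$ vanishing on $\Gg_F$, \dots which is exactly $j(C_c(\Gg_U))$.'' This is false in general: a function $f\in C_c(\Gg_\Ll)$ can vanish identically on the closed set $\Gg_F$ while being nonzero at points accumulating on $\Gg_F$, so that $\mathrm{supp}(f)=\overline{\{f\neq 0\}}$ meets $\Gg_F$ and $f$ is not the extension by zero of any element of $C_c(\Gg_U)$. This situation actually occurs for the transversals in this paper --- in Example~\ref{Ex:T1} the bulk pattern $\ZM^d$ is a limit point of the open orbit $\Oo(\Ll)$, so continuous functions vanishing at $\ZM^d$ need not be compactly supported in the complement. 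Consequently the sequence of $\ast$-algebras is \emph{not} exact in the middle at the $C_c$-level; what is true, and what must be proved, is that $j(C_c(\Gg_U))$ is dense in $\ker\mathfrak{p}$ after completion. This density claim cannot be absorbed into the remark that ``full equals reduced for amenable groupoids'': for non-amenable groupoids the reduced sequence can genuinely fail to be exact in the middle (failure of inner exactness), so amenability --- or, alternatively, working with the full $C^\ast$-algebra and showing that every representation of $C_c(\Gg_\Ll)$ annihilating $C_c(\Gg_U)$ factors through $C_c(\Gg_F)$ --- is doing real work precisely at this step. To close the gap you should either invoke that universal-property argument directly, or supply the approximation argument (e.g.\ multiplying elements of $\ker\mathfrak{p}$ by an approximate unit built from functions in $C_c(U)$) showing that every element of $\ker\mathfrak{p}$ is a norm limit of elements of $j(C_c(\Gg_U))$.
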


\begin{figure}[t]
\center
\includegraphics[width=0.7\textwidth]{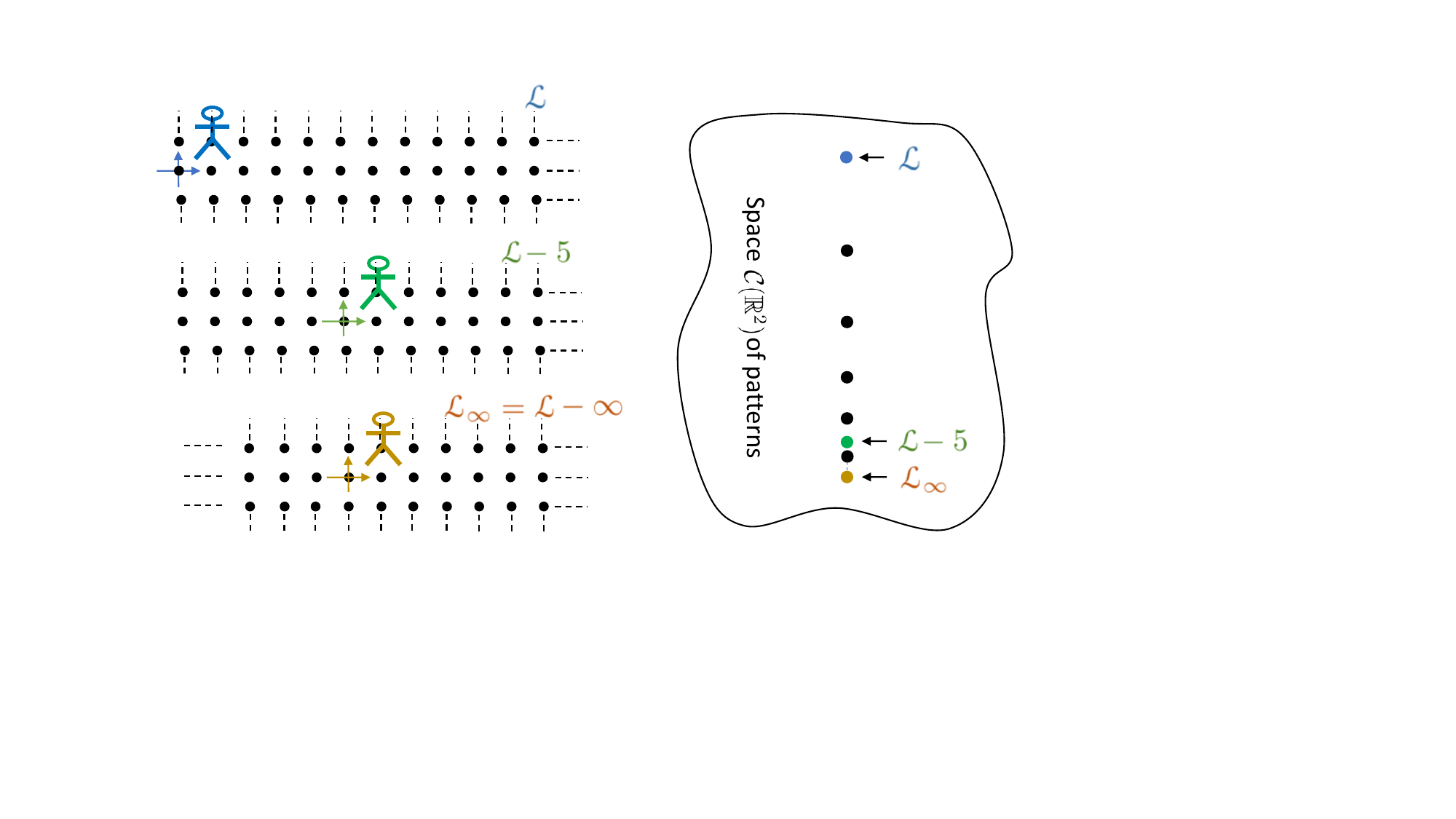}\\
  \caption{\small The process of deriving $\Xi_{\Ll}$ for the pattern $\Ll = \NM \times \ZM$ from Example~\ref{Ex:T1}.
}
 \label{Fig:T1}
\end{figure}


\begin{example}\label{Ex:T1}{\rm Consider the case of the pattern $\Ll = \NM \times \ZM^{d-1} \subset G=\RM^d$, hence a truncated lattice with a boundary at $x=0$. It is a simple exercise to derive the transversal $\Xi_{\Ll}$ for this pattern. First, if $\Ss \in \Xi_\Ll$, its orbit under the groupoid action is $\Oo(\Ss)=\{\Ss - x, \  x \in \Ss\}$ (see Remark~\ref{Re:GAction}). Then it is easy to show that
\begin{equation}
\Xi_{\Ll}=\Oo(\Ll)\cup \{\ZM^d\}
\end{equation}
as a disjoint union, i.e. the periodic lattice $\ZM^d \in \Cc(\RM^d)$ is in the closure of $\Oo(\Ll)$ and that it and the empty set are the only limit points. Furthermore, $\ZM^d$ is invariant under the groupoid action of $\Gg_\Ll$ when viewed as an element of the unit space $\Xi_\Ll$. 

\begin{remark}
    {\rm An intuitive way to see this without computation is to note that the Fell topology on $\Cc(\RM^d)$ coincides with the one generated by the Hausdorff metric on the closed subsets of the one point compactification of $\RM^d$. The latter is popular in image processing \cite{BarnsleyBook1993}, and indeed one can visualize $\Oo(\Ll)$ as the sequence of patterns seen by an observer moving along $\Ll$ (see Figure~\ref{Fig:T1}). The Fell topology expresses very well that the translates $\Ll-x$ become visually more and more similar to an infinite lattice without boundary as the observer position $x$ becomes more and more distant from the edge.} $\Diamond$
\end{remark}

The representation of the groupoid algebra $\Hh = C^\ast \Gg_\Ll$ on $\ell^2(\Ll)$ describes the physical Hamiltonians supported by a half-space pattern. The subset $\Xi_\Ll^\infty:= \{\ZM^d\}$ is translation-invariant and its groupoid algebra $C^\ast \Gg_{\Ll}\vert_{\Xi^\infty_\Ll}$, equal to the group $C^*$-algebra $C^\ast \ZM^d$,  models the bulk of the system. Since $\Xi_\Ll^c:= \Oo(\NM \times \ZM^{d-1})$ is open, the associated groupoid algebra $\Ee=C^\ast \Gg_{\Ll}\vert_{\Xi^c_\Ll}$ is non-unital and only contains elements that vanish far away from the boundary. Hence, it models the observations made around the boundary. Furthermore, the exact sequence~\eqref{Eq:Ext1} is isomorphic to the Toeplitz extension used in the standard bulk-boundary correspondence \cite{KellendonkRMP2002,ProdanSpringer2016} (see \cite{ProdanJPA2021} for the explicit mapping).
}$\Diamond$

%
\end{example}

\begin{definition}\label{Def:Defect} We say that the pattern $\Ll$ contains an elementary geometric defect if the transversal $\Xi_\Ll$ has a unique proper closed subset that is invariant under the $\Gg_\Ll$ action.
\end{definition}

Example~\ref{Ex:T1} shows that cutting a material in half results in an elementary defect. In \cite{ProdanJPA2021}, it was shown explicitly that disclinations are also elementary defects in the sense of Definition~\ref{Def:Defect}, and it was pointed further that all standard material defects (see \cite{MerminRMP1979} for a catalog) can be characterized in this way. Furthermore:

\begin{proposition}[\cite{ProdanJPA2021}] Elementary defects can be classified by the isomorphism class of the extension $ext({\Xi_\Ll^\infty})$ in \eqref{Eq:Ext1} associated to the unique decomposition of the transversal. As a consequence \cite{KasparovJSM1981}, each elementary geometric defect carries the topological charge
\begin{equation}
[ext({\Xi_\Ll^\infty})]_1 \in KK_1 (C^\ast \Gg_\Ll|_{\Xi_{\Ll}^\infty},C^\ast \Gg_\Ll|_{\Xi_{\Ll}^c}),
\end{equation}
valued in the complex $KK$-theory.
\end{proposition}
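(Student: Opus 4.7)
The plan is to unpack two claims: (a) the assignment from a pattern $\Ll$ with an elementary defect to the short exact sequence \eqref{Eq:Ext1} is canonical, and (b) Kasparov's construction turns this extension into a well-defined element of $KK_1$. For (a), I would first observe that Definition~\ref{Def:Defect} singles out the invariant closed subset $\Xi_\Ll^\infty \subset \Xi_\Ll$ uniquely, so that the preceding proposition produces the short exact sequence \eqref{Eq:Ext1} with no residual freedom of choice. Naturality under isomorphisms of patterns then follows from Corollary~\ref{Cor:GAction1}: any automorphism of $G$ carrying $\Ll$ to an isomorphic pattern $\Ll'$ induces a groupoid isomorphism $\Gg_\Ll \to \Gg_{\Ll'}$ that, by the uniqueness clause of Definition~\ref{Def:Defect}, must send $\Xi_\Ll^\infty$ to $\Xi_{\Ll'}^\infty$ and hence yields an isomorphism of the two short exact sequences. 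The strong isomorphism class of \eqref{Eq:Ext1} is therefore an intrinsic invariant attached to the defect, which is precisely the content of the first sentence of the proposition.

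For (b), I would invoke Kasparov's identification \cite{KasparovJSM1981} of the group of strong-isomorphism classes of semi-split extensions with the bivariant $KK_1$ group. The only hypothesis to verify is that \eqref{Eq:Ext1} is semi-split, which by the Choi--Effros lifting theorem follows from nuclearity of the quotient $C^\ast \Gg_\Ll|_{\Xi_\Ll^\infty}$. Nuclearity in turn is a consequence of the topological amenability of the restricted groupoid $\Gg_\Ll|_{\Xi_\Ll^\infty}$, which is inherited from the amenability of $\Gg_\Ll$ that is noted in Section~\ref{Sec:GAlg} (following from the groupoid-equivalence with the transformation groupoid of the amenable group $G$, together with the fact that amenability passes to restrictions to closed invariant subsets of the unit space). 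Given a completely positive section of $\mathfrak{p}$, the Busby invariant formalism then produces the desired class $[ext(\Xi_\Ll^\infty)]_1 \in KK_1(C^\ast \Gg_\Ll|_{\Xi_\Ll^\infty}, C^\ast \Gg_\Ll|_{\Xi_\Ll^c})$.

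The main obstacle I anticipate is essentially bookkeeping rather than mathematical: making precise the sense in which amenability, and the attendant nuclearity and separability properties, descend to the restrictions of $\Gg_\Ll$ to both the open and closed invariant subsets of the unit space. This requires citing the correct descent results from the groupoid $C^\ast$-algebra literature (in the spirit of Anantharaman-Delaroche--Renault), but introduces no new mathematical content. Once these inputs are in place, canonicity of the extension and the passage to $KK_1$ via Kasparov's theorem are each essentially immediate.
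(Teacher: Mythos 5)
The paper does not reprove this statement; it is imported verbatim from \cite{ProdanJPA2021} with no argument given, so there is no in-paper proof to compare against. Your proposal is nevertheless correct and follows the standard (and intended) route: uniqueness of the proper closed invariant subset in Definition~\ref{Def:Defect} makes the extension \eqref{Eq:Ext1} canonical and natural under the groupoid isomorphisms of Corollary~\ref{Cor:GAction1}, while amenability of the restricted groupoids (which does pass to reductions over closed and open invariant subsets of the unit space) gives nuclearity and separability of the quotient, hence semi-splitness via Choi--Effros, and Kasparov's identification of semi-split extension classes with $KK_1$ does the rest. The only point worth making explicit beyond your sketch is that the $KK_1$ class is an invariant of the strong isomorphism class of the extension but is in general coarser than it, which is consistent with the proposition's phrasing ("as a consequence").
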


The ordinary bulk-defect correspondences involve elementary defects and all can be explained by the connecting maps induced in the appropriate $K$-theories by the exact sequence~\eqref{Eq:Ext1}. In fact, one can describe them in a unifying way using a key observation from \cite{BourneMPL2015}. In the absence of any symmetry, this description is as follows:

\begin{proposition}[\cite{ProdanJPA2021}]\label{Prop:SpecStat}
Associated to the exact sequence \eqref{Eq:Ext1} there is a natural connecting homomorphism $\partial_0$ which makes the following sequence of complex K-groups exact
\begin{equation}\label{Eq:k_bulkdefect}... \rightarrow K_0(C^\ast \Gg_{\Ll}) \stackrel{\mathfrak p_*}{\rightarrow}  K_0(C^\ast \Gg_{\Ll} |_{\Xi_{\Ll}^\infty}) \stackrel{\partial_0} \rightarrow K_1(C^\ast \Gg_{\Ll} |_{\Xi_{\Ll}^c})  \stackrel{i_*}{\rightarrow} K_1(C^\ast \Gg_{\Ll})\rightarrow...
\end{equation}
Let $h_b \in  M_N(\CM) \otimes C^\ast \Gg_{\Ll}|_{\Xi_{\Ll}^\infty}$ be a spectrally gapped bulk Hamiltonian and let $[\gamma_{h_b}]_0$ be the class of the spectral projection onto the spectrum below the gap in the complex $K$-group $K_0( C^\ast \Gg_{\Ll}|_{\Xi_{\Ll}^\infty})$.  If the class $\partial_0([\gamma_{h_b}]_0)\in K_{1} (C^\ast \Gg_\Ll|_{\Xi_{\Ll}^c})$ is nontrivial, then any lift of $h_b$ to $M_N(\CM) \otimes C^\ast \Gg_\Ll$ under the map $\mathfrak{p}$ in \eqref{Eq:Ext1} has spectrum inside the spectral  gap of $h$.
\end{proposition}

\begin{remark}\label{remark 223}{\rm The exactness of \eqref{Eq:k_bulkdefect} means that the class $[\gamma_{h_b}]_0$ has a pre-image in $K_0(C^\ast \Gg_\Ll)$ if and only if $\partial_0([\gamma_{h_b}]_0)$ is trivial. The occurrence of defect-localized states in the bulk gap protected by a K-theory class in $K_{1} (C^\ast \Gg_\Ll|_{\Xi_{\Ll}^c})$ is therefore exactly the obstruction to this lifting problem in K-theory. This characterization will be crucial when discussing higher-order correspondences.
}$\Diamond$
\end{remark}
\begin{remark}{\rm 	
An insightful way to write the connecting map in \eqref{Eq:k_bulkdefect} is as the Kasparov product 
$\partial_0([\gamma_{h_b}]_0)= [\gamma_{h_b}]_0 \times [ext({\Xi_\Ll^\infty})]_1$. It highlights that the connecting maps are not just plain homomorphisms of abelian groups but have very particular properties, such as compatibility with the associativity of the Kasparov product. Moreover, the algebraic structure on the extension classes encodes precise and useful topological information.
}$\Diamond$
\end{remark}
\begin{remark}{\rm For simplicity we ignored symmetries and just used complex K-theory in this section. If the short exact sequence is equivariant under a finite group $\Gamma$ then all of the above will of course also true for equivariant K-theory if the Hamiltonians are assumed to be symmetric.}
\end{remark}

To conclude, the ordinary bulk-defect correspondences stem from filtrations $\Xi_\Ll^\infty \subset \Xi_\Ll$ of length 1 of the transversals of the patterns with elementary geometric defects. The higher-order bulk-boundary correspondences will be associated with filtrations of the unit space that have lengths strictly larger than 1.

\section{Higher-order correspondences: Building the $C^\ast$-models}
\label{Sec:ModelAlg}

In this section, we supply the technical background for Propositions~\ref{Prop:ModelAlg} and \ref{Prop:Filtrations} and demonstrate how they play out for the three concrete cases of quarter, square and cube geometries.

\subsection{Technical statements}\label{Sec:TS}
\begin{proposition}\label{Prop:GGlobal} Let $(\Ll^\lambda)_{\lambda\in \Lambda}$ be a finite family of $U$-discrete subsets of $\Cc(G)$ and define their global transversal as
\begin{equation}
\Xi : = \mathsmaller{\bigcup}\nolimits_{\lambda\in \Lambda} \,  \Xi_{\Ll^\lambda}.
\end{equation}
	Then $\Gg_\Xi := \bigcup_{\lambda\in \Lambda} \Gg_{\Ll^\lambda}$ can be given the structure of a lcsc \'etale groupoid with unit space $\Xi$ and the same algebraic relations and topology as in Proposition~\ref{Prop:GL}, with $\Xi_\Ll$ replaced by $\Xi$. It is the co-limit under the diagrams induced by the inclusion maps
	\begin{equation}\label{Eq:ColimitDiag}
		\begin{tikzcd}
			\Gg_{\Ll^\lambda}   & \Gg_{\Ll^{\lambda}} \cap \Gg_{\Ll^{\lambda'}} & \Gg_{\Ll^{\lambda'}}
			\arrow[leftarrowtail ,from=1-1, to=1-2]
			\arrow[rightarrowtail ,from=1-2, to=1-3]
		\end{tikzcd}, \quad (\lambda,\lambda') \in \Lambda \times \Lambda,
	\end{equation}
	i.e. it is the smallest topological groupoid such that each of the diagrams
	\begin{equation}\label{Eq:GSquare}
		\begin{tikzcd}
			\Gg_\Xi & \Gg_{\Ll^{\lambda'}} \\
			\Gg_{\Ll^\lambda}   & \Gg_{\Ll^\lambda} \cap \Gg_{\Ll^{\lambda'}}
			\arrow[leftarrowtail,from=1-1, to=1-2]
			\arrow[rightarrowtail,from=2-1, to=1-1]
			\arrow[leftarrowtail ,from=2-1, to=2-2]
			\arrow[rightarrowtail ,from=2-2, to=1-2]
		\end{tikzcd}
	\end{equation}
	commutes. Furthermore, $\Gg_\Xi$ is amenable.
\end{proposition}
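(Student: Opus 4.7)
The plan is to establish the four assertions in turn, leveraging Proposition~\ref{Prop:GL} at each index $\lambda$ and the fact that all the constructions sit coherently inside the ambient space $G \times \Cc(G)$. First, I would check that $\Gg_\Xi$ inherits a topological groupoid structure. Each $\Gg_{\Ll^\lambda}$ is the subset of pairs $(g,\Ss) \in G \times \Cc(G)$ with $\Ss \in \Xi_{\Ll^\lambda}$ and $g \in \Ss$, and the formulas for inversion and composition depend only on the pair $(g,\Ss)$, not on $\lambda$. Hence the algebraic operations agree on the overlaps $\Gg_{\Ll^\lambda} \cap \Gg_{\Ll^{\lambda'}}$, which coincide with the restrictions of either $\Gg_{\Ll^\lambda}$ or $\Gg_{\Ll^{\lambda'}}$ to $\Xi_{\Ll^\lambda} \cap \Xi_{\Ll^{\lambda'}}$. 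The unit space is $\Xi$, a finite union of compact subsets of the Fell space, hence compact, Hausdorff, and second countable. Continuity of the composition and inversion on the subspace topology from $G \times \Cc(G)$ is inherited automatically.

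For the \'etale property, since all $\Ll^\lambda$ are $U$-separated by the same $U$, the closedness of $U$-separated sets in the Fell topology (cited from \cite{Enstad1Arxiv2022}) ensures that every $\Ss \in \Xi$ is $U$-separated. Then, for $(g,\Ss) \in \Gg_\Xi$, one shrinks a neighborhood $V$ of $g$ so that $V$ meets each $\Ss' \in \Xi$ in at most one point; on the bisection indexed by such a $V$ the source and range maps become homeomorphisms onto open subsets of $\Xi$. For the co-limit property, the universal map $\phi : \Gg_\Xi \to H$ induced by any compatible cocone $\{\phi_\lambda\}$ is forced set-theoretically. Because $\Xi_{\Ll^\lambda}$ is closed in $\Xi$ (as a closed subset of $\Cc(G)$), so is $\Gg_{\Ll^\lambda}$ closed in $\Gg_\Xi$; the cover is finite and closed, so the pasting lemma yields continuity of $\phi$, and the groupoid axioms are verified on each piece.

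For amenability, the cleanest route is via Morita equivalence with the ambient transformation groupoid. I would consider the compact $G$-invariant set $\Omega := \overline{\bigcup_{\lambda \in \Lambda} \Omega^\times_{\Ll^\lambda}} \subset \Cc(G)$; the transformation groupoid $\Omega \rtimes G$ is amenable because $G$ is amenable \cite{DelarocheMEM2000}. I would then verify that $\Xi$ is an abstract transversal of $\Omega \rtimes G$ in the sense of \cite{MuhlyJOT1987}: every $G$-orbit in $\Omega$ hits $\Xi$, because any $\Ss \in \Omega$ is a Fell limit of translates of some $\Ll^{\lambda(n)}$ with $\lambda(n)$ in the finite set $\Lambda$, and translating by any $g \in \Ss$ produces a limit of elements of transversals $\Xi_{\Ll^{\lambda(n)}}$, which by finiteness of $\Lambda$ and closedness of each $\Xi_{\Ll^\lambda}$ lies in some $\Xi_{\Ll^\lambda} \subset \Xi$. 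A direct identification shows that the reduction of $\Omega \rtimes G$ to $\Xi$ is precisely $\Gg_\Xi$, and Morita equivalence transfers amenability.

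The step I expect to be the main obstacle is the last one, specifically the verification that $\Xi$ is an abstract transversal for the union $\Omega$, since one must carefully handle limit points that need not belong to any individual $\Omega^\times_{\Ll^\lambda}$ and check both the orbit-meeting and the local-triviality hypotheses of the Muhly--Renault--Williams equivalence. As a fallback, amenability can be argued inductively by filtering $\Xi$ through the $\Xi_{\Ll^\lambda}$ and using the stability of amenability under extensions of \'etale groupoids by invariant open and closed subsets, combined with the already known amenability of each $\Gg_{\Ll^\lambda}$.
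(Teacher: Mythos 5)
Your proposal is correct and follows essentially the same route as the paper: the groupoid structure and topology are read off from the common description in Proposition~\ref{Prop:GL}, the \'etale property is obtained from uniform $U$-separatedness as in \cite{Enstad1Arxiv2022}, and amenability is transferred from the transformation groupoid over the joint hull via the abstract-transversal (Morita) equivalence. The only cosmetic difference is that you justify the colimit property by the pasting lemma on the finite closed cover, whereas the paper invokes the explicit descriptions of groupoid colimits from \cite{SchubertBook,MacDonaldCMB2009}; both are adequate.
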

\begin{proof} Consider a pair $\lambda$ and $\lambda'$. Then $\Oo(\Ll^\lambda)$ and $\Oo(\Ll^{\lambda'})$ either coincide or are disjoint. In the first case, $\Xi_{\Ll^\lambda} = \Xi_{\Ll^{\lambda'}}$ while in the second case $\Xi_{\Ll^\lambda} \cap \Xi_{\Ll^{\lambda'}}$ is a closed subset invariant under the actions of both groupoids, which can be very well the empty set. In both cases,
\begin{equation}
\Gg_{\Ll^{\lambda}} \cap \Gg_{\Ll^{\lambda'}} =\left . \Gg_{\Ll^{\lambda}}\right |_{\Xi_{\Ll^{\lambda}} \cap \Xi_{\Ll^{\lambda'}}}=  \left . \Gg_{\Ll^{\lambda'}}\right |_{\Xi_{\Ll^{\lambda}} \cap {\Xi_{\Ll^{\lambda'}}}},
\end{equation}
hence $\Gg_{\Ll^{\lambda}} \cap \Gg_{\Ll^{\lambda'}}$ is a (full) subgroupoid, and
\begin{equation}\label{Eq:G2}
\Gg_{\Ll^\lambda}^2 = \big (\left . \Gg_{\Ll^\lambda}\right |_{\Xi_{\Ll^\lambda} \setminus (\Xi_{\Ll^\lambda} \cap \Xi_{\Ll^{\lambda'}})} \big )^2 \cup \big (\left . \Gg_{\Ll^\lambda}\right |_{ \Xi_{\Ll^\lambda} \cap \Xi_{\Ll^{\lambda'}}} \big )^2.
\end{equation}
A statement similar to \eqref{Eq:G2} holds for $\Ll^{\lambda'}$. Now, co-limits and pushouts in the category of groupoids are described in \cite{SchubertBook} and in more detail in \cite{MacDonaldCMB2009}. By following these descriptions and by taking into account the above facts, one concludes that the co-limit under \eqref{Eq:ColimitDiag} in the category of (algebraic) groupoids is just the union of the groupoids. Similarly, in the category of topological spaces, the co-limit under \eqref{Eq:ColimitDiag} is the union of topological spaces $\Gg_{\Ll^\lambda}$. Therefore, the co-limit in the category of topological groupoids must coincide with $\bigcup_{\lambda\in \Lambda} \Gg_{\Ll^\lambda}$, if the latter is a topological groupoid, that is, if the algebraic structure is compatible with the topology, and this is the case. Indeed, by construction, the elements of the so constructed algebraic groupoid are pairs $(g,\Ss)$ with $\Ss \in \Xi$ and $g \in \Ss$, and the inversion and composition of such pairs are exactly as described in  Proposition~\ref{Prop:GL}, if we replace $\Xi_\Ll$ by $\Xi$. Also, since each $\Gg_{\Ll^\lambda}$ inherits its topology from $G \times \Cc(G)$, their union also share this atribute. The conclusion is that the topology of $\Gg_\Xi$ is also as described in Proposition~\ref{Prop:GL}, and this topology is automatically compatible with the algebraic structure.

We now show that $\Gg_\Xi$ is \'etale, i.e. that the range map $\mathfrak r$ is a local homeomorphism.  Since every pattern from $\Xi$ is $U$-uniformly separated for a fixed open subset $U \in \mathcal{C}(G)$, the statement from Lemma~3.9 from \cite{Enstad1Arxiv2022} continues to apply without modifications. This statement assures us that the map $V \times \Xi \to \Cc(G)$ given by $(g, \Ss) \mapsto g \cdot \Ss$ is homeomorphism onto its image, for any open subset $V \in G$ such that $V \cap V^{-1} \subset U$. From here, we can follow the arguments from Proposition~3.10 from \cite{Enstad1Arxiv2022}. By definition, the sets $U_{\gamma,V,\Gamma} = ( \gamma \cdot V \times \Gamma) \cap \Gg_{\Xi}$ form a basis for the topology of $\Gg_{\Xi}$, where $\gamma$ ranges over $G$, $V$ ranges over all symmetric neighborhoods of the identity with $V^2 \subseteq U$ and $\Gamma$ ranges over all open sets in $\Xi$. Note that $(\gamma \cdot V)^{-1} (\gamma \cdot V) = V^2 \subseteq U$, hence we are in the conditions of \cite{Enstad1Arxiv2022}[Lemma~3.9] and the map $\gamma \cdot V \times \Xi \to \Cc(G)$ given by $(g,\Ss) \mapsto g \cdot \Ss$ is a local homeomorphism. By restricting this map to $U_{\gamma,V,\Gamma}$, we obtain a homeomorphism onto an image contained in $\Xi$, which coincides with the restriction of the range map on $U_{\gamma,V,\Gamma}$. Hence, $\Gg_\Xi$ is \'etale.

For the remaining statement, we will use an equivalent characterization of $\Gg_\Xi$. Defining a hull 
\begin{equation}
\Omega_{\Xi}^\times = \overline{\{g \cdot\Ll: \ g \in G, \Ll \in \Xi\}}\setminus \emptyset,
\end{equation}
one again obtains a locally compact $G$-invariant subset of $\Cc(G)$ and can therefore define a transformation groupoid $\tilde{\Gg}_\Xi := \Omega_\Xi^\times \rtimes G$. Restricting it to the invariant subset $\tilde{\mathfrak{s}}^{-1}(\Xi) \cap \tilde{\mathfrak{r}}^{-1}(\Xi)$, one obtains precisely $\Gg_\Xi$.  Now, \cite[Prop. 3.8]{Enstad1Arxiv2022} can be used without alteration to prove that $\Xi$ is an abstract transversal of $\Omega_\Xi^\times \rtimes G$, therefore the amenability of $\Gg_\Xi$ follows from the amenability of $G$, which is assumed throughout.
\end{proof}

Since we are dealing with a lcsc \'etale groupoid, there is a natural $C^\ast$-algebra:
\begin{corollary} 
\label{Cor:Pull_backs}	
To any finite set of patterns $(\Ll^\lambda)_{\lambda\in \Lambda}$ with transversal $\Xi=\bigcup_{\lambda\in \Lambda} \Xi_{\Ll^\lambda}$, we can associate the groupoid $C^\ast$-algebra $C^\ast\Gg_\Xi$ corresponding to the groupoid $\Gg_\Xi$ constructed in Proposition~\ref{Prop:GGlobal} and to its Haar system of counting measures. The $C^\ast$-algebras $C^\ast\Gg_{\Ll^\lambda}$ fit into commutative diagrams dual to \eqref{Eq:GSquare}
	\begin{equation}
		\begin{tikzcd}
			C^*(\Gg_{\Ll^{\lambda}}\cap\Gg_{\Ll^{\lambda'}}) & C^*\Gg_{\Ll^{\lambda'}} \\
			C^*\Gg_{\Ll^\lambda}   &  C^*\Gg_\Xi
			\arrow[twoheadleftarrow,from=1-1, to=1-2]
			\arrow[twoheadrightarrow,from=2-1, to=1-1]
			\arrow[twoheadleftarrow ,from=2-1, to=2-2]
			\arrow[twoheadrightarrow ,from=2-2, to=1-2]
		\end{tikzcd}
	\end{equation}
 with the homomorphisms induced by the surjections $C_c(\Gg_{\Xi})\to C_c(\Gg_{\Ll^\lambda})$.
\end{corollary}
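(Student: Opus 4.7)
The plan is to derive the corollary by combining Proposition~\ref{Prop:GGlobal} with the standard short exact sequence associated to a closed invariant subset of an étale groupoid. Since $\Gg_\Xi$ has been shown to be lcsc, étale and amenable with compact unit space $\Xi$, the groupoid $C^\ast$-algebra $C^\ast\Gg_\Xi$ is well-defined via the Haar system of counting measures and, by amenability, coincides with its reduced counterpart. The same applies to each $\Gg_{\Ll^\lambda}$ and to the intersection groupoids $\Gg_{\Ll^\lambda}\cap\Gg_{\Ll^{\lambda'}}$, and the counting-measure Haar systems are compatible across the inclusions exhibited in Proposition~\ref{Prop:GGlobal}.

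The key observation is that each $\Xi_{\Ll^\lambda}$ is closed in $\Xi$ (by construction) and invariant under the action of the full groupoid $\Gg_\Xi$. Invariance is read off from Remark~\ref{Re:GAction}: any $\gamma=(g,\Ss)\in\Gg_\Xi$ with $\Ss\in\Xi_{\Ll^\lambda}$ already lies in $\Gg_{\Ll^\lambda}$ (since by definition $g\in\Ss$), and its target $g^{-1}\cdot\Ss$ is in the $\Gg_{\Ll^\lambda}$-orbit of $\Ss$, hence stays in $\Xi_{\Ll^\lambda}$. One therefore identifies $\Gg_\Xi|_{\Xi_{\Ll^\lambda}}=\Gg_{\Ll^\lambda}$ both algebraically and topologically. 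Applying the Williams theorem already invoked in the paper to derive \eqref{Eq:Ext1} yields a surjection $\mathfrak q_\lambda:C^\ast\Gg_\Xi\twoheadrightarrow C^\ast\Gg_{\Ll^\lambda}$, which at the $C_c$-level is implemented by restriction of functions. The same argument applied to $\Xi_{\Ll^\lambda}\cap\Xi_{\Ll^{\lambda'}}\subset\Xi_{\Ll^\lambda}$, whose closedness and invariance were established within the proof of Proposition~\ref{Prop:GGlobal}, yields surjections $\mathfrak q_{\lambda,\lambda'}:C^\ast\Gg_{\Ll^\lambda}\twoheadrightarrow C^\ast(\Gg_{\Ll^\lambda}\cap\Gg_{\Ll^{\lambda'}})$, again by restriction.

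Commutativity of the square then reduces to the tautology that iterated restriction of a function on $\Gg_\Xi$ down to $\Gg_{\Ll^\lambda}\cap\Gg_{\Ll^{\lambda'}}$ does not depend on whether one routes through $\lambda$ or through $\lambda'$; this holds on the dense $\ast$-subalgebra $C_c(\Gg_\Xi)$ and extends to $C^\ast\Gg_\Xi$ by continuity of all maps involved. The only point requiring attention is checking that restriction of a compactly supported continuous function on $\Gg_\Xi$ genuinely lies in $C_c(\Gg_{\Ll^\lambda})$, which is immediate from the fact, established in Proposition~\ref{Prop:GGlobal}, that each $\Gg_{\Ll^\lambda}$ is a subgroupoid of $\Gg_\Xi$ equipped with the subspace topology.

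No substantial obstacle is expected, since every step is either a direct appeal to Proposition~\ref{Prop:GGlobal} or a standard instance of the closed-invariant-subset short exact sequence for étale groupoid $C^\ast$-algebras. If one additionally wishes to promote the commutative squares to pullback squares (as the corollary's label suggests), the extra observation needed is that $C_c(\Gg_\Xi)$ can be described as the set of compatible families $(f_\lambda)_\lambda\in\prod_\lambda C_c(\Gg_{\Ll^\lambda})$ whose restrictions to pairwise intersections agree; this matching-family presentation is preserved by the $C^\ast$-completion because the relevant surjections are all quotient maps of amenable groupoid $C^\ast$-algebras.
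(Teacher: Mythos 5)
Your argument is correct and is essentially the one the paper intends: the corollary is an immediate consequence of Proposition~\ref{Prop:GGlobal} together with the restriction-to-closed-invariant-subsets surjections of Proposition~\ref{def:ideal_abbreviation}, and commutativity of the square is just the independence of iterated restriction on the route taken, checked on $C_c(\Gg_\Xi)$ and extended by continuity. Your closing remark about promoting the square to a genuine pullback goes beyond what the corollary asserts and would need the additional (true, but not automatic) fact that the $C^\ast$-completion preserves the matching-family description; for the statement as given this is not required.
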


\begin{remark}{\rm By construction $\Xi_{\Ll^\lambda}\subset \Xi $ is a closed invariant subset, then the left regular representations of the $C^\ast\Gg_\Xi$ supply representations of $C^*\Gg_{\Ll^\lambda}$ on the Hilbert spaces $\ell^2(\Ll^\lambda)$. As explained in the introduction, if we interpret $C^*\Gg_{\Xi}$ as an algebra which encodes observations made by multiple observers placed on different locations of a crystal in the infinite-size limit, then the sole consistency condition necessary is that all their observations come from representations of the same element of $C^\ast \Gg_\Xi$. 
}$\Diamond$
\end{remark}

The main motivation of the construction is that it can be used to implement group actions which permute different patterns. Indeed, as a consequence of Corollary~\ref{Cor:GAction1}, we have:

\begin{corollary}
	If $\sigma\in Aut(G)$ maps $\Xi$ into itself, then it gives rise to an automorphism of $C^\ast \Gg_\Xi$.
\end{corollary}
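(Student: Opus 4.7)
The plan is to extend the pattern-by-pattern construction of Corollary~\ref{Cor:GAction1} to the global transversal $\Xi=\bigcup_{\lambda\in \Lambda}\Xi_{\Ll^\lambda}$. By Proposition~\ref{Prop:GAction1}(i), $\sigma$ induces a homeomorphism of $\Cc(G)$; together with the standing hypothesis $\sigma(\Xi)\subseteq \Xi$ (which I will read, as the paper appears to intend, as $\sigma(\Xi)=\Xi$, automatic when $\sigma$ has finite order, as in the point-group setting of immediate interest), this restricts to a self-homeomorphism of $\Xi$.

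The key step is to define and verify the map
\begin{equation}
\alpha_\sigma:\Gg_\Xi \to \Gg_\Xi, \quad (g,\Ss) \mapsto (\sigma(g),\sigma(\Ss)).
\end{equation}
Well-definedness is immediate: $g\in\Ss$ implies $\sigma(g)\in\sigma(\Ss)\in\Xi$, and Proposition~\ref{Prop:GGlobal} identifies $\Gg_\Xi$ concretely as the set of such pairs, so the union/colimit description does not obstruct the construction. Preservation of inversion and composition is a pointwise computation identical to the one in the proof of Proposition~\ref{Prop:GAction1}(iii) and uses only that $\sigma$ is a group automorphism, in particular that $\sigma(g\cdot\Ss)=\sigma(g)\cdot\sigma(\Ss)$. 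Continuity follows from the continuity of $\sigma$ on both $G$ and on $\Cc(G)$ via the subspace topology that $\Gg_\Xi$ inherits from $G\times \Cc(G)$. The map $\alpha_{\sigma^{-1}}$ is a manifest two-sided inverse, so $\alpha_\sigma$ is a topological groupoid automorphism of $\Gg_\Xi$.

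Finally, the passage to $C^\ast$-algebras is immediate: the counting-measure Haar system of an \'etale groupoid is canonical and is preserved by every groupoid automorphism, so \cite[Prop.~2.7]{AustinNYJM2021} applies verbatim and the pullback $f\mapsto f\circ \alpha_\sigma^{-1}$ is a $^\ast$-automorphism of $C^\ast\Gg_\Xi$. I do not foresee any genuine obstacle in this short proof; the only point that merits attention is that $\sigma$ need not preserve the individual $\Xi_{\Ll^\lambda}$'s and is in general allowed to permute them among themselves, but the argument above works directly with the global $\Xi$ and is therefore insensitive to this subtlety.
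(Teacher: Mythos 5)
Your proposal is correct and follows exactly the route the paper intends: the paper states this corollary as an immediate consequence of Corollary~\ref{Cor:GAction1}, i.e.\ one runs the same formula $(g,\Ss)\mapsto(\sigma(g),\sigma(\Ss))$ on the global groupoid $\Gg_\Xi$, checks it is a topological groupoid automorphism preserving the counting-measure Haar system, and invokes \cite[Prop.~2.7]{AustinNYJM2021}. Your added remarks --- that $\sigma(\Xi)\subseteq\Xi$ upgrades to equality for finite-order $\sigma$, and that $\sigma$ may permute the individual $\Xi_{\Ll^\lambda}$ without affecting the argument --- are sound and merely make explicit what the paper leaves implicit.
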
 

Let us also comment briefly on ideals in those algebras:
\begin{proposition}[\cite{WilliamsBook2}, Thm.~5.1]
\label{def:ideal_abbreviation}
If $\tilde{\Xi}\subset \Xi$ is a closed and $\Gg_\Xi$-invariant subset then restriction of the unit space to the open subset
$$C^*\Gg_{\Xi\setminus \tilde{\Xi}}:= C^*\Gg_{\Xi}\rvert_{\Xi\setminus \tilde{\Xi}}$$
gives a closed ideal in $C^*\Gg_{\Xi}$ such that $C^*\Gg_{\Xi}/C^*\Gg_{\Xi\setminus \tilde{\Xi}}\simeq C^*\Gg_{\tilde{\Xi}}$.
\end{proposition}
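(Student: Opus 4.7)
The plan is to verify the claim at the level of the dense $*$-subalgebra $C_c(\Gg_\Xi)$ and then pass to the $C^\ast$-completions, leveraging the amenability of $\Gg_\Xi$ already established in Proposition~\ref{Prop:GGlobal}.

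\textbf{Setup.} First I would observe that the $\Gg_\Xi$-invariance of $\tilde\Xi$ means $\mathfrak s(\gamma)\in\tilde\Xi$ if and only if $\mathfrak r(\gamma)\in\tilde\Xi$, so $\Gg_\Xi|_{\tilde\Xi} = \mathfrak s^{-1}(\tilde\Xi) = \mathfrak r^{-1}(\tilde\Xi)$ is a well-defined closed subgroupoid of $\Gg_\Xi$, while $\Gg_\Xi|_{\Xi\setminus\tilde\Xi}$ is an open subgroupoid on the open complement. Both inherit the lcsc étale structure and the counting-measure Haar systems, and by the same argument as in Proposition~\ref{Prop:GGlobal} both are amenable.

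\textbf{Maps at the $C_c$-level.} Next, I would define the two natural maps
\begin{equation*}
\iota : C_c(\Gg_\Xi|_{\Xi\setminus\tilde\Xi}) \hookrightarrow C_c(\Gg_\Xi), \qquad \rho : C_c(\Gg_\Xi) \twoheadrightarrow C_c(\Gg_\Xi|_{\tilde\Xi}),
\end{equation*}
where $\iota$ is extension by zero, which is continuous because $\Gg_\Xi|_{\Xi\setminus\tilde\Xi}$ is open, and $\rho$ is restriction, which lands in $C_c(\Gg_\Xi|_{\tilde\Xi})$ because $\Gg_\Xi|_{\tilde\Xi}$ is closed. A direct calculation using the convolution product and the counting measures shows both are $*$-homomorphisms. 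That $\iota$ has a two-sided ideal as image follows from invariance: if $f$ is supported in $\Gg_\Xi|_{\Xi\setminus\tilde\Xi}$ and $g\in C_c(\Gg_\Xi)$, then
\begin{equation*}
(f*g)(\gamma) \;=\; \sum_{\eta\in\Gg_\Xi^{\mathfrak r(\gamma)}} f(\eta)\, g(\eta^{-1}\gamma)
\end{equation*}
vanishes unless $\mathfrak r(\gamma)\in\Xi\setminus\tilde\Xi$, which by invariance forces $\gamma\in\Gg_\Xi|_{\Xi\setminus\tilde\Xi}$, and symmetrically for $g*f$. Exactness of
\begin{equation*}
0 \;\longrightarrow\; \iota\bigl(C_c(\Gg_\Xi|_{\Xi\setminus\tilde\Xi})\bigr) \;\longrightarrow\; C_c(\Gg_\Xi) \;\xrightarrow{\;\rho\;}\; C_c(\Gg_\Xi|_{\tilde\Xi}) \;\longrightarrow\; 0
\end{equation*}
then reduces to verifying $\ker\rho\subseteq\iota(\cdot)$, which holds because any $f\in C_c(\Gg_\Xi)$ vanishing on $\Gg_\Xi|_{\tilde\Xi}$ has support in the open set $\Gg_\Xi|_{\Xi\setminus\tilde\Xi}$, and surjectivity of $\rho$, which requires a Tietze-style extension of compactly supported continuous functions from the closed subgroupoid to $\Gg_\Xi$.

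\textbf{Passage to $C^\ast$-completions.} Finally, I would upgrade the algebraic exact sequence to a short exact sequence of $C^\ast$-algebras. Because $\Gg_\Xi$ and both of its relevant sub-/quotient groupoids are amenable, the full and reduced norms coincide, $\iota$ extends to an isometric embedding, and $\rho$ extends to a surjection of $C^\ast$-algebras whose kernel is precisely the closure of the ideal generated by $\iota$. Identifying these completions with $C^\ast\Gg_{\Xi\setminus\tilde\Xi}$, $C^\ast\Gg_\Xi$ and $C^\ast\Gg_{\tilde\Xi}$, respectively, yields the proposition.

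\textbf{Main obstacle.} The delicate technical step is the surjectivity of $\rho$ at the $C_c$ level: given $h\in C_c(\Gg_\Xi|_{\tilde\Xi})$ one must produce a compactly supported continuous extension on all of $\Gg_\Xi$. The standard strategy is to cover $\mathrm{supp}(h)$ by open bisections of $\Gg_\Xi$, use the homeomorphism each bisection provides with an open subset of $\Xi$ to transport the problem to one of extending continuous functions between subsets of a Hausdorff space (where Tietze applies), and glue the local extensions via a partition of unity subordinate to the cover. Ensuring that the glued function has compact support and that the procedure is compatible with the étale topology of $\Gg_\Xi$ is where most of the care is required; once this is in place, the rest of the argument is routine.
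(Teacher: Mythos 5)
The paper offers no proof of this statement at all — it is quoted verbatim from the cited reference [WilliamsBook2, Thm.~5.1] — so your attempt has to be measured against the standard argument. Your overall architecture (extension by zero from the open invariant subgroupoid, restriction to the closed one, completion, amenability to collapse full and reduced norms) is the right one, but two steps are not as you present them. First, exactness of the sequence at the $C_c$ level fails as stated: if $f\in C_c(\Gg_\Xi)$ vanishes on the closed subgroupoid $\Gg_\Xi|_{\tilde\Xi}$, its closed support need not be a compact subset of $\Gg_\Xi|_{\Xi\setminus\tilde\Xi}$, since it can accumulate on $\Gg_\Xi|_{\tilde\Xi}$. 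Already for the trivial groupoid on $[0,1]$ with $\tilde\Xi=\{0\}$, the function $f(x)=x$ lies in $\ker\rho$ but not in $C_c\bigl((0,1]\bigr)$. So $\ker\rho\neq\iota\bigl(C_c(\Gg_\Xi|_{\Xi\setminus\tilde\Xi})\bigr)$ in general; what is true, and what one actually needs, is only that the image of $\iota$ is dense in the kernel of the \emph{completed} map.

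Second, and more seriously, the claim that $\rho$ extends to a surjection ``whose kernel is precisely the closure of the ideal generated by $\iota$'' is the entire content of the theorem, not a routine consequence of amenability. The containment of the closed ideal in $\ker\bar\rho$ is immediate; the reverse containment is the hard direction, and in the cited reference it rests on Renault's disintegration theorem: one shows that every representation of $C^\ast\Gg_\Xi$ annihilating $C_c(\Gg_\Xi|_{\Xi\setminus\tilde\Xi})$ is the integrated form of a covariant pair whose quasi-invariant measure is supported on $\tilde\Xi$, hence factors through $C^\ast\Gg_{\tilde\Xi}$. Amenability lets you transport the resulting exact sequence between the full and reduced completions, but it does not supply this step. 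By contrast, the Tietze-extension/surjectivity issue you single out as the main obstacle is comparatively benign and is handled exactly as you describe, by covering with open bisections and gluing with a partition of unity. In short, the proposal has the correct shape but asserts, rather than proves, the one step that makes this a theorem.
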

Since the unit space $\Xi\setminus\tilde{\Xi}$ is open, $C^*\Gg_{\Xi\setminus \tilde{\Xi}}$ will, by definition of the groupoid algebra as $C^*$-completion of the convolution algebra $C_c(\Gg_{\Xi\setminus \tilde{\Xi}})$, only contain those elements of $C^*\Gg_{\Xi}$ that asymptotically vanish in regions that look more and more like patterns contained in $\tilde{\Xi}$, hence by choosing appropriate subsets of $\Xi$ one can isolate elements localized to certain boundaries or other geometric defects.

In the remainder of this section, we will consider examples how those constructions play out to demonstrate how the unit spaces glue naturally for typical crystals and also determine filtrations of the unit spaces by closed invariant subsets.

\subsection{Quarter geometry}\label{Sec:Quarter}

\begin{figure}[t]
\center
\includegraphics[width=\textwidth]{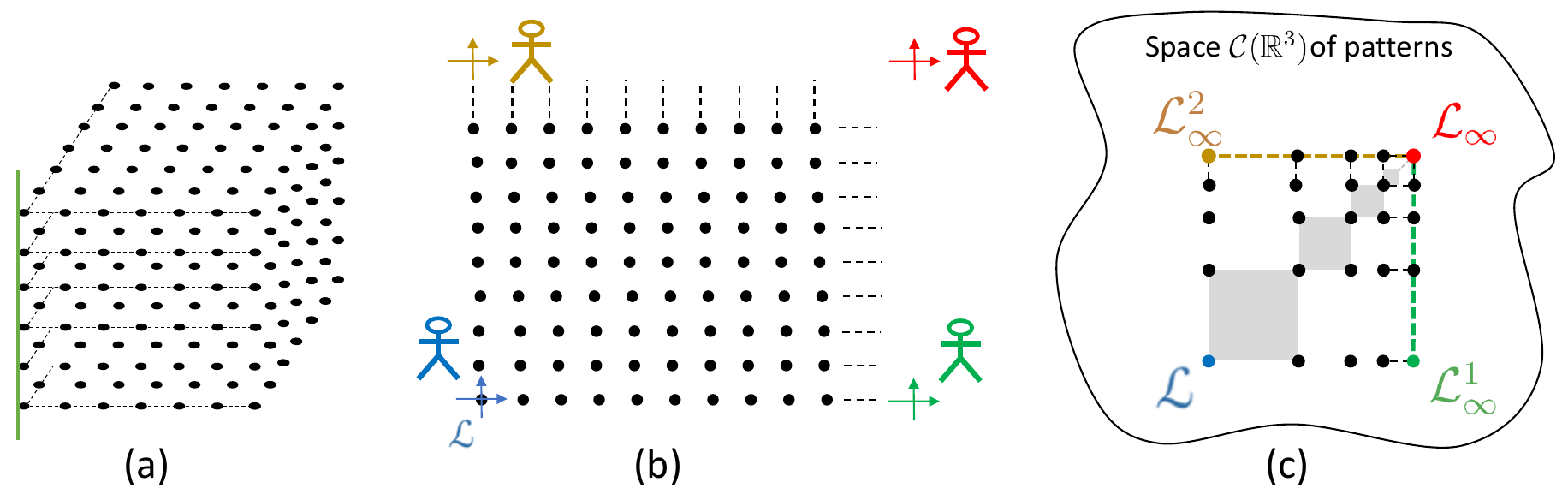}\\
  \caption{\small (a) The pattern $\Ll = \NM \times \NM \times \ZM \subset \RM^3$ considered in subsection~\ref{Sec:Quarter}. (b) A top view  of the pattern indicating directions in which an observer will find different asymptotic patterns. (c) Visual representation of the unit space $\Xi_\llcorner$ inside the space $\Cc(\RM^3)$.
}
 \label{Fig:T2}
\end{figure}

We analyze here the pattern $\Ll = \NM \times \NM \times \ZM^{d-2} \subset G=\RM^d$, shown in Fig.~\ref{Fig:T2}(a) for the case $d=3$. The corresponding groupoid $\Gg_\Ll$ will form a building block of the more complicated later examples. We will denote the canonical groupoid for this pattern by $\Gg_\llcorner$. It is easy to compute the transversal:

\begin{proposition} The transversal of the quarter pattern is the disjoint union 
\begin{equation}\label{Eq:XiCorner}
\Xi_\llcorner=\Oo(\Ll)\cup \Oo(\Ll^1_\infty) \cup \Oo(\Ll^2_\infty)\cup \{\Ll_\infty\}
\end{equation}
where, $\Ll_\infty = \ZM^d$ and $\Ll^1_\infty= \NM\times \ZM\times \ZM^{d-2}$ and $\Ll^2_\infty=\ZM \times \NM \times \ZM^{d-2}$ are the limits in the Fell topology of the translates $\Ll-n e_i$ for $n\to \infty$, with $e_i$ being the unit vectors for the first two directions.
\end{proposition}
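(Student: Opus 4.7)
The plan is to derive \eqref{Eq:XiCorner} by direct computation from Definition~\ref{Def:TheGroupoid}, enumerating the Fell-topology limits of translates of $\Ll$ through the origin and classifying them into four strata. First, I would observe that $\Xi_\llcorner = \overline{\Oo(\Ll)}$, where, by Remark~\ref{Re:GAction}, $\Oo(\Ll) = \{\Ll - x : x \in \Ll\}$ is the $\Gg_\Ll$-orbit of $\Ll$ inside the unit space. This reduction rests on the fact that the set $\{\Ss \in \Omega_\Ll^\times : 0 \in \Ss\}$ is already closed in $\Cc(\RM^d)$, since $\{\Ss : 0 \in \Ss\}$ is Fell-closed and any Fell-limit of translates of $\Ll$ containing the origin is necessarily non-empty and thus lies in $\Omega_\Ll^\times$ itself. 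Any translate $\Ll - g$ by $g \in \RM^d$ that contains $0$ must satisfy $g \in \Ll$ and so lies in $\Oo(\Ll)$, and limits of more general sequences $\Ll - g_n$ with limiting set containing $0$ can be replaced by $\Ll - \ell_n$ with $\ell_n \in \Ll$ the nearest lattice point to $g_n$, which differs from $g_n$ by a bounded vector and hence yields the same Fell limit.

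Next, to compute $\overline{\Oo(\Ll)}$, I would parametrize $x_n = (a_n, b_n, c_n) \in \NM \times \NM \times \ZM^{d-2}$, so that
\begin{equation*}
\Ll - x_n = (\NM - a_n) \times (\NM - b_n) \times \ZM^{d-2},
\end{equation*}
with the $c_n$-component irrelevant because $\ZM^{d-2}$ is invariant under its own translations. By compactness of $\Cc(\RM^d)$ and the discreteness of $\NM$, I pass to a subsequence on which $a_n$ and $b_n$ each either stabilize at a finite value in $\NM$ or diverge to $+\infty$. A short Painlev\'e--Kuratowski computation shows that $\NM - a_n \to \NM - a$ when $a_n \to a$ is finite and $\NM - a_n \to \ZM$ when $a_n \to \infty$, and symmetrically for $b_n$. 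The four resulting regimes yield respectively an element of $\Oo(\Ll)$ (both bounded), of $\Oo(\Ll^2_\infty)$ (only $a_n$ diverging), of $\Oo(\Ll^1_\infty)$ (only $b_n$ diverging), and the singleton $\Ll_\infty = \ZM^d$ (both diverging). Conversely, each element of the four strata contains $0$ and is realized as such a limit, for instance $\Ll^1_\infty - (a,0,0) = \lim_n \bigl(\Ll - (a, n, 0)\bigr)$, giving the equality \eqref{Eq:XiCorner}.

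Finally, the disjointness of the union follows from invariants of the $\Gg_\Ll$-action separating the strata: $\Oo(\Ll)$-patterns have bounded projection from below in both the first and second coordinates; $\Oo(\Ll^1_\infty)$-patterns are unbounded below only in the second coordinate; $\Oo(\Ll^2_\infty)$-patterns only in the first; and $\Ll_\infty$ is unbounded below in both. I anticipate that the main subtlety lies not in the case analysis itself but in the initial reduction to $\overline{\Oo(\Ll)}$, which requires understanding all of the hull $\Omega_\Ll^\times$ and not merely the translates by lattice vectors; once this reduction is in place, however, the enumeration of limits via the four regimes is elementary.
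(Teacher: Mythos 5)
Your proof is correct and follows essentially the same route as the paper's: both arguments amount to enumerating the Fell limits of the orbit $\Oo(\Ll)$ according to whether each of the first two coordinates of the translating lattice point stays bounded or diverges, the paper doing this by quoting Example~\ref{Ex:T1} for the half-space limit sets and the ``observer walking away from the corner'' picture, while you carry out the subsequence analysis and the disjointness check explicitly. One small repair in your reduction to $\overline{\Oo(\Ll)}$: a merely bounded difference $\ell_n - g_n$ does \emph{not} in general preserve the Fell limit (a constant nonzero shift already changes it); what actually saves the step is that if the limit set contains the origin then there are points $y_n \in \Ll - g_n$ with $y_n \to 0$, so the nearest point $\ell_n = g_n + y_n$ of $\Ll$ satisfies $\ell_n - g_n \to 0$, and continuity of the translation action then gives the same limit.
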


One can again visualize the different limits by imagining the different patterns as seen by an observer moving along the quarter-space in different directions, see (see Figure~\ref{Fig:T2}). 
 
This transversal has several closed and invariant proper subsets such as $\{\Ll_\infty\}$, $\Xi_i^\infty=\Oo(\Ll^i_\infty)\cup \{\Ll_\infty\}$ and, more importantly,
 \begin{equation}
 \Xi_{\llcorner}^\infty=\Oo(\Ll^1_\infty)\cup\Oo(\Ll^2_\infty)\cup \{\Ll_\infty\} = \Xi_{\Ll_\infty^1} \cup \Xi_{\Ll_\infty^2}.
 \end{equation}
Those subsets are illustrated in Fig.~\ref{Fig:T2Deco}. As a result, we can generate several filtrations of $\Xi_\llcorner$ by closed invariant subsets. 

Let us now introduce some symmetry. The only symmetry which leaves the quarter-space invariant and does not involve the last $d-2$ directions is the mirror operation along the diagonal hyper-plane which interchanges the first two coordinates, implemented by $ \Sigma \subset SO(d)$ isomorphic to $\ZM_2$.
 
\begin{proposition} The space of units has a unique filtration
\begin{equation} 
 \{\Ll_\infty\} = \Xi_0 \subset \Xi_1 \subset  \Xi_2 = \Xi_{\Ll}
\end{equation}
of length $2$ by closed proper subsets that are invariant under the actions of both the groupoid  and $\Sigma$, namely $\Xi_1 =\Xi^\infty_\llcorner$. In the dual picture, this supplies an equivariant cofiltration
\begin{equation}\label{Eq:Ext17}
\bar \Qq \stackrel{\bar {\mathfrak p}^2}{\twoheadrightarrow} \bar \Pp \stackrel{\bar {\mathfrak p}^1}{\twoheadrightarrow} \Bb
\end{equation}
with $\bar \Qq : = C^\ast \Gg_\Ll$, $\bar \Pp: =  C^\ast \Gg_{\Xi^\infty_\llcorner} $ and $\Bb:=C^\ast \Gg_{\Ll_\infty}$. \end{proposition}

To prepare the ground for section~\ref{Sec:Mechanism}, we introduce and characterize useful ideals in the $C^\ast$-algebras listed above. The kernel of the epimorphism $\bar {\mathfrak p}^1$ is the $C^\ast$-algebra 
\begin{equation}
\bar \Cc : = \left . C^\ast \Gg_\llcorner\right |_{\Xi_\llcorner \setminus \Xi_\llcorner^\infty} = C^\ast \Gg_{\Xi_\llcorner \setminus \Xi_\llcorner^\infty},
\end{equation} 
which we will refer to as the corner algebra, since it relates to physical observations made around the corner of the quarter-space (for simplicity we will also call this a corner for $d>2$, even though it corresponds to a hinge in $d=3$ and is more generally called a ridge in polyhedral geometry). The kernel of the epimorphism $\bar {\mathfrak p}^0$ is the $C^\ast$-algebra 
\begin{equation}
\bar \Ff : = \left . C^\ast \Gg_\llcorner\right |_{\Xi^\infty_\llcorner \setminus \{\Ll_\infty\}} = C^\ast \Gg_{\Xi^\infty_\llcorner \setminus \{\Ll_\infty\}},
\end{equation} 
which we will call the face algebra because it relates to the physical observations around the facets and away from the corner. Additionally, as in Example~\ref{Ex:T1}, we can define $\Hh_i : = \left . C^\ast \Gg_\llcorner\right |_{\Xi^\infty_i}$ and $\Ff_i : = \left . C^\ast \Gg_\llcorner\right |_{\Xi^\infty_i \setminus \{\Ll_\infty\}}$, which are the half-space and face $C^\ast$-algebras for the two distinct faces. It is easy to see that $\bar \Ff = \Ff_1 \oplus \Ff_2$.

\begin{figure}[t]
\center
\includegraphics[width=0.8\textwidth]{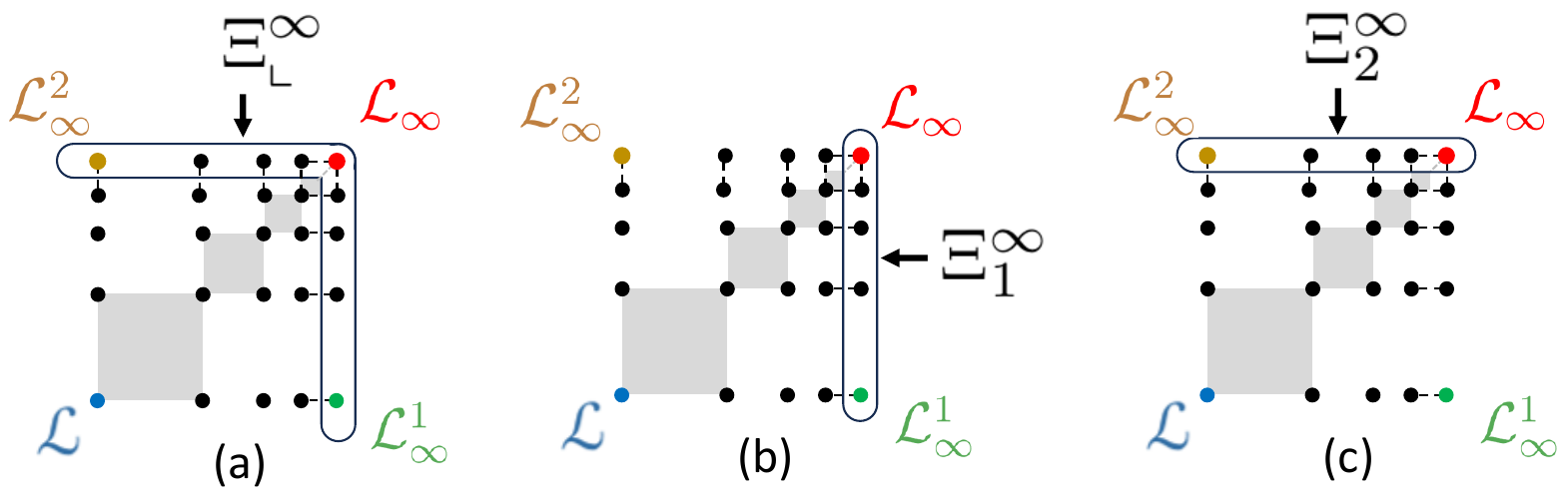}\\
  \caption{\small Illustrations in the space $\Cc(\RM^d)$ of patterns, showing the relevant closed and invariant subspaces of the unit space $\Xi_\llcorner$ for the quarter geometry described in subsection~\ref{Sec:Quarter}. 
}
 \label{Fig:T2Deco}
\end{figure}

\begin{remark}\label{Re:Park}{\rm For $d=2$, the $C^\ast$-algebras encountered above can be identified with $C^\ast$-algebras from the theory of quarter-space Toeplitz operators. Note that, instead of a simple quarter-space at right-angles, we could have more generally defined $\Ll$ to have angles $\alpha$, $\beta$ w.r.t. the horizontal, which would merely modify the asymptotic half-space patterns $\Ll^1_\infty, \Ll^2_\infty$. In the notation of \cite{ParkJOT1990}, $\bar \Qq$ then coincides with the algebra $\mathfrak{T}^{\alpha,\beta}$, $\bar \Pp$ coincides with the algebra $\mathcal{S}^{\alpha,\beta}$ and $\Hh_i$'s coincide with $\mathfrak{T}^\alpha$ and $\mathfrak{T}^\beta$. Furthermore, one has the short exact sequences
\begin{equation}\label{Eq:Ext5}
	\Ff_2 = C^\ast \Gg_{\Xi^\infty_\llcorner \setminus \Xi^\infty_1} \rightarrowtail \bar \Pp = C^\ast \Gg_{\Xi^\infty_\llcorner} \twoheadrightarrow \Hh_1 = C^\ast \Gg_{\Xi^\infty_1} ;
\end{equation}
\begin{equation}\label{Eq:Ext6}
	\Ff_1 = C^\ast \Gg_{\Xi^\infty_\llcorner \setminus \Xi^\infty_2} \rightarrowtail \bar \Pp = C^\ast \Gg_{\Xi^\infty_\llcorner} \twoheadrightarrow \Hh_2 =  C^\ast \Gg_{\Xi^\infty_2}.
\end{equation}

While the transversal of the quarter-space is generated by a single pattern, the same is not true for all of its closed subgroupoids. In particular, $\bar \Pp  =  C^\ast \Gg_{\Xi^\infty_\llcorner} $ is the groupoid algebra associated to the transversal $\Xi^\infty_\llcorner = \Xi_{\Ll^1_\infty} \cup \Xi_{\Ll^2_\infty}$ and thus by Corollary~\ref{Cor:Pull_backs} it is the pull-back of groupoid algebras
\begin{equation}
	\begin{tikzcd}
		\Bb & \Hh_1 \\
		\Hh_2  &  {\bar \Pp}
		\arrow[leftarrow,from=1-1, to=1-2]
		\arrow[rightarrow,from=2-1, to=1-1]
		\arrow[twoheadleftarrow ,from=2-1, to=2-2]
		\arrow[twoheadrightarrow ,from=2-2, to=1-2]
	\end{tikzcd}
\end{equation}
which is how $\bar \Pp$ was defined in \cite{ParkJOT1990}.} $\Diamond$
\end{remark}

\begin{figure}[t]
\center
\includegraphics[width=\textwidth]{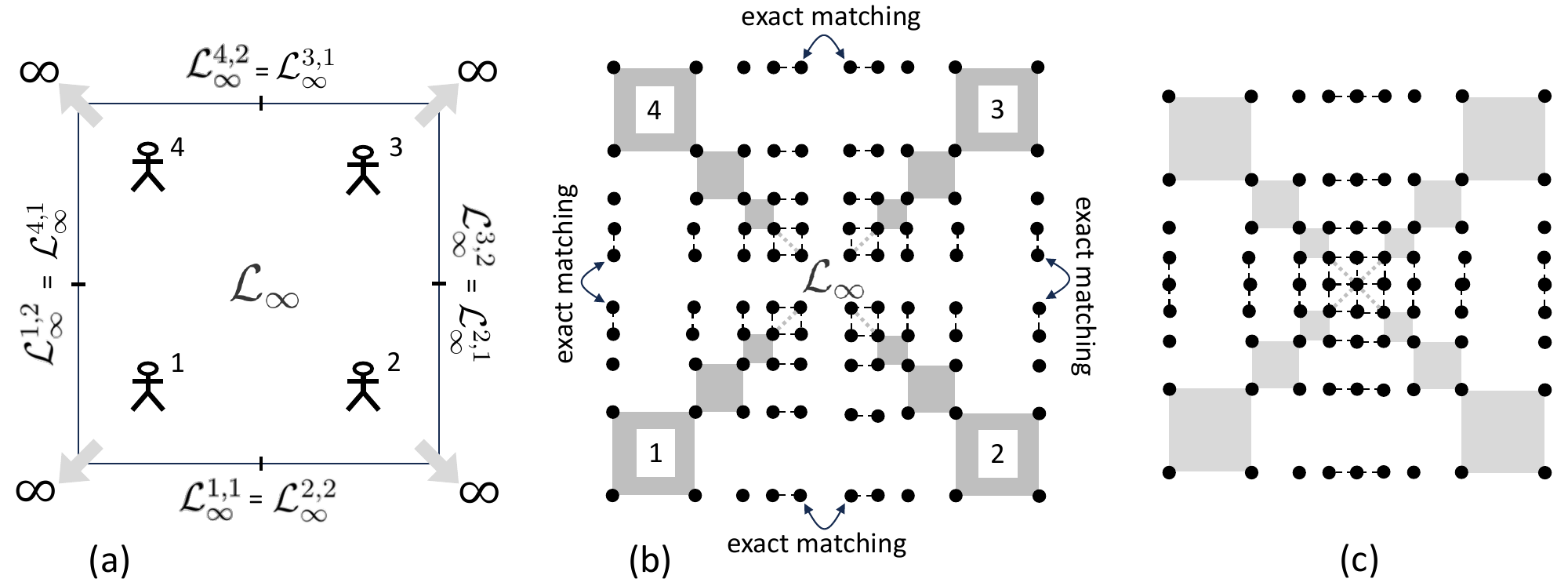}\\
  \caption{\small (a) A wire with a square cross section is growing laterally while being observed by four experimenters with a corner in their field of view; (b) The transversals $\Xi_{\Ll^\lambda}$ reported by the four observers; (c) The transversal $\Xi_\square$ of the whole wire in the infinite-size limit.
}
 \label{Fig:InfSq}
\end{figure} 

\subsection{Square geometry}\label{Sec:Wire}
We consider here a geometry with multiple corners obtained as a scaling limit of a mesh shaped like $([-L,L]^2\cap \ZM^2)\times \ZM^{d-2}$. In two dimensions this is the infinite-volume limit of a square, whereas in three dimensions it is the infinite-diameter limit of a wire with square cross-section. As illustrated in Fig.~\ref{Fig:InfSq} we consider the infinite-volume limits as seen by four observers sitting at different corners, which yields four quarter-spaces oriented in different directions.

It is easy to label them by rotations if we fix the quarter-space pattern of the previous section $\Ll^0=\NM\times \NM\times \ZM^{d-2}$  and then denote its rotations by $\frac{\pi}{2}\lambda$ as $(\Ll^\lambda)_{\lambda\in \ZM_4}$. The global transversal of the square geometry in the infinite-size limit is
\begin{equation}
\Xi_\square : = \mathsmaller{\bigcup}\nolimits_{\lambda \in \ZM_4} \, \Xi_{\Ll^\lambda},
\end{equation}
which is depicted in Fig.~\ref{Fig:InfSq}(c). There are many closed invariant subsets and we single out $\Xi_{\lambda,\#}^\infty$, $\# \in \{ 1,2,\llcorner\}$ equal to those of the previous section except for the rotation, fixing the labels as in Figure~\ref{Fig:InfSq}(a). A remarkable fact here is that the transversals $\Xi_{\Ll^\lambda}$ match along their boundaries as indicated in Fig.~\ref{Fig:InfSq}(b). The interpretation in terms of observers is consistent with this, since, for example, as observer 1 walks to the right and observer 2 walks to the left, they will eventually see the same half-space pattern, indicated as $\Ll_{\infty}^{1,1} = \Ll_{\infty}^{2,2}$ in Fig.~\ref{Fig:InfSq}(a).

\begin{figure}[t]
\center
\includegraphics[width=0.35\textwidth]{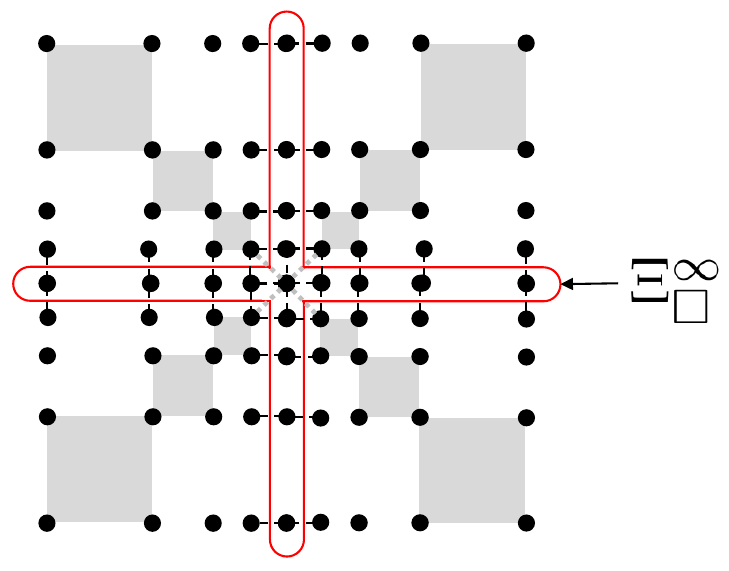}\\
  \caption{\small The largest closed and invariant proper subset of the space of units $\Xi_\square$ studied in section~\ref{Sec:Wire}. 
}
 \label{Fig:SqDeco}
\end{figure}

We denote by $\Gg_\square$ the topological groupoid for  transversal $\Xi_\square$. Note that the group of point symmetries $\Sigma\subset SO(2)$ of a finite square acts on $\Lambda$ and $\Xi_\square$ via automoprhisms. The groupoid $\Gg_\square$ has a large number of closed and invariant proper subsets and, as such, there are many filtrations which have lengths smaller or equal to 3. However, combined with the symmetry we find:

\begin{proposition} The space of units $\Xi_\square$ has a unique filtration of length 2,
\begin{equation}\label{Eq:SqFiltration}
\{\Ll_\infty\} \subset \Xi_\square^\infty \subset \Xi_\square,
\end{equation}
by closed proper subspaces that are invariant to the actions of $\Gg_\square$ and $\Sigma$. Here, $\Xi_\square^\infty =\bigcup_{\lambda\in \ZM_4} \Xi^\infty_{\lambda,\llcorner}$. In the dual picture, this supplies the equivariant cofiltration
\begin{equation}\label{Eq:Ext18}
\Qq \stackrel{\mathfrak p^2}{\twoheadrightarrow} \Pp  \stackrel{\mathfrak p^1}{\twoheadrightarrow} \Bb
\end{equation}
with $\Qq : = C^\ast \Gg_\square$, $\Pp : = C^\ast \Gg_{\Xi^\infty_\square} $ and $\Bb : = C^\ast \Gg_{\Ll_\infty}$.
\end{proposition}
$\Qq$ is the $C^\ast$-algebra encoding the experimental observations on a generalized wire with an inconceivably large square cross section. The groupoid $C^\ast$-algebra corresponding to the complement $\Xi_\square \setminus \Xi_\square^\infty$ coincides with ${\rm Ker}\, \mathfrak p^1$ and relates to the physical observations made around the corners of the sample, hence, we call it again the corner algebra. We denote it by $\Cc$ and $\Cc = \bigoplus_{\lambda \in \ZM_4} \bar \Cc_\lambda$ for four isomorphic copies of the corner algebra $\bar \Cc$ of the previous section. The groupoid $C^\ast$-algebra corresponding to the complement $\Xi_\square^\infty \setminus \{\Ll_\infty\}$ coincides with ${\rm Ker}\, \mathfrak p^0$ and will be denoted by $\Ff$. Clearly, $\Ff = \oplus_{\lambda \in \Lambda} \Ff_\lambda$, $\Ff_\lambda : = C^\ast \Gg_{\Ll_\infty^{\lambda,1}}$, and, as such, it will be called the face algebra.

\subsection{Cube geometry}\label{Sec:Crystal} We consider here a crystal of cubic shape cut out of $\ZM^3$ mesh, as shown in Fig.~\ref{Fig:CrystGeom}, and we take as the group $\Sigma$ of symmetries the full point symmetry group of the cubic lattice. The analysis may feel repetitive, but this is exactly the purpose of this exercises, to help us reveal the hierarchical structure of the space of units described in Propositions~\ref{Prop:ModelAlg} and \ref{Prop:Filtrations}.

\begin{figure}[t]
\center
\includegraphics[width=\textwidth]{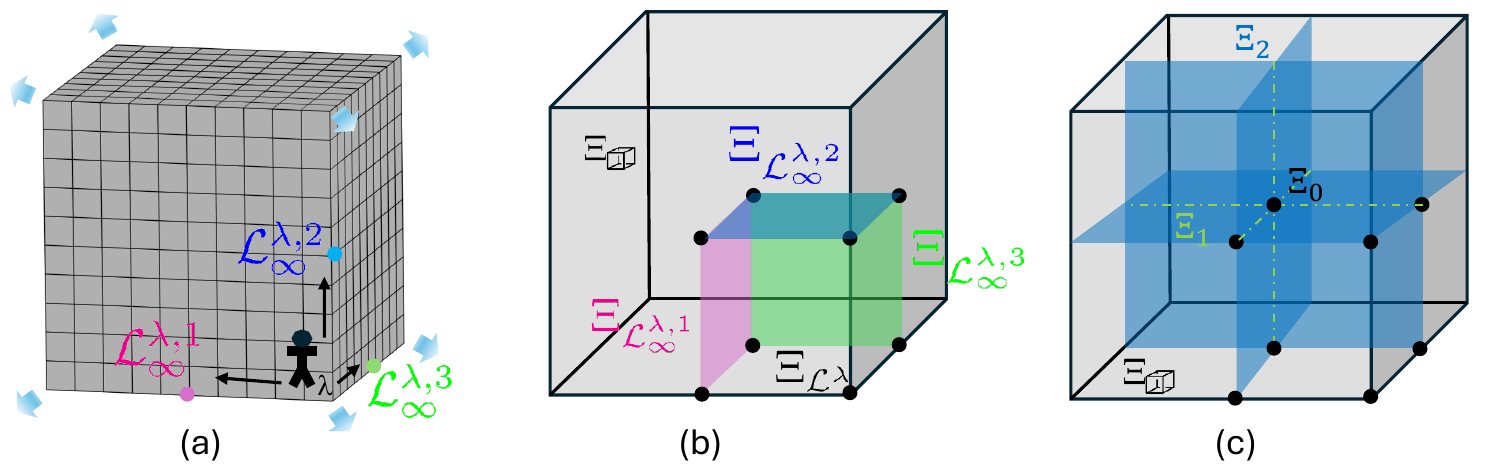}\\
  \caption{\small (a) Schematic of a growing process of a crystal with cubic symmetry. Also indicated are three asymptotic directions in which observer $\lambda$ experiences quarter geometries. (b) The transversal $\Xi_{\Ll^\lambda}$ consists of the union of the open subset $\Oo(\Ll^\lambda)$ and the transversals (shown in matching colors) of the quarter patterns experienced by the experimenter from the points indicated in panel (a). (c) The union of $\Xi_{\Ll^\lambda} \setminus \Oo(\Ll^\lambda)$ over $\lambda$ supplies the set $\Xi_2$, which is the largest closed and invariant proper subset of $\Xi_{\mbox{\small \mancube}}$. The characterization from Fig.~\ref{Fig:T2Deco} supplies the additional closed proper subset $\Xi_1$ of $\Xi_2$, which is also invariant against $\Gg_{\mbox{\small \mancube}}$ and $\Sigma$ actions.
}
 \label{Fig:CrystGeom}
\end{figure}

Following our general strategy, fixing observers at the corners of the cube while growing it in different directions we recover eight different patterns $(\Ll^\lambda)_{\lambda\in \Lambda}$ labeled by the respective corners $\Lambda=\{1,...,8\}$. The transversal of each $\Ll^\lambda$ contains three different quarter patterns $\Ll_\infty^{\lambda,i}$, labeled by the different hinges adjacent to the corner:
\begin{equation}\label{Eq:XiLl}
\Xi_{\Ll^\lambda} = \Oo(\Ll^\lambda) \cup \Xi_{\Ll_\infty^{\lambda,1}} \cup \Xi_{\Ll_\infty^{\lambda,2}} \cup \Xi_{\Ll_\infty^{\lambda,3}}
\end{equation}
Each of the quarter-space transversals $\Xi_{\Ll_\infty^{\lambda,3}}$ decomposes further as in Section~\ref{Sec:Quarter}, thus the global transversal of the cube geometry
\begin{equation}\label{Eq:CubeUnit}
\Xi_{\mbox{\small \mancube}} : = \mathsmaller{\bigcup}\nolimits_{\lambda \in \Lambda} \, \Xi_{\Ll^\lambda},
\end{equation}
contains also the transversals of the $6$ different half-spaces corresponding to faces of the cube. All of this is schematically shown in Fig.~\ref{Fig:CrystGeom}. The set $\Lambda$ has a natural action by the symmetry group of a finite cube $\Sigma\subset SO(3)$. Since the latter also acts on the transversals, $\sigma ( \Xi_{\Ll^\lambda}) = \Xi_{\Ll^{\sigma \cdot \lambda}}$, \eqref{Eq:XiLl} together with the analysis of the quarter patterns from subsection~\ref{Sec:Quarter} give the complete picture of the transversals and their invariant subsets. In particular, $\Xi_{\mbox{\small \mancube}}$ is by construction invariant under the action of the point group symmetry of the lattice.

\begin{proposition} For the assumed group of point symmetries $\Sigma$, the transversal $\Xi_{\mbox{\small \mancube}}$ has a unique filtration of length 3
\begin{equation}\label{Eq:CubeFilt}
\{\Ll_\infty\}= \Xi_0 \subset \Xi_1 \subset \Xi_2 \subset \Xi_3 =  \Xi_{\mbox{\small \mancube}},
\end{equation}
 by closed proper subsets that are invariant under the actions of the groupoid $\Gg_{\mbox{\small \mancube}}$ and the group $\Sigma$. In the dual picture, the cube geometry carries a canonical cofiltration of the algebra of physical observations
\begin{equation}\label{Eq:CoFilt3}
C^\ast \Gg_{\mbox{\small \mancube}}  \twoheadrightarrow C^\ast \Gg_{\Xi_{2}} \twoheadrightarrow C^\ast \Gg_{\Xi_1} \twoheadrightarrow C^\ast \Gg_{\Ll_\infty},
\end{equation}
which is the equivariant cofiltration listed in Proposition~\ref{Prop:Filtrations}.
\end{proposition}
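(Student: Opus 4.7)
The plan is to mirror directly the strategy used for the quarter and square geometries in subsections~\ref{Sec:Quarter} and \ref{Sec:Wire}, first exhibiting the filtration explicitly and then using the orbit structure of the combined $\Gg_{\mbox{\small \mancube}}\times \Sigma$ action to establish uniqueness. The candidate filtration is read off from the iterated decomposition \eqref{Eq:XiLl}: starting from $\Xi_3 = \Xi_{\mbox{\small \mancube}}$, let $\Xi_2 := \bigcup_{\lambda, i} \Xi_{\Ll_\infty^{\lambda,i}}$ be the union of all quarter-pattern transversals, let $\Xi_1 := \bigcup_\mu \Xi_{\Ll_\infty^\mu}$ be the union of all half-space transversals (indexed by the six faces $\mu$ of the cube), and let $\Xi_0 := \{\Ll_\infty\}$. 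Each of these is a finite union of transversals, each of which is a closed subset of the compact space $\Cc(\RM^d)$, so each $\Xi_j$ is closed.

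Next I would verify the two invariance properties. For $\Gg_{\mbox{\small \mancube}}$-invariance, Proposition~\ref{Prop:GGlobal} says the action at a unit $\Ss \in \Xi_{\Ll^\lambda}$ factors through the subgroupoid $\Gg_{\Ll^\lambda}$. Subsection~\ref{Sec:Quarter} already established that the quarter transversal $\Xi_{\Ll_\infty^{\lambda,i}}$, each half-space transversal it contains, and $\{\Ll_\infty\}$ are all invariant closed subsets of $\Gg_{\Ll^\lambda}$ acting on $\Xi_{\Ll^\lambda}$. Taking unions over $\lambda$ and $i$ preserves invariance, so each $\Xi_j$ is $\Gg_{\mbox{\small \mancube}}$-invariant. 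For $\Sigma$-invariance, Corollary~\ref{Cor:GAction1} shows that the cubic point group acts by homeomorphisms of $\Cc(\RM^d)$ carrying $\Xi_{\Ll^\lambda}$ to $\Xi_{\Ll^{\sigma\cdot\lambda}}$ and likewise permuting the hinge and face labels $(\lambda,i)$ and $\mu$ according to their geometric location. Therefore each $\Xi_j$, defined as a union over complete orbit types, is $\Sigma$-invariant.

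For uniqueness, the decomposition \eqref{Eq:XiLl} together with the analysis of subsection~\ref{Sec:Quarter} stratifies $\Xi_{\mbox{\small \mancube}}$ into exactly four locally closed strata: the corner stratum $\bigcup_\lambda \Oo(\Ll^\lambda)$, the hinge stratum $\bigcup_{\lambda,i}\Oo(\Ll_\infty^{\lambda,i})$, the face stratum $\bigcup_\mu \Oo(\Ll_\infty^\mu)$, and the bulk stratum $\{\Ll_\infty\}$. Each stratum is a single $\Gg_{\mbox{\small \mancube}}\times \Sigma$-orbit, since the full cubic point group acts transitively on corners, on hinges, and on faces of the cube while the groupoid action sweeps out each individual orbit $\Oo(\cdot)$. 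Any $\Gg_{\mbox{\small \mancube}}$- and $\Sigma$-invariant subset must therefore be a union of complete strata, and if it is additionally closed it must contain, along with any stratum, its closure in $\Cc(\RM^d)$ — which, again by the analysis of subsections~\ref{Sec:Quarter} and \ref{Sec:Wire}, produces precisely the next lower stratum in the hierarchy corner $\to$ hinge $\to$ face $\to$ bulk. This yields exactly the four nested subsets $\Xi_0 \subset \Xi_1 \subset \Xi_2 \subset \Xi_3$.

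Finally, the co-filtration \eqref{Eq:CoFilt3} is obtained by applying Proposition~\ref{def:ideal_abbreviation} to the successive pairs $\Xi_{j-1}\subset \Xi_j$, producing surjections $C^\ast \Gg_{\Xi_j}\twoheadrightarrow C^\ast\Gg_{\Xi_{j-1}}$; the $\Sigma$-equivariance transfers from the underlying inclusions via the $C^\ast$-lift of Corollary~\ref{Cor:GAction1}. The main obstacle I anticipate is the uniqueness claim, specifically verifying that the four strata form single orbits of the combined action under the full cubic point symmetry — this requires separately checking transitivity on corners, on the two distinct hinge and face orbits under the pure rotation subgroup and then enlarging by reflections to collapse them — and confirming that closure in the Fell topology between strata behaves exactly as the quarter-space computation suggests, with no accidental further invariant closed subsets hiding inside a single stratum.
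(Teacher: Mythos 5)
Your proposal is correct and follows exactly the route the paper intends: the paper states this proposition without an explicit proof, but Remark~\ref{Re:FiltChoice} attributes the uniqueness to the transitivity of $\Sigma$ on the observers, which is precisely your orbit-stratification argument, and your construction of $\Xi_1,\Xi_2$ as unions of half-space and quarter-space transversals matches Fig.~\ref{Fig:CrystGeom}(c) and the analogous reasoning carried out for the quarter and square geometries. Two cosmetic points: the closure of a stratum contains \emph{all} lower strata (not just the next one), which only strengthens the nesting argument, and the anticipated difficulty about needing reflections to collapse hinge and face orbits is moot since the rotation subgroup of the cube already acts transitively on vertices, edges and faces.
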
 
\begin{proof}
Take as $\Xi_r\subset \Xi_{\mbox{\small \mancube}}$ the set of all patterns which are invariant under translation by $(3-r)$ or more linearly independent lattice directions.
\end{proof}

\begin{remark}\label{Re:FiltChoice}{\rm The uniqueness of the filtrations presented so far spurs from the transitive character of the action of symmetry group on the set of observers. Transitivity, however, is lost for simpler symmetry groups, such as those containing just two elements, in which case there are more than one option for proper symmetry-adapted filtrations. However, the filtration by the codimension of the respective boundaries is still unique.
}$\Diamond$
\end{remark}

\subsection{Fundamental symmetries}\label{Sec:FSymm} As we shall see from the examples supplied in section~\ref{Sec:Examples}, the symmetries relevant to establish interesting higher-order bulk-boundary correspondences may involve not just crystalline symmetries but also fundamental symmetries, such as time-reversal, particle-hole or chiral symmetries. We briefly describe how they are implemented on our operator algebras by following the standard procedure devised in \cite{FreedAHP2013}. More details can be found in our Appendix~\ref{Sec:Appendix}. 

Time-reversal and particle-hole exchange are implemented as $\ZM_2$-actions and as such the finite group of crystalline symmetries $\Sigma$ is enhanced to the extended symmetry group $\bar \Sigma : = \Sigma \times \ZM_2 \times \ZM_2$. As explained in \cite{FreedAHP2013}, physics constrains time reversal and particle-hole exchange to be represented by anti-unitary operators $T$ and $P$, hence the generators of the two $\ZM_2$ subgroups must act anti-linearly on the algebras of observables. These complications are dealt with by passing to an extension of $\bar \Sigma$ by $\TM$. Up to isomorphisms, such extensions are enumerated by group morphisms $\phi : \bar \Sigma \to {\rm Out}(\TM) = {\rm Aut}(\TM) \simeq \ZM_2$ and by two-cocycles $\tau \in H^2_\phi(\bar \Sigma, \TM)$ and, as such, it is natural to denote the mentioned group extension by $\bar \Sigma^\phi_\tau$. As a set, $\bar \Sigma^\phi_\tau=\TM \times \bar \Sigma$ with multiplication determined by $\phi$ and $\tau$. The twist $\phi$ fixes which group elements will act anti-linearly in representations and the twisting cocycle $\tau$ further specifies whether the (matrix) representatives of $T$ or $P$ should square to $1$ or $-1$. In addition, the elements of $\bar \Sigma$ are graded by a homomorphism $c : \bar \Sigma \to \ZM_2$, which assigns $c(P)=1$, $c(T)=0$ and $c(\sigma)=0$ for all $\sigma\in \Sigma$, as one will eventually want particle-hole exchange to be a graded symmetry which reverses the sign of a symmetric Hamiltonian.

Now, let $C^\ast \Gg_\Xi$ be one of the groupoid $C^\ast$-algebras introduced in the previous subsection equipped with a $\Sigma$-action $\alpha$. It is extended uniquely to a complex linear $\bar \Sigma$-action $\alpha^*$ by letting $\ZM_2\times \ZM_2$ act trivially. Any element of $C^\ast \Gg_\Xi$ can be written uniquely as a continuous function in $C_0(\Gg_\Xi, \CM)$, since $\Gg_\Xi$ is \'etale and amenable. One lets $\bar \Sigma_\tau^\phi$ act on $f\in C_0(\Gg_\Xi,\CM)$ by setting
\begin{equation}\label{Eq:ExtAct}
(t,\bar \sigma) \cdot f = \begin{cases}
t \, \overline{\alpha^*_{\bar \sigma}(f)} & \text{if }\phi(\bar \sigma)=1\\
t \,\alpha^*_{\bar \sigma}(f) & \text{if }\phi(\bar \sigma)=0
\end{cases}
\end{equation}
with the complex conjugation on $\CM$, which is an $\RM$-linear automorphism of $C^\ast \Gg_\Xi$.  The morphism $\phi$ determines whether a group element $\bar \sigma$ acts linearly or anti-linearly on $C^\ast \Gg_\Xi$. 

Given a set of data $(\phi,c,\tau)$, we now specify the symmetry relations imposed upon Hamiltonians. In appendix~\ref{Sec:TwistRep}, we recall the definition of $(\phi,c,\tau)$-twisted $\bar \Sigma$-representations. The on-site degrees of freedom carried by each atom are encoded in a finite dimensional vector space $\Vv$. The algebra $ B(\Vv)$ of linear maps is equipped with a grading automorphism $C$ and inherits $\phi$-twisted $c$-graded $\bar \Sigma$-action stemming from a graded algebra morphism 
\begin{equation}
U:\RM \bar \Sigma_\tau^\phi \to B_\RM(\Vv), \quad U \circ c = C \circ U
\end{equation}
which represents $\TM$ as scalar multiplication. The Hamiltonians generating the low energy dynamics of the electrons are produced from the graded $\bar \Sigma_\tau^\phi$-$C^\ast$-algebra $B(\Vv) \otimes C^\ast \Gg_\Xi$.\footnote{Since $C^\ast \Gg_\Xi$ is trivially graded, the graded and un-graded tensor products coincide.} A specific Hamiltonian will be a self-adjoint element $h$ of this algebra which may or may not (anti-)commute with the representatives of $\bar\Sigma$, but if it does satisfy
\begin{equation}
[U_{\bar \sigma}, h] = 0, \quad \forall \bar \sigma \in \bar \Sigma_\tau^\phi,
\end{equation}
where $[\cdot,\cdot]$ is the graded commutator, then it is called $c$-twisted invariant under $\bar\Sigma$ (see Appendix~\ref{Ap:TwEq}). In most symmetry classes we will impose those relations to only hold for a specific subgroup $\Gamma\subset\bar \Sigma$.

\begin{example}\label{Ex:Chiral1}{\rm In systems with chiral symmetry, the time-reversal and particle-hole symmetries are broken, but the combination $S=TP$ persists, thus we have the symmetry group $\Gamma =\langle S\rangle \subset \overline{\Sigma}$. As a combination of two anti-unitary transformations, $S$ is unitary, hence the twists $(\phi,\tau)$ is trivial, but the grading $c(S) = 1$ is odd. The minimal example is when $B(\Vv_S) = M_2(\CM)$, which we view as being generated by Pauli matrices $\{\alpha_i\}_{i={1,2,3}}$. It is equipped with the outer grading that changes the signs of the Pauli matrices, hence the standard outer grading on the odd Clifford algebra, and also with the graded representation of the $\Gamma$ given by $S \mapsto Ad_{\alpha_3}$. The symmetric Hamiltonians anti-commute with the representative $\alpha_3$ of $S$ and therefore take the off-diagonal form
\begin{equation}\label{Eq:h38}
h = \begin{pmatrix} 0 & w^\ast \\ w & 0 \end{pmatrix} = \tfrac{1}{2}\alpha_1 \otimes (w + w^\ast) + \tfrac{\imath}{2}\alpha_2  \otimes (w - w^\ast), \quad w \in C^\ast \Gg_\Xi.
\end{equation} 
If there are additional (crystalline) symmetries then they should commute with $S$ and the blocks $w,w^*$ of a symmetric $h$ should be separately invariant.
}$\Diamond$
\end{example}

\section{Mechanism of higher-order correspondences}\label{Sec:Mechanism}

In this section, we first analyze higher-order bulk-boundary correspondences for the geometries studied in section~\ref{Sec:ModelAlg}. In the process, we formalize what it means for a model to be gapped at specific boundaries and then we reveal a pattern in the formulation of the higher-order bulk-boundary principle. Based on these findings, we provide a general picture of the principle and demonstrate how it fits into the spectral sequences induced by the cofiltrations from Proposition~\ref{Prop:Filtrations}.

Before we start, we need to fix a relation between gapped Hamiltonians and $K$-theory classes, which is essential in the field of topological condensed matter systems. For this, let $C^\ast \Gg$ be any of the groupoid $\Sigma$-$C^\ast$-algebras introduced so far.  For some subgroup $\Gamma$ of the extended symmetry group $\bar \Sigma=\Sigma \times \ZM_2\times \ZM_2$, we will want to use a $\Gamma$-equivariant K-functor together with its suspensions $(\KK_q)_{q\in \ZM}$ to assign K-theory classes to gapped Hamiltonians. To stay concrete, one may just want to fix the twisted equivariant K-functor $\KK_{q} = {}^{\phi} K^\Gamma_{0+q, c,\tau}$ or $\KK_{q} = {}^{\phi} K^\Gamma_{-1+q, c,\tau}$ with the two versions of the twisted equivariant K-functor as defined in appendix~\ref{Ap:TwEq}, and twisting data $(\phi, c, \tau)$ obtained from restriction of respective data on $\bar \Sigma$. Other similarly defined functors will also work or are already implicitly included in this generality, see Remark~\ref{rem:picturesk_theory}. Depending on the type of K-functor and precise picture of K-theory not all $\KK_q$ are naturally represented in terms of physically relevant Hamiltonians and are instead drawn from suspensions (though this can usually be remedied by going to K-theory for graded algebras where suspensions can be replaced by graded tensor products with Clifford algebras in exchange for other technical complications). However, as seen in remark~\ref{rem:picturesk_theory}, there are indeed many relevant cases where each $\KK_q$ can be represented in terms of Hamiltonians symmetric under some subgroup $\Gamma_q\subset \bar\Sigma$ that depends on $q$ with either $2$- or $8$-fold periodicity.

In any case, we focus on one particular value $q=\ast$, for which Hamiltonians can in principle define classes in $\KK_\ast$ without explicit suspensions. Throughout, a Hamiltonian will be a self-adjoint element drawn from one of the algebras $B(\Vv)\otimes C^*\Gg$ where $\Vv$ is a finite-dimensional graded vector space furnished with a twisted representation of the symmetry group $\Gamma\subset \bar \Sigma$ carrying twisting data $(\phi,c,\tau)$. We will call a Hamiltonian $h\in B(\Vv)\otimes C^*\Gg$ {\it symmetric} if it satisfies the symmetry requirements to canonically define a class in $\KK_\ast$, but possibly does not have a spectral gap. For the functor $\KK_\ast = {}^{\phi} K^\Gamma_{0, c,\tau}$ this means explicitly that $h$ is $c$-twisted invariant under $\Gamma$ for the fixed $(\phi,c,\tau)$-twisted representation on $\Vv$. If $h$ has a spectral gap then $[\gamma_h]_\ast=[\sgn(h)]_0$ with the functional calculus in $B(\Vv)\otimes C^*\Gg$ (we assume throughout that the Fermi energy is fixed at zero and thus when we speak of spectral gaps we will therefore always mean open intervals in the resolvent set which contain $0$). In a unitary picture such as $\KK_\ast =  {}^{\phi} K^\Gamma_{-1, c,\tau}$ or complex equivariant $K$-theory $\KK_\ast=K^\Gamma_{1}$ one will additionally assume that $\Vv$ has a balanced grading, i.e. $\Vv=\Ww\oplus \Ww$ and $\gamma=\one_\Ww\oplus(-\one_\Ww)$, and symmetric Hamiltonians are not only ($c$-twisted) invariant under $\Gamma$ but also odd w.r.t. the grading. Then $h\mapsto [\gamma_h]_\ast$ for invertible $h$ maps to the class defined by either of the two off-diagonal parts of $\sgn(h)$ (which of the two components one picks is a matter of convention). Comparing to \eqref{Eq:h38}, this corresponds precisely to Hamiltonians which have a chiral symmetry.

In any equivariant picture of K-theory one can, while preserving the $\KK_\ast$-theoretic class, amplify gapped Hamiltonians $h\in B(\Vv)\otimes C^*\Gg$ to ones in $B(\Vv\oplus \Ww)\otimes C^*\Gg$ with $\Ww$ any admissible finite-dimensional (graded) representation space of the symmetry group by adding to $h$ in a direct sum a gapped symmetric Hamiltonian in $B(\Ww)$ which represents the trivial class in $\KK_\ast$.
\begin{definition}\label{def: homotopic}Two gapped symmetric Hamiltonians $h_i \in B(\Vv_i) \otimes C^\ast \Gg$, $i=1,2$ are {\it stably symmetric-preserving homotopic} if there are  amplifications $\Ww_1$ and $\Ww_2$  so that $\Vv_1 \oplus \Ww_1 \simeq \Vv_2 \oplus \Ww_2 \simeq \bar \Vv$ (as graded representation spaces) and if the corresponding amplifications of $h_1$, $h_2$ in $B(\bar \Vv) \otimes C^\ast \Gg$ can be norm-continuously deformed into each other within the self-adjoint invertible symmetric operators. \end{definition}

The standing assumption (which is true for the explicitly given K-functors above) is that $[h] \mapsto [\gamma_h]_\ast \in \KK_\ast$ is a one-to-one correspondence with the stable homotopy equivalence classes of gapped symmetric Hamiltonians.

\subsection{Quarter geometry}
\label{Sec:quarter_mechanism} Using the notations from subsection~\ref{Sec:Quarter}, we will consider model Hamiltonians $h=h^\ast \in B(\Vv) \otimes \bar \Qq$, symmetric under a subgroup $\Gamma\subset{\bar \Sigma}$. For now we will leave the K-functor and symmetry group unspecified. The Hamiltonian determines a symmetric bulk Hamiltonian $h_b \in B(\Vv) \otimes \Bb$ via the composition $\bar {\mathfrak p}^1 \circ \bar {\mathfrak p}^2$ of the surjections introduced in equation~\eqref{Eq:Ext17}.  The bulk is assumed to be insulating, hence $h_b$ is assumed to have a gap in its spectrum, which we refer to as the bulk gap. We first address the question of what it means for $h$ to be gapped at the $(d-1)$-dimensional facets of the sample. As the words suggest, if one probes the quarter sample near the faces and moves farther and farther away from the corners, one should find that the Hamiltonian increasingly resembles a gapped operator. Using the physical interpretation of the left-regular representations of $\bar \Qq=C^\ast \Gg_\Ll$ (see Remark~\ref{Re:LRReps}), we can express this in precise mathematical terms by stating that $\pi_\Ss(h)$ is a spectrally gapped operator for all $\Ss \in \Xi_{\Ll_\infty^1} \cup \Xi_{\Ll_\infty^2}$, i.e. the transversals of the asymptotic half-spaces. Given the definition of $\bar \Pp$, we arrive at the following definition:

\begin{definition}\label{Def:Gappable1} A symmetric Hamiltonian $h\in B(\Vv) \otimes \bar \Qq$ is gapped at the codimension 1 boundaries (hence faces) if $\bar {\mathfrak p}^2(h) \in B(\Vv) \otimes \bar \Pp$ has a spectral gap contained inside the spectral gap of $h_b$.
\end{definition}

\begin{definition} Let $h_b=h_b^\ast\in B(\Ww)\otimes \Bb$ be a symmetric gapped bulk Hamiltonian. We say that $h_b$ is gappable at the codimension 1 boundaries if there exists a symmetric Hamiltonian $h'_b\in B(\Vv) \otimes \Bb$ which is in the same $\KK_\ast$-theoretic class $[\gamma_{h_b}]_\ast=[\gamma_{h_b'}]_\ast$ and which lifts under  $\bar{\mathfrak{p}}^1 \circ \bar{\mathfrak{p}}^2$ to a symmetric Hamiltonian $h \in B(\Vv) \otimes \bar \Qq$ that is gapped at the  codimension 1 boundaries. 
\end{definition}
When we speak of gappable Hamiltonians in the following we always assume they are symmetric unless stated otherwise. 
\begin{remark} {\rm For us, to be gappable at some boundary means precisely that K-theory does not provide an obstruction to the existence of a spectral gap at that boundary. The goal of this paper is to find all topological obstructions that are the result of the bulk K-theory class, which naturally means that we classify Hamiltonians up to stable equivariant homotopy. Accordingly, to construct a gapped lift of a bulk Hamiltonian one is by the definition above allowed to stabilize by adding as direct summands topologically trivial gapped Hamiltonians of the respective symmetry class.}
\end{remark}
The following equivalent characterization highlights that one does not need to think in terms of concrete Hamiltonians at all:

\begin{proposition}\label{Prop:Gapable1} $h_b\in B(\Vv) \otimes \Bb$ is gappable at the codimension 1 boundaries if and only if its class $[\gamma_{h_b}]_\ast \in \KK_\ast(\Bb)$ has a pre-image in $\KK_{\ast}(\bar \Pp)$ under the map $\bar {\mathfrak p}^1_\ast$. 
\end{proposition}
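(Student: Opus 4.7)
The forward implication is essentially a bookkeeping exercise with the definitions. Suppose $h_b$ is gappable at the codimension 1 boundaries, so there exist a symmetric $h'_b \in B(\Vv) \otimes \Bb$ with $[\gamma_{h'_b}]_\ast = [\gamma_{h_b}]_\ast$ and a symmetric lift $h \in B(\Vv) \otimes \bar \Qq$ such that $\bar{\mathfrak p}^2(h) \in B(\Vv)\otimes \bar \Pp$ has a spectral gap. Then $\bar{\mathfrak p}^2(h)$ is a gapped symmetric Hamiltonian in $\bar \Pp$, hence defines a class $[\gamma_{\bar{\mathfrak p}^2(h)}]_\ast \in \KK_\ast(\bar \Pp)$. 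Functoriality of $\KK_\ast$ and the fact that $(\bar{\mathfrak p}^1\circ\bar{\mathfrak p}^2)(h)=h'_b$ give $\bar{\mathfrak p}^1_\ast[\gamma_{\bar{\mathfrak p}^2(h)}]_\ast=[\gamma_{h'_b}]_\ast=[\gamma_{h_b}]_\ast$, which is the desired lift.

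For the converse, suppose $[\gamma_{h_b}]_\ast=\bar{\mathfrak p}^1_\ast(y)$ for some $y \in \KK_\ast(\bar \Pp)$. By the standing assumption that K-classes are in bijection with stable equivariant homotopy classes of gapped symmetric Hamiltonians, one can realize $y = [\gamma_{h_P}]_\ast$ for some symmetric spectrally gapped Hamiltonian $h_P \in B(\Vv')\otimes \bar \Pp$ on a suitable representation space $\Vv'$. Define $h'_b := \bar{\mathfrak p}^1(h_P) \in B(\Vv')\otimes \Bb$; then $h'_b$ is symmetric, gapped, and represents $[\gamma_{h_b}]_\ast$. It remains to lift $h_P$ through the equivariant surjection $\mathrm{id}\otimes \bar{\mathfrak p}^2$ to a symmetric element of $B(\Vv')\otimes \bar \Qq$.

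The plan for this final step is a standard two-stage construction. First, choose any self-adjoint lift $\tilde h \in B(\Vv')\otimes \bar \Qq$ of $h_P$, which exists because $\bar{\mathfrak p}^2$ is a surjective $\ast$-morphism. Second, symmetrize $\tilde h$ by averaging over the finite symmetry group $\Gamma$ using the formula
\begin{equation*}
h \;=\; \frac{1}{|\Gamma|}\sum_{\bar\sigma \in \Gamma}(-1)^{c(\bar\sigma)}\,\mathrm{Ad}_{U_{\bar\sigma}}\circ\alpha^\ast_{\bar\sigma}(\tilde h),
\end{equation*}
where $\alpha^\ast_{\bar\sigma}$ is the (possibly anti-linear) $\bar\Sigma$-action of \eqref{Eq:ExtAct} acting on the $\bar \Qq$ factor and $U_{\bar\sigma}$ is the representation on $\Vv'$. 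The action $\alpha^\ast_{\bar\sigma}$ is finite so $h$ remains in $B(\Vv')\otimes \bar\Qq$, self-adjoint, and is $c$-twisted invariant under $\Gamma$ by construction.

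The main thing to check, and what I expect to be the only real obstacle, is that the averaging preserves the image $h_P$ in $\bar \Pp$. This relies on two compatibilities: $\bar{\mathfrak p}^2$ is $\bar\Sigma$-equivariant (it comes from the equivariant co-filtration of Proposition~\ref{Prop:Filtrations}), and $\mathrm{Ad}_{U_{\bar\sigma}}$ acts only on $B(\Vv')$ and thus commutes with $\mathrm{id}\otimes \bar{\mathfrak p}^2$. Together with the fact that $h_P$ is itself already symmetric, these yield
\begin{equation*}
(\mathrm{id}\otimes \bar{\mathfrak p}^2)(h)=\frac{1}{|\Gamma|}\sum_{\bar\sigma}(-1)^{c(\bar\sigma)}\mathrm{Ad}_{U_{\bar\sigma}}\circ\alpha^\ast_{\bar\sigma}(h_P)=h_P,
\end{equation*}
so $\bar{\mathfrak p}^2(h)=h_P$ is spectrally gapped inside the bulk gap. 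Hence $h$ is a symmetric lift of $h'_b$ that is gapped at the codimension 1 boundaries in the sense of Definition~\ref{Def:Gappable1}, proving gappability of $h_b$.
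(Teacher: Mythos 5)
Your proof is correct and follows essentially the same route as the paper: the forward direction is the same functoriality bookkeeping, and the converse uses the identical strategy of representing the lifted class by a gapped symmetric Hamiltonian in $B(\Vv')\otimes\bar\Pp$, taking an arbitrary self-adjoint lift through $\bar{\mathfrak p}^2$, and symmetrizing by group averaging. The only difference is that you spell out the averaging formula and verify explicitly that it fixes the image $h_P$ (using that $h_P$ is already $c$-twisted invariant and that $\bar{\mathfrak p}^2$ is equivariant), a step the paper's proof only asserts.
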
 

\begin{proof} If $h_b$ is gappable then by assumption there exists some $h_b'\in \Bb(\Vv)\otimes \bar \Bb$ with $[\gamma_{h_b}]_\ast= [\gamma_{h_b'}]_\ast$ with a lift $h\in \Bb(\Vv)\otimes\bar \Qq$ that is gapped at the codimension 1 boundaries. Therefore, $\bar{\mathfrak{p}}^1_*([\gamma_{\bar{\mathfrak{p}}^2(h)}]_\ast)=[\gamma_{h_b'}]_\ast$ provides a pre-image of $[\gamma_{h_b}]_\ast$. Conversely, any class in $(\bar{\mathfrak{p}}^1_*)^{-1}([\gamma_{h_b}]_\ast) \subset \KK_\ast(\bar \Pp)$ can be represented by a gapped symmetric Hamiltonian $\tilde{h} \in B(\Vv)\otimes \bar \Pp$ and one can always lift that to a symmetric Hamiltonian $h\in B(\Vv)\otimes \bar \Qq$ by picking any self-adjoint lift and averaging it over a twisted representation of the subgroup  $\Gamma\subset \bar{\Sigma}$ which implements ${\bf K}_*$. One can then set $h_b' = (\bar{\mathfrak{p}}^1\circ \bar{\mathfrak{p}}^2)(h)$.
\end{proof}

The long exact sequence in K-theory induced by the equivariant epimorphism $\bar {\mathfrak p}^1 : \bar \Pp \twoheadrightarrow \Bb$,
\begin{equation}\label{Eq:KSeq01}
\KK_{\ast}(\bar \Ff) \stackrel{\bar i^1_\ast}{\longrightarrow} \KK_{\ast}(\bar \Pp) \stackrel{\bar {\mathfrak p}^1_\ast}{\longrightarrow} \KK_\ast(\Bb) \stackrel{\bar \partial^1_\ast}{\longrightarrow} \KK_{\ast-1}(\bar \Ff) \stackrel{\bar i^1_{\ast-1}}{\longrightarrow} \KK_{\ast-1}(\bar \Pp) \stackrel{\bar {\mathfrak p}^1_{\ast-1}}{\longrightarrow}  \KK_{\ast-1}(\Bb),
\end{equation}
gives an equivalent characterization, since exactness at $\KK_\ast(\Bb)$ in conjunction with Proposition~\ref{Prop:Gapable1} literally means:
\begin{corollary} Up to stable symmetry-preserving homotopies, the symmetric bulk Hamiltonians that are gappable at the codimension 1 boundaries correspond precisely to the classes in
\begin{equation}
{\rm Im}\, \bar {\mathfrak p}^1_\ast = \Ker \, \bar \partial^1_\ast \subset \KK_\ast(\Bb).
\end{equation}
\end{corollary}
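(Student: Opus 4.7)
The proof is essentially a direct combination of Proposition~\ref{Prop:Gapable1}, the standing bijection between stable symmetry-preserving homotopy classes of gapped symmetric Hamiltonians and elements of $\KK_\ast$, and exactness of the six-term-type sequence \eqref{Eq:KSeq01} at the position $\KK_\ast(\Bb)$. The plan is to argue the two claimed equalities separately: first the set-theoretic identification between gappable bulk classes and ${\rm Im}\, \bar{\mathfrak p}^1_\ast$, and then the algebraic identity ${\rm Im}\, \bar{\mathfrak p}^1_\ast = \Ker\, \bar\partial^1_\ast$.

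First I would observe that the standing assumption lets us identify the set of stable symmetry-preserving homotopy classes of gapped symmetric bulk Hamiltonians $h_b \in B(\Vv)\otimes \Bb$ (ranging over all admissible representation spaces $\Vv$) with $\KK_\ast(\Bb)$ via the assignment $[h_b]\mapsto [\gamma_{h_b}]_\ast$. Under this identification the statement to prove becomes: the subset of classes represented by gappable bulk Hamiltonians coincides with ${\rm Im}\,\bar{\mathfrak p}^1_\ast$. This is precisely the content of Proposition~\ref{Prop:Gapable1}: an element $[\gamma_{h_b}]_\ast \in \KK_\ast(\Bb)$ has a preimage under $\bar{\mathfrak p}^1_\ast$ if and only if $h_b$ is gappable at the codimension 1 boundaries. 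Note that the notion of ``being gappable'' is defined in terms of the existence of some representative $h_b'$ in the same $\KK_\ast$-class, which is exactly what makes it a property of the K-theory class rather than of a particular Hamiltonian.

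Next I would establish the equality ${\rm Im}\,\bar{\mathfrak p}^1_\ast = \Ker\,\bar\partial^1_\ast$. This is pure homological algebra: the equivariant epimorphism $\bar{\mathfrak p}^1 : \bar \Pp \twoheadrightarrow \Bb$ has kernel $\bar \Ff$ by the definitions in subsection~\ref{Sec:Quarter}, yielding the short exact sequence $0\to \bar\Ff \to \bar\Pp \to \Bb \to 0$. Applying the chosen equivariant K-functor $\KK_\ast$, which by assumption is a ($\Gamma$-equivariant) homology theory on the relevant category of $C^\ast$-algebras, produces the long exact sequence \eqref{Eq:KSeq01}. Exactness at the position $\KK_\ast(\Bb)$ is then the statement ${\rm Im}\,\bar{\mathfrak p}^1_\ast = \Ker\,\bar\partial^1_\ast$, completing the identification.

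The only potentially subtle point, and what I would expect to be the main obstacle if one is being careful, is not the homological algebra but rather the representability question hidden in Proposition~\ref{Prop:Gapable1}: one must know that every class of $\KK_\ast(\bar\Pp)$ lying above $[\gamma_{h_b}]_\ast$ can actually be realized by a genuine gapped symmetric Hamiltonian in $B(\Vv)\otimes \bar\Pp$ (rather than merely by a formal difference or Clifford-twisted cycle), and that such a representative admits a self-adjoint symmetric lift to $B(\Vv)\otimes \bar\Qq$. The averaging argument sketched in the proof of Proposition~\ref{Prop:Gapable1}, together with the earlier remark allowing passage to graded tensor products with Clifford algebras when the chosen picture of $\KK_\ast$ does not directly represent classes by Hamiltonians, handles this point; I would simply cite Proposition~\ref{Prop:Gapable1} and the standing assumption, so no additional work is required here. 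The corollary then follows immediately.
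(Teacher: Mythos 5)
Your proposal is correct and follows essentially the same route as the paper: the corollary is obtained by combining Proposition~\ref{Prop:Gapable1} (gappability is equivalent to the existence of a lift under $\bar{\mathfrak p}^1_\ast$) with exactness of the long exact sequence \eqref{Eq:KSeq01} at $\KK_\ast(\Bb)$, which gives ${\rm Im}\,\bar{\mathfrak p}^1_\ast = \Ker\,\bar\partial^1_\ast$. The representability subtlety you flag is indeed the only nontrivial content, and the paper likewise delegates it to the averaging argument in Proposition~\ref{Prop:Gapable1} and the standing assumption on the K-functor.
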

We now turn our attention to the exact sequence derived from the $\bar {\mathfrak p}^2 : \bar \Qq \to \bar \Pp$ epimorphism:
\begin{equation}\label{Eq:KSeq02}
\KK_{\ast}(\bar \Cc) \stackrel{\bar i^2_\ast}{\longrightarrow} \KK_{\ast}(\bar \Qq) \stackrel{\bar {\mathfrak p}^2_\ast}{\longrightarrow} \KK_\ast(\bar \Pp) \stackrel{\bar \partial^2_\ast}{\longrightarrow} \KK_{\ast-1}(\bar \Cc) \stackrel{\bar i^2_{\ast-1}}{\longrightarrow}  \KK_{\ast -1}(\bar \Qq) \stackrel{\bar {\mathfrak p}^2_{\ast-1}}{\longrightarrow} \KK_{\ast -1}(\bar \Pp).
\end{equation}
Since it is exact at $\KK_{\ast}(\bar \Pp)$ a symmetric gapped Hamiltonian $h \in B(\Vv)\otimes \bar \Pp$ has (possibly after stabilization) a symmetric gapped lift to $\Bb(\Vv)\otimes \bar \Qq$ if and only if  $\bar \partial_*^2([\gamma_{h}]_\ast) \in \KK_{*-1}(\bar \Cc)$ is trivial. Conversely, if that class is non-trivial then any lift must be non-invertible due to in-gap corner modes which are characterized the topological invariant $\bar \partial_*^2([\gamma_{h}]_\ast) \in \KK_{*-1}(\bar \Cc)$.  Taking the analogue of Proposition~\ref{Prop:Gapable1} as a definition, a gapped bulk Hamiltonian shall be called gappable at the corner if and only if its K-theory class lifts from $\KK_\ast(\Bb)$ to $\KK_\ast(\bar \Pp)$ and then further to $\KK_\ast(\bar \Qq)$. If a bulk model is gappable at the edges, but not gappable at the corner, then this means that every realization of that model on the quarter-space which is gapped at the edges will display protected corner modes inside the bulk gap. In this case we will speak of an order-2 bulk-boundary correspondence, since the existence of some corner modes is enforced by the K-theory class of the bulk material, no matter which lift to $\Bb(\Vv)\otimes \bar \Pp$ one chooses, as long as it is gapped and symmetric. 

Those instances can be detected by a group homomorphism:

\begin{proposition}\label{Prop:BC1} There exists a well defined bulk-corner map
\begin{equation}
	\label{eq:bulk-corner-map}
	\delta^{\Bb \bar \Cc}_\ast: {\rm Im} \, \bar {\mathfrak p}^1_\ast =\Ker \, \bar \partial^1_\ast \subseteq \KK_\ast(\Bb) \to    \frac{\mathrm{Im} \, \bar \partial^2_\ast}{\bar \partial^2_\ast ({\rm Ker}\, \bar {\mathfrak p}^1_\ast) }=\frac{	\mathrm{Im} \, \bar \partial^2_\ast}{\mathrm{Im} \, (\bar \partial^2_\ast \circ \bar i^1_\ast)}  \subseteq \frac{\KK_{\ast-1}(\bar \Cc)}{\bar \partial^2_\ast ({\rm Ker}\, \bar {\mathfrak p}^1_\ast) },
\end{equation}
 such that $x\in \Ker \, \bar \partial^1_\ast$ has a pre-image in $\KK_\ast(\bar \Qq)$ if and only if $\delta^{\Bb \bar \Cc}_\ast(x)=0$. 
\end{proposition}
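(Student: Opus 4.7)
The plan is to construct $\delta^{\Bb\bar\Cc}_\ast$ by a standard diagram chase across the two long exact sequences \eqref{Eq:KSeq01} and \eqref{Eq:KSeq02}, and then to translate the result into the physical statement. Given $x\in \Ker\,\bar\partial^1_\ast = \mathrm{Im}\,\bar{\mathfrak p}^1_\ast$, I would pick any preimage $y\in \KK_\ast(\bar \Pp)$ with $\bar{\mathfrak p}^1_\ast(y)=x$ and define
\begin{equation}
\delta^{\Bb\bar\Cc}_\ast(x):= \bigl[\bar\partial^2_\ast(y)\bigr] \in \frac{\mathrm{Im}\,\bar\partial^2_\ast}{\bar\partial^2_\ast(\Ker\,\bar{\mathfrak p}^1_\ast)} .
\end{equation}
Well-definedness is immediate: any two preimages $y,y'$ differ by an element $z\in \Ker\,\bar{\mathfrak p}^1_\ast$, so $\bar\partial^2_\ast(y)-\bar\partial^2_\ast(y') = \bar\partial^2_\ast(z)\in \bar\partial^2_\ast(\Ker\,\bar{\mathfrak p}^1_\ast)$ and the two representatives agree in the quotient. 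Exactness of \eqref{Eq:KSeq01} at $\KK_\ast(\bar\Pp)$ gives $\Ker\,\bar{\mathfrak p}^1_\ast = \mathrm{Im}\,\bar i^1_\ast$, which supplies the second equality of the quotients. Linearity is inherited from $\bar\partial^2_\ast$.

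Next I would match $\delta^{\Bb\bar\Cc}_\ast$ to the physical notion of order-$2$ bulk-boundary correspondence. By Proposition~\ref{Prop:Gapable1} applied first to $\bar{\mathfrak p}^1$ and then to $\bar{\mathfrak p}^2$, a symmetric gapped bulk Hamiltonian representing $x$ admits a lift to $B(\Vv)\otimes\bar\Qq$ that is symmetric, gapped at the faces, and gapped at the corner if and only if $x$ has some preimage $y\in\KK_\ast(\bar\Pp)$ under $\bar{\mathfrak p}^1_\ast$ which itself has a preimage in $\KK_\ast(\bar\Qq)$ under $\bar{\mathfrak p}^2_\ast$. Using exactness of \eqref{Eq:KSeq02} at $\KK_\ast(\bar\Pp)$, the latter is equivalent to the existence of a preimage $y$ of $x$ with $\bar\partial^2_\ast(y)=0$. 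Existence of such a preimage is precisely the statement that the coset $\delta^{\Bb\bar\Cc}_\ast(x)$ vanishes: if $[\bar\partial^2_\ast(y)]=0$ then $\bar\partial^2_\ast(y)=\bar\partial^2_\ast(z)$ for some $z\in\Ker\,\bar{\mathfrak p}^1_\ast$, and $y-z$ is a preimage of $x$ in $\Ker\,\bar\partial^2_\ast$; conversely any preimage with $\bar\partial^2_\ast(y)=0$ obviously gives the trivial coset. Contrapositively, $\delta^{\Bb\bar\Cc}_\ast(x)\neq 0$ holds if and only if every symmetric lift that is gapped at the faces has the corner gap filled in a K-theoretically stable way, i.e. exhibits a non-trivial order-$2$ bulk-boundary correspondence in the sense spelled out following \eqref{Eq:KSeq02}.

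The role of the quotient $\bar\partial^2_\ast(\Ker\,\bar{\mathfrak p}^1_\ast)=\mathrm{Im}\,(\bar\partial^2_\ast\circ\bar i^1_\ast)$ deserves an explicit comment: elements of $\mathrm{Im}\,\bar i^1_\ast$ are precisely the K-theory classes contributed by surface decorations that do not alter the bulk, corresponding to depositing symmetric quasi-$1$D insulating layers along the faces. Their images under $\bar\partial^2_\ast$ are therefore the redistributions of corner modes one can produce by choosing different symmetric boundary conditions, which confirms the interpretation previewed after the statement of Theorem~\ref{Th:Main}, namely that $\delta^{\Bb\bar\Cc}_\ast(x)$ enumerates corner states modulo boundary-condition ambiguity.

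The main technical obstacle will be the faithful translation between abstract classes in $\KK_\ast$ and concrete gapped symmetric Hamiltonians needed to justify the equivalence between the K-theoretic statement and Definition~\ref{Def:Gappable1}; in particular, one must know that every class in each of $\KK_\ast(\Bb)$, $\KK_\ast(\bar\Pp)$ and $\KK_\ast(\bar\Qq)$ is represented by a gapped symmetric self-adjoint element, possibly after stabilization by topologically trivial symmetric Hamiltonians as in Definition~\ref{def: homotopic}. This is guaranteed by the standing assumption on the K-functor stated just after Definition~\ref{def: homotopic}, and the averaging construction used in the proof of Proposition~\ref{Prop:Gapable1} shows that any chosen K-theoretic lift can indeed be realized by a symmetric Hamiltonian; the rest is bookkeeping within the two exact sequences.
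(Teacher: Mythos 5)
Your proposal is correct and follows essentially the same route as the paper's proof: define the map by choosing an arbitrary lift to $\KK_\ast(\bar\Pp)$, use exactness of \eqref{Eq:KSeq01} to show independence of the lift modulo $\bar\partial^2_\ast(\Ker\,\bar{\mathfrak p}^1_\ast)$, and identify vanishing of the coset with liftability to $\KK_\ast(\bar\Qq)$ via exactness of \eqref{Eq:KSeq02}. The extra remarks on realizing classes by symmetric Hamiltonians and on the physical meaning of the quotient are consistent with the surrounding discussion in the paper but are not part of its proof of this proposition.
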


\begin{proof} By definition, a class $x\in {\rm Im} \, \bar {\mathfrak p}^1_\ast \subset \KK_\ast(\Bb)$ has a pre-image $\tilde{x}\in \KK_\ast(\bar \Pp)$, but it need not be unique.  However, any two pre-images $\tilde{x}, \tilde{x}'$ differ only by an element $\tilde{x}-\tilde{x}'\in \Ker\,\bar{\mathfrak{p}}^1 = i^1_\ast(\KK_\ast(\bar \Ff))$, with the latter following from equality being the exactness of \eqref{Eq:KSeq01}. Therefore, $\bar\partial_\ast^2(\tilde{x})$ and $\bar\partial_\ast^2(\tilde{x}')$ differ only by an element of $\bar\partial_\ast^2({\rm Ker}\, \bar {\mathfrak p}^1_\ast)$, which makes $$\delta^{\Bb \bar \Cc}_\ast:x\mapsto \bar\partial_\ast^2(\tilde{x}) + \bar\partial_\ast^2({\rm Ker}\, \bar {\mathfrak p}^1_\ast)$$ for arbitrary choice of lift $\tilde{x}$ a well-defined homomorphism of abelian groups. One can lift $x$ to $\KK_\ast(\bar \Qq)$ if and only if there is a pre-image $\tilde{x}\in \KK_\ast(\bar \Pp)$ with $\bar\partial_\ast^2(\tilde{x})=0$, hence if and only if $\delta^{\Bb\bar \Cc}_\ast(x)=0$.
%
\end{proof}
In light of the above discussion this allows us to identify the K-theory classes of symmetric bulk Hamiltonians that exhibit non-trivial order-2 bulk-boundary correspondences. The quotient by $\bar \partial^2_\ast({\rm Ker}\, \bar {\mathfrak p}^1_\ast)$ has a simple interpretation: A surface insulator, i.e. a gapped Hamiltonian over $\bar \Pp$ which is in the kernel of $\bar{\mathfrak{p}}^1$, can under the boundary map $\partial^2_\ast$ still lead to a non-trivial class in $\KK_{\ast-1}(\bar \Cc)$. The image $\bar \partial^2_\ast({\rm Ker}\, \bar {\mathfrak p}^1_\ast)\subset \KK_{\ast-1}(\bar \Cc)$ characterizes precisely the possible protected corner modes which can be carried by such surface layers. This ambiguity is always present in the choice of lift, since one can always add a surface layer without changing the bulk. The crucial observation is that due to  the long exact sequences, this group already enumerates {\it all} ambiguities on the level of K-theory and hence after taking the quotient the result depends only on the bulk K-theory class. 

\begin{remark} {\rm Generally it would be difficult to directly enumerate the classes in $\KK_\ast(\Bb)$ that lift to $\KK_\ast(\bar \Pp)$ but not to  $\KK_\ast(\bar \Qq)$, since the latter two groups are in practice not known explicitly. That is precisely the point of rewriting all components of \eqref{eq:bulk-corner-map} in terms of more computable expressions using exactness. For example, it is usually quite feasible to compute $\KK_\ast(\bar \Ff)$ and then $\mathrm{Im} \, (\bar \partial^2_\ast \circ \bar i^1_\ast)$. The domain and range of $\delta^{\Bb \bar \Cc}_\ast$ can then be determined by computing the boundary maps $\bar \partial_*^1$ and $\bar\partial_*^2$ for sufficiently many Hamiltonians.} $\Diamond$
\end{remark}

\begin{example}
\label{Ex:Chiral2}
{\rm In the remainder of this section we study an example of a non-trivial second-order bulk-boundary correspondence on the quarterspace in two dimensions $d=2$. The symmetry group $\Sigma =\{e,\sigma_m\} \simeq \ZM_2$ shall act by the diagonal mirror $(x_1,x_2)\in \ZM^2\mapsto (x_2,x_1)$. We do not involve anti-unitary symmetries and therefore use the ordinary $\ZM_2$-equivariant $K$-theories, i.e. the K-functor above becomes $\KK_q=K_q^{\ZM_2}$, which is to be distinguished from the non-equivariant complex K-functor $K_q$ which also plays a role in the following. Classes in $K^{\ZM_2}_0$ are defined by symmetric matrix-valued projections, in particular the spectral projections of gapped symmetric Hamiltonians, and classes in $K^{\ZM_2}_1$ by symmetric unitary matrices, in particular the polar decompositions of the off-diagonal parts $w$ of oddly graded gapped symmetric Hamiltonians as in \eqref{Eq:h38}.

For $S_1, \ S_2$ the unitary generators of the group $C^*$-algebra $\Bb=C^\ast \ZM^2$, the $\ZM_2$-action $\sigma_m$ interchanges $S_1$ and $S_2$. We fix the directions of the shifts by imposing that projecting them to $\ell^2(\NM\times\NM)$ makes them isometries. Since $\bar{\mathcal{C}}\simeq \mathbb{K}(\NM\times\NM)$, the compact operators on $\ell^2(\NM\times\NM)$, we have $ {K}_1^{\ZM_2}(\bar{\mathcal{C}})=0$ and
  \begin{equation}
    K_{0}^{\ZM_2}(\bar{\mathcal{C}})= K_0(C^\ast \ZM_2) \otimes K_0(\bar{\mathcal{C}})=\ZM[\chi_+ \otimes E_0]_0\oplus \ZM[\chi_- \otimes E_0]_0
\end{equation}
with representatives in $\Bb(\Vv)\otimes \bar{\Cc}$. Here $\Vv=\CM e \oplus \CM \sigma_m$ is a two-dimensional vector space, $\chi_\pm = \frac{1}{2}(e \pm \sigma_m)$ are  the central projections supporting the trivial/non-trivial irreducible representations of $\ZM_2$, and $E_0 = |\delta_{0}\otimes \delta_{0}\rangle \langle \delta_0 \otimes \delta_{0}|$ is the rank-one projection onto the site located exactly at the corner. The equivariant $K$-theories of the face algebra are also straightforward. Indeed, we have $\bar \Ff = \Ff_1 \oplus \Ff_2$, $\Ff_i \simeq \KM(\NM) \otimes C^\ast \ZM$, and $\ZM_2$ acts by an isomorphism between $\Ff_1$ and $\Ff_2$. Then Proposition~\ref{prop:reduction_equivariant} gives
\begin{equation}
\begin{split}
K_0^{\ZM_2}(\bar \Ff)&=\ZM[\chi_+ \otimes(P_1 \oplus P_2)]_0,\\ K_1^{\ZM_2}(\bar\Ff)&=\ZM[\chi_+ \otimes \left( (S_1 P_2+ P_2^\perp) \oplus (S_2 P_1+ P_1^\perp)\right)]_1
\end{split}
\end{equation}
  where $P_i$ are the images of $|\delta_0\rangle \langle \delta_0| \otimes 1 \in \KM(\NM) \otimes C^\ast \ZM$ through the stated isomorphisms. Since $K_1^{\ZM_2}(\bar \Cc)$ is trivial, one can only possibly observe a non-trivial bulk-corner correspondence if one considers $\ast =1$ in \ref{Prop:BC1}. This calls for a chiral symmetry on the level of self-adjoint Hamiltonians.

\begin{proposition} We have
\begin{equation}
K^{\ZM_2}_0(\mathcal{B}) = \ZM[\chi_+ \otimes 1_\Bb]_0 \oplus \ZM[\chi_- \otimes 1_\Bb]_0, \quad K^{\ZM_2}_1(\mathcal{B}) = \ZM[u_\Ff]_1 \oplus \ZM[u_\Cc]_1,
\end{equation} 
where
\begin{equation}\label{Eq:K1BGen}
u_{\mathcal{F}}=\chi_+ \otimes S_1S_2+\chi_- \otimes 1_\Bb,\quad u_\mathcal{C}=\frac{1}{2}\begin{pmatrix}
        -S_1^\ast-S_2^\ast & S_1^\ast-S_2^\ast\\
        S_1-S_2 & S_1+S_2
    \end{pmatrix},
\end{equation}
with the latter written as an element of $M_2(\CM) \otimes \Bb$ with the $\ZM_2$ action on $\CM^2$ given by $M  = {\scriptsize \begin{pmatrix} 1 & 0 \\ 0 & -1 \end{pmatrix}}$, such that $(Ad_M \otimes \sigma_m)(u_\Cc) = u_\Cc$.
\end{proposition}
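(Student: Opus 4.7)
Identify $\mathcal{B} \cong C(\mathbb{T}^2)$ via Fourier transform so that $\sigma_m$ acts by swapping the two torus factors with fixed-point set the diagonal $\Delta \cong \mathbb{T}$. First, verify that all four proposed classes lie in $K^{\ZM_2}_*(\mathcal{B})$: the projections $\chi_\pm \otimes 1_\mathcal{B}$ and the unitary $u_\mathcal{F}$ are manifestly $\sigma_m$-invariant (using $\sigma_m(S_1 S_2) = S_1 S_2$), and for $u_\mathcal{C}$ a direct matrix computation confirms both unitarity and the equivariance $(\mathrm{Ad}_M \otimes \sigma_m)(u_\mathcal{C}) = u_\mathcal{C}$.

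Next, apply the $\ZM_2$-equivariant six-term exact sequence to
\[
0 \to J \to \mathcal{B} \to C(\Delta) \to 0, \qquad J := C_0(\mathbb{T}^2 \setminus \Delta).
\]
Since $\ZM_2$ acts trivially on $\Delta$, the Green--Julg theorem gives $K^{\ZM_2}_*(C(\Delta)) \cong R(\ZM_2) \otimes K_*(C(\mathbb{T})) = \mathbb{Z}^2$ in each degree, with generators $[\chi_\pm \otimes 1]_0$ and $[\chi_\pm \otimes z]_1$. Since $\ZM_2$ acts freely on the complement, Green--Julg combined with Morita equivalence yields $K^{\ZM_2}_*(J) \cong K_*(C_0(M_\circ))$, where $M_\circ := (\mathbb{T}^2 \setminus \Delta)/\ZM_2$ is the open M\"obius band. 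Its one-point compactification is $\mathbb{RP}^2$, hence $K^{\ZM_2}_0(J) = \widetilde{K}^0(\mathbb{RP}^2) = \ZM/2$ and $K^{\ZM_2}_1(J) = 0$.

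Finally, run the six-term sequence with these inputs. Restricting the candidate $K_1$-classes to the diagonal yields $[u_\mathcal{F}|_\Delta]_1 = 2[\chi_+ \otimes z]_1$ (since $S_1 S_2 \mapsto z^2$ on $\Delta$) and, after block-diagonalizing $u_\mathcal{C}|_\Delta$ in the $\pm 1$-eigenspaces of $M$, $[u_\mathcal{C}|_\Delta]_1 = -[\chi_+ \otimes z]_1 + [\chi_- \otimes z]_1$. These images span an index-$2$ sublattice of $\mathbb{Z}^2$, so the connecting map $\partial : K^{\ZM_2}_1(C(\Delta)) \to K^{\ZM_2}_0(J) = \ZM/2$ is surjective; combined with $K^{\ZM_2}_1(J) = 0$, this identifies $K^{\ZM_2}_1(\mathcal{B}) = \ker \partial$ as precisely the index-$2$ sublattice spanned by $[u_\mathcal{F}]_1$ and $[u_\mathcal{C}]_1$. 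Simultaneously, surjectivity of $\partial$ forces the map $K^{\ZM_2}_0(J) \to K^{\ZM_2}_0(\mathcal{B})$ to vanish, so $K^{\ZM_2}_0(\mathcal{B})$ injects into $K^{\ZM_2}_0(C(\Delta)) = \mathbb{Z}^2$; since $[\chi_\pm \otimes 1_\mathcal{B}]_0$ map to the two standard generators, they exhibit $K^{\ZM_2}_0(\mathcal{B}) \cong \mathbb{Z}^2$. The main technical subtlety is the computation of $K^{\ZM_2}_*(J)$, i.e.\ establishing $M_\circ^+ \cong \mathbb{RP}^2$ (or equivalently running the pair sequence for $(\bar{M}, \partial \bar{M})$ using that $\partial \bar{M} \hookrightarrow \bar{M}$ induces multiplication by $2$ on $K^1(S^1)$); once this is in hand the rest follows by tracking the explicit generators through the six-term sequence.
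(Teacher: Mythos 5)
Your route is genuinely different from the paper's: the paper filters $\TM^2$ by an equivariant CW-structure (a corner point, then the two edges together with the invariant diagonal) and runs the Atiyah--Hirzebruch spectral sequence, whereas you split off the fixed-point circle directly via $0\to J\to\Bb\to C(\Delta)\to 0$ and exploit freeness of the involution on the complement to reduce $K_\ast^{\ZM_2}(J)$ to $K_\ast(C_0(M_\circ))$ with $M_\circ^+\simeq \RM P^2$. That reduction is correct: the complement $\TM^2\setminus\Delta$ is an open annulus, the swap acts on it freely, the quotient is the open M\"obius band (the tautological line bundle over $S^1$), and its Thom space is $\RM P^2$, giving $K_0^{\ZM_2}(J)\simeq\ZM_2$ and $K_1^{\ZM_2}(J)=0$. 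Your identification of the restricted classes $[u_\Ff|_\Delta]_1=(2,0)$ and $[u_\Cc|_\Delta]_1=(-1,1)$ in $\ZM[\chi_+\otimes z]_1\oplus\ZM[\chi_-\otimes z]_1$ is also correct.

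There is, however, a genuine gap at the step ``these images span an index-$2$ sublattice, so $\partial$ is surjective.'' That inference is invalid: the two restricted classes lie in $\Ker\,\partial$ and span an index-$2$ sublattice of $\ZM^2$, but this is equally consistent with $\partial=0$, in which case $\Ker\,\partial=\ZM^2$, your two unitaries generate only an index-$2$ subgroup of $K_1^{\ZM_2}(\Bb)$, and $K_0^{\ZM_2}(\Bb)$ acquires a $\ZM_2$ summand from the then-injective map $K_0^{\ZM_2}(J)\to K_0^{\ZM_2}(\Bb)$ --- i.e.\ exactly the conclusion of the proposition would fail. You must rule this out by an independent argument. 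A quick fix inside your framework: for any $\sigma_m$-invariant unitary $u\in M_N(\CM)\otimes\Bb$ the scalar function $\det u$ is $\sigma_m$-invariant, so its two winding numbers on $\TM^2$ coincide, and therefore its winding along the diagonal (the homology class $e_1+e_2$) is even; this winding equals $n_++n_-$ for the restriction $u|_\Delta$. Hence the image of the restriction map is contained in the even sublattice $\{(n_+,n_-):\,n_++n_-\in 2\ZM\}$, which together with your containment of the span of $(2,0)$ and $(-1,1)$ forces $\Ker\,\partial$ to equal that index-$2$ sublattice and $\partial$ to be onto. With this step inserted, the remainder of your argument (injectivity of restriction on $K_1$ from $K_1^{\ZM_2}(J)=0$, vanishing of $K_0^{\ZM_2}(J)\to K_0^{\ZM_2}(\Bb)$, and the identification of the $K_0$ generators) goes through and recovers the statement.
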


\begin{proof} The equivariant $K^\ast_{\ZM_2}$-groups of the 2-torus can be easily computed from the symmetry-adapted CW-filtration $X_0 \subset X_1 \subset \TM^2$, defined as follows: If $\TM^2$ is seen as a square with identified opposite edges, then $X_0$ is any of the corners and  $X_1$ is the union of left and down edges, as well as the mirror-invariant diagonal. An elementary computation with the Atiyah-Hirzebruch spectral sequence shows that $K^{\ZM_2}_0(\Bb)\simeq K_0(C^*\ZM_2)$ and $K^{\ZM_2}_1(\Bb)\simeq \ZM^2$, with the two generators distinguished by the following properties: As unitary functions on the torus, the first, $u_\Ff$, has winding number $1$ on both edges of the square and the second, $u_\Cc$, becomes a diagonal matrix when restricted to the diagonal of the square and there has winding numbers $1$ and $-1$ respectively in the two eigenspaces of $M$. \end{proof}

\begin{proposition} The domain of $\delta^{\Bb \bar \Cc}_1$ is ${\rm Im} \, \bar {\mathfrak p}^1_1 = \ZM[u_\Cc]_1$.
\end{proposition}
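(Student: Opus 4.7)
Plan: The domain of $\delta^{\Bb\bar\Cc}_1$ equals $\mathrm{Im}\,\bar{\mathfrak{p}}^1_1=\Ker\,\bar\partial^1_1$ by Proposition~\ref{Prop:BC1}, so my strategy is to compute the equivariant boundary map $\bar\partial^1_1$ of the exact sequence~\eqref{Eq:KSeq01} on the two generators $[u_\Ff]_1,[u_\Cc]_1$ of $K_1^{\ZM_2}(\Bb)$ and show that its kernel equals $\ZM[u_\Cc]_1$.

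I will reduce the equivariant computation to a non-equivariant one. First, the forgetful map $K_0^{\ZM_2}(\bar\Ff)\to K_0(\bar\Ff)\simeq \ZM\oplus\ZM$ sends the stated generator $[\chi_+\otimes(P_1\oplus P_2)]_0$ to $(1,1)$ by a rank count of $\chi_+$ in each summand $\Ff_i$, hence is the diagonal embedding and in particular injective. Second, using the pullback description of $\bar\Pp$ from Remark~\ref{Re:Park}, the non-equivariant boundary map splits as $\partial^1_1=(\partial_1,\partial_2)$ with $\partial_i:K_1(\Bb)\to K_0(\Ff_i)$ the Fredholm index map of the standard half-plane Toeplitz extension $\Ff_i\rightarrowtail\Hh_i\twoheadrightarrow\Bb$. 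These Toeplitz index maps are well known: on the basis $\{[S_1]_1,[S_2]_1\}$ of $K_1(\Bb)\simeq\ZM^2$, one has $\partial_i[S_j]_1=-\delta_{ij}$.

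Applying these ingredients to the generators: for $[u_\Ff]_1$, a diagonalization of $\sigma_m$ identifies it with $[\mathrm{diag}(S_1S_2,1)]_1$, hence $\mathrm{forg}([u_\Ff]_1)=[S_1]_1+[S_2]_1$ and $\partial^1_1\mathrm{forg}([u_\Ff]_1)=(-1,-1)\in \ZM\oplus\ZM$, which by the first observation is the image under the forgetful map of a generator of $K_0^{\ZM_2}(\bar\Ff)$; therefore $\bar\partial^1_1[u_\Ff]_1$ is non-trivial. For $[u_\Cc]_1$, a direct matrix calculation in the abelian $\Bb=C^\ast\ZM^2$ gives $\det(u_\Cc)=-1_\Bb$, a constant; since $K_1(\Bb)$ is detected by the two winding numbers of the determinant on $\TM^2$, this forces $\mathrm{forg}([u_\Cc]_1)=0\in K_1(\Bb)$, and by injectivity of the forgetful map established above, $\bar\partial^1_1[u_\Cc]_1=0$.

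Combining the two computations yields $\Ker\,\bar\partial^1_1=\ZM[u_\Cc]_1$, which is the claim. The principal technical hurdle is the decomposition $\partial^1_1=(\partial_1,\partial_2)$: one must verify that, given a partial isometry lift $V=(V_1,V_2)\in\bar\Pp$ of a unitary $u\in\Bb$ (with $V_i\in\Hh_i$), the Toeplitz index formula $[1-V^*V]_0-[1-VV^*]_0\in K_0(\bar\Ff)$ splits componentwise in the direct sum $\bar\Ff=\Ff_1\oplus\Ff_2$; this follows from the definition of the pullback together with the fact that $q_i(V_i^*V_i)=q_i(V_iV_i^*)=1$ forces both defects to lie in the respective kernel $\Ff_i$.
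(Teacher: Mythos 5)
Your argument is correct, but for the key inclusion $[u_\Cc]_1\in{\rm Im}\,\bar{\mathfrak p}^1_1$ it takes a genuinely different route from the paper. The paper proves membership constructively: it writes $u_\Cc$ as a chiral Hamiltonian $h_b=h_1(S_1)+h_2(S_2)$ with $h_1(X)$ anti-commuting with $h_2(Y)$, lifts it to $\hat h=(h_1(\hat S_1)+h_2(S_2),\,h_1(S_1)+h_2(\hat S_2))$ in the pullback $\bar\Pp$, and checks $\hat h^2\geq\tfrac12$, so $\hat h$ is an explicit gapped symmetric lift. You instead show $\bar\partial^1_1[u_\Cc]_1=0$ abstractly: the constant determinant $\det(u_\Cc)=-1_\Bb$ kills the non-equivariant class in $K_1(\Bb)\simeq\ZM^2$ (detected by winding numbers), naturality of the connecting map under the forgetful functor transports this to $K_0(\bar\Ff)$, and injectivity of $K_0^{\ZM_2}(\bar\Ff)\simeq\ZM\hookrightarrow K_0(\bar\Ff)\simeq\ZM^2$ (the diagonal embedding) lifts the vanishing back to the equivariant group. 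All the ingredients check out, including the componentwise splitting $\partial^1_1=(\partial_1,\partial_2)$ via the pullback structure and the Toeplitz indices $\partial_i[S_j]_1=-\delta_{ij}$; your treatment of $[u_\Ff]_1$ is essentially the paper's (non-zero weak winding numbers in both directions), just spelled out in more detail than the paper, which defers to Section~\ref{Sec:NoEqui}. The trade-off: your forgetful-functor reduction is shorter and is a reusable trick whenever the equivariant boundary group embeds into its non-equivariant counterpart, but it produces no concrete representative in $K_1^{\ZM_2}(\bar\Pp)$, whereas the paper's explicit $\hat h$ is precisely what is fed into $\bar\partial^2_1$ in the very next proposition to compute the image of $\delta^{\Bb\bar\Cc}_1$; with your route that subsequent computation would still require constructing such a lift.
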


\begin{proof} As a self-adjoint invertible element with chiral symmetry, $u_\Cc$ is represented by 
\begin{equation}
h_b = \begin{pmatrix} 0 & u_\Cc \\ u_\Cc^\ast & 0 \end{pmatrix}\in B(\Vv_S) \otimes M_2(\CM) \otimes \Bb,
\end{equation}
where $B(\Vv_S)$ and its structure are described in Example~\ref{Ex:Chiral1}. By separating the generators $S_i$, we have the decomposition $h_b = h_1(S_1) + h_2(S_2)$. A useful feature to notice is that $h_1(X)$ anti-commutes with $h_2(Y)$ whenever $X$ and $X^\ast$ both commute with $Y$ and $Y^\ast$, but not necessarily with themselves. Now, the pair $\hat h=(h_1(\hat S_1) + h_2(S_2), h_1(S_1) + h_2(\hat S_2))$ supplies a symmetric lift of $h_b$ to $M(\CM_4) \otimes \bar \Pp$, where the hat indicates the standard Toeplitz extension and $\bar \Pp$ is viewed as the pullback explained in Remark~\ref{Re:Park}. Since $h_1(S_1)^2 = h_2(S_2)^2 = \frac{1}{2}$, we have 
\begin{equation}
\hat h^2 = (h_1(\hat S_1)^2 + h_2(S_2)^2, h_1(S_1)^2 + h_2(\hat S_2)^2)=\tfrac{1}{2} \, 1_{\bar \Pp} + (h_1(\hat S_1)^2,  h_2(\hat S_2)^2),
\end{equation} 
hence $\hat h$ is invertible. As such, $\hat h$ supplies a class in $K_1^{\ZM_2}(\bar \Pp)$, which proves that $[u_\Cc]_1$ belongs to the range of $\bar {\mathfrak p}^1_1$. The other generator $u_\Ff$ has non-zero weak odd Chern numbers in both directions, hence it leads to non-trivial classes under $\bar \partial_1^1$ (see Section~\ref{Sec:NoEqui} below for the more general case).
\end{proof}

\begin{proposition} The image of $\delta^{\Bb \bar \Cc}_1$ is isomorphic to $\ZM_2$ and the generator $[u_\Cc]_1$ of the domain of  $\delta^{\Bb \bar \Cc}_1$ is mapped into the generator $\ZM_2$.
\end{proposition}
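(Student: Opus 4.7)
The strategy is to compute both the subgroup $\mathrm{Im}(\bar \partial^2_1 \circ \bar i^1_1)$ that must be quotiented out and the value $\bar \partial^2_1$ takes on an explicit lift of $[u_\Cc]_1$ in $K_1^{\ZM_2}(\bar \Pp)$, and then compare them. Since the domain of $\delta^{\Bb\bar\Cc}_1$ is cyclic and generated by $[u_\Cc]_1$, it suffices to show that $\delta^{\Bb\bar\Cc}_1([u_\Cc]_1)$ is non-trivial of order exactly $2$.

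First, determine $\mathrm{Im}(\bar \partial^2_1 \circ \bar i^1_1)$. The generator of $K_1^{\ZM_2}(\bar \Ff)$ is the mirror-symmetric pair of Toeplitz unitaries $\chi_+\otimes\bigl((S_1 P_2 + P_2^\perp)\oplus(S_2 P_1 + P_1^\perp)\bigr)$, with one entry on each face. Under the embedding $\bar i^1_1:\bar\Ff\hookrightarrow\bar\Pp$ this stays the same algebraically, and the index map $\bar \partial^2_1$ associated to the corner extension $\bar\Cc\rightarrowtail\bar\Qq\twoheadrightarrow\bar\Pp$ sends each Toeplitz unitary individually to the rank-one corner projection $E_0$ (this is the familiar Toeplitz index applied in the transversal direction). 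Since the two contributions share the same $\chi_+$-factor and carry winding of the same sign, they add to give $2[\chi_+\otimes E_0]_0$, so $\mathrm{Im}(\bar \partial^2_1 \circ \bar i^1_1)=\ZM\cdot 2[\chi_+\otimes E_0]_0$, and the quotient of $K_0^{\ZM_2}(\bar\Cc)$ by this subgroup has the structure $\ZM_2[\chi_+\otimes E_0]\oplus\ZM[\chi_-\otimes E_0]$.

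Second, compute $\bar \partial^2_1$ on the lift of $[u_\Cc]_1$ furnished by the preceding proposition. There the invertible symmetric element $\hat h=(h_1(\hat S_1)+h_2(S_2),\,h_1(S_1)+h_2(\hat S_2))\in M_4(\CM)\otimes\bar\Pp$ was produced, and its off-diagonal chiral block $\hat u_\Cc\in M_2(\CM)\otimes\bar\Pp$ represents the lift of $[u_\Cc]_1$. I would lift $\hat u_\Cc$ further to $\tilde u=w_1(\hat S_1)+w_2(\hat S_2)\in M_2(\CM)\otimes\bar\Qq$, a Fredholm (non-unitary) lift. Using that $\hat S_1$ and $\hat S_2$ commute with each other and with each other's adjoints, $w_1(\hat S_1)$ and $w_2(\hat S_2)$ anticommute, so
\begin{equation}
\tilde u^\ast\tilde u = w_1(\hat S_1)^\ast w_1(\hat S_1) + w_2(\hat S_2)^\ast w_2(\hat S_2),\qquad \tilde u\tilde u^\ast = w_1(\hat S_1)w_1(\hat S_1)^\ast + w_2(\hat S_2)w_2(\hat S_2)^\ast,
\end{equation}
and each summand equals $\tfrac12$ minus a localized defect on the respective face (as in the $w_i w_i^\ast=\tfrac12$ computation in $\Bb$ from the previous proposition). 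The defects overlap only at the mirror-invariant corner site $(0,0)$, and a direct calculation shows that $\one-\tilde u\tilde u^\ast$ and $\one-\tilde u^\ast \tilde u$ are rank-one at the corner with opposite chirality gradings. Consequently the Fredholm index of $\tilde u$ is a rank-one corner projection carried in the $\chi_+$-sector (since both $\sigma_m$ and the chiral block structure fix the corner symmetrically), giving $\bar\partial^2_1([\hat u_\Cc]_1)=[\chi_+\otimes E_0]_0$.

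Putting the two pieces together, $\delta^{\Bb\bar\Cc}_1([u_\Cc]_1)$ equals the coset of $[\chi_+\otimes E_0]_0$ in $K_0^{\ZM_2}(\bar\Cc)/(2\ZM\cdot[\chi_+\otimes E_0]_0)$, which is precisely the non-trivial element of $\ZM_2$. Since $[u_\Cc]_1$ generates the domain, the image of $\delta^{\Bb\bar\Cc}_1$ is cyclic of order two and $[u_\Cc]_1$ maps to its generator, as claimed. The main obstacle is the careful equivariant bookkeeping in the second step: one must verify that the corner defect sits in the $\chi_+$- and not the $\chi_-$-component, and that it has rank exactly one (rather than two with cancellation), which amounts to tracking how the mirror $\sigma_m$ interacts with the chiral grading $\alpha_3$ and the $2\times 2$ matrix structure of $u_\Cc$ at the single mirror-fixed site.
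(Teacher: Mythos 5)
Your proposal follows essentially the same route as the paper: identify the subgroup $\bar\partial^2_1({\rm Ker}\,\bar{\mathfrak p}^1_1)=2\ZM[\chi_+\otimes E_0]_0$ by pushing the face generator through the corner extension, then compute $\bar\partial^2_1$ on the explicit quarter-space lift of $\hat u_\Cc$ as a corner index, and compare. The paper does the second step by invoking the exponential formula for the connecting map and reducing, via the chiral symmetry, to the kernel projection of the self-adjoint lift $\bar h=h_1(\bar S_1)+h_2(\bar S_2)$; your direct Fredholm-index computation on the off-diagonal block is the same calculation in a different guise, and both land on $[\chi_+\otimes E_0]_0$.

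One intermediate assertion, however, does not support the conclusion you draw from it. You state that $\one-\tilde u\tilde u^\ast$ and $\one-\tilde u^\ast\tilde u$ are \emph{both} rank one at the corner "with opposite chirality gradings" and then conclude that the index is the single class $[\chi_+\otimes E_0]_0$. If both defect projections were genuinely rank one, the index class would be the \emph{difference} of two rank-one classes, and depending on their mirror content this could be $0$ or $[\chi_+\otimes E_0]_0-[\chi_-\otimes E_0]_0$, neither of which gives the claimed $\ZM_2$ in the quotient. The correct fact, which the paper verifies using the anticommutation of $h_1(\bar S_1)$ and $h_2(\bar S_2)$, is that the total kernel of the symmetric lift $\bar h$ is \emph{one}-dimensional, equal to $\chi_+\otimes{\rm diag}(E_0,0)$: exactly one of the two defects is nonzero (rank one, in the $\chi_+$ mirror sector) and the other vanishes, so the index is honestly $[\chi_+\otimes E_0]_0$ with no subtraction. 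You correctly flag this as the delicate point; it needs to be resolved in this sharper form for the argument to close.
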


\begin{proof} Our first task is to evaluate $\bar \partial_1^2( [\hat u_c]_1)$, where $\hat u_c$ is the unitary operator in $M_2(\CM) \otimes \bar \Pp$ obtained from the spectral flattening of $\hat h$. A concrete expression of the connecting map for the non-equivariant case was given in \cite[Prop.~4.3.2]{ProdanSpringer2016}, the class in $K_0(\bar \Cc)$ of $\bar \partial_1^2( [\hat u_c]_1)$ is represented by the projection
\begin{equation}\label{Eq:Index1}
\bar P = e^{-\imath \frac{\pi}{2} \varphi(\bar h) } {\rm diag}(1_{M_2(\CM) \otimes \bar \Qq},0) e^{\imath \frac{\pi}{2} \varphi(\bar h)} \in M_4(\CM) \otimes \bar \Cc,
\end{equation}
where $\bar h$ is any chirally symmetric lift of $\hat h$ to $M_4(\CM) \otimes \bar \Qq$ and $\varphi: \RM \to [-1,1]$ a continuous non-decreasing odd function with variation inside the spectral gap of $\hat h$. This expression will also represent the element in the equivariant $K_0^{\ZM_2}(\bar \Cc)$, if we choose a $\ZM_2$-invariant lift $\bar h=h_1(\bar S_1) + h_2(\bar S_2)$, where $\bar S_1, \bar S_2$ are shift operators on $\ell^2(\NM\times \NM)$. Furthermore, we are in the conditions of Proposition~4.3.3 in \cite{ProdanSpringer2016}, hence  the projection in \eqref{Eq:Index1} is equivalent to $J \bar P_0 + {\rm diag}(0_{2},1_{{M_2(\CM)}})$, where $\bar P_0$ is the spectral projection of $\bar h$ onto $\{0\}$ and $J=1_{M_2(\CM)} \oplus (-1_{M_2(\CM)})$ is the operator implementing chiral symmetry. Using the anti-commuting structure mentioned above, it is easily verifiable that the kernel of $\bar h$ has dimension one and that the corresponding projector coincides with $\chi_+ \otimes {\rm diag}(E_0,0)$. Following a similar procedure for the generator of $K_1^{\ZM_2}(\bar \Ff)$, one finds 
\begin{equation}
\bar \partial_1^2 \Big([\chi_+ \otimes \left( (S_1 P_2+ P_2^\perp) \oplus (S_2 P_1+ P_1^\perp)\right)]_1 \Big)= -2[\chi_+ \otimes E_0]_0,
\end{equation} 
and the statement follows.
\end{proof}

These calculations demonstrate that our framework not only enable us to identify a non-trivial bulk model $h_b$, but also to conclude that, for a quarter geometry and diagonal mirror symmetry, there are no other (topologically distinct) models that can produce bulk-boundary correspondences of order-2.
}$\Diamond$
\end{example}
\begin{remark}{\rm
Had we used in the previous example complex non-equivariant $K$-theory, we would have found that the natural map $K_1(\bar \Ff)\to K_1(\bar \Pp)$ is an isomorphism, which readily implies $\delta^{\Bb \bar\Cc}_1=0$. This absence of second-order bulk-boundary correspondence means that without the crystalline symmetry fixing any particular $K$-theory class in the bulk never constrains the corner states in any way. Analyzing the commutative diagram
\begin{equation}
	\begin{tikzcd}
		K_{1}^{\ZM_2}(\bar\Pp) \arrow[d] \arrow[r,"{\bar \partial}^2_1"] & K_{0}^{\ZM_2}(\bar\Cc) \arrow[d] \\ 
		K_{1}(\bar\Pp) \arrow[r,"{\bar \partial}^2_1"] & K_{0}(\bar \Cc)
	\end{tikzcd}
\end{equation}
tells us, however, that if we start with a class in $K_{1}^{\ZM_2}(\bar\Pp)$ which has the non-trivial corner mode parity we will have a protected corner mode even if the symmetry is broken at the corner. In other scenarios, however, a symmetry-protected corner mode can be trivial in non-equivariant $K$-theory and then one can remove it by breaking the symmetry at the corner.}$\Diamond$
\end{remark}

\subsection{Square geometry} Once a filtration of the unit space is fixed, all statements from the previous subsection remain valid for the wire geometry, and they can be formulated in exactly the same form but with the bar removed from above the symbols and the $K$-theoretic functor properly adjusted. Examples of non-trivial higher-order bulk-boundary correspondences for this geometry will be supplied in section~\ref{Sec:Examples}.

\subsection{Cube geometry} 
\label{Sec:cube_mechanism}
This geometry displays faces, hinges and corners, which we refer to as boundaries of codimension 1, 2 and 3, respectively.  As such, the crystal geometry can support non-trivial correspondences of order 1, 2 and 3. An order-3 bulk boundary correspondence is carried by the corners of the cube and one will want to use a filtration of appropriate length:

\begin{proposition}\label{Pro:UFiltration} There exists a unique filtration 
	\begin{equation}\label{Eq:CubeFilt13}
		\Xi_0 \subset \Xi_1 \subset \Xi_2 \subset \Xi_3,
	\end{equation} 
starting at $\Xi_0=\{\Ll_\infty\}$ and ending at $\Xi_3 =\mathsmaller{\bigcup}_{\lambda \in \Lambda} \ \Xi_{\Ll^\lambda}$
	such that each $C^*\Gg_{\Xi_r \setminus \Xi_{r-1}}$, $1\leq r \leq 3$, consists out of those observables localized to the boundaries of codimension $r$. 
\end{proposition}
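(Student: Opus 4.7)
The plan is to construct $\Xi_1$ and $\Xi_2$ explicitly by stratifying $\Xi_{\mbox{\small \mancube}}$ according to the codimension of the geometric feature carried by each pattern, verify the required invariance and localization properties, and then argue this stratification is forced by the localization requirement. Concretely, set
\[
	\Xi_2 = \bigcup_{\lambda\in\Lambda,\ i=1,2,3}\Xi_{\Ll_\infty^{\lambda,i}}, \qquad \Xi_1 = \bigcup_{\Ll^{hs}}\Xi_{\Ll^{hs}},
\]
where $\Ll^{hs}$ ranges over the translation types of half-space patterns appearing as second-order Fell limits of the quarter patterns $\Ll_\infty^{\lambda,i}$ (equivalently, patterns with a single face but neither hinge nor corner, of the type analysed in Example~\ref{Ex:T1}). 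Each transversal $\Xi_{\Ll'}$ is closed and $\Gg_{\Ll'}$-invariant by Definition~\ref{Def:TheGroupoid}, and the cubic symmetry $\Sigma$ permutes the indexing sets, so both unions are closed and invariant under $\Gg_{\mbox{\small \mancube}}$ and $\Sigma$. The nested inclusions $\Xi_0\subset\Xi_1\subset\Xi_2\subset\Xi_3$ follow from the fact that further Fell limits of the half-space patterns are $\Ll_\infty$ (Example~\ref{Ex:T1}), and further Fell limits of the quarter patterns contain both the half-space patterns and $\Ll_\infty$ (Eq.~\eqref{Eq:XiCorner}).

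Next, I would identify $C^*\Gg_{\Xi_r\setminus\Xi_{r-1}}$ with the observables localized to boundaries of codimension~$r$ via Proposition~\ref{def:ideal_abbreviation}. An element $f$ of this ideal, viewed as a continuous compactly supported function on $\Gg_{\mbox{\small \mancube}}$, vanishes on $\left.\Gg_{\mbox{\small \mancube}}\right|_{\Xi_{r-1}}$. In any left-regular representation $\pi_\Ss$ (Remark~\ref{Re:LRReps}) associated to a corner pattern $\Ss$, the matrix element $\langle g\mid\pi_\Ss(f)\mid g'\rangle=f(gg'^{-1},g'\cdot\Ss)$ therefore vanishes whenever the translate $g'\cdot\Ss$ is Fell-close to a pattern of codimension less than $r$. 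This is the rigorous incarnation of ``$f$ is supported near the codimension-$r$ stratum'': its matrix elements decay to zero as $g'$ is moved into a region that asymptotically looks like a stratum of strictly lower codimension. For $r=1,2,3$, this isolates respectively the face-, hinge-, and corner-localized operators.

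For uniqueness, I would argue orbit-by-orbit. The $\Gg_{\mbox{\small \mancube}}$-orbits of $\Xi_{\mbox{\small \mancube}}$ split into four types according to whether the pattern carries a corner, a hinge, a face, or no boundary feature (pure bulk); each type is further permuted by $\Sigma$. The localization requirement on any competing filtration $\Xi'_0\subset\cdots\subset\Xi'_3$ forces $\Xi'_r\setminus\Xi'_{r-1}$ to be exactly the union of the codimension-$r$ orbits, and closure under $\Gg_{\mbox{\small \mancube}}$ then forces $\Xi'_r$ to contain all Fell limits of those orbits (which are orbits of strictly lower codimension lying already in $\Xi'_{r-1}$). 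Proceeding inductively in $r$, one concludes $\Xi'_r=\Xi_r$.

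The main obstacle is the second step: translating the heuristic phrase ``localized to a boundary of codimension~$r$'' into a rigorous algebraic statement that matches the ideal $C^*\Gg_{\Xi_r\setminus\Xi_{r-1}}$. This requires a careful inventory of which Fell limits of an orbit $\Oo(\Ll')$ exist and what geometric feature they carry; the key fact, which follows from iterated application of the half-space analysis in Example~\ref{Ex:T1} to the hierarchical cube geometry, is that sending one transverse coordinate of a codimension-$r$ pattern to infinity produces a codimension-$(r-1)$ pattern. Once this Fell-limit hierarchy is in place, the matching between algebraic ideals and geometric boundary features, and hence the uniqueness of the filtration, becomes automatic.
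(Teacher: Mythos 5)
Your proposal is correct and follows essentially the same route as the paper: the paper assigns to each pattern in $\Xi_3$ a unique codimension (3 for corner patterns, 2 for quarter-space, 1 for half-space, 0 for the bulk) and observes that the localization requirement forces $\Xi_r$ to be exactly the set of patterns of codimension at most $r$, which is precisely your stratification $\Xi_1=\bigcup_{\Ll^{hs}}\Xi_{\Ll^{hs}}$ and $\Xi_2=\bigcup_{\lambda,i}\Xi_{\Ll_\infty^{\lambda,i}}$. Your additional verifications (closedness, $\Gg_{\mbox{\small \mancube}}$- and $\Sigma$-invariance, the Fell-limit hierarchy, and the matrix-element interpretation of localization via the comment below Proposition~\ref{def:ideal_abbreviation}) simply spell out details the paper leaves implicit.
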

\begin{proof}
One can assign to each pattern in $\Xi_3$ uniquely a codimension, which is $2$ for all quarter-space patterns and $1$ for all half-space patterns. One needs to take for $\Xi_r$ precisely the set of all patterns with codimension $r$ or less, as characterized by their number of translation-invariant lattice direction, to achieve the mentioned localization property for $C^*\Gg_{\Xi_r \setminus \Xi_{r-1}}$ (see the comment below Definition~\ref{def:ideal_abbreviation}).
\end{proof}

The symmetry group $\Sigma$ of a finite cube acts on the groupoids $\Gg_{\Xi_r}$ and we again fix a subgroup $\Gamma\subset \bar \Sigma$ with twisting data $(\phi, c, \tau)$ together with some twisted $\Gamma$-equivariant K-functor $\KK_\ast$. The filtration \eqref{Eq:CubeFilt13} is by construction term-wise $\Gamma$-invariant and a symmetric model Hamiltonian $h \in B(\Vv) \otimes C^\ast \Gg_{\Xi_3}$ can therefore again be projected to a symmetric bulk Hamiltonian $h_b \in B(\Vv) \otimes C^\ast \Gg_{\Ll_\infty}$ using the maps in the equivariant cofiltration
\begin{equation}\label{Eq:CoFilt33}
	C^\ast \Gg_{\Xi_3}  \stackrel{\mathfrak p^3}{\twoheadrightarrow} C^\ast \Gg_{\Xi_{2}} \stackrel{\mathfrak p^2}{\twoheadrightarrow}  C^\ast \Gg_{\Xi_1} \stackrel{\mathfrak p^1}{\twoheadrightarrow}  C^\ast \Gg_{\Ll_\infty}.
\end{equation}

To observe protected topological face, hinge or corner modes, the bulk Hamiltonian $h$ should in the first place be an insulator, i.e. have a spectral gap. To classify protected corner modes one should further not already have face or hinge modes inside the bulk gap. Given the definition and physical interpretation of the left regular representations, this is the same as saying that $\pi_\Ss(h)$ has a spectral gap for all $\Ss \in \Xi_1$ respectively $\Ss \in \Xi_2$. These arguments leads to:

\begin{definition}\label{Def:Gappable2} The symmetric Hamiltonian $h \in B(\Vv) \otimes C^\ast \Gg_{\Xi_3}$ is gapped at the boundaries of codimension $r$ if the projection $(\mathfrak p^{r+1}\circ ... \circ \mathfrak p^{3}) (h)$ in $B(\Vv) \otimes C^\ast \Gg_{\Xi_{r}}$ has a spectral gap.  
	
A symmetric gapped bulk Hamiltonian $h_b \in B(\Ww) \otimes C^\ast \Gg_{\Ll_\infty}$ is said to be gappable at the boundaries of codimension $r$ if there is a symmetric Hamiltonian $h_b'\in B(\Vv) \otimes C^\ast \Gg_{\Ll_\infty}$ with the same bulk K-theory class $[\gamma_{h_b}]_\ast=[\gamma_{h_b'}]_\ast\in \KK_\ast(C^\ast \Gg_{\Ll_\infty})$ which admits a symmetric lift $h\in B(\Vv) \otimes C^\ast \Gg_{\Xi_3}$ that is gapped at the boundaries of codimension $r$.
\end{definition}

\begin{remark}\label{Re:ResolventSet}{\rm The spectrum only becomes smaller under morphisms, hence a Hamiltonian which is gapped or gappable at the codimension $r$ boundaries is also gapped or gappable at all boundaries with smaller codimension.
}$\Diamond$
\end{remark}
\begin{proposition}\label{Prop:Gapable2} The symmetric gapped bulk Hamiltonian $h_b$ is gappable at the boundaries of codimension $r$ if and only if the class $[\gamma_{h_b}]_\ast \in \KK_\ast(\Bb)$ can be lifted to an element of $\KK_\ast(C^\ast \Gg_{\Xi_{r}})$.
\end{proposition}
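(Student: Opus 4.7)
The strategy is to recognize that this proposition is the natural generalization of Proposition~\ref{Prop:Gapable1} from the length-$2$ co-filtration of the quarter geometry to the length-$3$ co-filtration~\eqref{Eq:CoFilt33}, and in fact to any such equivariant co-filtration of groupoid $C^\ast$-algebras whose quotient maps are surjective and equivariant. The proof splits into the two implications, each of which follows the same template as in the quarter case, with the only additional care needed being that the co-filtration is longer and that the symmetric averaging has to respect the twisting data $(\phi,c,\tau)$.

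For the forward direction, assume $h_b$ is gappable at the boundaries of codimension $r$. By Definition~\ref{Def:Gappable2} there is a symmetric Hamiltonian $h_b'\in B(\Vv)\otimes C^\ast \Gg_{\Ll_\infty}$ with $[\gamma_{h_b}]_\ast=[\gamma_{h_b'}]_\ast$ which admits a symmetric lift $h\in B(\Vv)\otimes C^\ast \Gg_{\Xi_3}$ such that
\begin{equation}
	h_r := (\mathfrak{p}^{r+1}\circ \cdots \circ \mathfrak{p}^{3})(h) \in B(\Vv)\otimes C^\ast \Gg_{\Xi_r}
\end{equation}
is spectrally gapped. Because the $\mathfrak{p}^{j}$ are $\Gamma$-equivariant $\ast$-homomorphisms, $h_r$ inherits the symmetry required to define a class $[\gamma_{h_r}]_\ast\in \KK_\ast(C^\ast \Gg_{\Xi_r})$ via the standing assumption on $\KK_\ast$. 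Functoriality gives
\begin{equation}
	(\mathfrak{p}^{1}\circ \cdots \circ \mathfrak{p}^{r})_\ast \, [\gamma_{h_r}]_\ast \;=\; [\gamma_{h_b'}]_\ast \;=\; [\gamma_{h_b}]_\ast ,
\end{equation}
so $[\gamma_{h_b}]_\ast$ lifts to $\KK_\ast(C^\ast \Gg_{\Xi_r})$.

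For the backward direction, suppose $[\gamma_{h_b}]_\ast = (\mathfrak{p}^{1}\circ\cdots\circ \mathfrak{p}^{r})_\ast(y)$ for some $y\in \KK_\ast(C^\ast \Gg_{\Xi_r})$. By the standing assumption, $y=[\gamma_{\tilde h}]_\ast$ for a spectrally gapped symmetric Hamiltonian $\tilde h\in B(\Vv)\otimes C^\ast \Gg_{\Xi_r}$, after possibly enlarging $\Vv$ by a trivially gapped summand. Setting $h_b':= (\mathfrak{p}^{1}\circ\cdots\circ \mathfrak{p}^{r})(\tilde h)$ then produces a gapped symmetric bulk Hamiltonian representing the same class as $h_b$. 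Since each $\mathfrak{p}^{j}$ is a surjection (being induced by restriction of an \'etale groupoid to a closed invariant subset of its unit space; cf.\ Proposition~\ref{def:ideal_abbreviation}), the composition $\mathfrak{p}^{r+1}\circ\cdots\circ \mathfrak{p}^{3}$ is surjective and admits a self-adjoint lift $\hat h \in B(\Vv)\otimes C^\ast \Gg_{\Xi_3}$ of $\tilde h$ (e.g.\ half the sum of any lift and its adjoint). To enforce the symmetry without disturbing the image $\tilde h$, average $\hat h$ against the finite group $\Gamma$ via the $c$-twisted averaging
\begin{equation}
	h \;:=\; \frac{1}{|\Gamma|}\sum_{\bar\sigma\in \Gamma} (-1)^{c(\bar\sigma)}\, U_{\bar\sigma}\,\hat h\, U_{\bar\sigma}^{-1},
\end{equation}
interpreted via the action~\eqref{Eq:ExtAct} and the twisted representation $U$ of Section~\ref{Sec:FSymm}. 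This produces a $c$-twisted invariant self-adjoint element of $B(\Vv)\otimes C^\ast \Gg_{\Xi_3}$, and because $\tilde h$ is itself $c$-twisted invariant and $\mathfrak{p}^{r+1}\circ\cdots\circ \mathfrak{p}^{3}$ is $\Gamma$-equivariant, the image $(\mathfrak{p}^{r+1}\circ\cdots\circ \mathfrak{p}^{3})(h)=\tilde h$ remains gapped. Hence $h_b'$ is gappable at the codimension-$r$ boundaries, and by definition so is $h_b$.

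The only step that requires genuine attention, rather than just diagram-chasing, is the last symmetric-averaging argument: one must verify that the prescribed averaging is well defined on $B(\Vv)\otimes C^\ast \Gg_{\Xi_3}$, compatible with the antilinear pieces dictated by $\phi$, and lands in the subspace of $c$-twisted $\Gamma$-invariant elements. This is essentially a conditional-expectation construction onto the symmetric sector, and it is the place where the finiteness of $\Gamma$ and the structure of the twisting data $(\phi,c,\tau)$ are genuinely used.
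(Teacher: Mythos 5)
Your proposal is correct and follows essentially the same route as the paper: the paper's proof of this proposition simply states that the argument is practically identical to that of Proposition~\ref{Prop:Gapable1}, whose proof is exactly your template — functoriality of $\KK_\ast$ for the forward direction, and for the converse, representing the lifted class by a gapped symmetric Hamiltonian, taking any self-adjoint lift along the surjections, and symmetrizing by a twisted average over the finite group $\Gamma$. Your explicit $c$-twisted averaging formula merely spells out what the paper leaves implicit.
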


\begin{proof} The argument is practically identical with the one for Proposition~\ref{Prop:Gapable1}.
\end{proof}

\begin{corollary} Up to stable symmetry-preserving deformations, the symmetric bulk Hamiltonians that are gappable at the boundaries of codimension 1 and 2 are respectively listed by
\begin{equation}
{\rm Im} \, \mathfrak p^1_\ast \subseteq \KK_\ast(\Bb).
\end{equation}
and 
\begin{equation}
{\rm Im} \, \mathfrak p^1_\ast \circ \mathfrak p^2_\ast \subseteq \KK_\ast(\Bb).
\end{equation}
\end{corollary}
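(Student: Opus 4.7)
The plan is to observe that this Corollary is essentially an immediate consequence of Proposition~\ref{Prop:Gapable2} combined with the standing assumption that $[h]\mapsto[\gamma_h]_\ast$ gives a one-to-one correspondence between stable symmetry-preserving homotopy equivalence classes of gapped symmetric Hamiltonians and classes in $\KK_\ast$. So the proof is really a matter of rewriting the lifting criterion in Proposition~\ref{Prop:Gapable2} in terms of the images of the functorially induced maps.

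First, I would handle the codimension 1 case: by Proposition~\ref{Prop:Gapable2}, a symmetric gapped bulk Hamiltonian $h_b$ is gappable at the codimension 1 boundaries if and only if $[\gamma_{h_b}]_\ast\in\KK_\ast(\Bb)$ lifts to a class in $\KK_\ast(C^\ast\Gg_{\Xi_1})$ under the morphism induced by $\mathfrak p^1: C^\ast\Gg_{\Xi_1}\twoheadrightarrow\Bb$. By definition, the set of such liftable classes is precisely $\mathrm{Im}\,\mathfrak p^1_\ast$. Combined with the standing bijection between stable homotopy equivalence classes and $\KK_\ast$-classes, this gives the first statement.

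For codimension 2 the same principle applies, with the only added ingredient being functoriality of $\KK_\ast$. By Proposition~\ref{Prop:Gapable2}, $h_b$ is gappable at the codimension 2 boundaries if and only if $[\gamma_{h_b}]_\ast$ admits a lift $y\in \KK_\ast(C^\ast\Gg_{\Xi_2})$ along the morphism induced by the composition $\mathfrak p^1\circ\mathfrak p^2: C^\ast\Gg_{\Xi_2}\twoheadrightarrow\Bb$. Functoriality yields $(\mathfrak p^1\circ\mathfrak p^2)_\ast=\mathfrak p^1_\ast\circ\mathfrak p^2_\ast$, so the existence of such a $y$ is equivalent to $[\gamma_{h_b}]_\ast\in\mathrm{Im}(\mathfrak p^1_\ast\circ\mathfrak p^2_\ast)$.

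There is no real obstacle here: the content of the Corollary is essentially bookkeeping once Proposition~\ref{Prop:Gapable2} is in place. The only subtle point worth flagging is consistency with Remark~\ref{Re:ResolventSet}, namely that $\mathrm{Im}(\mathfrak p^1_\ast\circ\mathfrak p^2_\ast)\subseteq\mathrm{Im}\,\mathfrak p^1_\ast$, which is automatic from $(\mathfrak p^1\circ\mathfrak p^2)_\ast=\mathfrak p^1_\ast\circ\mathfrak p^2_\ast$ and expresses the expected fact that gappability at higher-codimensional boundaries is a strictly stronger condition.
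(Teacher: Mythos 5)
Your proposal is correct and follows the same route the paper takes: the corollary is an immediate restatement of Proposition~\ref{Prop:Gapable2} (gappability at codimension $r$ boundaries $\Leftrightarrow$ liftability of $[\gamma_{h_b}]_\ast$ to $\KK_\ast(C^\ast\Gg_{\Xi_r})$) combined with functoriality of $\KK_\ast$ and the standing bijection between $\KK_\ast$-classes and stable symmetry-preserving homotopy classes of gapped symmetric Hamiltonians. The consistency check $\mathrm{Im}(\mathfrak p^1_\ast\circ\mathfrak p^2_\ast)\subseteq\mathrm{Im}\,\mathfrak p^1_\ast$ matching Remark~\ref{Re:ResolventSet} is a nice touch, though not strictly needed.
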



The ideal $\Cc={\rm Ker}\, \mathfrak p^3$ supplies the algebra of observations around the corners and the connecting map $\partial^3_\ast$ induced by $\mathfrak p^3$ sends classes in $\KK_\ast(C^\ast \Gg_{\Xi_2})$ of symmetric lifts of bulk Hamiltonians to classes in $\KK_{\ast-1}(\Cc)$. A symmetric bulk Hamiltonian that is gappable at the codimension-2 boundaries (hinges) can be (up to stable equivalence) lifted to $C^\ast \Gg_{\Xi_2}$ under $\mathfrak{p}^1\circ \mathfrak{p}^2$, but such a lift to a gapped symmetric Hamiltonian from $C^\ast \Gg_{\Xi_3}$ under $\mathfrak{p}^1\circ \mathfrak{p}^2 \circ \mathfrak{p}^3$ will exist if and only if there is a pre-image of $[\gamma_{h_b}]_\ast$ in $\KK_\ast(C^\ast \Gg_{\Xi_2})$ that is mapped by $\partial^3_\ast$ to the trivial class of $\KK_{\ast-1}(\Cc)$, equivalently if and only if there exists a choice of symmetric boundary condition for which there are no protected corner modes. If conversely there are always protected corner modes, then we say that we have a non-trivial order-3 bulk-boundary correspondence. The ambiguity in the choice of lift to $\KK_\ast(C^*\Gg_{\Xi_2})$ is due to the long exact sequence of K-theory enumerated by $\Ker(\mathfrak p^1_\ast \circ \mathfrak p^2_\ast) \subset \KK_\ast(C^*\Gg_{\Xi_2})$ therefore these instances can be detected using a group homomorphism like in Proposition~\ref{Prop:BC1}: 

\begin{proposition}\label{Prop:BC2} There exists a well defined bulk-corner map
\begin{equation}
	\label{Eq:BC2}
	\delta^{\Bb \Cc}_\ast: {\rm Im} \, (\mathfrak p^1_\ast \circ \mathfrak p^2_\ast) \subseteq \KK_\ast(\Bb) \to \frac{\mathrm{Im} \, \partial^3_\ast}{\partial^3_\ast({\rm Ker} \, \mathfrak p^1_\ast \circ \mathfrak p^2_\ast)} \subseteq \frac{\KK_{\ast-1}(\Cc)}{\partial^3_\ast({\rm Ker} \, \mathfrak p^1_\ast \circ \mathfrak p^2_\ast)},
\end{equation}
obtained by computing $\partial^3_\ast$ for any lift from $ {\rm Im} \, \mathfrak p^1_\ast \circ \mathfrak p^2_\ast$ to $\KK_\ast(C^*\Gg_{\Xi_2})$. One has $\delta^{\Bb \Cc}_\ast(x)=0$ if and only if $x$ has a pre-image in $\KK_\ast(\Qq)$.
\end{proposition}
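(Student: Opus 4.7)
The proof proposal follows the template laid out in Proposition~\ref{Prop:BC1}, adapted to the three-step filtration \eqref{Eq:CubeFilt33}. The plan is to define $\delta^{\Bb\Cc}_\ast$ by lifting a class $x \in {\rm Im}(\mathfrak p^1_\ast \circ \mathfrak p^2_\ast)$ to some $\tilde x \in \KK_\ast(C^\ast\Gg_{\Xi_2})$ and then applying $\partial^3_\ast$. Concretely, set
\begin{equation}
\delta^{\Bb\Cc}_\ast(x) := \partial^3_\ast(\tilde x) + \partial^3_\ast\bigl(\Ker(\mathfrak p^1_\ast\circ \mathfrak p^2_\ast)\bigr).
\end{equation}
By construction, $\delta^{\Bb\Cc}_\ast(x)$ lies in the quotient $\mathrm{Im}\,\partial^3_\ast/\partial^3_\ast(\Ker(\mathfrak p^1_\ast\circ \mathfrak p^2_\ast))$, which embeds naturally into the quotient displayed in \eqref{Eq:BC2}.

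For well-definedness, the only choice made is the lift $\tilde x$. Any other lift $\tilde x'$ of the same $x$ satisfies $\tilde x - \tilde x' \in \Ker(\mathfrak p^1_\ast\circ \mathfrak p^2_\ast)$, so $\partial^3_\ast(\tilde x) - \partial^3_\ast(\tilde x') \in \partial^3_\ast(\Ker(\mathfrak p^1_\ast\circ \mathfrak p^2_\ast))$ and the coset is unchanged. Linearity in $x$ then follows from linearity of $\partial^3_\ast$ together with linearity of the chosen lifting (which can be made linear on representatives, and any ambiguity is again absorbed into the quotient). The existence of a lift $\tilde x$ for every $x \in {\rm Im}(\mathfrak p^1_\ast \circ \mathfrak p^2_\ast)$ is immediate by definition of the image, and the existence of such a lift is equivalent, via Proposition~\ref{Prop:Gapable2}, to $h_b$ being gappable at the codimension-$2$ boundaries (hinges), which is precisely the hypothesis needed before asking whether the corners can also be gapped.

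Next I would establish the spectral interpretation. The long exact sequence associated to the epimorphism $\mathfrak p^3: C^\ast \Gg_{\Xi_3} \twoheadrightarrow C^\ast\Gg_{\Xi_2}$ gives exactness at $\KK_\ast(C^\ast\Gg_{\Xi_2})$, hence a lift $\tilde x \in \KK_\ast(C^\ast\Gg_{\Xi_2})$ pulls back to $\KK_\ast(C^\ast \Gg_{\Xi_3})$ if and only if $\partial^3_\ast(\tilde x)=0$. Therefore $\delta^{\Bb\Cc}_\ast(x)=0$ in the quotient iff \emph{some} choice of lift $\tilde x$ satisfies $\partial^3_\ast(\tilde x)\in \partial^3_\ast(\Ker(\mathfrak p^1_\ast\circ \mathfrak p^2_\ast))$, iff after subtracting a representative of this kernel one obtains another lift $\tilde x''$ of $x$ with $\partial^3_\ast(\tilde x'')=0$, iff $x$ is gappable at the corners in the sense of Definition~\ref{Def:Gappable2}. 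Invoking the standing assumption that $[h]\mapsto [\gamma_h]_\ast$ is a bijection between stable symmetric-preserving homotopy classes of gapped symmetric Hamiltonians and $\KK_\ast$, this translates back to the assertion that $\delta^{\Bb\Cc}_\ast(x)\neq 0$ is exactly the obstruction to finding, up to stable equivalence, a symmetric lift in $B(\Vv)\otimes C^\ast\Gg_{\Xi_3}$ which is gapped at all boundaries.

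Finally, I would justify that the quotient correctly captures independence from the choice of symmetric boundary condition. An element of $\Ker(\mathfrak p^1_\ast \circ \mathfrak p^2_\ast) \subset \KK_\ast(C^\ast\Gg_{\Xi_2})$ represents, by Remark~\ref{remark 223} applied at the hinge level, precisely the K-theory of Hamiltonians that are trivial in the bulk and in the hinges but live on faces or hinges; these correspond to surface/hinge coatings one can freely attach without altering the bulk. Their images under $\partial^3_\ast$ are exactly those corner K-theory classes that are redistributable by such symmetric coatings. The hardest step, and the one I expect to require the most care, is the last one: checking that the algebraic quotient in \eqref{Eq:BC2} really coincides with the physically meaningful equivalence relation of ``obtainable by a change of symmetric boundary condition.'' This requires showing that every class in $\Ker(\mathfrak p^1_\ast\circ \mathfrak p^2_\ast)$ is realized by an actual symmetric coating Hamiltonian and, conversely, that any two gapped symmetric lifts of a fixed bulk class that are both gapped at the hinges differ by such a coating up to stable equivalence, which follows by combining exactness of the sequences induced by $\mathfrak p^1$ and $\mathfrak p^2$ with the bijection between K-classes and Hamiltonian homotopy classes assumed at the outset.
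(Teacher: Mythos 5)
Your proposal is correct and follows essentially the same route as the paper, which defines $\delta^{\Bb\Cc}_\ast$ exactly by the template of Proposition~\ref{Prop:BC1}: lift, apply $\partial^3_\ast$, and quotient by $\partial^3_\ast(\Ker(\mathfrak p^1_\ast\circ\mathfrak p^2_\ast))$, with well-definedness coming from the fact that any two lifts differ by an element of that kernel and the spectral interpretation coming from exactness of the sequence induced by $\mathfrak p^3$ together with Proposition~\ref{Prop:Gapable2}. Your closing discussion of realizing kernel classes by coating Hamiltonians is consistent with the paper's remarks following the proposition and rests, as you note, on the standing assumption that $\KK_\ast$ classifies stable homotopy classes of gapped symmetric Hamiltonians.
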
 
\begin{remark}{\rm
The classes $x\in {\rm Im} \, (\mathfrak p^1_\ast \circ \mathfrak p^2_\ast)$ with $\delta^{\Bb \Cc}_\ast(x)\neq 0$ are precisely the classes of symmetric bulk Hamiltonians that generate non-trivial order-3 bulk-boundary correspondences supported by $\Xi_3$, since by definition one can then never eliminate all their corner states by a change of symmetry-preserving boundary conditions (precisely those ambiguities are divided out). Any symmetric lift of such a bulk Hamiltonian to the cube which is gapped at the faces and hinges must then display corner modes in the bulk gap protected by a non-trivial class in $\KK_{*-1}(\Cc)$. However, the latter may depend on the specific lift (i.e. boundary condition) up to an element of ${\partial^3_\ast({\rm Ker} \, \mathfrak p^1_\ast \circ \mathfrak p^2_\ast)}\subset\KK_{*-1}(\Cc)$, which enumerates again the corner modes that can be carried by surface layers that are trivial in the bulk.}
$\Diamond$	
\end{remark}

It happens frequently that a bulk Hamiltonian will  display bulk-boundary correspondences of mixed order, for example, there can be protected surface modes at some faces while other faces can still be gapped and then there may be additional second-order hinge modes or third-order corner modes at the hinges respectively corners that do not border the gapless faces. We will see examples of this in section \ref{Sec:CubeExamples}. To resolve all phenomena of mixed higher bulk-boundary correspondence one must potentially investigate {\it all} possible filtrations adapted to the chosen symmetry group $\Gamma$, not just \eqref{Eq:CubeFilt13}.  Generally, to detect a mixed order-$r$ bulk-boundary correspondence induced by a specific class in $\KK_\ast(\Bb)$ one should adapt the filtration to exclude so many boundaries as to make it no longer be subject to bulk-boundary correspondences of order $(r-1)$ and lower, starting with removing all faces that potentially carry first-order boundary states. There are some subtleties involved; a change of boundary condition can move the higher boundary modes between different parts of the boundary and one can further have both too many or too few boundaries left to stabilize higher-bulk boundary correspondences, which means that one may need to sample several filtrations until one finds the correct combinations of gappable boundaries that stabilize interesting higher-order boundary modes. Since the relevant boundary states should still be localized to faces, hinges or corners one can restrict oneself to filtrations of the form
\begin{equation}\label{eq:filtrations_choice}
(\tilde{\Xi} \cap \Xi_0) \subset (\tilde{\Xi} \cap \Xi_1)\subset (\tilde{\Xi} \cap \Xi_2) \subset (\tilde{\Xi} \cap \Xi_3)
\end{equation}
enumerated by closed invariant transversals $\tilde{\Xi}\subset \Xi_{\mbox{\small \mancube}}$ that select parts of the cube geometry. This choice ensures that the boundary ideals $C^*\Gg_{(\tilde{\Xi} \cap \Xi_r)\setminus (\tilde{\Xi} \cap \Xi_{r-1})}$ still localize precisely to the boundaries of codimension $r$ contained in $\tilde{\Xi}$. A particular example of this form is the transversal $\Xi_\square$ of section~\ref{Sec:Wire} in three dimensions, which can be seen as a closed invariant subset of $\Xi_{\mbox{\small \mancube}}$ that excludes all corners and all but four of the hinges and faces each, thereby isolating mixed second-order bulk-boundary correspondences which localize to the selected hinges. Other than using a different transversal respectively filtration, the mathematical formalism remains unchanged.


\subsection{A unifying picture}\label{Sec:Unifying} We now consider a crystal geometry in $d$ dimensions displaying boundaries of codimension between $1$ and $d$ and with a global transversal $\Xi$ (which may be only a subset of all possible patterns constructed in the scaling limit of an actual crystal such as the wire geometry of section~\ref{Sec:Wire}). 

In accordance with the standing assumptions of this section, we fix again a finite group of the form $\Gamma \subset \bar \Sigma=\Sigma \times \ZM_2 \times \ZM_2$ together with a twist $(\phi,c,\tau)$ such that the twisted $\Gamma$-equivariant K-functor $\KK_\ast$ classifies the corresponding stable homotopy classes of gapped symmetric Hamiltonians.  We assume that $\Gamma$ acts on $C^*\Gg_\Xi$ via (anti-)linear automorphisms and on the unit space $\Xi$ via homeomorphisms. We assume that we have a filtration 
\begin{equation}
\label{Eq:SatFiltGen}
\{\Ll_\infty\} = \Xi_0 \subset  \ldots \subset \Xi_{d-1} \subset \Xi_d = \Xi,
\end{equation} 
by $\Gamma$-symmetric closed invariant subsets, which mathematically defines which patterns $\Xi_r \subset \Xi$ we consider to have only boundaries of codimension $r$ or less. Then the algebra of physical observables $C^\ast \Gg_{\Xi}$ has a $\Gamma$-equivariant cofiltration
\begin{equation}\label{Eq:CoFiltD}
 C^\ast \Gg_{\Xi_{d}} \stackrel{\mathfrak p^{d}}{\twoheadrightarrow} C^\ast \Gg_{\Xi_{d-1}}\stackrel{\mathfrak p^{d-1}}{\twoheadrightarrow} \cdots \stackrel{\mathfrak p^{2}}{\twoheadrightarrow} C^\ast \Gg_{\Xi_1} \stackrel{\mathfrak p^{1}}{\twoheadrightarrow} C^\ast \Gg_{\Xi_0}= C^\ast \Gg_{\Ll_\infty},
\end{equation}
and each ${\rm Ker}\, \mathfrak p^{r} = C^\ast \Gg_{\Xi_{r} \setminus \Xi_{r-1}}$ has a sensible interpretation as an algebra of observations near the boundaries of codimension precisely $r$. 

Regarding the correspondence itself, the arguments are very similar to the ones from the previous subsections: 
\begin{definition}
The gapped symmetric Hamiltonian $h_b\in B(\Vv)\otimes C^*\Gg_{\Xi_0}$ is called gappable at the codimension $r$ boundaries if and only if $[\gamma_{h_b}]_\ast\in \KK_\ast(C^*\Gg_{\Xi_0})$ admits a pre-image under $s^r: = \mathfrak p^1 \circ \cdots \circ \mathfrak p^r: C^*\Gg_{\Xi_{r}}\to C^*\Gg_{\Xi_{0}}$ in $\KK_\ast(C^*\Gg_{\Xi_{r}})$. We also say that $h_b$ exhibits an order-$r$ bulk boundary correspondence if $h_b$ is gappable at the codimension $r-1$ boundaries but not at the codimension $r$ boundaries.
\end{definition}
Let $\partial^{r}_\ast: \KK_\ast(C^*\Gg_{\Xi_{r-1}})\to \KK_{*-1}(C^*\Gg_{\Xi_{r}\setminus \Xi_{r-1}})$ be the connecting map induced by the epimorphism $\mathfrak p^r$. Any particular lift to $\KK_\ast(C^\ast \Gg_{\Xi_{r-1}})$ can be further lifted to $\KK_\ast(C^\ast \Gg_{\Xi_{r}})$ if and only if it is in the kernel of $\partial^{r}_\ast$. If the image under $\partial^{r}_\ast$ is conversely non-trivial for all lifts of a fixed class in $\KK_\ast(C^\ast\Gg_{\Xi_0})$ then we have found an order-$r$ bulk-boundary correspondence. The ambiguities for the lifts from $\KK_\ast(C^\ast\Gg_{\Xi_0})$ to $\KK_\ast(C^\ast \Gg_{\Xi_{r-1}})$, induced by choosing different symmetric boundary conditions, are enumerated by ${\rm Ker} \, (s^{r-1}_*)\subset \KK_\ast(C^*\Gg_{\Xi_{r-1}})$. Thus, after these are quotiented out, we obtain an enumeration of the order-$r$ correspondences supported by the boundaries selected by $\Xi_r$:

\begin{proposition} The stable equivariant homotopy classes of symmetric bulk Hamiltonians which are gappable at the boundaries of codimensions less than $r$ are listed by ${\rm Im}\, s^{r-1}_\ast \subset \KK_\ast(\Bb)$. One obtains a well-defined homomorphism
\begin{equation}
	\label{Eq:BCr}
	\delta^r_\ast: {\rm Im}\, s^{r-1}_\ast \subseteq \KK_\ast(C^*\Gg_{\Xi_0}) \to \frac{\mathrm{Im} \, \partial^{r}_\ast}{\partial^{r}_\ast({\rm Ker} \, s^{r-1}_\ast)}
\end{equation}
by computing the connecting map for any lift from $\KK_\ast(C^*\Gg_{\Xi_0})$ to $\KK_\ast(C^*\Gg_{\Xi_{r-1}})$ and then taking the stated quotient to make it independent of the lift. 

If a bulk Hamiltonian $h_b$ is gappable at the codimension $r-1$ boundaries then the range of $\partial^r_\ast$ evaluated on lifts of $[\gamma_{h_b}]_\ast$ to $\KK_\ast(C^*\Gg_{\Xi_{r-1}})$ is precisely given by the coset $\delta^r_\ast([\gamma_{h_b}]_\ast)$ as a subset of $\KK_{\ast-1}(C^*\Gg_{\Xi_{r}\setminus\Xi_{r-1}})$. 

In particular, if $\delta^r_\ast([\gamma_{h_b}]_\ast)$ evaluates to a non-trivial value (i.e. a coset that does not contain the neutral element) then $h_b$ is not gappable at the codimension $r$ boundaries and therefore exhibits an order-$r$ bulk-boundary correspondence. 
\end{proposition}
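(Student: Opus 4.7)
The plan is to mimic the diagram chase used in Proposition~\ref{Prop:BC1} and its codimension-three analogue Proposition~\ref{Prop:BC2}, only now pieced together along the full co-filtration~\eqref{Eq:CoFiltD}. The engine will be, for each $r$, the long exact sequence in $\KK$-theory attached to the short exact sequence
\begin{equation}
0 \to C^\ast\Gg_{\Xi_r \setminus \Xi_{r-1}} \to C^\ast\Gg_{\Xi_r} \stackrel{\mathfrak p^r}{\twoheadrightarrow} C^\ast\Gg_{\Xi_{r-1}} \to 0,
\end{equation}
which produces the connecting map $\partial^r_\ast$, combined with the telescoped epimorphism $s^{r-1} = \mathfrak p^1 \circ \cdots \circ \mathfrak p^{r-1}$.

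First I would establish the enumeration claim. The direct generalization of Proposition~\ref{Prop:Gapable2} to the filtration~\eqref{Eq:SatFiltGen} says that a symmetric gapped bulk Hamiltonian $h_b$ is gappable at the codimension $(r{-}1)$ boundaries precisely when $[\gamma_{h_b}]_\ast$ lies in the image of $s^{r-1}_\ast$. This is essentially a one-line translation once the strategy of Proposition~\ref{Prop:Gapable1} is applied verbatim: lift any representative in $\KK_\ast(C^\ast\Gg_{\Xi_{r-1}})$ to a self-adjoint element in $B(\Vv)\otimes C^*\Gg_{\Xi_{r-1}}$, and if necessary average over the twisted $\Gamma$-action to recover symmetry without leaving the $\KK_\ast$-class.

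Next I would show that $\delta^r_\ast$ is well defined and identify the coset. Given $x \in \mathrm{Im}\, s^{r-1}_\ast$, pick any lift $\tilde x \in \KK_\ast(C^\ast\Gg_{\Xi_{r-1}})$; two such lifts differ by an element of $\mathrm{Ker}\, s^{r-1}_\ast$, so their images under $\partial^r_\ast$ differ by an element of $\partial^r_\ast(\mathrm{Ker}\, s^{r-1}_\ast)$, making $\delta^r_\ast(x) := \partial^r_\ast(\tilde x) + \partial^r_\ast(\mathrm{Ker}\, s^{r-1}_\ast)$ independent of the lift. Linearity of $\partial^r_\ast$ then makes $\delta^r_\ast$ a group homomorphism. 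By construction, the set $\{\partial^r_\ast(\tilde x) : \tilde x \in (s^{r-1}_\ast)^{-1}(x)\}$ coincides inside $\KK_{\ast-1}(C^\ast\Gg_{\Xi_r\setminus\Xi_{r-1}})$ with the coset $\delta^r_\ast(x)$, which is exactly the range statement.

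The last claim then follows immediately from exactness of the long exact sequence associated to $\mathfrak p^r$: a class $\tilde x \in \KK_\ast(C^\ast\Gg_{\Xi_{r-1}})$ lifts through $\mathfrak p^r_\ast$ iff $\partial^r_\ast(\tilde x) = 0$. If $\delta^r_\ast(x)$ is a non-trivial coset, then by the previous paragraph no lift of $x$ is annihilated by $\partial^r_\ast$, so $x$ is not in the image of $s^r_\ast$ and $h_b$ is therefore not gappable at codimension $r$ by the first part of the proposition. The only genuine subtlety I anticipate is notational bookkeeping: one must carefully track what ``symmetric'' means for each $\KK_q$ as one moves through the suspensions, especially when the subgroup $\Gamma_q \subset \bar\Sigma$ implementing $\KK_q$ varies with $q$. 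This is already acknowledged around Definition~\ref{def: homotopic} and is handled by invoking the standing assumption that $[h]\mapsto [\gamma_h]_\ast$ is a bijection on stable equivariant homotopy classes; no new conceptual input beyond the diagram chase is required.
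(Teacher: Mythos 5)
Your proposal is correct and follows essentially the same route as the paper: the paper's argument (given in the text immediately preceding the proposition and modeled on the proof of Proposition~\ref{Prop:BC1}) is exactly this diagram chase — lifts of $x$ form the coset $\tilde x + \Ker\, s^{r-1}_\ast$, so $\partial^r_\ast$ of that set is the coset $\partial^r_\ast(\tilde x) + \partial^r_\ast(\Ker\, s^{r-1}_\ast)$, and exactness of the sequence for $\mathfrak p^r$ gives the gappability criterion, with the enumeration claim reduced to the averaging/lifting argument of Proposition~\ref{Prop:Gapable1}. Nothing essential is missing.
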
 
In this way we can enumerate both the Hamiltonians which exhibit order-$r$ bulk-boundary correspondences as well as their possible boundary states at the order $r$ boundaries under the assumption that the order $r-1$ boundaries are gapped.

\begin{remark}{\rm 
The higher boundary maps $(\delta^s_\ast)_{s\leq r}$ together identify the bulk classes in $\KK_\ast(C^*\Gg_{\Xi_0})$ which cannot be lifted to $\KK_\ast(C^*\Gg_{\Xi_r})$ and assign an order to the obstruction, which is the lowest order boundary which must be un-gapped. For bulk models which are already {\bf not} gappable at the codimension $r-1$ boundaries one does not obtain any information about possible codimension $r$ boundary states which may coexist with the lower order boundary states. As sketched in the previous section, it is largely a modeling problem to choose the filtrations (respectively geometries) in such a way that the constructed boundary maps help one to identify and enumerate those phenomena of higher-order bulk-boundary correspondence one is actually interested in.}
\end{remark}

We now prepare to place the higher-order bulk-boundary maps $\delta^r_\ast$ in their appropriate framework: 

\begin{theorem}[\cite{SavinienBellissard}]\label{th:specsequence} For a cofiltration of $C^\ast$-algebras,
\begin{equation}
\Aa_d \twoheadrightarrow \Aa_{d-1}\twoheadrightarrow ... \twoheadrightarrow \Aa_0,
\end{equation} 
there exists a spectral sequence $(E^r_{p,q},d^r_{p,q})$ converging to $\KK(\Aa_d)$.
\end{theorem}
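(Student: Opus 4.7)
My plan is to construct the spectral sequence by applying Massey's exact couple machinery to the long exact sequences in $\KK$-theory arising from the co-filtration. For each $1\leq r\leq d$, set $\Jj_r := \ker(\Aa_r \twoheadrightarrow \Aa_{r-1})$ and $\Jj_0 := \Aa_0$, so that each short exact sequence $0 \to \Jj_r \to \Aa_r \to \Aa_{r-1} \to 0$ induces the standard long exact sequence in $\KK_\ast$.

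First, I would assemble these long exact sequences into a single bigraded exact couple $(D^1, E^1, i, j, k)$ with $D^1$-terms built from the $\KK_\ast(\Aa_p)$ and $E^1$-terms from the $\KK_\ast(\Jj_p)$. The three structure maps are induced respectively by the projections $\mathfrak p^{p}_\ast$, the boundary maps, and the inclusions $\Jj_p \hookrightarrow \Aa_p$; exactness at each vertex reduces to that of the individual six-term sequences, so this step amounts to bookkeeping. With the bigrading $D^1_{p,q} := \KK_{q-p}(\Aa_p)$ and $E^1_{p,q} := \KK_{q-p}(\Jj_p)$ (with the convention $E^1_{0,q} = \KK_q(\Aa_0)$), the projection-induced map $i$ acquires bidegree $(-1,-1)$, while $d^1 = j \circ k$ acquires bidegree $(+1,0)$, matching the case $r=1$ of the bidegree $(+r, +r-1)$ declared for $d^r$ in Theorem~\ref{Th:Main}(i).

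Next, I would apply the derived-couple construction iteratively to produce the higher pages $(E^r, d^r)_{r \geq 1}$. Each derivation shifts the bidegree of $d^r$ by $(-\alpha,-\alpha')=(+1,+1)$ relative to the preceding one, reproducing the claimed bidegree, and gives $d^r$ the obstruction-theoretic description of Theorem~\ref{Th:Main}(ii): a class on the $E^r$-page at $(p,q)$ is represented by an element of $\KK_{q-p}(\Jj_p)$ which lifts back $r-1$ stages through the co-filtration, and $d^r$ is the connecting map applied to such a lift. Convergence then follows from the finite length $d+1$ of the co-filtration: only finitely many $d^r$ can act non-trivially at any fixed bidegree, so the sequence stabilizes at $E^\infty = E^{d+1}$. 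Identification of the target is a standard diagram chase showing that the filtration $F^p \KK_q(\Aa_d) := \ker\bigl(\KK_q(\Aa_d) \to \KK_q(\Aa_{p-1})\bigr)$ has associated graded quotients canonically isomorphic to the $E^\infty$-terms.

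The main obstacle is not the construction itself --- which is purely formal once the long exact sequences are in place --- but the subsequent identification of the differentials $d^r_{0,\ast}$ with the iterated bulk-boundary maps of Section~\ref{Sec:Unifying}. One must verify that the obstruction interpretation delivered by the derived couple coincides with the physical prescription of lifting a bulk Hamiltonian step-by-step across the co-filtration and taking successive connecting maps. Only this identification turns the spectral sequence into a usable enumeration device for higher-order bulk-boundary correspondences in the style of Propositions~\ref{Prop:BC1} and \ref{Prop:BC2}; without it, the spectral sequence would remain a formal accounting gadget divorced from the K-theoretic meaning of gappability at boundaries of a prescribed codimension.
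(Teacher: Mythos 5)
Your proposal follows essentially the same route as the paper, which cites Savinien--Bellissard and then spells out exactly this exact-couple construction: the first page $E^1_{p,q}=\KK_{-p+q}(\Ee_p)$ with $\Ee_p=\Ker(\Aa_p\twoheadrightarrow\Aa_{p-1})$, the auxiliary terms $D^1_{p,q}=\KK_{-p+q}(\Aa_p)$, the three structure maps induced by the projections, the connecting maps and the inclusions, and the iterated derived couples yielding $d^r_{p,q}\colon E^r_{p,q}\to E^r_{p+r,q+r-1}$ with convergence forced by the finite length of the co-filtration. Your closing remark correctly identifies that the identification of $d^r_{0,\ast}$ with the iterated bulk-boundary maps is a separate step, which the paper carries out in Proposition~\ref{prop:higher_boundary_maps} rather than as part of this theorem.
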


Although standard (see e.g. \cite{McCleary}), we need to describe this spectral sequence in some detail. First of all one extends the cofiltration to $(\Aa_n)_{n\in \ZM}$ by setting $\Aa_n=\Aa_d$ for $n\geq d$ and $\Aa_n=0$ for $n<0$. The first page of the spectral sequence is then the bigraded complex
\begin{equation}
E^1=\bigoplus_{p,q} E^1_{p,q}, \quad E^1_{p,q}: =\KK_{-p+q}(\Ee_p), \quad \Ee_p = \Ker(\Aa_p\twoheadrightarrow \Aa_{p-1}),
\end{equation}
and one considers the auxiliary bigraded complex $D^1=\bigoplus_{p,q} D^1_{p,q}$ with $D^1_{p,q} : =\KK_{-p+q}(\Aa_p)$. They form an exact couple, i.e. a diagram
\begin{equation}
\begin{tikzcd}
	D^1 \arrow[r, "\alpha"] & D^1 \arrow[dl, "\beta", shift left=1.5ex] \\
	E^1 \arrow[u, "\gamma"]
\end{tikzcd}
\end{equation}
where every map has as its kernel the image of the previous one. The morphism $\alpha$ maps each  $D^1_{p,q}$ to $D^1_{p-1,q-1}$ and is given by  $\KK_{-p+q}(\Aa_{p}) \to \KK_{-p+q}(\Aa_{p-1})$, induced from $\Aa_p \twoheadrightarrow \Aa_{p-1}$. The morphism $\beta$ maps each $D^1_{p,q}$ to $E^1_{p+1,q}$ and is given by the connecting map induced by $\Aa_{p+1}\twoheadrightarrow \Aa_p$. The morphism $\gamma$ maps each $E^1_{p,q}$ to $D^1_{p,q}$ and is given by $\KK_{-p+q}(\Ee_{p})\rightarrow \KK_{-p+q}(\Aa_{p})$. This first page is just a reformulation of the long exact sequence of $K$-theory.

To any exact couple, one can canonically associate a new derived exact couple $(E^2,D^2)$ and iteration results in a spectral sequence $E^r_{p,q}$ with differentials 
\begin{equation}
d^r_{p,q}: E^r_{p,q}\to E^r_{p+r,q+r-1},
\end{equation}
defined in terms of combinations of $\alpha$, $\beta$ and $\gamma$. Concretely (e.g. \cite[Proposition 2.8]{McCleary})
\begin{equation}
	\label{eq:er_page}E^r =  \frac{\gamma^{-1}\alpha^{\circ(r-1)}(D^1)}{\beta( \Ker \, \alpha^{\circ(r-1)})},
\end{equation}
which can be identified with the homology $H(E^{r-1},d^{r-1})$ w.r.t. the differentials 
\begin{equation}
d^r = \beta (\alpha^{-1})^{\circ(r-1)}\gamma,
\end{equation}
where the inverse of $\alpha$ is well-defined since it does not depend on the choice of lift. The construction is iterative and uses that $D^r=\alpha^{r-1}(D^1)$ forms another exact couple with $E^r$ and $\alpha^{(r)}=\alpha$, $\beta{(r)}=\beta (\alpha^{-1})^{\circ(r-1)}$ and $\gamma^{(r)}=\gamma$.

\begin{remark}\label{rem:schochetSS} {\rm There is a similar spectral sequence due to Schochet \cite{Schochet} also constructed from an exact couple: Let $\Ii_{-1}= \Ii_0 \subset \Ii_1 \subset ... \subset \Ii_d = \Aa$ be a filtration of a $C^*$-algebra $\Aa$ by closed ideals. Then there is an exact couple with
\begin{equation}		
\bar E^1_{p,q}= \KK_{p+q}(\Ii_p/\Ii_{p-1}), \quad \bar D^1_{p,q}= \KK_{p+q}(\Ii_p).
\end{equation}
One can write any filtration of a $C^*$-algebra equivalently as a cofiltration and then both approaches yield equivalent spectral sequences that converge to $\KK_\ast(\Aa)$ \cite{SavinienBellissard}, hence the higher-order bulk-boundary maps can be constructed using either version.
	}$\Diamond$
\end{remark}

The following statement provides the connection between the boundary maps constructed above with the spectral sequence and therefore completes the proof of our main Theorem~\ref{Th:Main}:

\begin{proposition}
	\label{prop:higher_boundary_maps}
	Let $(E^r_{p,q},d^r_{p,q})$ be the spectral sequence corresponding to cofiltration~\eqref{Eq:CoFiltD}. Then the corestriction of the $r$-th differential
\begin{equation}
d^r_{0,q}: \Ker(d^{r-1}_{0,q}) \to \mathrm{Im}(d^r_{0,q})
\end{equation}
for $q=\ast$ coincides with the corestriction of $\delta^r_\ast: \Ker(\delta^{r-1}_\ast) \to \mathrm{Im}(\delta^r_{\ast})$.
\end{proposition}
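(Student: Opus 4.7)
The plan is to unravel both constructions on either side of the claimed equality and verify that they apply the same recipe to the same input. The simplifying observation is that the cofiltration is extended by $\Aa_{-1}=0$, so in the column $p=0$ one has $\Ee_0=\Aa_0=C^\ast\Gg_{\Xi_0}$, making $\gamma\colon E^1_{0,q}\to D^1_{0,q}$ the identity, and moreover $D^1_{-1,\ast}=0$. This will trivialize the denominators in the formula \eqref{eq:er_page} applied in column zero and make the comparison manageable.

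First I would compute the domain. Applying \eqref{eq:er_page} at $(p,q)=(0,q)$, the denominator $\beta(\Ker\,\alpha^{\circ(r-1)}\colon D^1_{-1,q}\to D^1_{-r,q-r+1})$ vanishes, so
\begin{equation*}
E^r_{0,q} \;=\; \gamma^{-1}\bigl(\Img\bigl(\alpha^{\circ(r-1)}\colon D^1_{r-1,q+r-1}\to D^1_{0,q}\bigr)\bigr) \;=\; \Img\bigl(s^{r-1}_\ast\bigr)\subseteq \KK_q(C^\ast\Gg_{\Xi_0}),
\end{equation*}
since $\alpha$ in $D^1$ is induced by the $\mathfrak p^p$'s and therefore $\alpha^{\circ(r-1)}=s^{r-1}_\ast$. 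This matches the domain of $\delta^r_\ast$ in \eqref{Eq:BCr}.

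Next, I would identify the receptacle $E^r_{r,q+r-1}$. Its ambient $E^1$-group is $\KK_{q-1}(\Ee_r)=\KK_{q-1}(C^\ast\Gg_{\Xi_r\setminus\Xi_{r-1}})$. Applying \eqref{eq:er_page} once more, the denominator equals $\beta(\Ker\,\alpha^{\circ(r-1)}\colon D^1_{r-1,q+r-1}\to D^1_{0,q})$, and since $\beta$ from $D^1_{r-1,q+r-1}$ to $E^1_{r,q+r-1}$ is precisely the connecting map $\partial^r_\ast$ of the short exact sequence $0\to\Ee_r\to\Aa_r\to\Aa_{r-1}\to 0$, this denominator reads $\partial^r_\ast(\Ker s^{r-1}_\ast)$, exactly the subgroup quotiented out in \eqref{Eq:BCr}. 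Thus the subquotient of $\KK_{q-1}(C^\ast\Gg_{\Xi_r\setminus\Xi_{r-1}})$ in which $d^r_{0,q}$ naturally takes values contains the codomain of $\delta^r_\ast$, and the image of $\delta^r_\ast$ will land inside the numerator by construction.

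Finally I would compare the maps themselves. Given $x\in E^r_{0,q}=\Img(s^{r-1}_\ast)$, the formula $d^r=\beta\circ(\alpha^{-1})^{\circ(r-1)}\circ\gamma$ evaluates as $\gamma(x)=x$; then $(\alpha^{-1})^{\circ(r-1)}(x)$ is any lift $\tilde x\in\KK_q(C^\ast\Gg_{\Xi_{r-1}})$ with $s^{r-1}_\ast(\tilde x)=x$, which exists precisely because $x$ lies in this image; and then $\beta(\tilde x)=\partial^r_\ast(\tilde x)$. The ambiguity in the choice of lift lies in $\Ker s^{r-1}_\ast$, which is why the value is only well-defined modulo $\partial^r_\ast(\Ker s^{r-1}_\ast)$ on both sides. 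This is verbatim the construction of $\delta^r_\ast(x)$ given after \eqref{Eq:BCr}, so the two corestrictions coincide. I do not expect a serious obstacle here: the whole argument is a bidegree-bookkeeping exercise, and the only subtle point is ensuring that the formula \eqref{eq:er_page}, which is a statement about the $r$-th derived couple, really specializes in column zero to the direct prescription used to define $\delta^r_\ast$.
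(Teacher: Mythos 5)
Your proposal is correct and follows essentially the same route as the paper's proof: specialize the formula \eqref{eq:er_page} to the column $p=0$ (where the denominator vanishes because $\Aa_{-1}=0$ and $\gamma$ is the identity since $\Ee_0=\Aa_0$) to get $E^r_{0,q}=\mathrm{Im}\,s^{r-1}_\ast$, identify the denominator of $E^r_{r,q+r-1}$ with $\partial^r_\ast(\Ker s^{r-1}_\ast)$, and observe that $d^r=\beta(\alpha^{-1})^{\circ(r-1)}\gamma$ is verbatim the lift-then-connecting-map prescription defining $\delta^r_\ast$. Your write-up is only slightly more explicit than the paper's about why the $p=0$ denominator trivializes; otherwise the two arguments coincide.
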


\begin{proof} One can write \eqref{eq:er_page} as
\begin{equation}
		E^r_{p,q} = \frac{\gamma^{-1}\mathrm{Im}(\KK_{-p+q}(\Aa_{p+r-1})\to \KK_{-p+q}(\Aa_{p}))}{\beta \Ker(\KK_{-p+q+1}(\Aa_{p-1})\to \KK_{-p+q+1}(\Aa_{p-r}))},
\end{equation}
which, for $p=0$, specializes to
\begin{equation}
E^r_{0,q} = \mathrm{Im}(\KK_{q}(\Aa_{r-1})\to \KK_q(\Aa_{0})) = {\rm Im} \, s_\ast^{r-1}.
\end{equation}
The codomain of $d^r_{0,q}$ is $E^r_{r,q+r-1}$, hence one takes the quotient by
$$\beta \Ker(\KK_{q}(\Aa_{p-1})\to \KK_{q}(\Aa_{p-r}))= \partial^{r}_q(\Ker s_*^{r-1}).$$
%
Since $d_{0,\ast}^r$ and $\delta_\ast^r$ have the same domain, are both defined in the same way (by choosing a lift from $\KK_\ast(\Aa_0)$ to $\KK_{\ast}(\Aa_{p-1})$, computing the connecting map to $\KK_{\ast-1}(\Ee_p)$ and then taking the quotient by the same group) they corestrict to the same map.
\end{proof}
One main advantage of using the spectral sequence is that it breaks the computation into smaller pieces; the iterative construction shows that the subquotients can be expanded as follows:
\begin{corollary}
The codomain of $d^r_{0,\ast}$ is equivalently given by
$$\frac{\partial^{r}_\ast ((s_*^{r-1})^{-1}\KK_\ast(\Aa_0))}{\partial^{r}_\ast({\rm Ker} \, s^{r-1}_\ast)}=\frac{\partial^{r}_q ((s_*^{r-1})^{-1}\KK_\ast(\Aa_0))}{\mathrm{Im}(d^{r-1}_{1,1+\ast})+\mathrm{Im}(d^{r-2}_{2,2+\ast})...+\mathrm{Im}(d^{1}_{r-1,r-1+\ast})}.$$
\end{corollary}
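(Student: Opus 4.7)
The plan is to deduce the two equalities as respective consequences of the definition of $d^r_{0,\ast}$ via Proposition~\ref{prop:higher_boundary_maps} and of the iterative construction of the spectral sequence underlying the closed-form formula~\eqref{eq:er_page}. The natural target of $d^r_{0,\ast}$ is $E^r_{r,r-1+\ast}$, and both equalities express the same subquotient of $E^1_{r,r-1+\ast}=\KK_{\ast-1}(\Ee_r)$ in two different ways.

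For the first equality, I would specialize~\eqref{eq:er_page} to $(p,q)=(r,r-1+\ast)$. Since $\alpha^{\circ(r-1)}$ restricted to $D^1_{r-1,r-1+\ast}=\KK_\ast(\Aa_{r-1})$ is exactly $s^{r-1}_\ast$ while $\beta$ restricted there equals $\partial^r_\ast$, the denominator in~\eqref{eq:er_page} coincides with $\partial^r_\ast(\Ker s^{r-1}_\ast)$. For the numerator, the proof of Proposition~\ref{prop:higher_boundary_maps} describes the image of $d^r_{0,\ast}$ as $\partial^r_\ast$ applied to preimages under $s^{r-1}_\ast$ of classes in $E^r_{0,\ast}=\mathrm{Im}\,s^{r-1}_\ast$; since such preimages exhaust $\KK_\ast(\Aa_{r-1})=(s^{r-1}_\ast)^{-1}\KK_\ast(\Aa_0)$ and any two differ by an element of $\Ker s^{r-1}_\ast$, the image becomes $\partial^r_\ast((s^{r-1}_\ast)^{-1}\KK_\ast(\Aa_0))/\partial^r_\ast(\Ker s^{r-1}_\ast)$, which is the asserted left-hand side.

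For the second equality I would verify, by tracing $d^s=\beta(\alpha^{-1})^{\circ(s-1)}\gamma$ through the derived exact couple, that the image of the incoming differential $d^s_{r-s,r-s+\ast}$, viewed as a subgroup of $E^1_{r,r-1+\ast}$, equals $\partial^r_\ast(\Ker(\alpha^{\circ s}\colon \KK_\ast(\Aa_{r-1})\to \KK_\ast(\Aa_{r-s-1})))$. Indeed, using exactness to identify $\mathrm{Im}\,\gamma\subseteq \KK_\ast(\Aa_{r-s})$ with $\Ker(\KK_\ast(\Aa_{r-s})\to \KK_\ast(\Aa_{r-s-1}))$, the iterated $\alpha^{-1}$-lifts of any class from $E^s_{r-s,r-s+\ast}$ populate exactly this kernel inside $\KK_\ast(\Aa_{r-1})$, and $\beta=\partial^r_\ast$ then produces the claimed subgroup. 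These kernels form an increasing family in $s$ with maximal member $\Ker s^{r-1}_\ast$ at $s=r-1$, so $\sum_{j=1}^{r-1}\mathrm{Im}(d^{r-j}_{j,j+\ast})$ equals $\partial^r_\ast(\Ker s^{r-1}_\ast)$, establishing the second equality.

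The main obstacle is the reconciliation, at each page, between the iterative subquotient picture of the spectral sequence (within which the images $\mathrm{Im}(d^s_{r-s,r-s+\ast})$ naturally live) and the closed-form formula~\eqref{eq:er_page} in terms of the original exact couple. The key observation—that the ambiguity introduced by iterated $\alpha^{-1}$-lifts is precisely compensated by the equivalences modded out at each intermediate page—is a direct consequence of exactness of the underlying couple, and once unwound it reduces the argument to standard manipulations with kernels and images of group homomorphisms.
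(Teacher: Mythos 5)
Your proposal is correct and follows the route the paper intends but leaves implicit (the corollary is stated as a consequence of ``the iterative construction'' with no written proof): you specialize the closed-form expression \eqref{eq:er_page} at $(p,q)=(r,r-1+\ast)$ to get the first equality, and for the second you use exactness of the couple ($\mathrm{Im}\,\gamma=\Ker\alpha$) to identify $\mathrm{Im}(d^{s}_{r-s,r-s+\ast})$, lifted back to $\KK_{\ast-1}(\Ee_r)$, with $\partial^r_\ast\bigl(\Ker(\KK_\ast(\Aa_{r-1})\to\KK_\ast(\Aa_{r-s-1}))\bigr)$, after which the nesting of these kernels makes the sum collapse to its largest term $\partial^r_\ast(\Ker s^{r-1}_\ast)$. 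This matches the definitions of $\alpha$, $\beta$, $\gamma$ and of $d^r$ given around Theorem~\ref{th:specsequence} and the proof of Proposition~\ref{prop:higher_boundary_maps}, so I see no gap.
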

The image $\mathrm{Im}(d^{s}_{r-s,r-s+\ast})$ enumerates the possible boundary states at the codimension $r$ boundaries which are protected by surface topological insulators at the codimension $(r-s)$-boundaries through an order-$s$ correspondence. Hence, to compute the codomain of $d^r_{0,\ast}$ one should first classify those lower-dimensional higher-order topological insulators of order less than $r$. This can be done through an iterative process, in fact it happens automatically if one computes one page after another of the spectral sequence. Those groups then enumerate exhaustively the possible dependence of the boundary states of any fixed bulk Hamiltonian on the choice of boundary condition, which turns out to be in K-theory completely equivalent to investigating the effects of decorating the surface with additional layers.

The framework of spectral sequences supplies other powerful tools. For example, we can recognize when two higher-order bulk-boundary correspondences are identical from a topological point of view: 

\begin{proposition}
\label{prop:homomorphism_spectral_sequence}
Let there be two cofiltrations 
\begin{equation}
...\twoheadrightarrow \Aa_n \twoheadrightarrow \Aa_{n-1}\twoheadrightarrow ... \twoheadrightarrow \Aa_0 \twoheadrightarrow 0, \quad \tilde{\Aa}_n \twoheadrightarrow \tilde{\Aa}_{n-1}\twoheadrightarrow ... \twoheadrightarrow \tilde{\Aa}_0 \to 0
\end{equation}
	and homomorphisms $\psi_p: \KK_q(\Aa_p)\to \KK_q(\tilde{\Aa}_p)$, $\varphi_n: \KK_q(\Ee_p)\to \KK_q(\tilde{\Ee}_p)$ such that the diagram
\begin{equation}
\begin{tikzcd}[column sep=small]
		\arrow[r, ""]	& \KK_{q}(\Aa_p) \arrow[d, "\psi_p"] \arrow[r, ""] & 	 \KK_{q}(\Aa_{p-1}) \arrow[d, "\psi_{p-1}"] \arrow[r, "\partial"] & \KK_{q-1}(\Ee_p)  \arrow[d, "\varphi_p"] \arrow[r, ""] & \KK_{q-1}(\Aa_{p-1}) \arrow[d, "\psi_{p-1}"] \arrow[r, ""] & \ \\
		 \arrow[r, ""] & \KK_{q}(\tilde{\Aa}_p) \arrow[r, ""] & 	\KK_{q}(\tilde{\Aa}_{p-1})  \arrow[r, "\partial"] & \KK_{q-1}(\tilde{\Ee}_p)  \arrow[r, ""] & \KK_{q-1}(\tilde{\Aa}_{p-1}) \arrow[r, ""] & \ 
	\end{tikzcd}
\end{equation} 
commutes, then the higher order boundary maps satisfy $\tilde{d}^p_{0,q} \circ \varphi_p = \varphi_p \circ d^p_{0,q}$. If all homomorphisms $\varphi_p$ are isomorphisms then the induced spectral sequences are term-wise isomorphic.
\end{proposition}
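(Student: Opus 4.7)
The plan is to repackage the hypothesis as a morphism of exact couples and exploit the functoriality of the derived-couple construction. The commuting ladder of long exact sequences given in the statement encodes precisely that the pair $(\psi, \varphi)$ with $\psi = (\psi_p)_p$ and $\varphi = (\varphi_p)_p$ defines a morphism between the bigraded exact couples $(D^1, E^1, \alpha, \beta, \gamma)$ and $(\tilde D^1, \tilde E^1, \tilde\alpha, \tilde\beta, \tilde\gamma)$ introduced above Theorem~\ref{th:specsequence}. Commutativity of the non-connecting squares yields $\tilde\alpha \circ \psi = \psi \circ \alpha$ and $\tilde\gamma \circ \varphi = \psi \circ \gamma$, while the square containing the connecting map gives $\tilde\beta \circ \psi = \varphi \circ \beta$.

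Next, I would invoke the standard fact that the derived exact couple is a functorial construction on the category of exact couples. Any morphism $(\psi, \varphi)$ descends to a morphism $(\psi^{(2)}, \varphi^{(2)})$ of derived couples, readable off directly from the subquotient formula \eqref{eq:er_page}, and inductively one obtains morphisms $(\psi^{(r)}, \varphi^{(r)})$ on every page which by construction satisfy $\tilde d^r \circ \varphi^{(r)} = \varphi^{(r)} \circ d^r$. Restricting to bidegree $(0, q)$ yields the first assertion; here one must unwind that on $E^r_{0,q} \subseteq \KK_q(\Aa_0)$ the induced map $\varphi^{(r)}$ is the restriction of $\psi_0$, in accordance with the slight abuse of notation in the statement.

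For the isomorphism claim, I would proceed in two stages. First, an induction on $p$ shows that each $\psi_p$ is itself an isomorphism whenever all $\varphi_p$ are: the base case $\psi_0 = \varphi_0$ is immediate (with the convention $\Aa_{-1} = \tilde\Aa_{-1} = 0$ so that $\Ee_0 = \Aa_0$), and the inductive step follows from the five lemma applied to the ladder associated to the short exact sequence $0 \to \Ee_p \to \Aa_p \to \Aa_{p-1} \to 0$. Once both components of the initial morphism of exact couples are known to be isomorphisms, the subquotient description $E^r = \gamma^{-1} \alpha^{\circ(r-1)}(D^1)/\beta(\Ker\, \alpha^{\circ(r-1)})$ shows that every induced $(\psi^{(r)}, \varphi^{(r)})$ remains an isomorphism, giving a term-wise isomorphism of spectral sequences.

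The main obstacle I anticipate is essentially bookkeeping rather than substance: one must be careful about which of the maps $\psi_p$ or $\varphi_p$ induces the morphism on each particular subquotient appearing in $E^r_{p,q}$, and correspondingly about how the subquotient formula restricts to the five-lemma arguments at each stage. Modulo this careful accounting, the statement is a formal consequence of the well-known functoriality of the derived-couple machinery.
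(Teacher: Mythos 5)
Your proposal is correct and follows essentially the same route as the paper: both recast the commuting ladder as a morphism of exact couples, invoke the functoriality of the derived-couple construction (the paper cites Massey) to get commutation with all higher differentials, and establish the isomorphism claim by an induction on $p$ via the five lemma starting from the base case $\Ee_0=\Aa_0$. No gaps to report.
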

\begin{proof}
For a commutative diagram of exact couples (in unraveled form)
\begin{equation}
\begin{tikzcd} 
	\ \arrow[r, "\gamma"]	& D^1 \arrow[d, "\psi"] \arrow[r, "\alpha"] & 	D^1 \arrow[d, "\psi"] \arrow[r, "\beta"] & E^1  \arrow[d, "\varphi"] \arrow[r, "\gamma"] & D^1 \arrow[d, "\psi"] \arrow[r, "\alpha"] & \ \\
	\ \arrow[r, "\tilde{\gamma}"] & \tilde{D}^1 \arrow[r, "\tilde{\alpha}"] & 	\tilde{D}^1  \arrow[r, "\tilde{\beta}"] & \tilde{E}^1  \arrow[r, "\tilde{\gamma}"] & \tilde{D}^1 \arrow[r, "\tilde{\alpha}"] & \
\end{tikzcd}
\end{equation}
one naturally obtains induced homomorphisms relating the derived pages $D^r$, $E^r$ with $\tilde{D}^r$, $\tilde{E}^r$, again in a commutative diagram of the same shape \cite[I.5]{Massey}. For the last statement, note that if the maps $\varphi_p: \KK_q(\Ee_p)\to \KK_q(\tilde{\Ee}_p)$ are isomorphisms, then the maps $\psi_p: \KK_q(\Aa_p)\to \KK_q(\tilde{\Aa}_p)$ are also isomorphisms since $\Aa_0=\Ee_0$ and one can iteratively apply the five lemma from this base case.
\end{proof}

\begin{corollary}\label{Cor:IsoSpec} Let there be a commutative diagram of equivariant morphisms
\begin{equation}
	\begin{tikzcd} 
		C^*\Gg_{\Xi_d} \arrow[d, "\psi_d"] \arrow[r] & C^*\Gg_{\Xi_{d-1}}  \arrow[d, "\psi_{d-1}"] \arrow[r] & ...  \arrow[d] \arrow[r] & C^*\Gg_{\Xi_0}  \arrow[d, "\psi_0"]\\
		C^*\Gg_{\tilde{\Xi}_d} \arrow[r] & C^*\Gg_{\tilde{\Xi}_{d-1}}  \arrow[r] & ...   \arrow[r] & C^*\Gg_{\tilde{\Xi}_0} 
	\end{tikzcd}
\end{equation}
between cofiltrations corresponding to different transversals. If all of the homomorphisms $\psi_n$ induce isomorphisms in $K$-theory, then the induced spectral sequences are term-wise isomorphic.
\end{corollary}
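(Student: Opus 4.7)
The plan is to reduce this to the preceding proposition by constructing compatible maps on the successive kernel ideals of the two co-filtrations and showing they induce isomorphisms in $K$-theory.

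First, I would exploit the commutativity of each square in the given diagram to observe that $\psi_p:C^\ast\Gg_{\Xi_p}\to C^\ast\Gg_{\tilde\Xi_p}$ automatically sends the ideal $\Ee_p := \Ker(C^\ast\Gg_{\Xi_p}\twoheadrightarrow C^\ast\Gg_{\Xi_{p-1}})$ into $\tilde\Ee_p := \Ker(C^\ast\Gg_{\tilde\Xi_p}\twoheadrightarrow C^\ast\Gg_{\tilde\Xi_{p-1}})$, since $\psi_{p-1}$ maps the image of the first composition into the image of the second and hence their kernels correspond. This yields a morphism of short exact sequences of $C^\ast$-algebras, and I denote the induced map on $K$-theory by $\varphi_p:\KK_q(\Ee_p)\to \KK_q(\tilde\Ee_p)$.

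Next, I would assemble the naturality of the six-term exact sequence into a ladder diagram connecting the long exact sequences of $(\Ee_p,C^\ast\Gg_{\Xi_p},C^\ast\Gg_{\Xi_{p-1}})$ to those of $(\tilde\Ee_p,C^\ast\Gg_{\tilde\Xi_p},C^\ast\Gg_{\tilde\Xi_{p-1}})$, the vertical arrows alternating between $\varphi_p$, $(\psi_p)_\ast$ and $(\psi_{p-1})_\ast$. Since by hypothesis $(\psi_p)_\ast$ and $(\psi_{p-1})_\ast$ are isomorphisms in every degree, the five lemma forces $\varphi_p$ to be an isomorphism in every degree and for every $p$.

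Finally, I would invoke the preceding proposition with these $\varphi_p$ being isomorphisms: it immediately supplies a term-wise isomorphism of the induced spectral sequences and, combined with Proposition~\ref{prop:higher_boundary_maps}, it intertwines the higher boundary maps $d^r_{0,\ast}$ with $\tilde d^r_{0,\ast}$, so the two bulk-boundary correspondences agree page by page. The argument is essentially formal once the restricted morphisms are in place; the only subtle point, and the likely obstacle, is the clean verification that $\psi_p$ genuinely restricts to $\Ee_p$ and that the induced ladder of long exact sequences commutes on the connecting homomorphisms, but both are standard consequences of the commutativity of the given diagram together with the naturality of the $K$-theoretic boundary map.
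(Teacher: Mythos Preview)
Your proposal is correct and is precisely the intended argument: the paper states the corollary without proof, treating it as immediate from the preceding proposition, and your three steps (restrict $\psi_p$ to the kernel ideals, apply the five lemma to the resulting ladder of long exact sequences to get that each $\varphi_p$ is an isomorphism, then invoke the proposition) are exactly the expected justification. The only minor remark is that in the proof of the proposition the paper runs the five lemma in the opposite direction (from $\varphi_p$ isomorphisms to $\psi_p$ isomorphisms, using $\Aa_0=\Ee_0$ as a base case), whereas you use it the other way; both directions are valid and yours is the one actually needed here.
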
 

\begin{example}{\rm Consider for example a situation where two lattices have the same symmetry group but different unit cells (e.g. one is a decoration of the other). Then Corollary~\ref{Cor:IsoSpec} can show that the two systems support the same higher-order bulk-boundary correspondences.
}$\Diamond$
\end{example}

\begin{remark}{\rm Corollary~\ref{Cor:IsoSpec} can more generally be used to obtain partial (complete) information about the bulk-boundary maps by relating to simpler (equivalent) groupoids for which it may be more feasible to compute the boundary maps (see subsetion~\ref{Sec:Adiabatic}).
}$\Diamond$
\end{remark}


\begin{remark}
{\rm For any polyhedron representing a $d$-dimensional crystal, one can construct a groupoid algebra $\Gg_{\Xi}$ as in Section~\ref{Sec:ModelAlg} by gluing together patterns which model its corners. A natural filtration is then given by grouping together boundaries of the same linear dimension. Such a filtration makes the groupoid solvable in the sense that there is an equivariant isomorphism (or at least Morita equivalence)
\begin{equation}
\label{eq:building_blocks}
\Ee_r = C^*(\Gg_{\Xi_r\setminus \Xi_{r-1}})\simeq \bigoplus_{m=1}^{N_r} C(\TM^{d-r})\otimes \KM(\ell^2(\Ll_{r,m}))\end{equation}
with compact operators on some discrete $r$-dimensional pattern. Each direct summand represents one disjoint piece of a boundary, e.g. a single facet, hinge or corner. The equivalence arises since one can choose a large enough unit cell compatible with the boundaries and then Fourier transform the remaining translation-invariant directions. Any element of the symmetry group acts either as an automorphism of one or an isomorphism between two of the direct summands. Using Proposition~\ref{prop:reduction_equivariant} one can compute the $K$-theory of $\Ee_r$ by choosing orbit representatives and then computing the $H_{r,m}$-equivariant $K$-theories of each summand $C(\TM^{d-r})\otimes \KM(\ell^2(\Ll_{r,m}))$ where $H_{r,m}$ is its stabilizer group. In many cases the individual pieces are just groups $K^{H_{r,m}}_q(C(\TM^{d-r}))$, i.e. (possibly twisted) equivariant $K$-groups describing $d-r$-dimensional topological crystalline insulators with a point group appropriate for $d-r$ dimensions. In conclusion, the first page of the spectral sequence is usually feasible to compute and consists of finite groups, which is very important since all subsequent pages are then subquotients that admit concrete expressions. However, the higher derivations unfortunately elude systematic computation; spectral sequences are bookkeeping tools to systematically compile information about the connecting maps but those are difficult to compute using their definition alone. In practice one needs to construct a sufficiently large (but finite) collection of Hamiltonians for which one can exactly determine on which boundaries their gaps close and what the associated $K$-theoretic invariants are. We will revisit this issue when we compute examples in Section~\ref{Sec:Adiabatic}.} $\Diamond$
\end{remark}

\begin{remark}
{\rm While we consider only crystals in this work, quasi-crystals can possibly be described using the literally same formalism, simply by using different patterns $\Ll^{\lambda}$ to make the global transversal. This is of interest in particular since those allow point symmetries that are not possible in crystalline systems, such as five-fold rotational symmetry. Likewise, continuous models for topological phases can be realized by similar groupoid algebras. As the formalism of this section applies to general $C^\ast$-algebras, the only thing that changes is that the basic building blocks in the direct sum decompositions \eqref{eq:building_blocks} will be different algebras.} $\Diamond$
\end{remark}

\subsection{Comparison with ordinary bulk boundary correspondence}
In our formalism we have a cofiltration
$$C^*\Gg_{\Xi_d}\to C^*\Gg_{\Xi_{d-1}}\to ... \to C^*\Gg_{\Xi_0}$$
and  we say that a gapped bulk Hamiltonian exhibits higher-bulk boundary correspondence of order $r$ or lower if and only if its class in $\KK_\ast(C^*\Gg_{\Xi_0})$ cannot be lifted to $\KK_\ast(C^*\Gg_{\Xi_r})$. In $K$-theory this obstruction is precisely the boundary map of the exact sequence
$$0 \to C^*\Gg_{\Xi_r\setminus \Xi_0} \to C^*\Gg_{\Xi_r} \to C^*\Gg_{\Xi_0} \to 0,$$ 
thus, it seems that an alternative approach to classification would be to compute the associated boundary map 
\begin{equation}
	\label{eq:totalboundarymap}
	\mathring{\partial}^r_\ast: \KK_\ast(C^*\Gg_{\Xi_0})\to \KK_{\ast-1}(C^*\Gg_{\Xi_r\setminus\Xi_0})
\end{equation}
with all the higher-order boundary states corresponding to non-trivial classes in $\KK_{\ast-1}(C^*\Gg_{\Xi_r\setminus\Xi_0})$. From this point of view the higher-order bulk boundary correspondence is almost the same thing as an ordinary bulk-boundary correspondence. The crucial difference is that in higher-order bulk boundary correspondence we impose an additional gap condition at the codimension $(r-1)$-boundary which allows us to localize the obstruction to a subquotient of $\KK_{\ast-1}(C^*\Gg_{\Xi_r\setminus\Xi_{r-1}})$. Let us sketch how the boundary map above relates to our higher boundary maps and why the latter are more practical:
\begin{enumerate}
	\item[i)]  Without first understanding the K-groups of $C^*\Gg_{\Xi_r\setminus\Xi_0}$ one cannot feasibly compute the boundary map since one will be unable to pinpoint classes in $\KK_{\ast-1}(C^*\Gg_{\Xi_r\setminus\Xi_0})$. Unfortunately, this can be complicated. We know from examples that many different classes in $\KK_{\ast-1}(C^*\Gg_{\Xi_r\setminus\Xi_{r-1}})$ are related by a change of boundary conditions, which means that they must be one and the same class when included in $\KK_{\ast-1}(C^*\Gg_{\Xi_r\setminus\Xi_0})$. Furthermore, the relations to the boundary states $\KK_{\ast-1}(C^*\Gg_{\Xi_p\setminus\Xi_{p-1}})$ for $p\leq r$ are encoded in a spectral sequence which converges to the groups $\KK(C^*\Gg_{\Xi_r\setminus\Xi_0})$, namely the one associated to the cofiltration
	$$C^*\Gg_{\Xi_r\setminus\Xi_0}\to C^*\Gg_{\Xi_{r-1}\setminus\Xi_0} \to ...\to C^*\Gg_{\Xi_1\setminus\Xi_0} \to 0.$$
	The quotients in this cofiltration are the same ideals $\Ee_p=C^*\Gg_{\Xi_p\setminus\Xi_{p-1}}$ as before, except for the quotient at $p=0$ which is now trivial, and therefore the first page is almost the same. Thus our spectral sequence approach can also be used to compute $\KK_{\ast-1}(C^*\Gg_{\Xi_r\setminus\Xi_0})$ up to a finite number of group extension problems involving subquotients of $\KK_{\ast-1}(C^*\Gg_{\Xi_r\setminus\Xi_{r-1}})$.
	\item[ii)] Our higher-order boundary maps give us partial information: By construction, $\mathring{\partial}^r_\ast$ maps a bulk class in $\KK_\ast(C^*\Gg_{\Xi_0})$ to a non-trivial class in $\KK_{\ast-1}(C^*\Gg_{\Xi_r\setminus\Xi_0})$ if and only if one of the boundary maps $\delta^p_\ast$ of order $p\leq r$ maps it to a non-trivial value.
	\item[iii)] Imposing the gap condition at the codimension $(r-1)$-boundaries narrows us down to classes in $\KK_{\ast-1}(C^*\Gg_{\Xi_r\setminus\Xi_0})$ which are in the image of the natural map $\KK_{\ast-1}(C^*\Gg_{\Xi_r\setminus\Xi_{r-1}})\to \KK_{\ast-1}(C^*\Gg_{\Xi_r\setminus\Xi_0})$. Recalling that $\delta^r_\ast$ takes values in a subquotient of $\KK_{\ast-1}(C^*\Gg_{\Xi_r\setminus\Xi_{r-1}})$, every representative of the same coset in $\mathrm{Im}(\delta^r_\ast)$ gives rise to one and the same class in $\KK_{\ast-1}(C^*\Gg_{\Xi_r\setminus\Xi_{0}})$.
\end{enumerate}
The conclusion is that the boundary map \eqref{eq:totalboundarymap} is intimately related to the higher-order bulk-boundary correspondences and the tools required to understand it are essentially the same as we have developed to map the higher-order bulk-boundary correspondences. While $\mathring{\partial}^r_\ast$ is certainly non-trivial in the cases of interest, in our setup of higher-order bulk boundary correspondence where the codimension $(r-1)$ boundaries are assumed to be gapped it cannot provide any additional information over the boundary maps $\delta^p_\ast$, which are at this point both easier to compute and more directly allow us to derive the possible manifestations of the boundary states. The classification by \eqref{eq:totalboundarymap} would, however, be the relevant one if one imposes {\it only} a bulk gap assumption.
\begin{example}
{\rm In the setting of example~\ref{Ex:Chiral2}, $\mathring{\partial}^2_1$ corresponds to a boundary map of the exact sequence $0\to \Ker(\bar{\Qq}\to \Bb) \to \bar{\Qq}\to \Bb \to 0$. The computations given there and the spectral sequence mentioned above yield $K_0^{\ZM_2}(\Ker(\bar{\Qq}\to \Bb))\simeq \ZM \oplus \ZM_2$ as the unique solution to a group extension problem. Under $\mathring{\partial}^2_1$ the bulk classes $[u_\Ff]_1$ and $[u_\Cc]_1$ map to the first respectively second of those generators. This gives us information on possible experimental signatures of the corner states which may remain if the edges are not gapped. In contrast, for non-equivariant K-theory one finds that naturally $K_0(\Ker(\bar{\Qq}\to \Bb)))\simeq K_0(\bar \Ff)$ which shows again that the bulk K-theory can only enforce first-order boundary states at the edges but not corner states.}$\Diamond$
\end{example}

\section{Examples of Higher-Order Correspondences}\label{Sec:Examples}
In this section, we compute the higher-order bulk-boundary maps for additional symmetries and the generalized wire geometry of Section~\ref{Sec:Wire}. 

We will use the infinite square $C^*\Gg_\square$ algebra with $d-2$ infinite translation-invariant direction. In three dimensions our crystal model an infinitely long wire with infinite cross-section. From Section~\ref{Sec:Wire}, we have a cofiltration of $C^\ast$-algebras \eqref{Eq:Ext18}
\begin{equation}
\Qq  \stackrel{\mathfrak p^2}{\twoheadrightarrow} \Pp  \stackrel{\mathfrak p^1}{\twoheadrightarrow} \Bb \twoheadrightarrow 0
\end{equation}
with $\Bb= C^*\ZM^d \simeq C(\TM^{d})$ and kernels
\begin{equation}\label{Eq:FC}
\begin{aligned}
\Ff&=\Ker(\Pp\twoheadrightarrow \Bb) \simeq \bigoplus\nolimits_{\lambda\in \ZM_4} \Ff_\lambda, \quad
&\Ff_\lambda &\simeq C^*\ZM^{d-1} \otimes \KM(\ell^2(\NM))\\
\Cc&=\Ker(\Qq \twoheadrightarrow \Pp) \simeq \bigoplus\nolimits_{\lambda\in \ZM_4} \Cc_\lambda, \quad
&\Cc_\lambda &\simeq C^*\ZM^{d-2}\otimes \KM(\ell^2(\NM))
\end{aligned}
\end{equation}
As before, the equivariant $K$-functors will be specified by $\KK_\ast$, while the non-equivariant ones by $K_\ast$.

First, we demonstrate that there are no higher-order topological insulators in the non-equivariant case. We then introduce tools developed in our recent work~\cite{OSP2024} to compute the image of the second-order boundary maps $\delta^2_\ast$ for three interesting symmetries, namely, inversion, $C_2T$ and $C_4T$ symmetries. This fully classifies all possible topological hinge currents of three-dimensional topological insulators with those symmetries. On the bulk side, the group ${\rm Dom} \, \delta^2_\ast/ {\rm Ker\, \delta^2_\ast} \simeq {\rm Im}\, \delta^2_\ast$ divides the bulk gappable Hamiltonians in distinct classes, and all Hamiltonians from one such class generate identical boundary effects (up to stabilization). We provide representative models for each of these classes, which completes the classification of order-2 topological insulators for the square geometry and mentioned symmetries. A discussion of the cube geometry is supplied at the end of the section.



\subsection{Triviality of the non-equivariant case}\label{Sec:NoEqui}
The algebras $\Bb$, $\Ff_\lambda$, $\Cc_\lambda$ are Morita-equivalent to algebras of continuous functions on tori.
While it is well-known that
\begin{equation}
K_\ast(C^*\ZM^n)=K_\ast(\TM^n)\simeq  \ZM^{2^{n-1}}, \quad *=0,1
\end{equation}
choosing a consistent labeling of the $K$-group elements requires extra care, because of the different orientations of the facets and corners. Nevertheless, this can be achieved as follows. Since the groups are torsion-free, we can label all generators using the numerical pairings with cyclic cohomology. It is known that the cyclic cohomology of $C(\TM^n)$ is spanned by the so-called Chern cocycles, which we describe now. For $\Aa\in \{\Bb, \Ff_\lambda, \Cc_\lambda\}$, one has a natural (densely defined) trace $\Tt_\Aa$ induced from the Haar measure on $C(\TM^n)$ and the trace on the compact operators. There is an $\RM^d$-action on $C^\ast \Gg_\square$ acting via
\begin{equation}
(\Theta_x f)(g, S) = e^{2\pi \imath (x\cdot g)} f(g,S), \quad x\in \RM^d, \ g \in S, \ S\in \Xi_{\square}
\end{equation}
on elements of the convolution algebra $C_c\Gg_{\Xi_{\square}}$. Since all of the relevant groupoid algebras $\Qq$, $\Pp$ and $\Bb$ as well as their ideals $\Ff$, $\Cc$ come from invariant subgroupoids of $\Gg_\square$ this also defines $\RM^d$-actions on them in such a way that all natural homomorphisms between them are equivariant. This makes sure that spatial directions between the different algebras are labeled consistently. For any tuple $v=(v_1,...,v_n)$ made up of unit vectors in $\RM^d$ and $(n+1)$-tuple $(f_0,\ldots,f_n)$ made up of elements of  a suitable dense subalgebra of $\Aa$, one can define the Chern cocycle
\begin{equation}
\Ch_{\Aa, v}(f_0,...,f_n) 
\;=\; 
c_{\Aa,n} \,
\sum_{\rho \in S_n} (-1)^\rho\, \Tt_\Aa\big(f_0 \nabla_{v_{\rho(1)}} f_1 \ldots  \nabla_{v_{\rho(n)}} f_n\big)
\;,
\end{equation}
with some normalization constants and the densely defined derivation $\nabla_v$ in the direction $v$ of the action $\Theta$. Using a picture of $K$-theory where $K_0$ is represented by projections and $K_1$ by unitaries, one has well-defined pairings between $K_{q}(\Aa)=K_{q\,\mathrm{mod}\,2}(\Aa)$ and the cyclic cocycles of parity $n\,\mathrm{mod}\,2 = q\,\mathrm{mod}\, 2$ via
\begin{equation}
\langle [x]_q, [\Ch_{\Aa,v}]\rangle = \begin{cases}
	\Ch_{\Aa,v}(x,...,x) & q\text{ even}\\
	\Ch_{\Aa,v}(x^{-1},x,x^{-1},...,x,x^{-1}) & q \text{ odd}
\end{cases}
\end{equation}
We assume the choice of normalization constants is made as in \cite{ProdanSpringer2016} such that all non-trivial pairings will exactly have $\ZM$ as their range whenever $v$ made up of unit vectors (see \cite[Section 5.7]{ProdanSpringer2016}). We will now identify tuples of unit vectors $v_I = (e_{i_1},...,e_{i_n})$, $i_1<i_2<...<i_n$ with subsets $I=\{i_1,...,i_n\}$ of elements in $\{1,...,d\}$ and write $\Ch_{\Aa,I}=\Ch_{\Aa,v_I}$.  Let us then say a class $x_{\Aa,I}\in K_{q}(\Aa)$ is dual to $\Ch_{\Aa,I}$ if 
\begin{equation}
\langle [x_{\Aa,I}]_q, [\Ch_{\Aa,J}]\rangle =
	\delta_{I,J}
\end{equation}
for all subsets $I,J\subset \{1,...,d\}$  whose length has the same parity as $q$. For the torus such dual classes exist and are unique, hence we can label (assuming $d\geq 2$)
\begin{enumerate}
	\item[(i)] A basis of $K_{q}(\Bb)$ by the $2^{d-1}$ even respectively odd subsets of $\{1,...,d\}$.
	\item[(ii)] A basis of $K_{q}(\Ff_\lambda)$ by the $2^{d-2}$ even respectively odd subsets of $\{1,...,d\}$ that exclude the direction parallel to the normal vector $n_\lambda$ of the face $\lambda$ (see Fig.~\ref{fig:square}a).
	\item[(iii)] A basis of $K_{q}(\Cc_\lambda)$ by the $2^{d-3}$ even respectively odd subsets of $\{3,...,d\}$ if $d\geq 3$; for $d=2$ there is only the even subset $I=\emptyset$ and no odd subsets.
\end{enumerate}



We can now express the first-order bulk-boundary map $\partial^{\Bb\Ff}_\ast:K_\ast(\Bb)\to K_{\ast-1}(\Ff)$. Since $\Ff$ is a direct sum of four algebras one can treat them individually writing $\partial^{\Bb\Ff}_\ast=\oplus_\lambda \partial^{\Bb\Ff_\lambda}_\ast$ and one has: 
\begin{proposition}
\label{prop:bulk_face}
For any tuple $v$ of directions in $\RM^d$ with the opposite parity as $q\in \{0,1\}$, the Chern numbers relate under the boundary map by
\begin{equation} 
(-1)^q\langle \partial^{\Bb \Ff_\lambda}_q[x]_q, [\Ch_{{\Ff_\lambda}, v}]\rangle= \langle [x]_{q-1}, [\Ch_{\Bb, v\times n_\lambda}]\rangle,
\end{equation}
which uniquely determines $\partial^{\Bb \Ff_\lambda}_i$.
\end{proposition}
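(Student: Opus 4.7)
The plan is to reduce the computation of $\partial^{\Bb\Ff_\lambda}_q$ to the standard half-space Toeplitz extension corresponding to the single face $\lambda$, and then invoke the well-known identity relating its boundary map to Chern-number reduction in the normal direction. First, using the pullback description of $\Pp$ recalled in Remark~\ref{Re:Park}, there is a surjective $\ast$-morphism $q_\lambda: \Pp \twoheadrightarrow \Hh_\lambda$ onto the half-space algebra supporting the face $\lambda$, and it restricts on ideals to the canonical projection $\Ff = \bigoplus_\mu \Ff_\mu \twoheadrightarrow \Ff_\lambda$. This yields a commutative diagram of short exact sequences
\begin{equation*}
\begin{tikzcd}
0 \arrow[r] & \Ff \arrow[r] \arrow[d, two heads] & \Pp \arrow[r, "\mathfrak p^1"] \arrow[d, "q_\lambda", two heads] & \Bb \arrow[r] \arrow[d, equal] & 0 \\
0 \arrow[r] & \Ff_\lambda \arrow[r] & \Hh_\lambda \arrow[r] & \Bb \arrow[r] & 0
\end{tikzcd}
\end{equation*}
whose lower row is precisely the standard half-space Toeplitz extension in the direction $n_\lambda$. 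By the naturality of the connecting map, $\partial^{\Bb\Ff_\lambda}_q$ agrees with the Toeplitz boundary map of the bottom row.

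Next, I would invoke the Chern-number reduction identity for the half-space Toeplitz extension. For any unit vector $v_i \perp n_\lambda$, the derivation $\nabla_{v_i}$ descends consistently to both $\Hh_\lambda$ and $\Ff_\lambda$, while $\nabla_{n_\lambda}$ becomes the ``commutator with the half-space indicator'' that generates the boundary data. Following the explicit integration-by-parts computation of \cite[Ch.~5]{ProdanSpringer2016}, which manipulates a finite alternating sum of trace expressions and uses that $\Tt_{\Ff_\lambda}$ is obtained from $\Tt_\Bb$ by taking residues in the normal direction, one obtains for any representative $x$ of $[x]_q \in K_q(\Bb)$ and any tuple $v$ of unit vectors orthogonal to $n_\lambda$ of parity opposite to $q$ an identity which, in pairing form, reads
\begin{equation*}
(-1)^q\,\langle \partial^{\Bb\Ff_\lambda}_q[x]_q,\, [\Ch_{\Ff_\lambda,v}]\rangle
\;=\; \langle [x]_q,\, [\Ch_{\Bb,\,v\times n_\lambda}]\rangle.
\end{equation*}

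Finally, for uniqueness one notes that, by the labeling set up in items (i)--(iii) just before the proposition, the cocycles $\{\Ch_{\Ff_\lambda,v}\}$ with $v$ ranging over subsets of $\{1,\ldots,d\}\setminus\{n_\lambda\}$ of the appropriate parity give, via their numerical pairings, a basis of $\operatorname{Hom}(K_{q-1}(\Ff_\lambda),\ZM)$ dual to the chosen basis of $K_{q-1}(\Ff_\lambda)$. Hence the system of numbers $\{\langle \partial^{\Bb\Ff_\lambda}_q[x]_q,[\Ch_{\Ff_\lambda,v}]\rangle\}_v$ recovers $\partial^{\Bb\Ff_\lambda}_q[x]_q$ uniquely, proving the last sentence of the proposition.

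The step I expect to consume the most effort is the integration-by-parts computation that produces the $(-1)^q$ and shows that the unit-vector normalizations $c_{\Aa,n}$ on $\Bb$ and on $\Ff_\lambda$ combine correctly under the suspension in the extra direction $n_\lambda$; the rest of the argument is a matter of assembling standard naturality and duality facts. One pleasant feature of the approach is that, once the bookkeeping is done for one face, the analogous statement for every other face $\lambda'$ is obtained simply by relabeling, so no separate work is needed per summand of $\Ff$.
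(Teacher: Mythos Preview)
Your proposal is correct and follows essentially the same line as the paper: reduce to the single half-space Toeplitz extension $0\to\Ff_\lambda\to\Hh_\lambda\simeq C^\ast\ZM^{d-1}\otimes\mathfrak T\to\Bb\to 0$ and then invoke the known duality of Chern cocycles under its boundary map (citing \cite{KellendonkRMP2002,ProdanSpringer2016,SStSpringer2022}). Your naturality diagram and explicit uniqueness argument spell out steps the paper leaves implicit; one minor point is that Remark~\ref{Re:Park} concerns the quarter geometry $\bar\Pp$, whereas here you need the analogous surjection $\Pp\twoheadrightarrow\Hh_\lambda$ for the square geometry, which follows directly from restricting the groupoid to the closed invariant subset $\Xi_{\lambda,1}^\infty\subset\Xi_\square^\infty$.
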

\begin{proof} The exact sequence connecting $\Bb$ and $\Ff_\lambda$ has the form
\begin{equation}
0 \to \Ff_\lambda \to \Hh_\lambda \simeq C^*\ZM^{d-1}\otimes \mathfrak{T} \to \Bb \to 0,
\end{equation}
where $\mathfrak{T}$ is the Toeplitz algebra. Under this isomorphism a translation in the direction $n_\lambda$ corresponds to the co-isometry which generates the Toeplitz extension. The identity for the Chern cocycles is then a well-known duality for cyclic cocycles under under this exact sequence (see \cite{KellendonkRMP2002,ProdanSpringer2016,SStSpringer2022} for treatments at various level of detail).
\end{proof}

The boundary maps $\partial^{\Ff\Cc}_\ast=\oplus_\lambda \partial^{\Ff\Cc_\lambda}_\ast$ are only slightly more complicated:
\begin{proposition}
With the same notation as Proposition~\ref{prop:bulk_face} one has
	\begin{equation}
		\label{eq:facecorner}
		\begin{split}
			(-1)^q\langle \partial_q^{\Ff \Cc_\lambda}[(x_1,x_2,x_3,x_4)]_q, [&\Ch_{{\Cc_\lambda}, v}]\rangle \\ =\langle [x_\lambda]_q, [&\Ch_{{\Ff_\lambda}, v\times n_{\lambda-1}}]\rangle + \langle [x_{\lambda-1}]_q, [\Ch_{{\Ff_{\lambda-1}}, v\times n_{\lambda}}]\rangle.
		\end{split}
	\end{equation}
\end{proposition}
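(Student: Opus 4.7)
The strategy is to localize the connecting map near each individual hinge, which reduces the problem to the quarter-space setting of Section~\ref{Sec:Quarter}, and then to express the quarter-space boundary map as an iterate of the half-space Toeplitz map treated in Proposition~\ref{prop:bulk_face}.

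For the localization step, consider the natural surjective homomorphism $\Qq \twoheadrightarrow \bar\Qq$ that restricts the transversal $\Xi_\square$ to the closed $\Gg_\square$-invariant subset $\Xi_{\Ll^\lambda}$ associated with the corner $\lambda$. Under this map, the corner summand $\Cc_\lambda$ is sent isomorphically onto $\bar\Cc \subset \bar\Qq$, while each face summand $\Ff_\mu$ is sent isomorphically onto one of the two local half-space face ideals inside $\bar\Pp$ if $\mu \in \{\lambda, \lambda-1\}$ and to zero otherwise, because the corresponding face does not touch the $\lambda$-th hinge. Naturality of the connecting map then shows that $\partial^{\Ff \Cc_\lambda}$ only receives contributions from $x_\lambda$ and $x_{\lambda-1}$, giving the shape of the right-hand side of~\eqref{eq:facecorner}.

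For each of the two remaining contributions, the computation reduces to a single Toeplitz-type boundary map. Using the pullback description $\bar\Pp \simeq \Hh_\lambda \times_\Bb \Hh_{\lambda-1}$ from Remark~\ref{Re:Park}, the face ideal $\Ff_\mu \subset \bar\Pp$ identifies with $\Ker(\bar\Pp \twoheadrightarrow \Hh_{\mu'})$, where $\mu'$ denotes the other index in $\{\lambda,\lambda-1\}$. The preimage of $\Ff_\mu$ inside $\bar\Qq$ under the surjection $\bar\Qq \twoheadrightarrow \bar\Pp$ is a sub-$C^\ast$-algebra $\Jj_\mu$ fitting into a short exact sequence $\bar\Cc \rightarrowtail \Jj_\mu \twoheadrightarrow \Ff_\mu$, and this sequence is precisely the Toeplitz extension in the direction $n_{\mu'}$ normal to the other face (with coefficients in $\Ff_\mu$). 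By naturality, the restriction of $\partial^{\bar\Pp \bar\Cc}$ to $\Ff_\mu$ coincides with the connecting map of this Toeplitz extension.

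Applying Proposition~\ref{prop:bulk_face} to this second-level Toeplitz extension yields
\begin{equation*}
(-1)^q \langle \partial^{\Ff_\mu \Cc_\lambda}[x_\mu]_q, [\Ch_{\Cc_\lambda, v}]\rangle \;=\; \langle [x_\mu]_q, [\Ch_{\Ff_\mu, v \times n_{\mu'}}]\rangle
\end{equation*}
for $\mu \in \{\lambda,\lambda-1\}$, and summing the two contributions produces~\eqref{eq:facecorner}. The main obstacle I anticipate is the careful bookkeeping of normalization constants and signs in the iterated Toeplitz construction: while Proposition~\ref{prop:bulk_face} handles a single Toeplitz cut, applying it inside the quarter-space requires verifying that the derivations $\nabla_{n_\lambda}$ and $\nabla_{n_{\lambda-1}}$ decouple appropriately under the pullback, and that the $(-1)^q$ sign propagates correctly through the iteration.
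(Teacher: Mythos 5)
Your proposal is correct and follows essentially the same route as the paper: both identify the single-face-to-corner extension $\bar\Cc \rightarrowtail \KM(\ell^2(\NM))\otimes C^\ast\ZM^{d-2}\otimes\mathfrak{T} \twoheadrightarrow \Ff_\mu$ as a Toeplitz extension in the direction $n_{\mu'}$ normal to the other adjacent face, apply the Chern-number duality of Proposition~\ref{prop:bulk_face} to it, and combine the two adjacent faces' contributions by naturality of the connecting map (the paper maps the local extension into the global one $\Cc\rightarrowtail\mathring\Qq\twoheadrightarrow\Ff$ by inclusion, whereas you restrict from the square to the quarter geometry first — the same naturality argument read in the opposite direction). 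The sign and normalization bookkeeping you flag is handled in the paper exactly as in Proposition~\ref{prop:bulk_face}, by citing the known cyclic-cohomology duality for Toeplitz extensions.
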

\begin{proof}
The exact sequence of groupoid algebras relating each face algebra $\Ff_\lambda$ to the adjacent corner algebra $\Cc_\lambda$ is isomorphic to an exact sequence
\begin{equation}
	0 \to \Cc_\lambda \to C^\ast\ZM^{d-2}\otimes \KM(\ell^2(\NM)) \otimes \mathfrak{T} \to \Ff_\lambda \to 0.
\end{equation}
In this extension the translation in direction $n_{\lambda-1}$ plays the role of the co-isometry generating the Toeplitz extension which results analogously to Proposition~\ref{prop:bulk_face} in the expression
\begin{equation}
(-1)^q\langle \partial_q^{\Ff_\lambda\Cc_\lambda}([x_\lambda]_q), [\Ch_{\Cc_\lambda, v}]\rangle =\langle [x_\lambda]_q, [\Ch_{\Ff_\lambda, v\times n_{\lambda-1}}]\rangle
\end{equation}
for its boundary map. Denoting $\mathring{\Qq}=\Ker(\Qq\to \Bb)$ one has a commutative diagram
\begin{equation}
\begin{tikzcd}
0 \arrow[r] & \Cc_\lambda \arrow[r] \arrow[d]& C^*\ZM^{d-2}\otimes \KM(\ell^2(\NM)) \otimes \mathfrak{T} \arrow[r] \arrow[d]& \Ff_\lambda \arrow[r] \arrow[d]& 0\\
0 \arrow[r] & \Cc \arrow[r] & \mathring{\Qq} \arrow[r] & \Ff \arrow[r] & 0
\end{tikzcd}
\end{equation}
for each $\lambda$. One has a similar diagram which relates the other adjacent face algebra $\Ff_{\lambda-1}$ with the same corner $\Cc_{\lambda}$ resulting in a boundary map characterized by
\begin{equation}
(-1)^q\langle \partial_q^{\Ff_{\lambda-1}\Cc_\lambda}([x_{\lambda-1}]_q), [\Ch_{\Cc_{\lambda}, v}]\rangle =\langle [x_\lambda]_q, [\Ch_{\Ff_{\lambda-1}, v\times n_{\lambda}}]\rangle.
\end{equation}
The total boundary map is then due to additivity under direct sums given by $$\partial_q^{\Ff\Cc}=\oplus_\lambda \partial_q^{\Ff\Cc_\lambda}=\oplus_\lambda (\partial_q^{\Ff_\lambda\Cc_\lambda}+\partial_q^{\Ff_{\lambda-1}\Cc_\lambda}).$$
\end{proof}
One should note that if one wants to label the Chern cocycles in terms of sets of standard directions $I$, as we did above, some of the contributions would acquire minus signs due to the algebraic property $\Ch_{\Aa, v\times (-n)}=-\Ch_{\Aa, v\times n}$ of the Chern cocycles

\begin{figure}
	\label{fig:square}
	\centering
	\begin{tikzpicture}
		\begin{scope}[shift={(-3,0)},scale=0.5]
			{
				\coordinate (A) at (-2,-2);
				\coordinate (B) at (2,-2);
				\coordinate (C) at (2,2);
				\coordinate (D) at (-2,2);
				
				\draw[thin, dashed] ($(A)!-0.5!(B)$) -- ($(B)!-0.5!(A)$);
				\draw[thin, dashed] ($(B)!-0.5!(C)$) -- ($(C)!-0.5!(B)$);
				\draw[thin, dashed] ($(C)!-0.5!(D)$) -- ($(D)!-0.5!(C)$);
				\draw[thin, dashed] ($(D)!-0.5!(A)$) -- ($(A)!-0.5!(D)$);
				
				\fill[gray!30] (A) -- (B) -- (C) -- (D) -- cycle;
				
				\draw[->] (0,-2) -- +(0,-0.5) node[below] {$n_1$};
				\draw[->] (2,0) -- +(0.5,0) node[right] {$n_2$};
				\draw[->] (0,2) -- +(0,0.5) node[above] {$n_3$};
				\draw[->] (-2,0) -- +(-0.5,0) node[left] {$n_4$};
				
				\draw (A) node[below left] {$1$};
				\draw (B) node[below right] {$2$};
				\draw (C) node[above right] {$3$};
				\draw (D) node[above left] {$4$};

				\draw (-3,-3) node[below left] {$a)$};
			}
		\end{scope}
		\begin{scope}[shift={(3,0)},scale=0.5]
			{
				\coordinate (A) at (-2,-2);
				\coordinate (B) at (2,-2);
				\coordinate (C) at (2,2);
				\coordinate (D) at (-2,2);
				
				\draw[thin, dashed] ($(A)!-0.5!(B)$) -- ($(B)!-0.5!(A)$);
				\draw[thin, dashed] ($(B)!-0.5!(C)$) -- ($(C)!-0.5!(B)$);
				\draw[thin, dashed] ($(C)!-0.5!(D)$) -- ($(D)!-0.5!(C)$);
				\draw[thin, dashed] ($(D)!-0.5!(A)$) -- ($(A)!-0.5!(D)$);
				
				
				\draw[line width=0.5mm] (A) -- (B) {};
				\draw[line width=0.5mm] (B) -- (C) {};
				\draw[line width=0.5mm] (C) -- (D) {};
				\draw[line width=0.5mm] (D) -- (A) {};
				\draw (0,-2) node[below] {$w$};
				\draw (2,0)  node[right] {$x$};
				\draw (0,2) node[above] {$y$};
				\draw (-2,0)  node[left] {$z$};
				
				\draw (A) node[below left] {$w-z$};
				\draw (B) node[below right] {$x-w$};
				\draw (C) node[above right] {$y-x$};
				\draw (D) node[above left] {$z-y$};
				
				\draw (-3,-3) node[below left] {$b)$};
			}
		\end{scope}
	\end{tikzpicture}
	\caption{a) Labeling of the corners and half-spaces with their outward normals. b) Pictorial description of the face-corner correspondence: At the faces we mark the Chern numbers $\Ch_{{\Ff_\lambda},v\times n_{\lambda+1}}$ of some class in $K_i(\Ff)$ and at the corners the Chern numbers $(-1)^q \Ch_{{\Cc_\lambda},v}$ of its image under $\partial^{\Ff\Cc}$.
	}
\end{figure}

It is now easy to compute the range of the map $\partial^{\Ff \Cc}$:

\begin{proposition}
In complex (non-equivariant) $K$-theory the map $\partial^{\Ff \Cc}_q: K_q(\Ff)\to K_{q-1}(\Cc)$ has the range
\begin{equation}
\mathrm{Im}\, \partial^{\Ff \Cc}_q \simeq K_{q-1}(\Cc_1)\oplus K_{q-1}(\Cc_2) \oplus K_{q-1}(\Cc_3)
\end{equation}
and 
$K_q(\Cc)/\mathrm{Im} \, \partial^{\Ff \Cc}_q \simeq K_{q-1}(\Cc_4)$.
\end{proposition}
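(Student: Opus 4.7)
The plan is to reduce everything to a numerical problem on Chern pairings. Since each of the algebras $\Cc_\lambda \simeq C^\ast \ZM^{d-2}\otimes \KM$ and $\Ff_\lambda \simeq C^\ast\ZM^{d-1}\otimes\KM$ is Morita-equivalent to continuous functions on a torus, their $K$-groups are free abelian and a class in $K_{q-1}(\Cc)$ is uniquely determined by its $4\cdot 2^{d-3}$ Chern numbers $\langle\,\cdot\,,[\Ch_{\Cc_\lambda,I}]\rangle$ for $\lambda\in\ZM_4$ and $I\subset\{3,\ldots,d\}$ of parity equal to that of $q-1$. Thus it suffices to compute the image of $\partial^{\Ff\Cc}_q$ at the level of Chern numbers using formula~\eqref{eq:facecorner}, and then identify this image with the subgroup of $K_{q-1}(\Cc)$ on which a suitable projection vanishes.

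The key observation is a telescoping identity. Each face $\Ff_\mu$ appears in \eqref{eq:facecorner} twice: once in the contribution to $\Cc_\mu$ with direction $n_{\mu-1}$, and once in the contribution to $\Cc_{\mu+1}$ with direction $n_\mu$, where by the latter I mean the term indexed by $\lambda-1=\mu$. Since opposite faces of the square have opposite outward normals, $n_{\mu+1}=-n_{\mu-1}$, and by the anti-symmetry $\Ch_{\Ff_\mu,v\times(-n)}=-\Ch_{\Ff_\mu,v\times n}$ of Chern cocycles in their direction argument, the two contributions of $\Ff_\mu$ to the corner Chern numbers sum to zero. Summing the identity \eqref{eq:facecorner} over $\lambda=1,\ldots,4$ therefore yields, for every $I\subset\{3,\ldots,d\}$ of appropriate parity,
\begin{equation*}
\sum_{\lambda=1}^{4} \langle\,\partial^{\Ff\Cc_\lambda}[(x_1,x_2,x_3,x_4)]_q,\,[\Ch_{\Cc_\lambda,I}]\,\rangle \;=\;0.
\end{equation*}
Hence $\mathrm{Im}\,\partial^{\Ff\Cc}_q$ is contained in the subgroup $N\subset K_{q-1}(\Cc)$ consisting of tuples whose Chern numbers sum to zero at each index $I$.

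To show the reverse inclusion, I would fix $I\subset\{3,\ldots,d\}$ and restrict attention to the four face Chern numbers $c_\mu^I:=\langle[x_\mu]_q,[\Ch_{\Ff_\mu,I\cup\{n_{\mu-1}\}}]\rangle$, one per face. Writing $y_\lambda$ for the corresponding corner Chern numbers, \eqref{eq:facecorner} and the sign identity above give
\begin{equation*}
(-1)^q y_\lambda \;=\; c_\lambda^I - c_{\lambda-1}^I, \qquad \lambda\in\ZM_4,
\end{equation*}
a circulant system whose matrix has rank $3$ and image $\{(y_1,y_2,y_3,y_4):y_1+y_2+y_3+y_4=0\}$. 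Thus any element of $N$ lies in the image, and $\mathrm{Im}\,\partial^{\Ff\Cc}_q=N$. Projecting $N$ onto the first three summands is an isomorphism onto $K_{q-1}(\Cc_1)\oplus K_{q-1}(\Cc_2)\oplus K_{q-1}(\Cc_3)$, while the quotient map $K_{q-1}(\Cc)\twoheadrightarrow K_{q-1}(\Cc_4)$ induces an isomorphism $K_{q-1}(\Cc)/N\simeq K_{q-1}(\Cc_4)$, completing the proof.

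The one step that requires real care is the sign bookkeeping in the telescoping argument: one must verify that the anti-symmetry $\Ch_{\Aa,v\times(-n)}=-\Ch_{\Aa,v\times n}$ combines with the cyclic indexing $\lambda\mapsto \lambda-1$ in \eqref{eq:facecorner} to produce an exact cancellation rather than a doubling. Once this is pinned down, the remaining surjectivity assertion is a finite-dimensional linear algebra exercise, and no extension problems arise because all relevant groups are torsion-free.
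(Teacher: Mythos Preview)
Your proposal is correct and follows essentially the same approach as the paper: both arguments reduce to Chern-number bookkeeping via formula~\eqref{eq:facecorner}, observe that the four corner Chern numbers telescope to zero (the paper does this pictorially via Figure~\ref{fig:square}, you via the circulant system $(-1)^q y_\lambda = c_\lambda^I - c_{\lambda-1}^I$), and then note that the remaining three corner values are unconstrained. There is a harmless indexing slip where you write ``direction $n_\mu$'' for the contribution of face $\mu$ to corner $\mu+1$ --- it should read $n_{\mu+1}$ --- but you immediately use the correct relation $n_{\mu+1}=-n_{\mu-1}$, so the argument goes through unchanged.
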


\begin{proof} Any class in $K_i(\Cc)$ is determined by $4$ sets of corner Chern numbers $\Ch_{\Cc_\lambda, I}$, $\lambda \in \ZM_4$, $I\subset \{3,...,d\}$. Any assignment of corner Chern numbers to the corners $1,2,3$ can be re-produced from a pre-image under $\partial^{\Ff\Cc}$ by choosing appropriate face Chern numbers. However, the sums
\begin{equation}
\sum_{\lambda=1}^4 \langle \partial^{\Ff\Cc_\lambda}[(x_1,x_2,x_3,x_4)]_q, [\Ch_{{\Cc_\lambda}, I}]\rangle
\end{equation}
are constrained to be zero for each $I\subset \{3,...,d\}$ (see Figure~\ref{fig:square}) which determines the Chern numbers of the remaining corner. \end{proof}

In principle the sums of the corner Chern numbers can therefore be independent of the lift of any fixed bulk class $[x]_q\in K_q(\Bb)\cap \Ker \, \partial^{\Bb\Ff}_q$. Nevertheless, we have the negative result:
\begin{proposition}
The bulk-corner map $\delta^{\Bb \Cc}_q$ {\rm (}$q=0,1${\rm )} is the zero-map.
\end{proposition}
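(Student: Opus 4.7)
To show $\delta^{\Bb\Cc}_q=0$, the plan is to characterize its domain $\mathrm{Im}\,\mathfrak{p}^1_q=\ker\partial^{\Bb\Ff}_q\subseteq K_q(\Bb)$ and then exhibit for every class in it a lift all the way to $K_q(\Qq)$, so that by the very definition of $\delta^{\Bb\Cc}_q$ the map is forced to vanish. By Proposition~\ref{prop:bulk_face}, since the Chern cocycles $[\Ch_{\Ff_\lambda,v}]$ pair non-degenerately with $K_\ast(\Ff_\lambda)$ and since the face normals $n_\lambda$ run over $\pm e_1,\pm e_2$ (with $\Ch_{\Bb,v\times(-n)}=-\Ch_{\Bb,v\times n}$), a bulk class $[x]_q$ lies in $\ker\partial^{\Bb\Ff}_q$ if and only if all of its Chern numbers $\langle[x]_q,[\Ch_{\Bb,I}]\rangle$ with $I\cap\{1,2\}\neq\emptyset$ vanish.

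Next I would exploit translation invariance of the wire in the $d-2$ directions transverse to the cross-section. At the groupoid level this yields a factorization $\Gg_\square\simeq\Gg_\square^0\times\ZM^{d-2}$, where $\Gg_\square^0$ denotes the groupoid associated with the two-dimensional square pattern, inducing compatible tensor factorizations
\begin{equation*}
\Qq\simeq\Qq^0\otimes C^\ast\ZM^{d-2},\quad \Pp\simeq\Pp^0\otimes C^\ast\ZM^{d-2},\quad \Bb\simeq C^\ast\ZM^2\otimes C^\ast\ZM^{d-2},
\end{equation*}
and similarly for $\Ff$ and $\Cc$, all respected by $\mathfrak{p}^1$ and $\mathfrak{p}^2$. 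Since $K_\ast(C^\ast\ZM^{d-2})$ is free abelian the Künneth theorem gives $K_q(\Bb)\simeq\bigoplus_i K_i(C^\ast\ZM^2)\otimes K_{q-i}(C^\ast\ZM^{d-2})$ with Chern numbers factoring multiplicatively as $\Ch_{\Bb,I}([a]\otimes[c])=\pm\Ch_{C^\ast\ZM^2,I\cap\{1,2\}}([a])\cdot\Ch_{C^\ast\ZM^{d-2},I\cap\{3,\ldots,d\}}([c])$. A quick inspection of the four standard generators of $K_\ast(C^\ast\ZM^2)$ shows that the classes $[a]$ whose Chern numbers $\Ch_{C^\ast\ZM^2,J}$ vanish on every non-empty $J\subseteq\{1,2\}$ are precisely the integer multiples of the unit class $[1]_0$. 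Consequently $\ker\partial^{\Bb\Ff}_q$ is exactly the image of the injection $[c]\mapsto[1]_0\otimes[c]$ from $K_q(C^\ast\ZM^{d-2})$ into $K_q(\Bb)$.

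The lift is then immediate: since $\mathfrak{p}^1\circ\mathfrak{p}^2$ is unital and respects the tensor factorization, every class of the form $[1]_0\otimes[c]$ lifts to $[1_{\Qq^0}]_0\otimes[c]\in K_q(\Qq)$. Taking $\tilde{x}=\mathfrak{p}^2_q([1_{\Qq^0}]_0\otimes[c])\in K_q(\Pp)$ as the intermediate lift, we have $\tilde{x}\in\mathrm{Im}\,\mathfrak{p}^2_q=\ker\partial^2_q$, hence $\partial^2_q(\tilde{x})=0$ and $\delta^{\Bb\Cc}_q$ represents the trivial coset. Physically, this just says that a $\ZM^2$-translation-invariant bulk model takes pure product form on the wire geometry, so its spectrum is inherited from the $(d-2)$-dimensional factor and no corner modes can open. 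The main non-routine step is verifying that the groupoid-level product decomposition passes cleanly through every ideal of the co-filtration with matching Künneth behaviour, but this reduces to the $\ZM^{d-2}$-translations acting freely on $\Xi_\square$, which is built into the construction of the wire geometry.
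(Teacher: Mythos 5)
Your proof is correct and follows essentially the same route as the paper: both identify $\ker\partial^{\Bb\Ff}_q$ with the image of $K_q(C^\ast\ZM^{d-2})\to K_q(\Bb)$ and then observe that such classes lift all the way to $K_q(\Qq)$ because they do not involve the two confined directions. The only cosmetic difference is that you package the lifting step via a K\"unneth/tensor factorization of the whole co-filtration, whereas the paper simply exhibits representatives acting trivially on the $\ell^2(\ZM^2)$ factor, whose restrictions to the half- and quarter-spaces remain projections/unitaries.
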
 
\begin{proof} The kernel of $\partial^{\Bb\Ff}_q$ is spanned by precisely those basis elements which are dual to the Chern cocycles $\Ch_{\Bb,I}$, where $I$ contains neither $1$ nor $2$. This span is nothing but the image of the inclusion $K_q(C(\TM^{d-2}))\to K_q(\Bb)$. Those generators can be represented by projections/unitaries that act trivially on the first factor of the decomposition $\ell^2(\ZM^d)=\ell^2(\ZM^2)\otimes \ell^2(\ZM^{d-2})$ and therefore their restrictions to any half- or quarterspace will still be a projections/unitaries. Hence they are in the kernel of $\delta^{\Bb\Cc}_q$.
\end{proof}

This is the reason why one needs to enhance the $K$-theory by spatial symmetries to obtain non-trivial higher-order bulk-boundary correspondence. We used here complex $K$-theory but the same is true for the real Altland-Zirnbauer classes. Those can be labeled by real K-theory groups $KO^i(\TM^d)$ and the kernel of the first boundary map is spanned by the image of $KO^i(\TM^{d-2})$. We omit the details.

The same applies to the cube geometry (see below) and it seems to us that for all polyhedral geometries the higher-order bulk-boundary maps are all zero unless one imposes an additional spatial symmetry. We will leave a more precise analysis to future work.
  
\subsection{The equivariant case: computational aspects}\label{Sec:Adiabatic}
We now impose invariance under an additional twisted action of finite subgroup $\Gamma\subset \overline{\Sigma}=\Sigma\times \ZM_2 \times \ZM_2$, where $\Sigma$ is the symmetry group of a square and two conjugate-linear $\ZM_2$-actions representing time-reversal and particle-hole symmetry, respectively. For simplicity, we are mostly interested in the case where $\Gamma$ acts freely on faces and corners in the sense that $\ZM^d\in \Xi_\square$ is the only element with non-trivial stabilizer group. The forgetful maps $K_\ast^\Gamma(\Ff)\to K_\ast(\Ff)$ and $K_\ast^\Gamma(\Cc)\to K_\ast(\Cc)$ then turn out to be injections due to Proposition~\ref{prop:reduction_equivariant}. The second-order boundary map $\partial^{\Bb \Cc}_\ast$ takes value in the quotient group $K_{\ast-1}^\Gamma(\Cc)/\partial^2_*({\rm Im}\, \partial^{\Ff\Cc}_*)$ which can then be seen as a sub-quotient of $K_{*-1}(\Cc)$.
As we shall see, computing that group will be very easy since one merely needs to enumerate the algebraic constraints that the symmetry imposes on the possible values of the corner Chern numbers.



The difficult task is to find Hamiltonians in the wire geometry whose hinge states are still non-trivial after taking the quotient. In the following, we briefly describe a powerful recent approach to the construction of Hamiltonians with analytically computable topological invariants: In the space-adiabatic approach, one considers Hamiltonians with slowly modulated domain walls as being well-described by adiabatic symbols, i.e. matrix-valued phase-space functions which depend on space and momentum. Such ideas are popular in physics \cite{TeoKane2010,ShiozakiEtAlPRB17} and one can make them rigorous, in particular for models on continuous space via pseudodifferential methods (see \cite{Bal22} for a relevant example) and, more importantly for us, also on the lattice $\ZM^d$ \cite{OSP2024}. 

The first step is to simplify the operator algebras from groupoid $C^*$-algebras to crossed product algebras in the square geometry: 
\begin{proposition}
\label{prop:crossed_products}
There exist locally compact $\ZM^d$-spaces $\mathcal{X}_1, \mathcal{X}_2 \in \mathcal{C}(\RM^d)$ such that 
each $\Gg_{\Xi_r}$ naturally includes into the transformation groupoid $\mathcal{X}_r\rtimes \ZM^d$. One then has a commutative diagram 

\begin{equation}\label{eq: diagram d}
   \begin{tikzcd} 
	C^*\Gg_{\Xi_2} \arrow[d, "\psi_2"] \arrow[r] & C^*\Gg_{\Xi_{1}}  \arrow[d, "\psi_{1}"] \arrow[r] & C^*\Gg_{\Xi_0}  \arrow[d, "\psi_0"]\\
	C_0(\mathcal{X}_2)\rtimes \ZM^d \arrow[r] &	C_0(\mathcal{X}_1)\rtimes \ZM^{d}  \arrow[r]  & \CM\rtimes \ZM^{d}  
\end{tikzcd} 
\end{equation}
 where the rows are cofiltrations and the vertical arrows are injections. If $\ZM^d$ is the only element $\Xi_{\square}$ with non-trivial stabilizer then this gives rise to an isomorphism of spectral sequences in $\Gamma$-equivariant K-theory.
\end{proposition}
\begin{proof}
Define $\mathcal{X}_r:=\overline{\{\Ss -x\,|\, \Ss\in \Xi_r, x\in \ZM^d\}}\subset \Cc(\RM^d)\cap \Cc(\ZM^d)$ with closure in the Fell topology. Since $\ZM^d$ is discrete it is easy to see out that $\Xi_r \subset \mathcal{X}_r$ is an open subset, then $\Gg_\Xi\subset \mathcal{X}_r\rtimes \ZM^d$ is an open subgroupoid of the transformation groupoid. Thus one gets an injective homomorphism $\psi_r\colon C^*\mathcal{G}_{\Xi_r}\to C_0(\mathcal{X}_r)\rtimes \ZM^d$ and arrives at the commutative diagram \eqref{eq: diagram d}.



To prove equivalence of the spectral sequences it is by Proposition~\ref{prop:homomorphism_spectral_sequence} enough to show that the inclusions between the ideals $C^*(\Gg_{\Xi_{r}\setminus \Xi_{r-1}})\to C_0(\mathcal{X}_r\setminus \mathcal{X}_{r-1})\rtimes \ZM^d$ induce isomorphisms in equivariant K-theory. If the action is free in the stated sense, then those K-groups decompose naturally like the right-hand side of \eqref{eq:building_blocks} and each summand has a trivial stabilizer group. As such, the inclusions are isomorphisms due to the stability of K-theory.
\end{proof}

\begin{remark}{\rm In real-space representations, the difference between the groupoid and crossed product algebras is that the former act on the Hilbert spaces $\ell^2(\Ll)$ for patterns $\Ll\in \Xi_2$, whereas the latter act on $\ell^2(\ZM^d)$, but are represented by operators whose matrix elements decay outside of $\Ll$, i.e. they are essentially only supported a quarter- or half-space but not via a sharp truncation. For questions of K-theory, this distinction is usually irrelevant due to stability.} $\Diamond$
\end{remark}

For the crossed products, we describe in \cite{OSP2024} a general method to construct non-trivial higher-order topological models by quantization of adiabatic symbol functions, which are elements of a commutative $C^*$-algebra. To make such an Ansatz, one includes the $\ZM^d$-space $\mathcal{X}_r$ into a larger locally compact $\ZM^d$-space $\mathcal{X}_r^{\mathrm{ad}}$ so that the $\ZM^d$-action is the restriction of an $\RM^d$-action and one has an additional scaling action of $\RM_+$. For the square geometry, there is a natural choice (see \cite[Section~7.4]{OSP2024}), where the one-point compactification of $\mathcal{X}_r^{\mathrm{ad}}$ is a CW-complex with a filtration by closed subsets of the form
\begin{align*}
	\label{eq:config_square}
	\mathcal{X}_2^{\mathrm{ad}}\;&=\; \bigsqcup_{\lambda\in \ZM_4} \RM^2_{\lambda,\lambda+1} \bigsqcup_{\lambda\in \ZM_4} \RM_{\lambda} \sqcup \{ +\infty\}\\
	\mathcal{X}_1^{\mathrm{ad}}\;&=\; \bigsqcup_{\lambda\in \ZM_4} \RM_{\lambda} \sqcup \{ +\infty\}\\
	\mathcal{X}_0^{\mathrm{ad}}\;&=\; \{ +\infty\}
\end{align*} 
with labeled copies of $\RM^2$ and $\RM$ which carry actions of $\RM^d$ as well as an added invariant point $+\infty$ representing the bulk.  Due to the scaling action one can connect the crossed product $C_0(\mathcal{X}^{\mathrm{ad}}_r)\rtimes \ZM^d$ with the tensor product $C_0(\mathcal{X}^{\mathrm{ad}})\otimes C(\TM^d)$, as fibers of a continuous field of $C^*$-algebras. One can then quantize self-adjoint symbol functions in $C_0(\mathcal{X}_2^{\mathrm{ad}})\otimes C(\TM^d)$ to self-adjoint Hamiltonians in $C_0(\mathcal{X}_2^{\mathrm{ad}})\rtimes\ZM^d$ in a way that preserves symmetries and spectral gaps \cite[Proposition 2.6]{OSP2024}, thus giving rise to canonical homomorphisms $K^\Gamma_\ast(C_0(\mathcal{X}_2^{\mathrm{ad}})\otimes C(\TM^d))\to K^\Gamma_\ast(C_0(\mathcal{X}_2^{\mathrm{ad}})\rtimes \ZM^d)$. Due to their naturalness they further lead to a homomorphisms of spectral sequences mapping from the spectral sequence obtained from the cofiltration $$C(\mathcal{X}_r^{\mathrm{ad}}\times \TM^d)\to C(\mathcal{X}_{r-1}^{\mathrm{ad}}\times \TM^d)\to... \to C(\mathcal{X}_0^{\mathrm{ad}}\times \TM^d)$$ 
to the two equivalent spectral sequences from Proposition~\ref{prop:crossed_products}, which is much more amenable to the description by the methods of equivariant topology. There is generally no expectation that one obtains an isomorphism in this way. Nevertheless, one can use that formalism to prove the existence of non-trivial examples for higher-order bulk-boundary correspondences as long as one knows how to explicitly relate topological invariants of the symbols with Chern numbers of their quantizations and this is accomplished in \cite[Theorem 1.10]{OSP2024}: The face and corner Chern numbers w.r.t. $m$ additional directions $v_1,...,v_m$ in our case correspond to $(m+1)$-dimensional Chern numbers of the symbol restricted to the one-cells $\RM_\lambda \times \TM^d$ respectively $(m+2)$-dimensional Chern numbers of the restriction to two-cells $\RM_{\lambda,\lambda+1}$. One can thus construct with manageable effort Hamiltonians with gaps at prescribed limits and known Chern numbers just by doing analysis of matrix-valued symbol functions. Each such example provides partial input to the computation of the higher-order boundary map $\delta^{\Bb\Cc}_*$ and in some symmetry classes adiabatic quantization in fact provides enough explicit Hamiltonians to completely determine it.

\subsection{Inversion-symmetry}
\label{ssec:inversion}
Here, we explicitly compute the bulk-corner map in the square geometry for the inversion-symmetry in dimension $d=3$, i.e. the order two symmetry generated by the involutive transformation
\begin{equation}
\sigma : x\in \RM^3 \mapsto -x \in\RM^3.
\end{equation}
This symmetry induces an involutive automorphism $\sigma$ on $C^\ast\Gg_\square$,  which we also consider as a $\ZM_2$-action denoted by the same letter. To distinguish from the other symmetries considered in this work, we denote the $\ZM_2$-equivariant $K$-groups w.r.t. inversion symmetry by $K_\ast^I(C^\ast \Gg_\square)\equiv K_\ast^{\ZM_2}(C^\ast \Gg_\square)$. 

We compute first $K^I_\ast(\Cc)/\mathrm{Im} \, \partial^{\Ff\Cc}_\ast$, which embeds the codomain of the second order boundary map: 
\begin{proposition}It holds that
\label{prop:inversion}
\begin{equation}
K_1^I(\Cc) \simeq \ZM^2 \supset \mathrm{Im} \, \partial^{\Ff\Cc}_0  \simeq \ZM \oplus (2\ZM)
\end{equation}
and 
\begin{equation}\label{Eq:P1}
\mathrm{Im} \, \delta_0^{\Bb \Cc} \subseteq K_1^I(\Cc)/\mathrm{Im} \, \partial^{\Ff\Cc}_0 \simeq \ZM_2.
\end{equation}
Above, $\ZM_2$ encodes the parity of the difference of the corner Chern numbers of the two not symmetry-related corners.
\end{proposition}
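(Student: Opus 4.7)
The plan is to reduce everything to an explicit computation in $\ZM^2$. First, I would fix identifications of both groups with $\ZM^2$: for $K_1^I(\Cc) \simeq K_1(\Cc_1\oplus\Cc_2)$ this is already done in the preceding discussion, with basis dual to the corner Chern cocycles $\Ch_{\Cc_1, e_3}$ and $\Ch_{\Cc_2, e_3}$. Similarly, $K_0^I(\Ff) \simeq K_0(\Ff_1 \oplus \Ff_2)$ and every $\sigma$-symmetric class in $K_0^I(\Ff)$ has the form $[x_1, x_2, \sigma(x_1), \sigma(x_2)]_0$, so it is determined by the two face Chern numbers $w = \langle [x_1]_0, \Ch_{\Ff_1, e_3 \times n_2}\rangle$ and $x = \langle [x_2]_0, \Ch_{\Ff_2, e_3 \times n_3}\rangle$. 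The sign identity derived right before the proposition then forces the remaining face Chern numbers to be $y = -w$ and $z = -x$.

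Next, I would apply the face-corner formula \eqref{eq:facecorner} with $v = e_3$ to each of the four corners. This yields the four corner Chern numbers $w-z, x-w, y-x, z-y$ (matching Figure~\ref{fig:square}b), which after substituting $y = -w$, $z = -x$ specialize to $(w+x,\ x-w,\ -w-x,\ w-x)$. These are antipodal in pairs under inversion, as expected, and only the first two are independent data for the class in $K_1^I(\Cc)$.

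Setting $a = w+x$ and $b = x-w$, the image is therefore
\begin{equation}
\mathrm{Im}\, \partial^{\Ff\Cc}_0 \;=\; \bigl\{(a,b)\in\ZM^2 \,:\, \exists\, w,x\in\ZM \text{ with } a = w+x,\ b = x-w\bigr\}.
\end{equation}
Since $a+b = 2x$ is automatically even and $w, x$ can be recovered from any $(a,b)$ with $a\equiv b \pmod 2$, one obtains
\begin{equation}
\mathrm{Im}\,\partial^{\Ff\Cc}_0 \;=\; \{(a,b)\in \ZM^2 : a+b \in 2\ZM\}.
\end{equation}
I would then compute the Smith normal form of this sublattice. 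The generators $(1,1)$ and $(1,-1)$ assemble into a matrix of determinant $-2$; an elementary row/column reduction brings it to $\mathrm{diag}(1,2)$. Hence the sublattice is abstractly $\ZM \oplus 2\ZM$ inside $\ZM^2$, and the quotient is $\ZM_2$, generated by the class of $(1,0)$. Tracing this back, the nontrivial coset is precisely detected by the parity of $a-b$ (equivalently $a+b$), which is the difference of the corner Chern numbers at two corners not related by inversion, giving the physical interpretation stated in the proposition.

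The only subtle step is the bookkeeping of signs and the index shifts $\lambda \mapsto \lambda - 1$ and $\lambda \mapsto \lambda + 2$ when combining the face-corner formula with the inversion relation; once those are consistently fixed via the labeling in Figure~\ref{fig:square}(a), the remaining computation is purely a routine Smith-normal-form calculation in rank two.
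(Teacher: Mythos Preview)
Your proposal is correct and follows essentially the same approach as the paper: reduce $K_0^I(\Ff)$ and $K_1^I(\Cc)$ to non-equivariant $K$-groups via the orbit decomposition, use the inversion relation to impose $y=-w$, $z=-x$ on the face Chern numbers, and read off from the face--corner formula that the two independent corner Chern numbers are $w+x$ and $x-w$, whose difference is always even. The only addition over the paper is that you make the Smith normal form step explicit, which is a harmless elaboration. One small imprecision: a class in $K_0^I(\Ff)\simeq K_0(\Ff_1)\oplus K_0(\Ff_2)\simeq\ZM^4$ is not literally determined by the two Chern numbers $w,x$ alone, but since the remaining (rank) invariants do not enter the boundary-map formula \eqref{eq:facecorner}, this does not affect the computation of $\mathrm{Im}\,\partial^{\Ff\Cc}_0$.
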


\begin{proof}
    The face algebra $\Ff = \oplus_{\lambda \in \ZM_4}\Ff_\lambda$ can be decomposed into two orbits under inversion-symmetry by writing
\begin{equation}
\Ff = \Ff_1 \oplus \Ff_2 \oplus \sigma(\Ff_1) \oplus \sigma(\Ff_2) \simeq (\Ff_1\oplus \Ff_2)\otimes C(\ZM_2),
\end{equation}
with $C(\ZM_2)$ the functions $\ZM_2\to \CM$. The isomorphism is equivariant when $C(\ZM_2)$ carries the regular representation. Hence, one has by Proposition~\ref{prop:reduction_equivariant}
\begin{equation}
K_\ast^I(\Ff)\simeq K_\ast(\Ff_1 \oplus \Ff_2).
\end{equation} 
Similarly one can decompose $\Cc$ into orbits and obtain
\begin{equation}
K_\ast^I(\Cc)\simeq K_\ast(\Cc_1 \oplus \Cc_2).
\end{equation}
Here we do not distinguish between $K^I_{-1}\simeq K^I_1$ and represent both by invariant unitaries (see Remark~\ref{rem:picturesk_theory}(i)).
The boundary maps in equivariant $K$-theory are now easy to compute by forgetting the equivariance
\begin{equation}
	\begin{tikzcd}
		K_\ast^I(\Ff) \arrow[d] \arrow[r,"{\partial}^{\Ff \Cc}_\ast"] & K_{\ast-1}^I(\Cc) \arrow[d] \\ 
		K_\ast(\Ff) \arrow[r,"{\partial}^{\Ff\Cc}_\ast"] & K_{\ast-1}(\Cc)
	\end{tikzcd}
\end{equation}
but keeping in mind that the image of $K_\ast^I(\Ff)\to K_\ast(\Ff)$ is generated by $\sigma$-invariant representatives. Therefore, it is enough to compute the boundary map for representatives $[x_1,x_2,x_3,x_4]_\ast\in \bigoplus_{\lambda=1}^4 K_\ast(\Ff_i)$ of the form $[x_1,x_2,\sigma(x_1),\sigma(x_2)]_\ast$. Here, $x_\lambda \in M_N(\CM)\otimes \Ff_\lambda$ and $\sigma$ acts on the first factor by some unitary representation of $\ZM_2$. This implies relations for the Chern numbers of the faces 
\begin{equation}
\begin{aligned}
\langle [x_\lambda]_0, [\Ch_{{\Ff_\lambda}, e_3 \times n_{\lambda+1}}]\rangle &= \langle \sigma([x_\lambda]_0), [\Ch_{{\Ff_{\lambda+2}}, (-e_3) \times (-n_{\lambda+1})}]\rangle \\
&= -\langle \sigma([x_\lambda]_0), [\Ch_{{\Ff_{\lambda+2}}, e_3 \times n_{\lambda+3}}]\rangle ,
\end{aligned}
\end{equation}
since $\sigma$ maps $\Tt_{\Ff_\lambda}$ to $\Tt_{\Ff_{\lambda+2}}$ but flips the signs of all derivations $\sigma \circ \nabla_i = -\nabla_i \circ \sigma$ due to $\Theta_x \circ \sigma = \sigma \circ \Theta_{-x}$.
Those Chern numbers are the only relevant ones for the computation of $\partial^{\Ff\Cc}$ via \eqref{eq:facecorner} since $K_1(\Cc)\simeq \ZM^4$ is labeled precisely by the four Chern cocycles $\Ch_{{\Cc_{\lambda}}, e_3}$. Due to $\sigma$-invariance, classes in  $K^I_1(\Cc)\simeq \ZM^2$ are then already determined precisely by the pairings with $\Ch_{{\Cc_{\lambda}}, e_3}$ for $\lambda \in \{1,2\}$.

One can now read off the possible values for the corner Chern numbers from Figure~\ref{fig:square}: The inversion symmetry implies the constraints $y=-w$ and $z=-x$ and therefore the difference of the two corner Chern numbers $w-z$ and $x-w=-z-w$ lies in $2\ZM$.
\end{proof}

Stated in terms of Hamiltonians this gives:

\begin{corollary}\label{Cor:c1} An element of the bulk $K$-group $x \in K_0^I(\Bb)$ maps under $\delta_0^{\Bb \Cc}$ to the non-trivial parity sector in \eqref{Eq:P1} if and only if there exists an inversion-symmetric Hamiltonian $h\in B(\Vv)\otimes C^*(\Gg_\square)$ such that $\mathfrak{p}^2_0(h)$ is gapped, $[\gamma_{(\mathfrak{p}^1\circ \mathfrak{p}^2)(h)}]=x$ and one has
		\begin{equation}
			\label{eq:hinge_modes}\langle \partial^2_0(\mathfrak{p}^2(h)), [\Ch_{\Cc_1,e_3}]\rangle+ \langle \partial^2_0(\mathfrak{p}^2(h)), [\Ch_{\Cc_2,e_3}]\rangle \;=\; 1 \mod 2.
		\end{equation}
If such element $x \in K_0^I(\Bb)$ exists, then $\mathrm{Im} \, \delta_0^{\Bb \Cc} \simeq \ZM_2$.
\end{corollary}

\begin{remark}{\rm In words, \eqref{eq:hinge_modes} says that, when model $h$ is deployed on a square geometry, out of any two adjacent hinges one has an odd and the other has an even Chern number. In that case a change of boundary condition that keeps the faces gapped cannot remove all hinge modes; at most it can move them between different sets of hinges. It would have been impossible to obtain such a constraint for a single hinge or even two inversion-related hinges. Here we get the payoff for constructing the novel groupoid algebra in Section~\ref{Sec:ModelAlg}: Only by taking the geometric relations between the hinges into account and gluing them together to provide an algebra for an infinite crystal with boundary we managed to obtain those non-trivial maps in equivariant $K$-theory.
}$\Diamond$
\end{remark}

Our final task is to show that a Hamiltonian as in Corollary~\ref{Cor:c1} exists. For this, consider the slight modification of the bulk Hamiltonian listed in \cite{HughesPRB2011}[Eq.~145]
\begin{equation}\label{Eq:Ham1}
h = \tfrac{1}{2 \imath} \sum_{i=1}^3 \Gamma_i \otimes (S_i - S_i^\ast)+ \Gamma_0 \otimes \Big(2 +\tfrac{1}{2}\sum_{i=1}^3 (S_i + S^\ast_i)\Big ) + \gamma \Gamma_B \otimes 1,
\end{equation}
where $S_i$'s are the generators of $C^\ast\ZM^3$, $\Gamma_1=\sigma_3\otimes \sigma_1$, $\Gamma_2=\one \otimes \sigma_2$, $\Gamma_3=\sigma_2\otimes \sigma_1$, $\Gamma_0=\one \otimes \sigma_3$ and $\Gamma_B = \frac{1}{2}(\sigma_1+\sigma_2+\sigma_3)\otimes (\one+\sigma_3)$, with the Pauli matrices $\sigma_i$. On the atomic orbitals space $M_4(\CM)$, we assume that the action of inversion is implemented by conjugation with the matrix $\one \otimes \sigma_3$. 

One can prove that this bulk Hamiltonian realizes the non-trivial parity in Proposition~\ref{prop:inversion} by extending its Fourier transform, a function in $M_4(\CM)\otimes C(\TM^3)$, to a matrix-valued adiabatic symbol function over $ (C_0(\mathcal{X}_2^{\mathrm{ad}}\times \TM^3)^\sim)$ (where $\sim$ denotes the unititization) which becomes gapped when restricted to $(C_0(\mathcal{X}_1^{\mathrm{ad}}\times \TM^3))^\sim$. The outcome is as follows:

\begin{proposition}[{\cite[Proposition~7.6]{OSP2024}}]
There exists a Hamiltonian $h \in B(\Vv) \otimes (C_0(\mathcal{X}_2)\rtimes \ZM^3)^\sim$, $\Vv=\CM^2\otimes \CM^2$, with the following properties:
	\begin{enumerate}
		\item[(i)] $h$ is invariant under the $\ZM_2$-action which implements the inversion symmetry.
		\item[(ii)] $h$ gets mapped to a gapped Hamiltonian $\mathfrak{p}^2(h)$ under the surjection $$\mathfrak{p}^2: B(\Vv) \otimes (C_0(\mathcal{X}_2)\rtimes \ZM^3)^\sim \to B(\Vv) \otimes (C_0(\mathcal{X}_1)\rtimes \ZM^3)^\sim$$
		 and thus defines a class $[\gamma_{\mathfrak{p}^2(h)}]_0\in K_0^I(C_0(\mathcal{X}_1)\rtimes \ZM^3)$.
		\item[(iii)] Under the surjection $$\mathfrak{p}^1\circ \mathfrak{p}^2: B(\Vv) \otimes (C_0(\mathcal{X}_2)\rtimes \ZM^3)^\sim \to B(\Vv)\otimes C^*(\ZM^3)$$ the image $\mathfrak{p}^1\circ \mathfrak{p}^2(h)$ is equal to the bulk Hamiltonian \eqref{Eq:Ham1}.

		\item[(iv)] The image of the boundary map
		$$\partial^2_0: K_0^I(C_0(\mathcal{X}_1)\rtimes \ZM^3) \to K_1^I(C_0(\mathcal{X}_2\setminus \mathcal{X}_1)\rtimes \ZM^3)\simeq K_1(\Cc_1)\oplus K_1(\Cc_2)$$ is again labeled by two Chern cocycles $\Ch_{{\Cc_1},e_3}$ and $\Ch_{{\Cc_1},e_3}$ for which the K-theoretic pairings take integer values. For the given $h$ one has
		\begin{equation*}\langle \partial^2_0([\gamma_{\mathfrak{p}^2(h)}]_0), [\Ch_{\Cc_1,e_3}]\rangle+ \langle \partial^2_0([\gamma_{\mathfrak{p}^2(h)}]_0), [\Ch_{\Cc_2,e_3}]\rangle \;=\; 1 \mod 2.
		\end{equation*}
	\end{enumerate}
\end{proposition}

In light of the above discussion, this implies:
\begin{corollary}
	The class in $K_0^I(C(\TM^3))$ represented by the gapped Hamiltonian \eqref{Eq:Ham1} maps to the non-trivial parity under $\delta^2_0: \Ker(\delta^1_0) \to K_1^I(\Cc)/\mathrm{Im} \, \partial^{\Ff\Cc}_0 \simeq \ZM_2$.
\end{corollary}
\begin{remark}{\rm Having found a non-trivial image-preimage pair under $\delta^2_0$ the remaining classification task would merely be to determine the kernel of $\delta^2_0$, which does not immediately relate to observable phenomena.
}    $\Diamond$
\end{remark}

\begin{remark}
	{\rm  It is standard to compute the  $K$-groups $K_*^I(C(\TM^3))$ using the Atiyah-Hirzebruch spectral sequence (AHSS) (see \cite[Page 22]{ShiozakiAHSSspacegroups}).  One obtains
			\begin{equation}
				K_0^I(C(\TM^3))\simeq \ZM^{12}, \quad K_1^I(C(\TM^3))=0.
			\end{equation}
		Importantly, the K-group does not have torsion components, hence $\partial^{\Bb\Cc}_0$ must end up being a mod $2$ linear combination of integer invariants. There is a strong candidate for that map: There are $8$ points $\xi_1,...,\xi_8\in \TM^3$ that are fixed points under inversion, called time-reversal-invariant momenta (TRIM), since time-reversal is just inversion composed with complex conjugation. To any invariant (virtual) vector bundle on the torus one can assign $8$ integers $n_a$, $a=1,...,8$, which are the multiplicities of the sign representation of $\ZM_2$ at the TRIM. This provides a homomorphism $K_0^I(C(\TM^3))\to \ZM^8$.	To a vector bundle on $\TM^3$ one can also associate the Chern-Simons invariant $$\theta_{\mathrm{CS}}= \frac{1}{4\pi}\int_{\TM^3} \mathrm{tr}(dA \wedge A + \frac{2}{3} A \wedge A \wedge A)$$ 
		where $A$ is the Berry connection on the vector bundle. Modulo $2\pi$ it is quantized to $0$ or $\pi$ in the presence of inversion symmetry and an invariant under stable equivariant homotopy. If all weak Chern numbers of the vector bundle are trivial then it is known to the level of rigor common in theoretical physics \cite{HughesPRB2011} that one has the relation
		\begin{equation}
			\label{eq:cs_parity}
			\frac{\theta_{\mathrm{CS}}}{\pi} = \frac{1}{2}\Big( \sum_a n_a \mod 4\Big)
		\end{equation}
	which can only take the values $0$ or $1$. In fact this relation can in principle be made rigorous post-hoc simply by checking it for each element of a basis of $K_0^I(C(\TM^3))$ which can be obtained by independent means. Non-trivial secondary characteristic classes such as the Chern-Simons invariant have been associated with higher-order boundary states in the literature \cite{SchindlerNeupert,UribeEtAl22}, however, there appears to be as of yet no structural argument why that might be the case. An obvious conjecture is therefore that the right-hand side of \eqref{eq:cs_parity} is exactly the sought map $\delta^{\Bb\Cc}_0$ and, indeed, we chose the Hamiltonian \eqref{Eq:Ham1} as a promising candidate for second-order boundary states precisely since it satisfies that non-trivial parity. }$\Diamond$
\end{remark}


\begin{figure}[t]
\center
\includegraphics[width=\textwidth]{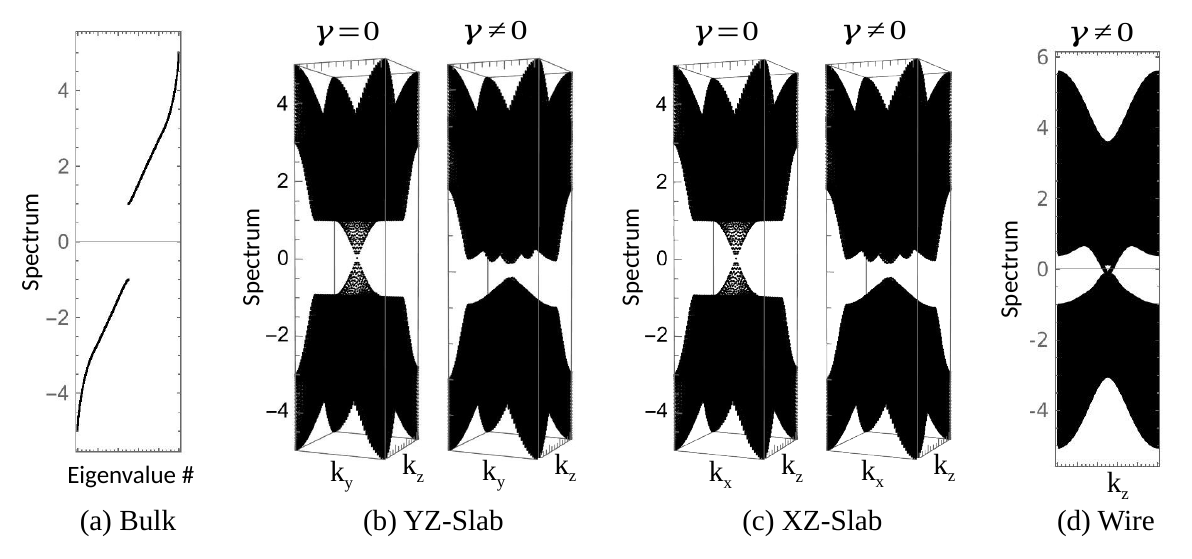}\\
  \caption{\small Spectral analysis of Hamiltonian~\eqref{Eq:Ham1} with different boundary conditions in a finite $30\times 30\times 30$ cube and for $\gamma=0$ as well as for $\gamma=0.5$. (a) the eigenvalues are rendered in increasing order. (b,c,d) the eigenvalues are resolved by the momenta in the directions with periodic boundary conditions while the remaining directions have open boundary conditions. Panels b) and c) show that the model is gappable at the boundaries of codimension 1. In panel d), near the origin of the vertical axis, we see a locus of non-degenerate eigenvalues. In real space, the corresponding states are localized at two opposite hinges (see Fig.~\ref{Fig:HingeModes}a).}
 \label{Fig:Model1}
\end{figure}

For completeness, we present in Fig.~\ref{Fig:Model1} a numerical analysis of model~\eqref{Eq:Ham1}. Note that, if the parameter $\gamma$ is set to $0$, the time reversal and the space inversion become separate symmetries of the model and the Hamiltonian~\eqref{Eq:Ham1} becomes a strong topological insulator from the class $AII$ of the classification table \cite{RyuNJP2010}. This means whenever a surface is cut into a bulk sample the bulk spectral gap as seen in Fig.~\ref{Fig:Model1}(a) will be filled with surface spectrum. For a flat surface, the latter can be resolved by the quasi-momenta $k_j \in \TM$ along the surfaces, by using the isomorphism $C^\ast\ZM^3 \simeq C^\ast\ZM \otimes C(\TM^2)$, $S_j \mapsto u_j(k_j)= e^{\imath k_j}$, where $j$ samples the cartesian directions parallel to surface. If done so, one can see in the momentum-resolved spectrum as in Fig.~\ref{Fig:Model1}(b,c) the hallmark Dirac spectral singularities originating from protected surfaces states.\footnote{The $yz$-slab, for example, results from a confinement of the material in the $x$-direction. Thus, a slab displays two infinite surfaces, but these surfaces are separated by a distance large enough to reduce the effects of their interference below the resolution of our figures.} When $\gamma$ is set to a non-zero value, the time-reversal symmetry is broken but the space inversion symmetry persists. As seen in Fig.~\ref{Fig:Model1}(b,c), the Dirac singularities are lifted and the surfaces of the slabs become spectrally gapped, regardless of the orientations of the cuts. Yet, when we cut a wire with a large square section out of the bulk sample, the spectrum becomes again un-gapped due to two infinitely thin bands crossing the slabs' spectral gaps in Fig.~\ref{Fig:Model1}(d). This spectrum is supported by 1-dimensional wave channels that develop along the hinges of the wires.

\subsection{$C_2 T$-symmetry}
\label{ssec:c2t}
Here we consider an anti-linear symmetry. In three dimensions $C_2$ shall be rotation by $\pi$ in the $x_1$-$x_2$-plane, hence implemented by the operation
\begin{equation}
C_2: (x_1,x_2,x_3)\in \RM^3 \mapsto (-x_1,-x_2,x_3).
\end{equation}
As for inversion this is an order two symmetry, whose action on $C^\ast\Gg_\square$ we denote by the same symbol $C_2$. We compose it with time-reversal $T$ which acts trivially on $\RM^3$ but acts by complex conjugation on the convolution algebra $C^\ast\Gg_\square$, i.e. it is an involutive anti-automorphism. The same is true for the composition $C_2T$. One can therefore consider the twisted equivariant $K$-groups 
\begin{equation}
K_\ast^{C_2T}(C^\ast\Gg_\square) := {}^\phi K_{\ast,c,\tau}^{\ZM_2}(C^\ast\Gg_\square),
\end{equation} 
for $\ast\in \{0,-1\}$ and where $c=0$ and $\tau$, $\phi$ just express that the generator shall be represented anti-unitarily in any $(\phi,c,\tau)$-twisted representation of $\ZM_2$.

As for inversion-symmetry, opposite sides and corners of the square are conjugate under the symmetry:
\begin{proposition}It holds that
\begin{equation}
K_\ast^{C_2T}(\Ff)\simeq K_\ast(\Ff_1 \oplus \Ff_2)
\end{equation}
and
\begin{equation}
K_\ast^{C_2T}(\Cc)\simeq K_\ast(\Cc_1 \oplus \Cc_2).
\end{equation}
\end{proposition}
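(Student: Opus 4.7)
The plan is to mirror the argument given in Section~\ref{ssec:inversion} for the inversion-symmetric case, with the only additional ingredient being the anti-linearity of $C_2T$. The key observation is that $C_2$, just like inversion, acts on the index set $\lambda\in \ZM_4$ labeling faces and corners by the shift $\lambda\mapsto \lambda+2$. Composing with $T$ (which does not move points in $\RM^3$) gives an anti-linear $*$-isomorphism $C_2T:\Ff_\lambda \to \Ff_{\lambda+2}$, and analogously for the corner algebras $\Cc_\lambda$. In particular the $C_2T$-action on the direct-sum decompositions has exactly two orbits $\{\Ff_1,\Ff_3\}$, $\{\Ff_2,\Ff_4\}$ of size two (no summand is fixed).

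First I would decompose $\Ff$ and $\Cc$ along these orbits, writing
\begin{equation*}
\Ff \;\simeq\; (\Ff_1\oplus \Ff_3)\,\oplus\,(\Ff_2\oplus \Ff_4),\qquad
\Cc \;\simeq\; (\Cc_1\oplus \Cc_3)\,\oplus\,(\Cc_2\oplus \Cc_4),
\end{equation*}
as $C_2T$-$C^\ast$-algebras. For each two-element orbit I would then exhibit a $C_2T$-equivariant isomorphism
\begin{equation*}
\Ff_\lambda\oplus \Ff_{\lambda+2}\;\simeq\; \Ff_\lambda\otimes C(\ZM_2),
\end{equation*}
where $C(\ZM_2)\simeq \CM^2$ carries the anti-linear regular representation $(a,b)\mapsto(\bar b,\bar a)$ while $\Ff_\lambda$ is equipped with the trivial action. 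This is just a repackaging: an element of the direct sum is a pair $(f,g)$, and using any anti-linear $*$-isomorphism $\alpha=C_2T|_{\Ff_\lambda}$ to trivialize the second summand reduces the action on $\Ff_\lambda\oplus\Ff_\lambda$ to a free swap twisted by complex conjugation, which is precisely the anti-linear regular representation on the $C(\ZM_2)$-factor. Analogous statements hold for the corner orbits.

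Next I would invoke Proposition~\ref{prop:reduction_equivariant} in its twisted form: for a freely permuted orbit, the twisting data $(\phi,c,\tau)$ restrict trivially to the stabilizer $\{e\}$ of any representative, so the twisted equivariant K-theory of the induced algebra collapses to ordinary complex K-theory of the fibre,
\begin{equation*}
{}^{\phi}K^{\ZM_2}_{\ast, c,\tau}\!\bigl(\Ff_\lambda\oplus\Ff_{\lambda+2}\bigr)\;\simeq\;K_\ast(\Ff_\lambda),
\end{equation*}
and similarly for the corners. Summing over the two orbits and invoking additivity of K-theory over direct sums yields the two claimed isomorphisms $K^{C_2T}_\ast(\Ff)\simeq K_\ast(\Ff_1\oplus\Ff_2)$ and $K^{C_2T}_\ast(\Cc)\simeq K_\ast(\Cc_1\oplus\Cc_2)$.

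The main obstacle I anticipate is verifying carefully that Proposition~\ref{prop:reduction_equivariant}, whose statement as used in the inversion-symmetric case concerns a linear regular representation, carries over unchanged to the anti-linear (i.e. $\phi\neq 0$) setting. The conceptual reason it does is that for a free orbit the anti-linearity only affects how one identifies the two summands but cannot introduce any reality constraint on an invariant element, since an invariant $(x,\alpha(x))$ is determined by $x\in\Ff_\lambda$ without any condition on $x$; hence the answer is genuinely complex, not real, K-theory. In practice one may prove this induction isomorphism either directly from the definition of ${}^\phi K^{\ZM_2}_{\ast,c,\tau}$ given in Appendix~\ref{Ap:TwEq}, or more abstractly by recognizing the induced algebra as Morita equivalent (as a trivially-graded, trivially-equivariant algebra with respect to the trivial subgroup) to $\Ff_\lambda$ and using naturality of the induction/Shapiro isomorphism in twisted equivariant K-theory.
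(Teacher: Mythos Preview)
Your proposal is correct and follows essentially the same approach as the paper: decompose $\Ff$ and $\Cc$ into two free $\ZM_2$-orbits under $C_2T$, recognize each orbit as an induced algebra $\Ff_\lambda\otimes C(\ZM_2)$ (respectively $\Cc_\lambda\otimes C(\ZM_2)$), and apply Proposition~\ref{prop:reduction_equivariant} with $H=\{e\}$ to reduce to ordinary complex $K$-theory. Your concern about whether Proposition~\ref{prop:reduction_equivariant} covers the anti-linear case is already resolved by the paper itself, since that proposition is stated and proved for arbitrary $(\phi,c,\tau)$-twists, with the $\phi$-twisted left translation on $C(\Gamma/H)$ built into its hypotheses; so no additional verification is needed.
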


\begin{figure}[t]
\center
\includegraphics[width=\textwidth]{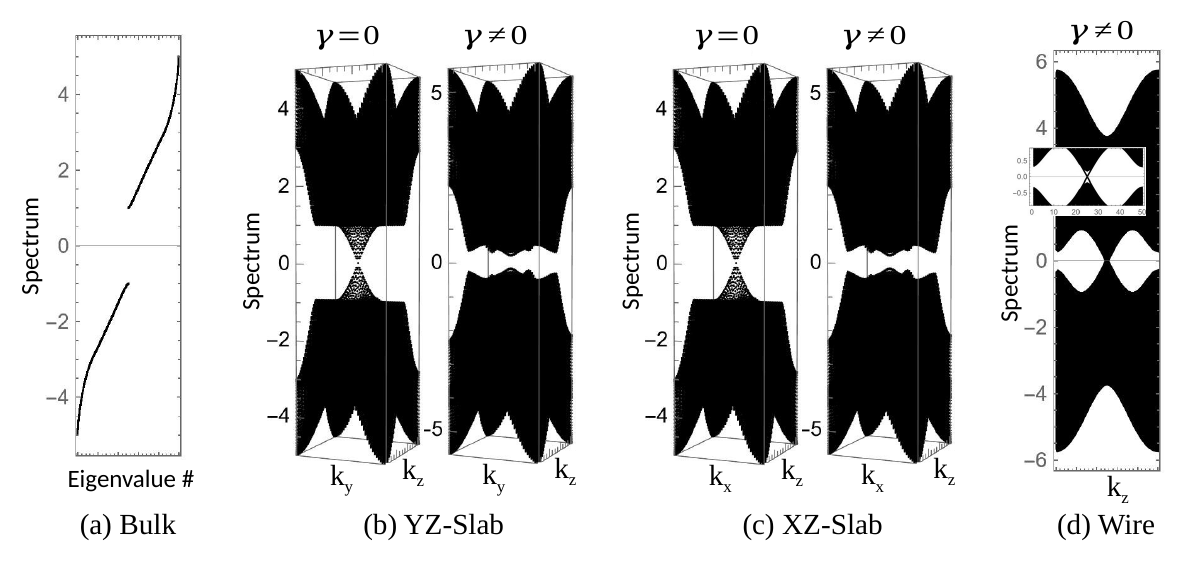}\\
  \caption{\small Same as Fig.~\ref{Fig:Model1} but for the Hamiltonian~\eqref{Eq:Ham2}. The more refined scale shown in the inset of panel (c) is needed to see the topological hinge modes. 
}
 \label{Fig:Model2}
\end{figure}
Here one has $K_1=K_{-1}$ due to Bott periodicity of complex K-theory. One can again think of $K_\ast^{C_2T}(\Ff)$ as being represented by elements $[x_1,x_2,x_3,x_4]_\ast\in \bigoplus_{\lambda=1}^4 K_\ast(\Ff_i)$ of the form $[x_1,x_2,C_2T (x_1),C_2T(x_2)]_\ast$. From a simple computation, one finds precisely the same constraints on the hinge Chern numbers as for inversion symmetry:
\begin{proposition}\label{Prop:ImC2T} It holds that
\begin{equation}
K_{-1}^{C_2T}(\Cc) \supset \mathrm{Im} \, \partial^{\Ff\Cc} _0 \simeq \ZM \oplus (2\ZM)
\end{equation}
	and 
\begin{equation}
\mathrm{Im} \, \delta_0^{\Bb \Cc} \subseteq K_{-1}^{C_2T}(\Cc)/\mathrm{Im} \, \partial^{\Ff\Cc}_0 \simeq \ZM_2.
\end{equation}
\end{proposition}
We now consider the Hamiltonian inspired by \cite{SchindlerSciAdv2018}[Eq.~1]
\begin{equation}\label{Eq:Ham2}
h = \tfrac{1}{2 \imath} \sum_{i=1}^3 \Gamma_i \otimes (S_i - S_i^\ast)+ \Gamma_0 \otimes \Big(2 +\tfrac{1}{2}\sum_{i=1}^3 (S_i + S^\ast_i)\Big ) + \gamma \Gamma_B \otimes 1,
\end{equation}
where $S_i$ are the generators of $C^\ast\ZM^3$, and $\Gamma_i=\sigma_1\otimes \sigma_i$, $\Gamma_0=\sigma_z \otimes 1$ and finally $\Gamma_B = 1\otimes (\sigma_1+\sigma_2)$. On the atomic orbitals space $M_4(\CM)$, the action of the time reversal is implemented by conjugation with $(1 \otimes \sigma_2 ) K$, while that of the 2-fold rotation by conjugation with $\one \otimes e^{\imath \frac{\pi}{2}\sigma_3}$. 

As in the previous subsection, we can prove that this bulk Hamiltonian realizes the non-trivial parity in
Proposition~\ref{Prop:ImC2T}:

\begin{proposition}
There exists a Hamiltonian $h \in B(\Vv) \otimes (C_0(\mathcal{X}_2)\rtimes \ZM^3)^\sim$, $\Vv=\CM^2\otimes \CM^2$, with the following properties:
	\begin{enumerate}
		\item[(i)] $h$ is invariant under the $C_2T$-action.
		\item[(ii)] $h$ gets mapped to a gapped Hamiltonian $\mathfrak{p}^2(h)$ under the surjection $$\mathfrak{p}^2: B(\Vv) \otimes (C_0(\mathcal{X}_2)\rtimes \ZM^3)^\sim \to B(\Vv) \otimes (C_0(\mathcal{X}_1)\rtimes \ZM^3)^\sim$$
		and thus defines a class $[\gamma_{\mathfrak{p}^2(h)}]_0\in K_0^I(C_0(\mathcal{X}_1)\rtimes \ZM^3)$.
		\item[(iii)] Under the surjection $$\mathfrak{p}^1\circ \mathfrak{p}^2: B(\Vv) \otimes (C_0(\mathcal{X}_2)\rtimes \ZM^3)^\sim \to B(\Vv)\otimes C^*(\ZM^3)$$ the image $\mathfrak{p}^1\circ \mathfrak{p}^2(h)$ is equal to the bulk Hamiltonian \eqref{Eq:Ham2}.

		\item[(iv)] The image of the boundary map
		$$\partial^2_0: K_0^I(C_0(\mathcal{X}_1)\rtimes \ZM^3) \to K_1^I(C_0(\mathcal{X}_2\setminus \mathcal{X}_1)\rtimes \ZM^3)\simeq K_1(\Cc_1)\oplus K_1(\Cc_2)$$ is again labeled by two Chern cocycles $\Ch_{{\Cc_1},e_3}$ and $\Ch_{{\Cc_1},e_3}$ for which the K-theoretic pairings take integer values. For the given $h$ one has
		\begin{equation*}\langle \partial^2_0([\gamma_{\mathfrak{p}^2(h)}]_0), [\Ch_{\Cc_1,e_3}]\rangle+ \langle \partial^2_0([\gamma_{\mathfrak{p}^2(h)}]_0), [\Ch_{\Cc_2,e_3}]\rangle \;=\; 1 \mod 2.
		\end{equation*}
	\end{enumerate}
 The class in $K_0^{C_2T}(C(\TM^3))$ represented by the gapped Hamiltonian \eqref{Eq:Ham2} therefore maps to the non-trivial parity under $\delta^2_0: \Ker(\delta^1_0) \to K_1^{C_2T}(\Cc)/\mathrm{Im} \, \partial^{\Ff\Cc}_0 \simeq \ZM_2$.
\end{proposition}
\begin{proof}
Existence can be proven verbatim as in \cite[Proposition~7.6]{OSP2024}, since \eqref{Eq:Ham1} and \eqref{Eq:Ham2} only differ by the choice of matrices $\Gamma_i$ and the term $\Gamma_B$ plays the same role as the gap-opening mass term.
\end{proof}

For completion, we summarize the spectral characteristics of the Hamiltonian~\eqref{Eq:Ham2} when restricted to different geometries in Fig.~\ref{Fig:Model2}. 

\begin{remark}{\rm The $C_2T$-invariant $K$-theory of the torus has been computed recently \cite{UribeEtAl22} and the only generator which can potentially exhibit non-trivial second-order boundary maps corresponds to a vector bundle that is invariant under $C_2T$- and inversion symmetry and has a non-trivial Chern-Simons parity. The Hamiltonian above is not invariant under inversion but must represent that same class.
}$\Diamond$
\end{remark}

\subsection{$C_4T$-symmetry}
\label{ssec:c4t}
We consider the fourfold rotation
\begin{equation}
C_4: (x_1,x_2,x_3)\in \RM^3 \mapsto (-x_2, x_1,x_3)
\end{equation}
composed with time-reversal $T$, i.e. complex conjugation, which defines an order four anti-linear automorphism $C_4T$ on $C^\ast\Gg_\square$. Moreover, the rotation shall be of so-called fermionic type, by which we mean that the representations of $C_4T$ shall be twisted by
\begin{equation}
\label{eq:order4twist}
U^4 = -\one
\end{equation}
where $U$ is the anti-unitary generator of the projective $\ZM_4$ representation. The relevant $K$-groups are 
\begin{equation}
K_\ast^{C_4T}(C^\ast\Gg_\square) := {}^\phi K^{\ZM_4}_{\ast,c,\tau}(C^\ast\Gg_\square)
\end{equation} 
for $\ast\in \{0,-1\}$ and where $c=0$, $\phi$ is determined by the fact that $C_4T$ and $(C_4T)^2$ are anti-linear automorphisms. The twist $\tau$ is the unique $2$-cocycle such that \eqref{eq:order4twist} is imposed for the generators of the $(\phi,c,\tau)$-twisted representations of $\ZM_4$.

In this case all four sides and corners form a single orbit each under rotations, which leads to:
\begin{proposition} It holds that
\begin{equation}
K_\ast^{C_4T}(\Ff)\simeq K_\ast(\Ff_1)
\end{equation}
	and
\begin{equation}
K_\ast^{C_4T}(\Cc)\simeq K_\ast(\Cc_1).
\end{equation}
\end{proposition}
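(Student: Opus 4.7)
The approach will mirror exactly the reduction arguments already carried out for inversion symmetry in subsection~\ref{ssec:inversion} and for $C_2T$ in subsection~\ref{ssec:c2t}: identify the $\ZM_4$-orbits in the sum decompositions $\Ff=\bigoplus_{\lambda\in\ZM_4}\Ff_\lambda$ and $\Cc=\bigoplus_{\lambda\in\ZM_4}\Cc_\lambda$, rewrite these as tensor products with the regular representation of $\ZM_4$, and then invoke Proposition~\ref{prop:reduction_equivariant} to descend to the (trivial) stabilizers of the orbit representatives.

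The first step is to observe that the rotation $C_4$ cyclically permutes the four half-spaces constituting the selected faces $\Ff_\lambda$ (and likewise the four quarter-spaces underlying the corners $\Cc_\lambda$), so under the action of the group $\langle C_4T\rangle\simeq \ZM_4$ one has a single free orbit in each case. Next, the algebraic identification
\begin{equation}
\Ff\;\simeq\; \Ff_1 \otimes C(\ZM_4), \qquad \Cc\;\simeq\; \Cc_1\otimes C(\ZM_4),
\end{equation}
is to be promoted to an isomorphism of $(\phi,c,\tau)$-twisted $\ZM_4$-$C^\ast$-algebras, with $C(\ZM_4)$ carrying the (twisted) regular representation of $\ZM_4$ generated by $C_4T$. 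Finally, Proposition~\ref{prop:reduction_equivariant} identifies the twisted equivariant K-theory of such an induced object with the twisted equivariant K-theory of the orbit representative $\Ff_1$ (respectively $\Cc_1$) w.r.t.\ the stabilizer of that representative. Since the orbits are free, the stabilizer is the trivial subgroup, the twisting data $(\phi,c,\tau)$ restricts trivially to it, and the two claimed isomorphisms $K^{C_4T}_\ast(\Ff)\simeq K_\ast(\Ff_1)$ and $K^{C_4T}_\ast(\Cc)\simeq K_\ast(\Cc_1)$ follow at once.

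The only genuine subtlety, and hence the step I expect to require the most care, is checking that the tensor product identifications above are indeed equivariant isomorphisms in the twisted and anti-linear sense. The generator $C_4T$ is anti-linear (encoded by $\phi$), and the fermionic twist~\eqref{eq:order4twist} demands $U^4=-\one$ for any $(\phi,c,\tau)$-twisted representation. When moving an element of $\Ff_1$ around its $\ZM_4$-orbit by successive applications of $C_4T$, the fourfold composition returns to $\Ff_1$ with an extra anti-linear involution and a scalar factor of $-1$; these must be absorbed into the regular representation on the $C(\ZM_4)$ factor via the cocycle $\tau$, which is precisely how the twisted regular representation is defined. Once this compatibility is in place, the fact that the stabilizer is trivial means the twist has no residual effect and the conclusion is a direct application of the reduction proposition.
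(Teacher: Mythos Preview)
Your proposal is correct and follows precisely the approach the paper intends: the paper simply notes that ``all four sides and corners form a single orbit each under rotations'' and states the proposition, leaving the reader to apply Proposition~\ref{prop:reduction_equivariant} with $H=\{e\}$ exactly as you describe. Your additional discussion of how the anti-linearity and the fermionic twist $U^4=-\one$ are absorbed into the $\phi$-twisted regular action on $C(\ZM_4)$ is more careful than the paper itself, but is consistent with the hypotheses of Proposition~\ref{prop:reduction_equivariant} and ultimately harmless since the trivial stabilizer makes the restricted twisting data trivial.
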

Therefore $K_\ast^{C_4T}(\Ff)$ is represented by elements $[x_1,x_2,x_3,x_4]_\ast\in \bigoplus_{\lambda=1}^4 K_\ast(\Ff_\lambda)$ of the form $[x_1,C_4T(x_1),(C_4T)^2(x_1),(C_2T)^3(x_1)]_\ast$, where $x_1\in M_N(\CM)\otimes \Ff_1$ and $C_4T$ acts on $M_N(\CM)$ by some $(\phi,c,\tau)$-twisted representation. This implies the following relation for the relevant Chern numbers
\begin{equation}
\begin{aligned}
	\langle [C_4T(x_{\lambda})]_0, [\Ch_{{\Ff_{\lambda+1}}, e_3 \times n_{\lambda+2}}]\rangle &= \langle [x_\lambda]_0, [\Ch_{{\Ff_{\lambda}}, e_3 \times n_{\lambda+1}} \circ T]\rangle \\
	&= -\langle [x_\lambda]_0, [\Ch_{{\Ff_{\lambda}}, e_3 \times n_{\lambda+1}}]\rangle.
\end{aligned}
\end{equation}
For the Chern cocycle which determines $K_{-1}^{C_4T}(\Cc)\simeq \ZM$ one therefore has by \eqref{eq:facecorner}
$$\langle \partial^{\Ff\Cc_1}([x_1,C_4T(x_1),(C_4T)^2(x_1),(C_4T)^3(x_1)], [\Ch_{{\Cc_1},e_3}]\rangle = 2 \langle [x_1]_0, [\Ch_{{\Ff_1},e_3\times n_4}]\rangle.$$
Since this is any even integer we conclude
\begin{proposition}
\begin{equation}
K_{-1}^{C_4T}(\Cc) \supset \mathrm{Im} \, \partial^{\Ff\Cc} _0 \simeq 2\ZM
\end{equation}
	and 
\begin{equation}
\mathrm{Im} \, \delta_0^{\Bb \Cc} \subseteq K_{-1}^{C_4T}(\Cc)/\mathrm{Im} \, \partial^{\Ff\Cc}_0 \simeq \ZM_2.
\end{equation}
\end{proposition}

\begin{remark}{\rm Let us highlight something remarkable about this fact: The $\ZM_2$-invariant is the parity of {\it any} of the four corner Chern numbers. Hence if a bulk Hamiltonian maps to the odd parity sector then the corner Chern number of any single corner is non-trivial and has odd parity, even though it would not be possible to detect the global $C_4T$-symmetry of the crystal. Indeed, to derive the maps in equivariant $K$-theory which guarantee the topological protection of the hinge mode we had to adjoin three additional hinges. The only remnant of the symmetry at a single hinge is that the two asymptotic half-spaces adjacent to the hinge are related by a $C_4T$-transformation.}$\Diamond$
\end{remark}

Consider the bulk Hamiltonian listed in \cite{SchindlerSciAdv2018}[Eq.~1]
\begin{equation}\label{Eq:Ham3}
h = \tfrac{1}{2 \imath} \sum_{i=1}^3 \Gamma_i \otimes (S_i - S_i^\ast)+ \Gamma_0 \otimes \Big(2 +\tfrac{1}{2}\sum_{i=1}^3 (S_i + S^\ast_i)\Big ) + \frac{\gamma}{2} \Gamma_B \otimes (S_1+ S^\ast_1- S_2 - S^\ast_2),
\end{equation}
where $S_i$ are the generators of $C^\ast\ZM^3$, and $\Gamma_i=\sigma_1\otimes \sigma_i$, $\Gamma_B = \sigma_2\otimes 1$ and finally $\Gamma_0=\sigma_3 \otimes 1$. On the atomic orbitals space $M_4(\CM)$, the action of the time reversal is implemented by conjugation with $(1 \otimes \sigma_2 ) \Kk$, while that of the 4-fold rotation by conjugation with $1 \otimes e^{\imath \frac{\pi}{4}\sigma_3}$. 

\begin{proposition}
There exists a Hamiltonian $h \in B(\Vv) \otimes (C_0(\mathcal{X}_2)\rtimes \ZM^3)^\sim$, $\Vv=\CM^2\otimes \CM^2$, with the following properties:
	\begin{enumerate}
		\item[(i)] $h$ is invariant under the $C_4T$-action.
		\item[(ii)] $h$ gets mapped to a gapped Hamiltonian $\mathfrak{p}^2(h)$ under the surjection $$\mathfrak{p}^2: B(\Vv) \otimes (C_0(\mathcal{X}_2)\rtimes \ZM^3)^\sim \to B(\Vv) \otimes (C_0(\mathcal{X}_1)\rtimes \ZM^3)^\sim$$
		 and thus defines a class $[\gamma_{\mathfrak{p}^2(h)}]_0\in K_0^I(C_0(\mathcal{X}_1)\rtimes \ZM^3)$.
		\item[(iii)] Under the surjection $$\mathfrak{p}^1\circ \mathfrak{p}^2: B(\Vv) \otimes (C_0(\mathcal{X}_2)\rtimes \ZM^3)^\sim \to B(\Vv)\otimes C^*(\ZM^3)$$ the image $\mathfrak{p}^1\circ \mathfrak{p}^2(h)$ is equal to the bulk Hamiltonian \eqref{Eq:Ham3}.

		\item[(iv)] The image of the boundary map
		$$\partial^2_0: K_0^I(C_0(\mathcal{X}_1)\rtimes \ZM^3) \to K_1^I(C_0(\mathcal{X}_2\setminus \mathcal{X}_1)\rtimes \ZM^3)\simeq K_1(\Cc_1)\oplus K_1(\Cc_2)$$ is again labeled by two Chern cocycles $\Ch_{{\Cc_1},e_3}$ and $\Ch_{{\Cc_1},e_3}$ for which the K-theoretic pairings take integer values. For the given $h$ one has
		\begin{equation}\label{eq:hinge_c4t_ex}\langle \partial^2_0([\gamma_{\mathfrak{p}^2(h)}]_0), [\Ch_{\Cc_1,e_3}]\rangle \;=\; 1 \mod 2.
		\end{equation}
	\end{enumerate}
 The class in $K_0^{C_4T}(C(\TM^3))$ represented by the gapped Hamiltonian \eqref{Eq:Ham2} therefore maps to the non-trivial parity under $\delta^2_0: \Ker(\delta^1_0) \to K_1^{C_4T}(\Cc)/\mathrm{Im} \, \partial^{\Ff\Cc}_0 \simeq \ZM_2$.
\end{proposition}
\begin{proof}
Again, the proof can be given verbatim as in \cite[Proposition~7.6]{OSP2024}, since one only needs to replace the matrices $\Gamma_i$ by the different Clifford representation and the gap-opening mass term by $\Gamma_B$. The $C_4T$ symmetry implies that the four mass-terms on the four $1$-cells must have alternating signs and hence \eqref{eq:hinge_c4t_ex} will take the value $1$ or $-1$.
\end{proof}

For completion, we summarize the spectral characteristics of Hamiltonian~\eqref{Eq:Ham3} under various underlying atomic configurations in Fig.~\ref{Fig:Model3}. 


\begin{figure}[t]
\center
\includegraphics[width=\textwidth]{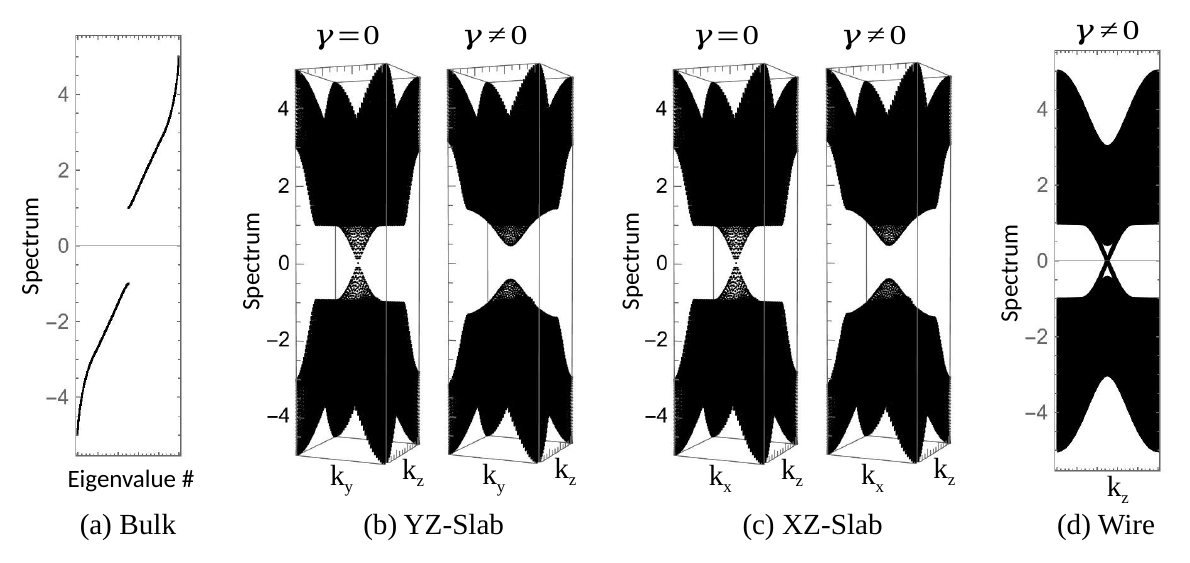}\\
  \caption{\small Same as Fig.~\ref{Fig:Model1} but for the Hamiltonian~\eqref{Eq:Ham3}. 
}
 \label{Fig:Model3}
\end{figure}

\subsection{The cube geometry}\label{Sec:CubeExamples}
Let us now consider the cube groupoid $\Gg_{\mbox{\small \mancube}}$ from Section~\ref{Sec:Crystal}. There is a natural filtration of the unit space by faces, hinges, corners such that the cofiltration \eqref{Eq:CubeFilt}, written schematically as $\Aa_3\to \Aa_2\to \Aa_1 \to \Aa_0 \to 0$, has ideals
\begin{equation}
\begin{aligned}
\Ee_0 &= \Bb \simeq C(\TM^3)\\
\Ee_1 &\simeq  (C(\TM^2)\otimes\KM(\ell^2(\NM)))^{\otimes 6}\\
\Ee_2 &\simeq (C(\TM)\otimes\KM(\ell^2(\NM\times\NM)))^{\otimes 12}\\
\Ee_3 &\simeq (\KM(\ell^2(\NM\times\NM\times\NM)))^{\otimes 8}.
\end{aligned}
\end{equation}
The first-order differentials can again be computed easily in terms of Chern numbers and Toeplitz extensions and one finds that there are no non-trivial higher-order bulk-boundary correspondences in complex K-theory. In fact, the kernel of the first differential $\partial^{1}_q$ is precisely the image of $K_q(\CM)\to K_\ast(\TM^3)$, i.e. only trivial elements remain. Enhancing the K-groups by crystalline symmetries, one can stabilize second-order hinge modes or third-order corner modes. Since the mechanism and formalism should be clear by now we do not need to give an explicit example of a third-order bulk-boundary correspondence.

\begin{figure}
	\centering
\includegraphics[width=\textwidth]{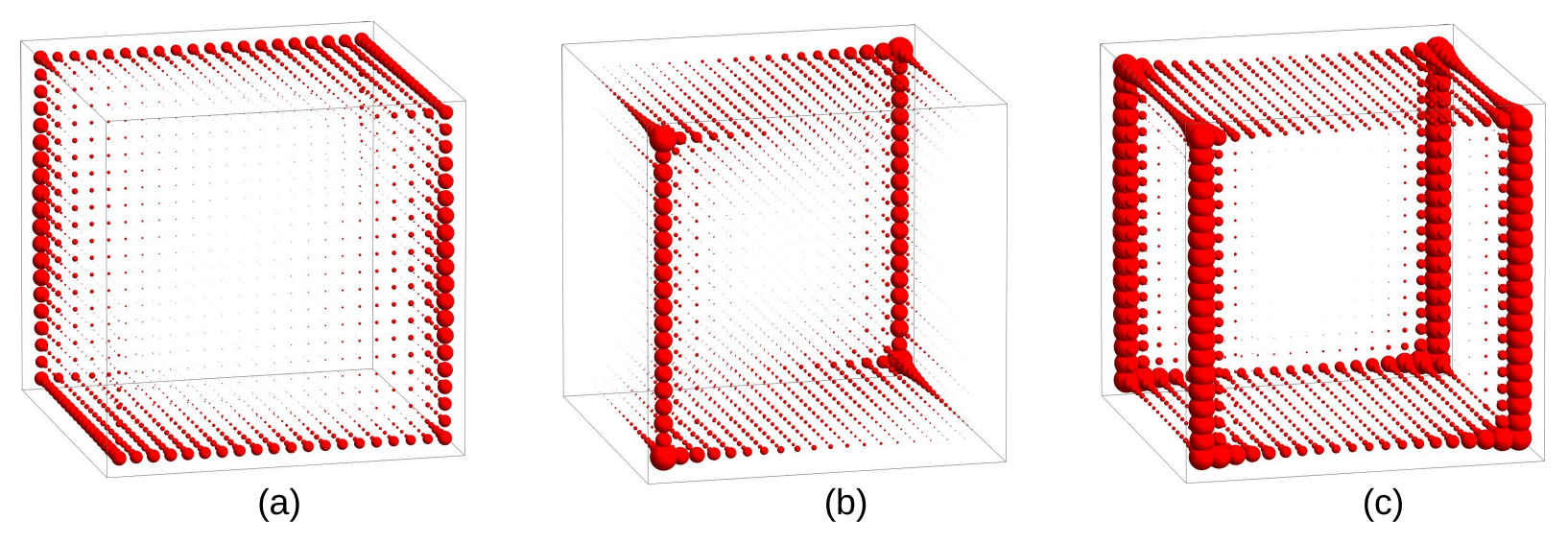}
	\caption{Renderings of the amplitudes $|\psi_0|^2$ corresponding to near zero-energy eigenfunctions  of Hamiltonians: (a)  \eqref{Eq:Ham1}, (b) \eqref{Eq:Ham2} and (c) \eqref{Eq:Ham3}, deployed on a 20$\times$20$\times$20 cube geometry. The amplitudes $|\psi_0(x)|^2$ are encoded in the size of the markers. The hinge modes predicted in the main text for inversion, $C_2T$ and $C_4T$ symmetries are clearly visibile along the verticle edges.}
	\label{Fig:HingeModes}
\end{figure}

Instead, of doing those computations we want to illustrate the point made in Section~\ref{Sec:cube_mechanism} about mixed-order bulk-boundary correspondence. If we put the example Hamiltonians \eqref{Eq:Ham1},\eqref{Eq:Ham2} and \eqref{Eq:Ham3} on a finite cube one finds boundary states supported on different parts of the boundary as seen in Figure~\ref{Fig:HingeModes}. Only in the inversion-symmetric case does one still have purely hinge modes and thus a second-order bulk-boundary correspondence, while in the other two cases one has surface states protected by a first-order bulk-boundary correspondence on the $C_2$- respectively $C_4$-invariant top and bottom surfaces. This is in fact easy to prove: Any possible surface state is characterized by a K-theory class in the image of a boundary map $\partial^p_*$. As a consequence of the long exact sequences of K-theory it is therefore in the kernel of the consecutive differential $\partial^{p+1}_{*-1}$. In the case of the boundary map $\partial_0^2$ the kernel of $\partial_{-1}^3$ consists precisely of those classes whose hinge Chern numbers satisfy the analogue of the Kirchhoff current law at each corner and in the present case there exist no $C_2T$ or $C_4T$-invariant configurations that do so. Thus some of the faces must be un-gapped already.

This is the main reason why we used the smaller filtration given by the infinite wire $\Xi_\square$ instead of maximal choice $\Xi_{\mbox{\small \mancube}}$ in this section. Again, to resolve all mixed-order bulk-boundary correspondences one may have to try different filtrations to find out which of the boundaries in real-space present K-theoretic obstructions at which order for a given bulk K-theory class.

\newpage
\section{Appendix}\label{Sec:Appendix}

\subsection{Twisted group representations}\label{Sec:TwistRep}

As is well-known since Wigner's analysis, the symmetries of quantum systems are given by groups acting on rays in a Hilbert space. They lift to projective representations which act as unitary or anti-unitary operators, sometimes called PUA (projective unitary/anti-unitary) representations. In a more modern form that also includes the possibility of introducing a grading on the groups they have been studied in the remarkable article \cite{FreedAHP2013} and for actions on $C^*$-algebras we refer to \cite{Kubota16,Kubota17,ThiangAHP16,GomiEtAl21}.

\begin{definition}
	A twist $(\phi, c,\tau)$ of a group $\Gamma$ consists of 
	\begin{itemize}
		\item homomorphisms $\phi,c: \Gamma \to \ZM_2$.
		\item A $2$-cocycle $\tau \in H^2_\phi(\Gamma, \TM)$ 	where $\Gamma$ acts on $\TM=\{t \in \CM: \, \abs{t}=1\}$ via $g \cdot t = {}^{\phi(g)}t$ where ${}^{\phi(g)}t=t$ if $\phi(g)=0$ and ${}^{\phi(g)}t=\overline{t}$ if $\phi(g)=1$.
		
		Those cocycles classify the $\phi$-twisted group extensions, i.e. exact sequences
		$$1\to \TM \to \Gamma^\phi_{\tau} \to \Gamma \to 1,$$
		where $\Gamma$ acts on the abelian subgroup $\TM$ via $g t g^{-1} = {}^{\phi(g)}t$ for all $g \in \Gamma$. 
	\end{itemize}
\end{definition}
For simplicity we will assume that all groups are finite, as is sufficient for the purposes of this work. Those groups can also act on operator algebras, in particular complex $C^*$-algebras, in a way that incorporates those twists. While those do not necessarily need to possess a (distinguished) Real structure, the actions may be anti-linear:
\begin{definition}
	A $\phi$-twisted $\Gamma$-action on a $C^*$-algebra $\Aa$ is an $\RM$-linear homomorphism $\alpha: \Gamma \to \mathrm{Aut}_\RM(\Aa)$ such that $\alpha_g$ is complex linear for $\phi(g)=0$ and conjugate linear for $\phi(g)=1$.
	
	A $(\phi,\tau)$-twisted $\Gamma$-action on a $C^*$-algebra $\Aa$ is a $\phi$-twisted $\Gamma^\tau$-action such that $\alpha\rvert_{\TM} = \idmap.$
\end{definition}
The grading and twist is furthermore incorporated by representations on graded Hilbert modules:
\begin{definition}
	Let $(\Aa,\alpha)$ be a $\phi$-twisted $\Gamma$-$C^*$-algebra. A $(\phi, c,\tau)$-twisted representation on a graded Hilbert $\Aa$-module $E$ is an $\phi$-linear homomorphism $U: \Gamma^\phi_\tau \to \Ll_\RM(E)$ ($\RM$-linear bounded operators that are not necessarily adjointable) where $U\rvert_\TM$ is realized by scalar multiplication, and one has
	$$\langle U(g) \xi, U(g)\eta\rangle_E = \alpha_g(\langle \xi, \eta\rangle_E)$$
	and
	$$\gamma_E(U(g)\xi)=(-1)^{c(g)} U(g) \gamma_E(\xi)$$
	for $\gamma_E$ the grading of $E$. In particular, $U(g)$ is odd if $c(g)=1$ and even otherwise, $U(g)$ is anti-linear if $\phi(g)=1$ and linear otherwise.
\end{definition}
For representations on graded Hilbert spaces, i.e. the case $\Aa=\CM$, the action $\alpha_g$ can only be trivial or complex conjugation, thus $\phi(g)$ decides if $U(g)$ is unitary or anti-unitary.

\subsection{Twisted equivariant $K$-theory}\label{Ap:TwEq}
The notion of twisted equivariant $K$-theory was developed for spaces in \cite{FreedAHP2013} and for operator algebras in \cite{Kubota16,Kubota17,GomiEtAl21}. Of course, equivariant $K$-theory itself even including twisted actions is significantly older, the difference is that these works incorporate anti-linear actions and also those of graded groups in a very natural way that closely aligns with the needs of solid state physics.

There is a natural notion of $(\phi, c, \tau)$-twisted equivariant $KK$-theory based on Hilbert $\Aa$-$\Bb$-modules of $\phi $-twisted $\Gamma$-$C^*$-algebras which carry $(\phi,c,\tau)$-twisted representations, from which one can define $${}^\phi K^\Gamma_{c,\tau}(\Aa):={}^\phi KK^\Gamma_{c,\tau}(\CM, \Aa)$$ where we think of $\CM$ as a $\phi$-twisted $\Gamma$-algebra on which elements $g\in \Gamma$ act trivially or by complex conjugation depending on $\phi(g)$.

For practical computations the definition as $KK$-groups are fairly inconvenient since they are closer to the Fredholm picture of $K$-theory than the standard picture. Kubota \cite{Kubota16} also defines a more useful van-Daele-like picture:

Let $(\Aa, \alpha)$ be a $\phi$-twisted ungraded $\Gamma$-$C^*$-algebra and assume for now it is unital. For any $(\phi, c, \tau)$-twisted finite-dimensional representation of $\Gamma$ on a finite-dimensional vector space $\Vv$ the algebra $\Aa \otimes B(\Vv)$ carries a $(\phi, \tau)$-twisted $\Gamma$-action which we denote by the same letter $\alpha$.

An element $a \in \Aa \otimes B(\Vv)$ is called $c$-twisted invariant if $\alpha_g(a)=(-1)^{c(g)}a$ for all $g\in \Gamma$ and the space of $c$-twisted invariant self-adjoint unitaries is denoted ${}^\phi\Ff^\Gamma_{c,\tau,\Vv}(\Aa)$. 

\begin{enumerate}
	\item[(i)]
	One can take the inductive limit over all $(\phi,c,\tau)$-twisted representations
	$${}^\phi K^\Gamma_{0,c,\tau}(\Aa):=\lim_{\Vv}\pi_0( {}^\phi\Ff^\Gamma_{c,\tau,\Vv}(\Aa))$$ with respect to the inclusion
	$${}^\phi\Ff^\Gamma_{c,\tau,\Vv}(\Aa)\hookrightarrow {}^\phi\Ff^\Gamma_{c,\tau,\Vv\oplus \Ww}(\Aa), \quad a \mapsto a \oplus \gamma_\Ww$$
	where $\gamma_\Ww$ is the grading operator on $\Ww$ (recall that we are using $c$-graded group representations) and $\pi_0$ are equivalence classes w.r.t. norm-continuous homotopy. With the direct sum $[a_1]+[a_2]:= [a_1\oplus a_2]$ this becomes an abelian group with the inverse $-[a]= [-\gamma_\Vv a \gamma_{\Vv}]$ where the representative is in $\Aa\otimes B(\Vv^{\mathrm{op}})$ with the opposite grading $\gamma_{\Vv^{\mathrm{op}}}=-\gamma_{\Vv}$ since $a\oplus (-\gamma_\Vv a \gamma_{\Vv})$ is homotopic to $\gamma_\Vv\oplus (-\gamma_\Vv)$ in ${}^\phi\Ff^\Gamma_{c,\tau,\Vv\oplus \Vv^{\mathrm{op}}}(\Aa)$.
	\item[(ii)]
	Denote by ${}^\phi\Uu^\Gamma_{c,\tau,\Vv}(\Aa)$ the space of unitaries $u\in \Aa\otimes B(\Vv)$ such that $\alpha_g(u)=u$ if $\phi(g)+c(g)=0$ and $\alpha_g(u)=u^*$ if $\phi(g)+c(g)=1$. Then set
	$${}^\phi K^\Gamma_{-1,c,\tau}(\Aa):=\lim_{\Vv}\pi_0( {}^\phi\Uu^\Gamma_{c,\tau,\Vv}(\Aa))$$ with respect to the inclusion
	$${}^\phi\Uu^\Gamma_{c,\tau,\Vv}(\Aa)\hookrightarrow {}^\phi\Uu^\Gamma_{c,\tau,\Vv\oplus \Ww}(\Aa), \quad a \mapsto a \oplus \one_\Ww.$$
	With the direct sum $[u_1]+[u_2]:= [u_1\oplus u_2]$ this becomes an abelian group with inverse $-[u]= [u^*]$.
\end{enumerate}
In the non-unital case one defines $${}^\phi K^\Gamma_{i,c,\tau}(\Aa) = \Ker({}^\phi K^\Gamma_{i,c,\tau}(\Aa^+)\to {}^\phi K^\Gamma_{i,c,\tau}(\CM)).$$ Classes in ${}^\phi K^\Gamma_{0,c,\tau}(\Aa)$ are defined by band-flattenings $\sgn(h)$ of $c$-twisted invariant self-adjoint invertibles $h$, while in some sense $K_{-1}$ is the natural range of the boundary map as one may see below.

In general one can define higher $K$-groups by suspension $${}^\phi K^\Gamma_{p-q,c,\tau}(\Aa)={}^\phi K^\Gamma_{0,c,\tau}(S^{p,q} \Aa) \simeq {}^\phi K^\Gamma_{-1,c,\tau}(S^{p+1,q} \Aa)$$
where $S^{p,q}$ is the tensor product of $\Aa$ with the algebra $C_0(\RM^{p+q})$ where the Real structure
$$\overline{f}(x_1,...,x_p,y_1,...,y_q)= \overline{f(x_1,...,x_p,-y_1,...,-y_q)}$$
is used to extend the $\phi$-linear action from $\Aa$ to $S^{p,q} \Aa$ via
$$\alpha_g(f)(x,y) = \begin{cases}
	\alpha_g(f(x,-y)) & \text{if }\phi(g)=1\\
	\alpha_g(f(x,y)) & \text{if }\phi(g)=0	
\end{cases}.$$
These groups only depend on $p-q$ up to isomorphism.

Twisted equivariant $K$-theory is a homology theory and for every equivariant short exact sequence $$0\to \Jj \to \Aa\to \Aa/\Jj \to 0$$ of $\phi$-twisted $\Gamma$-$C^*$-algebras there exists a boundary map which fits into a long exact sequence
$$...\to{}^\phi K^\Gamma_{n,c,\tau}(\Jj) \to {}^\phi K^\Gamma_{n,c,\tau}(\Aa/\Jj)\stackrel{\partial}{\to} {}^\phi K^\Gamma_{n-1,c,\tau}(\Jj) \to {}^\phi K^\Gamma_{n-1,c,\tau}(\Aa)\to ...$$
For a class $[x]_0\in K^\Gamma_{0,c,\tau}(\Aa/\Jj)$ represented by $x\in {}^\phi \Ff_{c,\tau,\Vv}^\Gamma(\Aa/\Jj)$ it is defined by choosing any self-adjoint lift $\tilde{x}\in {}^\phi \Ff_{c,\tau,\Vv}^\Gamma(\Aa)$ and setting $$\partial[x]_0=[-\exp(-\imath \pi \tilde{x})]_{-1}.$$ This class can be seen as a ${}^\phi K^\Gamma_{-1,c,\tau}(\Jj)$-valued index for the self-adjoint lift $\tilde{x}$ in the sense that it is precisely the $K$-theoretic obstruction to its invertibility. 

\begin{remark}
	\label{rem:picturesk_theory}
{\rm The definition of the boundary map given above uses both pictures of Kubota's $K$-theory and for various constructions one needs to pass between ${}^\phi K^\Gamma_{0,c,\tau}(S^{p,q} \cdot) \simeq {}^\phi K^\Gamma_{-1,c,\tau}(S^{p+1,q} \cdot)$ which can be inconvenient. In general one prefers ``unsuspended'' versions in which higher $K$-classes can be represented in terms of projections (equivalently self-adjoint unitaries) or unitary elements with certain symmetries. This is indeed possible at least in special cases:
\begin{enumerate}
	\item[(i)] For trivial $\phi,c,\tau=0$ one has ${}^\phi K^\Gamma_{n,c,\tau}(\Aa)=K_n^\Gamma(\Aa)$, the usual complex equivariant $K$-groups defined by \cite{Phillips}. In particular the $K$-groups are $2$-periodic.
	\item[(ii)] For $\Gamma= \Gamma_0 \times \ZM_2$, $c=0$ and $(\phi,\tau)$ such that $\Gamma_0$ acts by linear and $\ZM_2$ by anti-linear automorphisms one has
	$${}^\phi K^\Gamma_{n,c,\tau}(\Aa)= KR_n^{\Gamma_0}(\Aa),$$
	where the anti-linear $\ZM_2$-action provides the Real structure on $\Aa$ which commutes with the action of $\Gamma_0$. Those $KR$-groups are $8$-periodic.
	\item[(iii)] In case (ii) one further can express the $KR$-groups in terms of twisted equivariant $K$-theory by adjoining $CT$-type symmetries $$KR_n^{\Gamma_0}(\Aa)= {}^{\phi_{n}}K^{\Gamma_0 \times \Gamma_n}_{0,c_n, \tau_n}(\Aa)$$ where either $n \in \{0,...,7\}$ stands for the anti-unitary symmetry classes (enumerated in order as $\mathrm{AI}$, $\mathrm{BDI}$, $\mathrm{D}$, $\mathrm{DIII}$, $\mathrm{AII}$, $\mathrm{CII}$, $\mathrm{C}$, $\mathrm{CI}$), $\Gamma_n$ is one of the abelian groups $0$, $\ZM_2$ or $\ZM_2\times \ZM_2$ and the grading and twist $(\phi_n, c_n,\tau_n)$ are used to implement the usual commuting or anti-commuting symmetries, one can write the complex $\mathrm{AIII}$-class  as
	$$K_1^\Gamma(A)={}^\phi K^{\ZM_2}_{0,c,\tau}(A)$$
	by adjoining a single oddly graded generator of a trivial $\ZM_2$-action and trivial twists $\phi$,$\tau$. In this way all symmetry classes of the tenfold way fit naturally into Real or complex $K$-theory.
\end{enumerate}
}$\Diamond$
\end{remark}

One helpful isomorphism that we need in the main text is that one can sometimes reduce equivariant to non-equivariant $K$-theory:
\begin{proposition}
	\label{prop:reduction_equivariant}
	For $H$ a subgroup of the finite group $\Gamma$ let $C(\Gamma/H)=C(\Gamma/H,\CM)$ be the Real $C^*$-algebra of functions on $\Gamma/H$ with pointwise complex conjugation we define the $\phi$-twisted left translation
	$$\alpha_{\tilde{g}}(f)(g H) = \begin{cases}
		\overline{f(\tilde{g}^{-1}gH)} & \text{if } \phi(\tilde{g})=1\\f(\tilde{g}^{-1}gH) & \text{if }\phi(\tilde{g})=0
	\end{cases}.$$
	For any $\phi$-twisted trivially graded $\Gamma$-$C^*$-algebra $\Aa$, $\Aa \otimes C(\Gamma/H)$ is a $\phi$-twisted $\Gamma$-$C^*$-algebra and one has
	$${}^\phi K^\Gamma_{n,c,\tau}(\Aa\otimes C(\Gamma/H)) = {}^{\phi\rvert_H}K^H_{n,c\rvert_H,\tau\rvert_H}(\Aa).$$
\end{proposition}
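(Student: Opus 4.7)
My plan is to establish the isomorphism explicitly at the level of representatives in the van Daele picture, via evaluation at the identity coset, and then to check cofinality of the inductive systems of representations entering both sides.

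The starting observation is that $\Aa \otimes C(\Gamma/H)$ is $\phi$-twisted $\Gamma$-equivariantly isomorphic to the induced algebra
$$\mathrm{Ind}_H^\Gamma \Aa \;=\; \{\,f\colon \Gamma\to \Aa \,:\, f(gh) = \alpha_{h^{-1}}(f(g))\text{ for all }h\in H\,\},$$
with $\Gamma$-action by left translation, via $f\mapsto \sum_{gH}\alpha_g(f(g))\otimes e_{gH}$, where $e_{gH}$ is the indicator of the coset. After this identification, I work in Kubota's picture of K-theory. Fix a $(\phi,c,\tau)$-twisted $\Gamma$-representation $\Vv$ and define
$$\Phi_{\Vv}\colon {}^\phi\Ff^\Gamma_{c,\tau,\Vv}(\Aa\otimes C(\Gamma/H)) \longrightarrow {}^{\phi|_H}\Ff^H_{c|_H,\tau|_H,\Vv|_H}(\Aa), \qquad a\mapsto a_{eH},$$
where $a=\sum_{gH} a_{gH}\otimes e_{gH}$. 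The $c$-twisted $\Gamma$-invariance of $a$ forces $a_{\tilde g H} = (-1)^{c(\tilde g)}\alpha_{\tilde g}(a_{eH})$ and, by consistency on $H$, $a_{eH}$ is itself $c|_H$-twisted $H$-invariant, so $\Phi_{\Vv}$ is well-defined. An inverse $\Psi_{\Vv}$ is obtained by choosing coset representatives $\{\tilde g\}\subset \Gamma$ and setting $\Psi_{\Vv}(b) = \sum_{\tilde g H}(-1)^{c(\tilde g)}\alpha_{\tilde g}(b)\otimes e_{\tilde g H}$; the twisted $H$-invariance of $b$ makes this independent of the chosen representatives, and by construction the result is $c$-twisted $\Gamma$-invariant. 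The same argument applies verbatim in the unitary picture used for ${}^\phi K^\Gamma_{-1,c,\tau}$. Clearly $\Phi_{\Vv}$ and $\Psi_{\Vv}$ are mutually inverse norm-continuous maps, so they descend to mutually inverse bijections on $\pi_0$ that respect direct sums, and hence are group isomorphisms between the corresponding subgroups.

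The second, and main, ingredient is to show that the above family of bijections, indexed by $\Vv$, is compatible with the inductive limits that define the two $K$-groups. Restriction sends $(\phi,c,\tau)$-twisted $\Gamma$-representations to $(\phi|_H,c|_H,\tau|_H)$-twisted $H$-representations, and my $\Phi_\Vv$ intertwines the stabilization maps $a\mapsto a\oplus \gamma_\Ww$ on both sides as long as the $\Gamma$-representation chosen in the limit restricts to the fixed $H$-representation $\Ww$. The needed cofinality is supplied by a twisted induction functor: given an $H$-representation $\Ww$ with the restricted twisting data, define
$$\mathrm{Ind}_H^\Gamma \Ww \;=\; \{\xi \colon \Gamma_\tau^\phi \to \Ww \,:\, \xi(g h) = U_\Ww(h)^{-1}\xi(g)\text{ for }h\in H_\tau^\phi,\ \xi(tg)=t\cdot\xi(g)\text{ for }t\in\TM\},$$
equipped with the grading $(\gamma\xi)(g) = (-1)^{c(g)}\gamma_\Ww(\xi(g))$ and with $\Gamma$-action by left translation, suitably twisted by $\phi$ so that elements with $\phi(g)=1$ act antilinearly. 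A direct verification shows this is a genuine $(\phi,c,\tau)$-twisted $\Gamma$-representation whose restriction to $H$ contains $\Ww$ as an orthogonal summand (via $\xi\mapsto \xi(e)$ and a splitting given by the coset representatives), which provides the required cofinality.

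The hard part will be the careful bookkeeping for the induced representation $\mathrm{Ind}_H^\Gamma \Ww$: the simultaneous interaction of the $2$-cocycle $\tau$, the $\ZM_2$-grading $c$, and the antilinear/linear dichotomy $\phi$ with Mackey-style induction requires some care to set up consistently, and one must ensure that the grading and inner product on $\mathrm{Ind}_H^\Gamma \Ww$ are Hermitian for the new twisted action — this is essentially the assertion that induction is an exact functor on the category of finite-dimensional modules over the twisted group $C^*$-algebra $C^*_{\tau}\Gamma_\tau^\phi$ that preserves all the extra structure. Once these points are secured, the remaining steps are routine: for non-unital $\Aa$ one passes to the unitization, noting that the split exact sequence $0\to\Aa\to\Aa^+\to\CM\to 0$ tensors with $C(\Gamma/H)$ to give a $\Gamma$-split sequence; and for higher $K$-groups one applies the same isomorphism to the Clifford suspensions $S^{p,q}\Aa$, using that the action on $S^{p,q}(\Aa\otimes C(\Gamma/H))$ is the diagonal extension with trivial action on the $\RM^{p+q}$-factor, so $S^{p,q}(\Aa\otimes C(\Gamma/H))\cong (S^{p,q}\Aa)\otimes C(\Gamma/H)$ equivariantly.
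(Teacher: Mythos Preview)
Your proposal is correct and follows essentially the same route as the paper: evaluation at the identity coset gives the forward map, reconstruction via the $c$-twisted equivariance relation gives the inverse, and cofinality of the inductive systems is handled by inducing $H$-representations up to $\Gamma$. The paper dispatches the induction step in one sentence (``one can always induce a $\Gamma$-representation by enlarging $\Vv$''), whereas you spell out the twisted Mackey construction in detail; conversely, the paper reduces to $n=0$ by suspension at the outset rather than treating the unitary picture separately, but the arguments are otherwise the same.
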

\begin{proof}
It is enough to discuss the case $n=0$ since one can just suspend $\Aa$. Let $\Vv$ be a $(\phi,c,\tau)$-twisted representation of $\Gamma$. Denote the action on $\Aa\otimes B(\Vv)$ by $\alpha$ and that on $(\Aa\otimes C(\Gamma/H))\otimes B(\Vv)$ by $\beta$.

The $c$-twisted $\beta$-invariant functions $f\in  C(\Gamma/H, \Aa\otimes B(\Vv))= (\Aa\otimes C(\Gamma/H))\otimes B(\Vv)$ are determined by their value at the identity coset $f(e H)$ via
\begin{equation}
	\label{eq:equivariance_tech}
	(\beta_g(f))(g H) = (-1)^{c(g)}\alpha_g(f(e H))
\end{equation}
since $\Gamma$ acts transitively on $\Gamma/H$.

Classes in ${}^\phi K^\Gamma_{0,c,\tau}(\Aa\otimes C(\Gamma/H))$ are determined by couples of $(\phi,c,\tau)$-twisted $\Gamma$-representations and $c$-twisted invariant elements $f \in {}^\phi\Ff^\Gamma_{c,\tau,\Vv}(\Aa\otimes C(\Gamma/H))$. Any finite-dimensional $(\phi,c,\tau)$-twisted $\Gamma$-representations restricts to a representation of $H$ and the map $[f]_0 \mapsto [f(eH)]_0$ results in a well-defined homomorphism ${}^\phi K^\Gamma_{0,c,\tau}(\Aa\otimes C(\Gamma/H)) \to K^H_{0,c,\tau}(\Aa)$. It depends only on the class of $f$ since evaluating a $\Gamma$-equivariant homotopy in ${}^{\phi}\Ff^\Gamma_{c,\tau,\Vv}(\Aa \otimes C(\Gamma/H))$ on $eH$ gives an $H$-equivariant homotopy in ${}^{\phi}\Ff^H_{c,\tau,\Vv}(\Aa)$. The element $f(eH)$ is $c$-twisted $H$-invariant  since $H$ acts trivially on $C(\Gamma/H)$.

To prove that this map is an isomorphism it is enough to construct an inverse: W.l.o.g. we can assume that any $(\phi\rvert_{H},c\rvert_{H},\tau\rvert_{H})$-twisted representation of $H$ is the restriction of a $(\phi,c,\tau)$-twisted representation of $\Gamma$ on some graded vector space $\Vv$, since one can always induce a $\Gamma$-representation by enlarging $\Vv$. One then has for each $a\in {}^{\phi\rvert_{H}}\Ff^H_{c\rvert_{H},\tau\rvert_{H},\Vv}(\Aa)$ a unique function $f$ with \eqref{eq:equivariance_tech} and $f(eH)=a$. 

The homomorphism well-defined and injective: Any path in ${}^{\phi\rvert_{H}}\Ff^H_{c\rvert_{H},\tau\rvert_{H},\Vv}(\Aa)$ lifts uniquely to one in ${}^{\phi}\Ff^\Gamma_{c,\tau,\Vv}(\Aa\otimes C(\Gamma/H))$. If the lift $f$ represents the neutral element of ${}^\phi K^\Gamma_{0,c,\tau}(\Aa\otimes C(\Gamma/H))$ then there exists (possibly after stabilization) a path connecting $f$ to $\gamma_\Vv$ and evaluating that path at the identity coset gives a path in  ${}^{\phi\rvert_{H}}\Ff^H_{c\rvert_{H},\tau\rvert_{H},\Vv\oplus \Ww}(\Aa)$ connecting $f(eH)$ to $\gamma_\Vv$, hence it represents the neutral element of  ${}^{\phi\rvert_H}K^H_{0,c\rvert_H,\tau\rvert_H}(\Aa)$.
\end{proof}

In particular, in the special case where $H=\{e\}$ the equivariant $K$-groups always reduce to the usual complex $K$-groups $K_0$ and $K_1$.

\vspace{.2cm}
\noindent\small{{\bf Acknowledgement}\;\; EP and DP were supported by the U.S. National Science Foundation through the grants DMR-1823800 and CMMI-2131760, and by U.S. Army Research Office through contract W911NF-23-1-0127. TS was supported by the German Research Foundation (DFG) Project-ID 521291358.}
\vspace{.2cm}

\noindent\small{{\bf Conflict of interest}\;\; The authors declare that they have no conflict of interest.}
\vspace{.2cm}

\noindent\small{{\bf Data Availability Statement}
The Mathematica codes used to generate the numerical data are available upon request from https://www.researchgate.net/profile/Emil-Prodan.}


\end{document}